\newtheorem{definition}{Definition}
\newtheorem{remark}{Remark}
\newtheorem{lemma}{Lemma}
\newtheorem{theorem}{Theorem}
\newtheorem{corollary}{Corollary}
\newcommand{\norm}[1]{\left\lVert#1\right\rVert}
\begin{document}
%
% paper title
% Titles are generally capitalized except for words such as a, an, and, as,
% at, but, by, for, in, nor, of, on, or, the, to and up, which are usually
% not capitalized unless they are the first or last word of the title.
% Linebreaks \\ can be used within to get better formatting as desired.
% Do not put math or special symbols in the title.
%\title{Massive MIMO Enabled Non-Coherent Modulation With One-bit ADCs}

%\title{Joint Positioning and Communication with Reconfigurable Intelligent Surfaces and Time-Resistant   }

\title{Fundamentals of LEO Based Localization}
\author{Don-Roberts~Emenonye,
        Harpreet~S.~Dhillon,
        and~R.~Michael~Buehrer% <-this % stops a space
\thanks{D.-R. Emenonye, H. S. Dhillon, and R. M.  Buehrer are with Wireless@VT,  Bradley Department of Electrical and Computer Engineering, Virginia Tech,  Blacksburg,
VA, 24061, USA. Email: \{donroberts, hdhillon, rbuehrer\}@vt.edu. The support of the US National Science Foundation (Grants ECCS-2030215, CNS-1923807, and CNS-2107276) is gratefully acknowledged. Part of this work has been accepted to IEEE VTC 2024, Washington, DC, USA \cite{emenonye2023_VTC_conf_Minimal,emenonye2023_VTC_conf_unsyn}.
}
}

\maketitle
\IEEEpeerreviewmaketitle
\begin{abstract}
In this paper, we derive the fundamental limits of low earth orbit (LEO) enabled localization by analyzing the available information in signals from multiple LEOs during different transmission time slots received on a multiple antennas and evaluate the utility of these signals for $9$D localization ($3$D position, $3$D orientation, and $3$D velocity estimation). We start by deriving the Fisher Information Matrix (FIM) for the channel parameters that are present in the signals received from LEOs in the same or multiple constellations during multiple transmission time slots.  To accomplish this, we define a system model that captures i)  time offset between LEOs caused by having relatively cheap clocks, ii) frequency offset between LEOs, iii) the unknown Doppler rate caused by high mobility LEOs, and iv) multiple transmission time slots from a particular LEO. We transform the FIM for the channel parameters to the FIM for the location parameters and determine the required conditions for localization. To do this, we start with the $3$D localization cases: i) $3$D positioning with known velocity and orientation, ii) $3$D orientation estimation with known position and velocity, and iii)  $3$D velocity estimation with known position and orientation. Subsequently, we derive the FIM for the full $9$D localization case ($3$D position, $3$D orientation, and $3$D velocity estimation) in terms of the FIM for the $3$D localization. Using these results, we determine the number of LEOs, the operating frequency, the number of transmission time slots, and the number of receive antennas that allow for different levels of location estimation. We then provide insights into the interaction between the number of LEOs, the operating frequency, the number of transmission time slots, and the number of receive antennas. One key result is that in the presence of time and frequency offsets and Doppler rate, it is possible to perform $9$D localization ($3$D position, $3$D velocity, and $3$D orientation estimation) of a receiver by utilizing the signals from three LEO satellites observed during three transmission time slots received through multiple receive antennas.

\end{abstract}
\begin{IEEEkeywords}
6G, LEO, $9$D localization, FIM, $3$D position, $3$D velocity, and $3$D orientation.
\end{IEEEkeywords}

\section{Introduction}
There has been renewed interest in the use of low earth orbit satellites as evidenced by the launch of several new satellites into existing LEO constellations, such as Orbcomm, Iridium, and Globalstar, as well as the creation of new constellations such as Boeing, SpaceMobile, Oneweb, Telesat, Kuiper, and Starlink. Because this cluster of mega-constellations will be closer to the earth than the current satellites in global navigation satellite systems (GNSS), they will have shorter propagation delays and encounter lower losses, thereby providing greater potential accuracy in specific localization scenarios. Moreover, LEO satellites could be used when the GNSS signals are unavailable (such as in deep
urban canyons, under dense foliage, during unintentional interference,
and intentional jamming) or untrustworthy (e.g.,
under malicious spoofing attacks). Due to these reasons, utilizing these LEO satellites for localization is a research direction of increasing interest.

Research on LEO-based localization is ongoing and ranges from opportunistic use to dedicated localization systems where the difference between these approaches lies in the amount of information available about the properties of the LEO signals and the synchronization accuracy. Although existing research has varied from opportunistic systems that may utilize multiple constellations while having full knowledge of the signal properties of specific constellations and no or partial knowledge of the signal properties of the other constellations to dedicated systems that always have full knowledge of all the signal properties of all constellations. The fundamental limits of the information available for LEO-based $9$D localization ($3$D position, $3$D orientation, and $3$D velocity estimation) from the same or different constellations have not been thoroughly derived. Hence, in this paper, we use the Fisher Information Matrix (FIM) to characterize the available information for $9$D localization in the signals at a single/multiple antenna receiver that arrives from LEOs in the same or different constellations during single/multiple transmission time slots, which leads to critical insights into the number of LEO satellites, the operating frequency, the number of transmission time slots, and the number of receive antennas required for different levels of $9$D location estimation.

\subsection{Related Works}
The following three existing research directions are related to the work in this paper: i) LEO-based localization, ii) Large antenna array-enabled localization, and iii) Opportunistic localization in $5$G systems. The pertinent works from these directions are reviewed in this section.

\subsubsection{LEO-based localization}
Localization using LEOs ranges from dedicated \cite{Fundamental_Performance_Bounds_for_Carrier_Phase_Positioning_LEO_PNT,Broadband_LEO_Constellations_for_Navigation,Economical_Fused_LEO_GNSS,Empowering_the_Tracking_Performance_of_LEOBased_Positioning_by_Means_of_Meta_Signals,Performance_Analysis_of_a_Multi_Slope_Chirp_Spread_Spectrum,Integrated_Communications_and_Localization_for_Massive_MIMO_LEO_Satellite} to semi-opportunistic to opportunistic techniques \cite{Psiaki2020NavigationUC,Kassas2019NewAgeSN,Navigation_With_Differential_Carrier_Phase_Measurements_From_Megaconstellation_LEO_Satellites,A_Hybrid_Analytical_Machine_Learning_Approach_for_LEO_Satellite_Orbit_Prediction,Doppler_effect_Downlink_Receivers_OFDM_Low_earth_orbit_satellites_Bandwidth_Synchronization_Doppler_positioning_low_Earth_orbit,Ad_Astra_STAN_With_Megaconstellation_LEO_Satellites,A_Hybrid_Analytical_Machine_Learning_Approach_for_LEO_Satellite_Orbit_Prediction_1,Assessing_Machine_Learning_for_LEO_Satellite_Orbit_Determination_in_Simultaneous_Tracking_and_Navigation,Cognitive_Navigation_With_Unknown_OFDM_signals_With_Application_Terrestrial_5G_Starlink,Observability_Analysis_of_Receiver_Localization_Pseudorange,Positioning_with_Starlink_LEO_Satellites_A_Blind_Doppler_Spectral_Approach,Receiver_Design_for_Doppler_Positioning_with_Leo_Satellites,Unveiling_Starlink_LEO_Satellite_OFDM_Like_Signal_Structure_Enabling_Precise_Positioning}. This spectrum is defined by the knowledge or lack of knowledge of the structure of the reference signal, which is determined by the length of the reference signal, its values, and its periodicity. At the dedicated end of the spectrum, the signal structure is completely known, and at the other end, the signal structure is completely unknown. In between the two extremes exist semi-opportunistic techniques where only partial knowledge of the signal structure of some satellites is known.

The authors in \cite{Broadband_LEO_Constellations_for_Navigation} propose a dedicated framework for utilizing broadband LEO constellations for navigation. The proposed framework utilizes delay measurements and is evaluated by considering the position errors as a function of the product of the geometric dilution of precision (GDOP) and the ranging errors. The ranging errors incorporate clock offset, and the orbit is assumed to be known, with the justification that the orbit can be accurately predicted by fifteen observing ground stations (just like the framework used in Galileo). In \cite{Economical_Fused_LEO_GNSS}, a vision of a fused GNSS and dedicated LEO architecture in which existing clocks, modems, antennas, and spectrum of broadband satellite mega-constellations are dual-purposed for delay-based positioning is provided. In \cite{Empowering_the_Tracking_Performance_of_LEOBased_Positioning_by_Means_of_Meta_Signals}, the authors provide a dedicated framework for utilizing Doppler measurements from Amazon Kuiper Satellites for receiver positioning. Reference signal design for a dedicated framework using delay measurements is proposed in \cite{Performance_Analysis_of_a_Multi_Slope_Chirp_Spread_Spectrum}. A dedicated framework is provided for massive multiple-input multiple-output (MIMO)-based integrated localization and communication, in which the delay and Doppler are used to position a receiver to improve the transmission rate in \cite{Integrated_Communications_and_Localization_for_Massive_MIMO_LEO_Satellite}. {\em In summary, current research on the dedicated end has not provided a rigorous explanation of the available information in LEO satellites from the same or different constellations that are received by a multiple antenna receiver during multiple transmission time slots. Hence, in this paper, with the assumption that the LEO orbits are known\footnote{The satellites’ positions and velocities
can be obtained through two-line element (TLE) files. }, we use information theory to rigorously characterize this available information and its utility for the $9$D localization of a receiver ($3$D position, $3$D orientation, and $3$D velocity estimation).}

The authors in \cite{Psiaki2020NavigationUC} provide an opportunistic experimental framework to use at least eight Doppler shift measurements to give estimates for the $3$D position, $3$D velocity, time offset, and time offset rate. In \cite{Kassas2019NewAgeSN}, an opportunistic framework that combines inertial measurement units with the signals from the LEO satellites is developed to experimentally estimate the position of LEO satellites, the time offset, and the LEO orbit. Authors in \cite{Navigation_With_Differential_Carrier_Phase_Measurements_From_Megaconstellation_LEO_Satellites} demonstrate experimentally that the signals from two Orbcomm satellites can be used to opportunistically track an unmanned aerial vehicle (UAV) for two minutes with a position error of $15 \text{ m}.$ In \cite{A_Hybrid_Analytical_Machine_Learning_Approach_for_LEO_Satellite_Orbit_Prediction}, an opportunistic framework using Doppler measurement taken from Orbocomm satellites over multiple time intervals is used to estimate the orbital parameters and receiver position. Authors in \cite{Doppler_effect_Downlink_Receivers_OFDM_Low_earth_orbit_satellites_Bandwidth_Synchronization_Doppler_positioning_low_Earth_orbit} develop an opportunistic framework to detect reference signals from six Starlink satellites, and subsequently use this framework to achieve a receiver positioning error of $20 \text{ m}.$ The authors in \cite{Ad_Astra_STAN_With_Megaconstellation_LEO_Satellites} propose an opportunistic framework to combine IMU measurements with range and Doppler measurements. This framework utilizes two Orbcomm, one Iridium and three Starlink satellites to achieve position errors of 27.1 m and 18.4 m, respectively. A ground receiver localizes itself while estimating the noise covariance matrix of a single Orbcomm satellite with Doppler observed over time in \cite{A_Hybrid_Analytical_Machine_Learning_Approach_for_LEO_Satellite_Orbit_Prediction_1}. The authors in \cite{Assessing_Machine_Learning_for_LEO_Satellite_Orbit_Determination_in_Simultaneous_Tracking_and_Navigation} provide an opportunistic machine learning framework that uses Doppler measurements to give an estimate of the LEO orbit and receiver position. A Doppler-based opportunistic experimental framework is developed to utilize $5$G and LEO signals to localize a stationary receiver in \cite{Cognitive_Navigation_With_Unknown_OFDM_signals_With_Application_Terrestrial_5G_Starlink}. The authors in \cite{Observability_Analysis_of_Receiver_Localization_Pseudorange} develop an opportunistic framework that uses range measurements from a single Orbocmm satellite with a known orbit to position a receiver. In \cite{Positioning_with_Starlink_LEO_Satellites_A_Blind_Doppler_Spectral_Approach}, the received signal frequency spectrum is mathematically characterized, a characterization that accounts for the high dynamic nature of the channel, which results from the speed of LEO satellites. A Doppler discriminator is proposed to differentiate between satellites, and a Doppler tracking algorithm is proposed. Finally, this opportunistic framework uses Doppler measurements to obtain a $2$D position error of 4.3 m. The authors in \cite{Receiver_Design_for_Doppler_Positioning_with_Leo_Satellites} assume that the satellite's position and velocity are known, then a receiver is proposed. Subsequently, Doppler measurements from two Orbcomm satellites are used to position a receiver to an accuracy of 11 m opportunistically. In \cite{Unveiling_Starlink_LEO_Satellite_OFDM_Like_Signal_Structure_Enabling_Precise_Positioning}, the spectrum of six LEO satellites is investigated. It was noticed that three of the satellites used tones while the other three used OFDM. The satellites were used to achieve a horizontal positioning error of 6.5 m opportunistically. {\em Current research on the opportunistic end has primarily focused on the experimental use of Doppler measurements to find the $3$D position of a receiver. Our contribution to this end of the research involves providing a favorable lower bound on the achievable accuracy when Doppler measurements are used to aid the $9$D localization of a receiver ($3$D position, $3$D orientation, and $3$D velocity estimation).}

\subsubsection{Large antenna array enabled localization}
The use of higher frequency bands has resulted in the possibility of having a large number of antennas on transceivers, which is termed massive MIMO. The degrees of freedom offered by massive MIMO has enabled single-anchor localization under the parameterization of the received signal with the angle of departure (AOD), angle of arrival (AOA), and time of arrival (TOA) \cite{garcia2017direct,8240645,8515231,8356190,guerra2018single,emenonye2023limits,fascista2021downlink,8755880,li2019massive,9082200}. Authors in \cite{garcia2017direct} use distributed anchors to estimate multiple time of arrival (TOA) and angle of arrival (AOA) measurements, and the TOAs are used to restrict the source location to a convex set. While the AOAs are used to provide an exact estimate of the source location. The FIM for $2$D positioning is derived in \cite{8240645}, and expectation-maximization algorithms that attain these bounds are presented. These bounds are used in \cite{8515231} to show that for sufficiently high temporal and spatial resolution, non-line of sight parameters always provide position and orientation information, increasing position and orientation estimation accuracy. These bounds are extended to the $3$D positioning, and $2$D orientation case in \cite{8356190}, and the structure of these bounds is rigorously presented. Additional limits are presented for the uplink of a single anchor system in \cite{guerra2018single}. In that paper, it is shown that the Cramer Rao Lower Bound (CRLB) is unique in the limit of the number of receive antennas because each possible transmit position leads to distinct observations at the base station (BS). The authors in \cite{emenonye2023limits} show that while the transmitter or receiver’s 3D orientation can be
jointly estimated with the transmitter or receiver’s 3D position in
the near-field propagation regime, only the transmitter or receiver’s
2D orientation can be jointly estimated with the transmitter or
receiver’s 2D position in the far-field propagation regime. The task of single antenna receiver position is tackled with multiple observations \cite{8755880} and with a single observation \cite{fascista2021downlink}. In \cite{li2019massive}, the sparsity offered by the massive MIMO setup is used for simultaneous localization and mapping, and localization under hardware impairments is investigated in \cite{9082200}. An alternative to the parameterizations presented in \cite{garcia2017direct,8240645,8515231,8356190,guerra2018single,emenonye2023limits,fascista2021downlink,8755880,li2019massive,9082200} is presented in \cite{5571889,9606768,7364259}. In \cite{5571900}, {\em a priori} information about the channel parameters and {\em a priori} information about the transmitter or receiver location are included in a detailed investigation of the fundamental limits of wideband localization. This work also shows that NLOS is only helpful with {\em a priori} information about the channel parameters, and it is extended to cooperative localization in \cite{5571889}. The cooperative localization framework is then developed for collaborative positioning in massive networks \cite{9606768}. Finally, the use of reconfigurable intelligent surfaces (RISs) to aid localization is presented in \cite{8264743,9729782,9781656
,9508872,9625826,9500663,9782100,9528041,emenonye2022fundamentals,emenonye2023_ICC_conf_workshop,emenonye2022ris,emenonye2023_ICC_conf, RIS_Aided_Kinematic,OTFS_Enabled_RIS,9774917}. These prior works on RIS-aided localization can be grouped mainly into i) continuous RIS\cite{8264743,9729782,9781656} and discrete RIS \cite{emenonye2022fundamentals,emenonye2022ris,emenonye2023_ICC_conf,emenonye2023_ICC_conf_workshop, RIS_Aided_Kinematic, OTFS_Enabled_RIS,9508872,9625826,9500663,9782100,9528041,9774917}, and ii) near-field\cite{emenonye2022ris,emenonye2023_ICC_conf, RIS_Aided_Kinematic, OTFS_Enabled_RIS,8264743,9729782,9781656,9508872,9500663,9625826,9774917} and far-field propagation \cite{emenonye2022fundamentals,emenonye2023_ICC_conf_workshop,9782100,9528041}.
{\em Although extensive work has been done in this area, our contribution to existing work is as follows: i) we derive the available information about channel parameters in the received signals from fast-moving anchors with known trajectories that transmit during multiple transmission time
slots, ii) we present the available information for $9$D localization utilizing signals from multiple fast-moving anchors which are not synchronized in time and frequency that are received by a multiple antenna receiver, and iii) we present the minimal infrastructure needed for different levels of $9$D localization considering time and frequency offset among the anchors, in addition to a high Doppler rate.}

\subsubsection{Opportunistic localization in $5$G systems}
Several signals with good correlation properties are transmitted as always-on signals, such as the primary and secondary synchronization signals and the physical broadcast channel block, or on-demand signals, such as demodulation reference signals, phase tracking reference signals, and sounding reference signals. In \cite{9573365}, a cognitive opportunistic receiver is designed to detect the active $5$G base stations and estimate the number of base stations and their associated reference signals. The reference signals can be either always-on or on-demand signals. In that paper, a sequential generalized likelihood ratio detector
is used to detect the presence of multiple base stations on the
same channel and provide an estimate of the number of active BSs. This detector uses Doppler frequencies of the base stations to define the signal subspace, after which the reference signals are estimated. Subsequently, a UAV is tracked over a $416 \text{ m}$ trajectory with an error of $4.15 \text{ m}$. The authors in \cite{9369049} present a framework to utilize always-on reference signals for localization; after separating the effect of the clock bias and drift from the estimated range measurements, the ranging error standard deviation is shown to be $1.19 \text{ m}$. Finally, in \cite{kassas2021carpe}, an opportunistic utilization of $5$G always-on signals is studied, and a software-defined radio is used to extract navigation observables from $5$G signals. Finally, a localization framework using an extended Kalman filter is utilized to estimate the receiver’s position.

\subsection{Contribution}
This paper focuses on the localization of a receiver with multiple antennas using reference signals received from LEOs, which could be from the same or various constellations, during multiple transmission time slots. With this setup, our main contributions are:
\subsubsection{determining the available information about channel parameters in the received signal from LEOs during multiple transmission time slots} We derive the FIM for the channel parameters that are present in the signals received from LEOs in the same or multiple constellations during multiple transmission time slots.  To enable this, we define a system model that captures i) the possibility of a time offset between LEOs caused by having cheap synchronization clocks, ii) the possibility of a frequency offset between LEOs, iii) the unknown Doppler rate caused by the short coherence time in high mobility LEO based satellite systems, and iv) multiple transmission time slots from a particular LEO. One key result is that while the channel gain affects the information available about the frequency offset and/or Doppler rate, knowledge, or lack thereof of the delays, the channel gain and the time offset do not affect the available information about the frequency offset and/or Doppler rate. Hence, the accuracy achievable while estimating the frequency offset and/or Doppler rate is independent of the knowledge of the delays, channel gain, and the time offset.

\subsubsection{determining the available information for $9$D localization}

We transform the FIM for the channel parameters to the FIM for the location parameters and show the possible localization conditions. To do this, we start with the $3$D localization cases: i) $3$D positioning with known $3$D velocity and known $3$D orientation, ii) $3$D orientation estimation with known $3$D position and known $3$D velocity, and iii)  $3$D velocity estimation with known $3$D position and known $3$D orientation. Subsequently, we derive the FIM for the $6$D localization cases in terms of the FIM for the $3$D localization cases:  i) $3$D positioning and $3$D orientation estimation with known $3$D velocity, ii) $3$D velocity and orientation estimation with known $3$D position, and iii)  $3$D position and velocity estimation with known $3$D orientation. Finally, we derive the FIM for the $9$D localization ($3$D position, $3$D orientation, and $3$D velocity estimation) in terms of the FIM for the $3$D localization. With these derivations, we present the number of LEOs, the operating frequency, the number of transmission time slots, and the number of receive antennas that allow for different levels of location estimation. 

\subsubsection{determining the minimal infrastructure for $9$D localization}

We present the minimal infrastructure needed for various levels of localization. These key insights are direct results of the interaction between the number of LEOs, the operating frequency, the number of transmission time slots, and the number of receive antennas. They are as follows: i) the $3$D position estimation of a receiver in the presence of time and frequency offsets and Doppler rate when the $3$D orientation and  $3$D velocity are known is possible by utilizing the signal from a single LEO satellite observed during four time slots on a single receive antenna, ii) the $3$D estimation of a receiver's orientation in the presence of time and frequency offsets and Doppler rate when the $3$D position and  $3$D velocity are known is possible by utilizing the signal from two LEO satellites observed during one transmission time slot on multiple receive antennas or by utilizing the signal from a single LEO satellite observed during two transmission time slots on multiple receive antennas, iii) the $3$D estimation of a receiver's velocity in the presence of time and frequency offsets and Doppler rate when the $3$D position and  $3$D orientation are known is possible by utilizing the signal from two LEO satellites observed during two transmission time slots on a single receive antenna or by utilizing the signal from a single LEO satellite observed during three transmission time slots on a single receive antenna, and iv) even in the presence of time and frequency offsets and Doppler rate, it is possible to perform $9$D localization ($3$D position, $3$D velocity, and $3$D orientation estimation) of a receiver by utilizing the signals from  three LEO satellites observed during three transmission time slots received through multiple receive antennas.

It is important to note that while we have focused on LEO-based systems, this work applies to general $9$D localization using mobile anchors with known trajectories and velocities. For example, our result states that even in the presence of time and frequency offsets and Doppler rate, the $3$D orientation estimation of a multi-antenna receiver, when the $3$D position and $3$D velocity are known, can be estimated using signals from at least two satellites during a single transmission time slot can be generalized to the case when the anchors are not satellites. 

Another example of a result that generalizes to the case where the anchors are not satellites is that even in the presence of time and frequency offsets and Doppler rate, it is possible to perform $9$D localization ($3$D position, $3$D velocity, and $3$D orientation estimation) of a receiver by utilizing the signals from three LEO satellites observed during three transmission time slots received through multiple receive antennas.

\textit{Notation:}
 The function $
 \bm{F}_{\bm{v}}(\bm{w} ; \bm{x}, \bm{y}) \triangleq \mathbb{E}_{\bm{v}}\left\{\left[\nabla_{ \bm{x}} \ln f(\bm{w})\right]\left[\nabla_{\bm{y}} \ln f(\bm{w})\right]^{\mathrm{T}}\right\}
 $ and $ \bm{G}_{\bm{v}}(\bm{w} ; \bm{x}, \bm{y})$
 describes the loss of information in the FIM defined by $
 \bm{F}_{\bm{v}}(\bm{w} ; \bm{x}, \bm{y})$ due to uncertainty in the nuisance parameters. The inner product of $\bm{x}$ is $\norm{\bm{x}}^2$ and the outer product of $\bm{x}$ is $\norm{\bm{x}^{\mathrm{T}}}^2$. $\nabla_{x} y$ is the first derivative of $y$ with respect to $x$. 
\section{System Model}
We consider $N_B$ single antenna LEO satellites, each communicating with a receiver with $N_U$ antennas, through transmissions in $N_{K}$ different time slots. The transmission slots are spaced by $\Delta_t$. At the $k^{\text{th}}$ transmission time slot, the $N_B$ LEO satellites are located at $\bm{p}_{b,k}, \; \; b \in \{1,2,\cdots, N_B\}  \text{ and } k \in \{1,2,\cdots, N_K\}$. The points, $\bm{p}_{b,k}$, are described with respect to a global origin. During the $k^{\text{th}}$ time slot, the receiver has an arbitrary but known geometry with its centroid located at $\bm{p}_{U,k}.$ During the $k^{\text{th}}$ time slot, the point, $\bm{s}_{u,k}$, describes the $u^{\text{th}}$ receive antenna with respect to the centroid while the point, $\bm{p}_{u,k}$, describes the position of this element with respect to the global origin as $\bm{p}_{u,k}= \bm{p}_{U,k} + \bm{s}_{u,k}$. The point, $\bm{s}_{u}$, can be written as $\bm{s}_{u} = \bm{Q}_{U} \Tilde{\bm{s}}_{u}$ where $\Tilde{\bm{s}}_{u}$ aligns with the global reference axis  and $\bm{Q}_{U} = \bm{Q}\left(\alpha_{U}, \psi_{U}, \varphi_{U}\right)$ defines a $3$D rotation matrix \cite{lavalle2006planning}. The orientation angles of the receiver are vectorized as $\bm{\Phi}_{U} = \left[\alpha_{U}, \psi_{U}, \varphi_{U}\right]^{\mathrm{T}}$.  The centroid of the receiver at point, $\bm{p}_{U,k}$ with respect to the $b^{\text{th}}$ LEO can be written as $\bm{p}_{U,k} = \bm{p}_{b,k} + d_{bU,k}\bm{\Delta}_{bU,k}$ where  $d_{bU,k}$ is the distance from point $\bm{p}_{b,k}$ to point $\bm{p}_{U,k}$ and $\bm{\Delta}_{bU,k}$ is the corresponding unit direction vector $\bm{\Delta}_{bU,k} = [\cos \phi_{bU,k} \sin \theta_{bU,k}, \sin \phi_{bU,k} \sin \theta_{bU,k}, \cos \theta_{bU,k}]^{\mathrm{T}}$. During the $k^{\text{th}}$ transmission time slot, the angles $\phi_{bU,k}$  and $\theta_{bU,k}$ represent the angle in the azimuth and elevation from the $b^{\text{th}}$ LEO satellite to the receiver.

\begin{figure}
\centering
    \fbox{\includegraphics[clip, trim=9.5cm 9.2cm 10.5cm 3cm,width=0.7
    \textwidth]{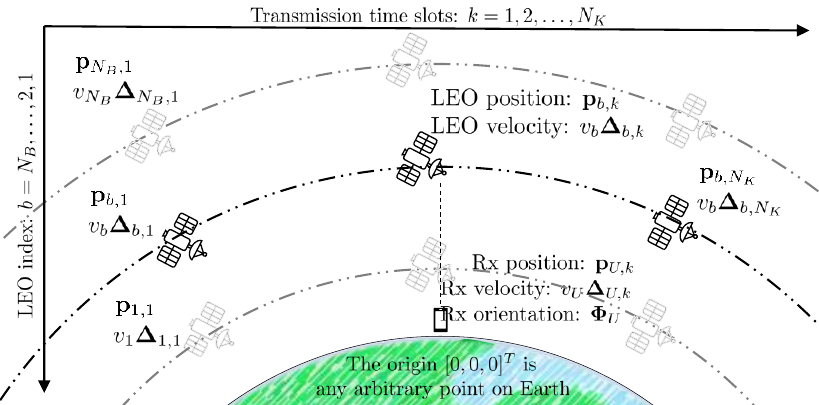}}
    \caption{LEO-based localization systems with $N_B$ LEOs transmitting during $N_K$ transmission time slots to a receiver with $N_U$ antennas.}
    \label{System_model_1}
\end{figure}

\subsection{Transmit and Receive Processing}
The $N_B$ LEO satellites transmit in $N_K$ consecutive time slots that are equally spaced in time. At time $t$, during the $k^{\text{th}}$ time slot, the $b^{\text{th}}$ LEO satellite uses quadrature modulation and transmits the following signal to the receiver
\begin{equation}
    \begin{aligned}
        x_{b,k}[t] = s_{b,k}[t] \operatorname{exp}_{}{(j2 \pi f_c t )},
    \end{aligned}
\end{equation}
where $s_{b,k}[t]$ is the complex signal envelope of the signal transmitted by the $b^{\text{th}}$ LEO satellite during the $k^{\text{th}}$ time slot, and $f_c = c / \lambda$ is the operating frequency of LEO satellites. The speed of light is $c$, and $\lambda$ is the operating wavelength. The channel model from the LEO satellites to the receiver consists only of the LOS paths. With this channel model and the transmit signal, the signal at the $u^{\text{th}}$ receive antenna during the $k^{th}$ time slot is
\begin{equation}
\begin{aligned}
\label{equ:receive_signal}
y_{u,k}[t] &= \sum_{b}^{N_B} y_{bu,k}[t], = \sum_{b=1}^{N_b} \beta_{bu,k}  \sqrt{2} \Re\left\{s_{b,k}[t_{obu,k}]  \operatorname{exp}(j( 2 \pi f_{ob,k} t_{obu,k}))\right\} + {n}_{u,k}[t], \\
&= {\mu}^{}_{u,k}[t] + {n}_{u,k}[t],
\end{aligned}
\end{equation}
where  ${\mu}^{}_{u,k}[t]$ and ${n}_{u,k}[t] \sim \mathcal{C}\mathcal{N}(0,N_0)$ are the noise-free part (useful part) of the signal and the thermal noise local to the receiver's antenna array, respectively. Also, $\beta_{bu,k}$ is the channel gain from the $b^{\text{th}}$ LEO satellite observed at the $u^{\text{th}}$ receive antenna during the $k^{\text{th}}$ time slot, $f_{ob,k} = f_c(1-\nu_{b,k}) +\epsilon_{b}$ is the observed frequency at receiver with respect to the $b^{\text{th}}$ LEO satellite, and $t_{obu,k}= t-\tau_{bu,k} +\delta_{b}$ is the effective time duration. In the observed frequency, $\nu_{b,k}$ is the Doppler with respect to the $b^{\text{th}}$ LEO satellite, and $\epsilon_{b}$ is the frequency offset measured with respect to the $b^{\text{th}}$ LEO satellite.

In the effective time duration, $\delta_{b}$ is the time offset at the receiver measured with respect to the $b^{\text{th}}$ LEO satellite, and the delay from the $u^{\text{th}}$ receive antenna to the $b^{\text{th}}$ LEO satellite during the $k^{\text{th}}$ time slot is
$$
\tau_{bu,k} \triangleq  \frac{\left\|\mathbf{p}_{u,k}-\mathbf{p}_{b,k}\right\|}{c}.
$$

\begin{remark}
The unknown Doppler rate due to the speed of the LEOs is captured by the fact that the Doppler measurements, $\nu_{b,k}$, change every transmission time slot. In the subsequent sections, we will show that these changing Doppler measurements provide more degrees of freedom for localization. Also, the offset, $\delta_{b}$, captures the unknown time offset as well as the unknown ionospheric and tropospheric delay concerning the $b^{\text{th}}$ LEO satellite.
\end{remark}

Here, the position of the $b^{\text{th}}$ LEO satellite and the $u^{\text{th}}$ receive antenna during the $k^{\text{th}}$ time slot is
$$
\begin{aligned}
    \mathbf{p}_{b,k} &= \mathbf{p}_{b,o} + \Tilde{\mathbf{p}}_{b,k}, \\
    \mathbf{p}_{u,k} &= \mathbf{p}_{u,o} + \Tilde{\mathbf{p}}_{U,k},
\end{aligned}
$$
where $\mathbf{p}_{b,o}$ and $\mathbf{p}_{u,o}$ are the reference points of the $b^{\text{th}}$ LEO satellite and the $u^{\text{th}}$ receive antenna, respectively. The distance travelled by the $b^{\text{th}}$ LEO satellite and the $u^{\text{th}}$ receive antenna are  $\Tilde{\mathbf{p}}_{b,k}$ and $\Tilde{\mathbf{p}}_{u,k}$, respectively. These traveled distances can be described as
$$
\begin{aligned}
\Tilde{\mathbf{p}}_{b,k} &= (k - 1) \Delta_{t} v_{b} \mathbf{\Delta}_{b,k}, \\
    \Tilde{\mathbf{p}}_{U,k} &= (k - 1) \Delta_{t} v_{U} \bm{\Delta}_{U,k}. \\
    \end{aligned}
$$
Here, $v_{b}$ and $v_{U}$ are speeds of the $b^{\text{th}}$ LEO satellite and receiver, respectively. The associated directions are defined as $\bm{\Delta}_{b,k} = [\cos \phi_{b,k} \sin \theta_{b,k}, \sin \phi_{b,k} \sin \theta_{b,k}, \cos \theta_{b,k}]^{\mathrm{T}}$ and $\bm{\Delta}_{U,k} = [\cos \phi_{U,k} \sin \theta_{U,k}, \sin \phi_{U,k} \sin \theta_{U,k}, \cos \theta_{U,k}]^{\mathrm{T}}$, respectively. Now, the velocity of the $b^{\text{th}}$ LEO satellite and the velocity of the receiver are $\bm{v}_{b} = v_{b} \bm{\Delta}_{b,k}$ and $\bm{v}_{U} = v_{U} \bm{\Delta}_{U,k}$, respectively. Hence, the Doppler observed by the receiver from the $b^{\text{th}}$ LEO satellite is
$$
\nu_{b,k} = \bm{\Delta}_{bU,k}^{\mathrm{T}} \frac{(\bm{v}_{b,k} - \bm{v}_{U,k})}{c}.
$$
%The points corresponding to the $v^{\text{th}}$ element on the $V^{\text{th}}$ entity is defined as $\bm{w}_{{v}}^{} = d_{\bm{w}_{{v}}^{}} \bm{\Delta}_{\bm{w}_{{v}}^{}},$

\subsection{Properties of the Received Signal}
In this section, we discuss the properties that are observable in the signal at the receiver across all receive antennas and during all the transmission slots. To accomplish this, we consider the Fourier transform of the baseband signal that is transmitted by the $b^{\text{th}}$ LEO satellite at time $t$ during the $k^{\text{th}}$ time slot
$$
S_{b,k}[f] \triangleq \frac{1}{\sqrt{2 \pi}} \int_{-\infty}^{\infty} s_{b,k}[t] \operatorname{exp}_{}{(-j2 \pi f t )} \; \;    \text{d} t.
$$
This Fourier transform is called the spectral density. In the above equation, due to the finite bandwidth constraint (bandlimited signal), the baseband signal $s_{b,k}[t]$ must be infinite in time and takes the form $s_{b,k}[t] = \sum_{i = -\infty}^{\infty} s_{ib,k} f[t-i T]$. Here, $s_{ib,k}$ is the appropriate communication symbol from a set of equally likely communication symbols, $f[t-i T]$ is the appropriate pulse, and $T$ is the pulse duration.
In any communication system, a fundamental property is the bandwidth occupied by the non-negative real-valued spectral density, $S_{b,k}[f]$. The bandwidth is a measure of the width of a non-negative real-valued spectral density defined for all frequencies. We next use the following mathematical formulations to define the span of frequencies occupied by $S_{b,k}[f]$ \cite{kay1993fundamentals}.

\subsubsection{Effective Baseband Bandwidth}
This can be viewed as the average of the squared of all frequencies normalized by the area occupied by the spectral density, $S_{b,k}$. Mathematically, the effective baseband bandwidth is
$$
\alpha_{1b,k} \triangleq\left(\frac{\int_{-\infty}^{\infty} f^2\left|S_{b,k}[f]\right|^2 d f}{\int_{-\infty}^{\infty}\left|S_{b,k}[f]\right|^2 d f}\right)^{\frac{1}{2}}.
$$
Clearly, if we view $\left|S_{b,k}[f]\right|^2$ as a probability density function (PDF), if normalized to unit area, and assume that the $\left|S_{b,k}[f]\right|^2$ is centered at $f = 0$, then the effective baseband bandwidth is simply the variance of the frequencies occupied by the spectral density.

\subsubsection{Baseband-Carrier Correlation (BCC)}
Mathematically, the BCC is
$$
\alpha_{2b,k} \triangleq\frac{\int_{-\infty}^{\infty} f\left|S_{b,k}[f]\right|^2 d f}{\left(\int_{-\infty}^{\infty} f^2\left|S_{b,k}[f]\right|^2 d f \right)^{\frac{1}{2}} \left(\int_{-\infty}^{\infty}\left|S_{b,k}[f]\right|^2 d f\right)^{\frac{1}{2}}}.
$$
In later sections, the term $\alpha_{2b,k}$ will help provide a compact representation of the mathematical description of the available information in the received signals. In line with viewing $\left|S_{b,k}[f]\right|^2$ as a PDF, the numerator of the BCC is the expected value of the frequency occupied by the spectral density. Hence, the BCC is the ratio of the mean of the frequencies occupied by the spectral density to the variance of the frequencies occupied by the spectral density.
\subsubsection{Received Signal-to-Noise Ratio}
The SNR is the ratio of the power of the signal across its occupied frequencies to the noise spectral density. Mathematically, the SNR is
$$
\underset{bu,k}{\operatorname{SNR}} \triangleq \frac{8 \pi^2 \left|\beta_{bu,k}\right|^2}{N_0} \int_{-\infty}^{\infty}\left|S_{b,k}[f]\right|^2 d f.
$$
If there is no beam split, the channel gain is constant across all receive antennas and we have
$$
\underset{b,k}{\operatorname{SNR}} \triangleq \frac{8 \pi^2 \left|\beta_{b,k}\right|^2}{N_0} \int_{-\infty}^{\infty}\left|S_{b,k}[f]\right|^2 d f.
$$
If the same signal is transmitted across all $N_{K}$ time slots, and the channel gain is constant across all receive antennas and time slots, we have
$$
\underset{b}{\operatorname{SNR}} \triangleq \frac{8 \pi^2 \left|\beta_{b}\right|^2}{N_0} \int_{-\infty}^{\infty}\left|S_{b}[f]\right|^2 d f.
$$
The mathematical description of the available information useful for localization is written in terms of these received signal properties.
\section{Available Information in the Received Signal}
In this section, we define the parameters that need to be estimated, both geometric channel parameters and nuisance parameters. The definition of these parameters serves as an intermediate step to investigating the available geometric information provided by LEOs, which subsequently helps the investigation of the feasibility of LEO-based localization under different types of LEO constellations, number of LEOs, beam split, and number of receive antennas.

\subsection{Obsevrable Parameters}
The analysis in this section is based on the received signal given by (\ref{equ:receive_signal}), which is obtained from $N_B$ LEO satellites on $N_U$ receive antennas during $N_{K}$ distinct time slots of $T$ duration each. The parameters observable in the signal received by a receiver from the $b^{\text{th}}$ LEO satellite on its $N_U$ receive antenna during the $N_{K}$ different time slots are subsequently presented. The delays observed across the $N_U$ receive antennas during the $k^{\text{th}}$ time slots are presented in vector form
$$
\bm{\tau}_{b,k}
\triangleq\left[{\tau}_{b1,k}, {\tau}_{b2,k}, \cdots,
{\tau}_{bN_U,k}\right]^{\mathrm{T}}, 
$$
then the delays across the $N_U$ receive antennas during all $N_{K}$ time slots are also vectorized as follows
$$
\bm{\tau}_{b}
\triangleq\left[\bm{\tau}_{b,1}^{\mathrm{T}}, \bm{\tau}_{b,2}^{\mathrm{T}}, \cdots, \bm{\tau}_{b,N_K}^{\mathrm{T}}\right]^{\mathrm{T}}.
$$
The Doppler observed with respect to the $b^{\text{th}}$ LEO satellite across all the $N_{K}$ transmission time slots is
$$
\bm{\nu}_{b}
\triangleq\left[{\nu}_{b,1}, {\nu}_{b,2}, \cdots, \nu_{b,N_K}\right]^{\mathrm{T}}.
$$
Next, 
the channel gain across the $N_U$ receive antennas during the $k^{\text{th}}$ time slots are presented in vector form
$$
\bm{\beta}_{b,k}
\triangleq\left[{\beta}_{b1,k}, {\beta}_{b2,k}, \cdots, {\beta}_{bN_U,k}\right]^{\mathrm{T}}, 
$$
then the delays across the $N_U$ receive antennas during all $N_{K}$ time slots are also vectorized as follows
$$
\bm{\beta}_{b}
\triangleq\left[\bm{\beta}_{b,1}^{\mathrm{T}}, \bm{\beta}_{b,2}^{\mathrm{T}}, \cdots, \bm{\beta}_{b,N_K}^{\mathrm{T}}\right]^{\mathrm{T}}.
$$
Note that if there is no beam split, the channel gain remains constant across all antennas and is simply
$$
\bm{\beta}_{b}
\triangleq\left[{\beta}_{b,1}^{\mathrm{T}}, {\beta}_{b,2}^{\mathrm{T}}, \cdots, {\beta}_{b,N_K}^{\mathrm{T}}\right]^{\mathrm{T}}.
$$
Moreover, if the channel gain is constant across all time slots, we can further represent the $b^{\text{th}}$ LEO transmission by the scalar, $\beta_b$. Finally, with these vectorized forms, the total parameters observable in the signals received at a receiver from the $b^{\text{th}}$ LEO satellite on its $N_U$ receive antenna during the $N_{K}$ different time slots are vectorized as follows
$$
\bm{\eta}_{b} \triangleq\left[\bm{\tau}_{b}^{\mathrm{T}}, \bm{\nu}_{b}^{\mathrm{T}}, \bm{\beta}_{b}^{\mathrm{T}}, \delta_{b}, \epsilon_{b}\right]^{\mathrm{T}}.
$$
All signals observable from all $N_B$ LEO satellites across $N_U$ receive antennas during the $N_{K}$ different time slots are vectorized as
$$
\bm{\eta}_{} \triangleq\left[\bm{\eta}_{1}^{\mathrm{T}}, \bm{\eta}_{2}^{\mathrm{T}}, \cdots, \bm{\eta}_{N_B}^{\mathrm{T}}\right]^{\mathrm{T}}.
$$
After specifying the parameters that are present in the signals received from the LEO satellites, considering the time slots and receive antennas, we present the mathematical preliminaries needed for further discussions.

\subsection{Mathematical Preliminaries}
Although we have specified the parameters in the signals received in a LEO-based localization system, we still have to investigate the estimation accuracy achievable when estimating these parameters. Moreover, it is unclear whether all the parameters presented are separately observable and can contribute to a localization framework. One way of answering these two questions is by using the FIM. To introduce the FIM, we assume that for the received signal expressions, there exists an unbiased estimator of the parameter vector and note that the error covariance matrix of the unbiased estimator, $\hat{\bm{\eta}}$, satisfies the following information inequality
$
\mathbb{E}_{\bm{y}; \boldsymbol{\eta}}\left\{(\hat{\boldsymbol{\eta}}-\boldsymbol{\eta})(\hat{\boldsymbol{\eta}}-\boldsymbol{\eta})^{\mathrm{T}}\right\} \succeq \mathbf{J}_{ \bm{\bm{y}}; \bm{\eta}}^{-1},
$
where $\mathbf{J}_{ \bm{\bm{y}}; \bm{\eta}}$ is the general FIM for the parameter vector $\boldsymbol{\eta}.$
\begin{definition}
\label{definition_FIM_1}
The general FIM for a parameter vector, $\bm{\eta}$, defined as $\mathbf{J}_{ \bm{\bm{y}}; \bm{\eta}} =  \bm{F}_{\bm{\bm{y}}; \bm{\eta} }(\bm{y}; \bm{\eta} ; \bm{\eta}, \bm{\eta})$ is the summation of the FIM obtained from the likelihood due to the observations defined as  $\mathbf{J}_{\bm{y}|\bm{\eta}} =  \bm{F}_{{\bm{y} }}(\bm{y}| \bm{\eta} ;\bm{\eta},\bm{\eta})$ and the FIM from {\em a priori} information about the parameter vector defined as $\mathbf{J}_{ \bm{\eta}} =  \bm{F}_{{\bm{\eta} }}( \bm{\eta} ;\bm{\eta},\bm{\eta})$. In mathematical terms, we have
\begin{equation}
\label{definition_equ:definition_FIM_1}
\begin{aligned}
\mathbf{J}_{ \bm{\bm{y}}; \bm{\eta}} &\triangleq %\mathbb{E}_{\bm{r} \mid \bm{\eta}_{}}-\frac{\partial^{2} \ln \chi(\bm{r} \mid \bm{\eta}_{})}{\partial \bm{\eta}_{} \partial \bm{\eta}_{}^{\mathrm{T}}}
-\mathbb{E}_{\bm{y};\bm{\eta}_{}}\left[\frac{\partial^{2} \ln \chi(\bm{y}_{};  \bm{\eta}_{} )}{\partial \bm{\eta}_{} \partial \bm{\eta}_{}^{\mathrm{T}}}\right] \\
&= -\mathbb{E}_{\bm{y}  } \left[\frac{\partial^{2} \ln \chi(\bm{y}_{}|  \bm{\eta}_{} )}{\partial \bm{\eta}_{} \partial \bm{\eta}_{}^{\mathrm{T}}}\right] -\mathbb{E}_{ \bm{\eta}_{}}\left[\frac{\partial^{2} \ln \chi(  \bm{\eta}_{} )}{\partial \bm{\eta}_{} \partial \bm{\eta}_{}^{\mathrm{T}}}\right] \\ &= \mathbf{J}_{\bm{y}|\bm{\eta}} + \mathbf{J}_{ \bm{\eta}},
\end{aligned}
\end{equation}
where  $\chi(\bm{y}_{};  \bm{\eta}_{} )$ denotes the joint probability density function (PDF) of $\bm{y}$ and $\bm{\eta}$.
\end{definition}
%\subsection{Equivalent Fisher Information Matrix}
While the FIM is a useful statistical tool, valuable in measuring the achievable accuracy, quantifying the available information, and determining the observable parameters, it scales quadratically with the size of the parameter vector. Hence, it is favorable to partition the FIM to focus on the parameter of interest, and the Schur's complement\cite{horn2012matrix} provides a method of achieving this partition. The resulting partition is the Equivalent FIM (EFIM).

\begin{definition}
\label{definition_EFIM}
Given a parameter vector, $ \bm{\eta}_{} \triangleq\left[\bm{\eta}_{1}^{\mathrm{T}}, \bm{\eta}_{2}^{\mathrm{T}}\right]^{\mathrm{T}}$, where $\bm{\eta}_{1}$ is the parameter of interest, the resultant FIM has the structure 
$$
\mathbf{J}_{ \bm{\bm{y}}; \bm{\eta}}=\left[\begin{array}{cc}
\mathbf{J}_{ \bm{\bm{y}}; \bm{\eta}_1}^{}  & \mathbf{J}_{ \bm{\bm{y}}; \bm{\eta}_1, \bm{\eta}_2}^{} \\
 \mathbf{J}_{ \bm{\bm{y}}; \bm{\eta}_1, \bm{\eta}_2}^{\mathrm{T}} &\mathbf{J}_{ \bm{\bm{y}}; \bm{\eta}_2}^{}
\end{array}\right],
$$
where $\bm{\eta} \in \mathbb{R}^{N}, \bm{\eta}_{1} \in \mathbb{R}^{n}, \mathbf{J}_{ \bm{\bm{y}}; \bm{\eta}_1}^{} \in \mathbb{R}^{n \times n},  \mathbf{J}_{ \bm{\bm{y}}; \bm{\eta}_1, \bm{\eta}_2}\in \mathbb{R}^{n \times(N-n)}$, and $\mathbf{J}_{ \bm{\bm{y}}; \bm{\eta}_2}^{}\in$ $\mathbb{R}^{(N-n) \times(N-n)}$ with $n<N$, 
and the EFIM \cite{5571900} of  parameter ${\bm{\eta}_{1}}$ is given by 
%\begin{equation}
%\label{equ:definition_EFIM_1}
%\begin{aligned}
$\mathbf{J}_{ \bm{\bm{y}}; \bm{\eta}_1}^{\mathrm{e}} =\mathbf{J}_{ \bm{\bm{y}}; \bm{\eta}_1}^{} - \mathbf{J}_{ \bm{\bm{y}}; \bm{\eta}_1}^{nu} =\mathbf{J}_{ \bm{\bm{y}}; \bm{\eta}_1}^{}-
\mathbf{J}_{ \bm{\bm{y}}; \bm{\eta}_1, \bm{\eta}_2}^{} \mathbf{J}_{ \bm{\bm{y}}; \bm{\eta}_2}^{-1} \mathbf{J}_{ \bm{\bm{y}}; \bm{\eta}_1, \bm{\eta}_2}^{\mathrm{T}}.$
 %   \end{aligned}
%\end{equation}

Note that the term $\mathbf{J}_{ \bm{\bm{y}}; \bm{\eta}_1}^{nu}  = \mathbf{J}_{ \bm{\bm{y}}; \bm{\eta}_1, \bm{\eta}_2}^{} \mathbf{J}_{ \bm{\bm{y}}; \bm{\eta}_2}^{-1} \mathbf{J}_{ \bm{\bm{y}}; \bm{\eta}_1, \bm{\eta}_2}^{\mathrm{T}}$ describes the loss of information about ${\bm{\eta}_{1}}$  due to uncertainty in the nuisance parameters ${\bm{\eta}_{2}}$. This EFIM captures all the required information about the parameters of interest present in the FIM; as observed from the relation $(\mathbf{J}_{ \bm{\bm{y}}; \bm{\eta}_1}^{\mathrm{e}})^{-1} = [\mathbf{J}_{ \bm{\bm{y}}; \bm{\eta}}^{-1}]_{[1:n,1:n]}$.
\end{definition}

\subsection{Fisher Information Matrix for Channel Parameters}
In the definitions of the FIM and EFIM given in the previous section, the expression of the likelihood of the received signal conditioned on
the parameter vector is required. This likelihood for the received signal conditioned on
the parameter vector is defined considering the $N_B$ LEO satellites, $N_U$ receive antennas, and the $N_{K}$ time slots, and can be written as

\begin{equation}
\label{equ:likelihood}
\begin{aligned}
    &\chi(\bm{y}_{}[t]|  \bm{\eta}_{}) \propto \prod_{b = 1}^{N_B}\prod_{u = 1}^{N_U}\prod_{k = 1}^{N_K}      \operatorname{exp} \left\{\frac{2}{N_{0}} \int_0^{T_{}} \Re\left\{{\mu}_{bu,k}^{\mathrm{H}}[t] {y}_{bu,k}[t]\right\} \text{d} t-\frac{1}{N_{0}} \int_0^{T_{}}|{\mu}_{bu,k}[t]|^{2} \; \text{d} t \right\}.
    \end{aligned}
\end{equation}

Subsequently, this FIM due to the observations from the $N_B$ LEO satellite, received across the $N_U$ antennas, and during the $N_{K}$ distinct time slots can be computed with the likelihood function (\ref{equ:likelihood}) and Definition 
\ref{definition_FIM_1}, and it results in the diagonal matrix\footnote{With the assumption that the parameters from different LEO satellites are independent.}.
\begin{equation}
%\label{equ:likelihood}
\begin{aligned}
\mathbf{J}_{\bm{y}|\bm{\eta}} =  \bm{F}_{{\bm{y} }}(\bm{y}| \bm{\eta} ;\bm{\eta},\bm{\eta}) = \operatorname{diag}\left\{\bm{F}_{{\bm{y} }}(\bm{y}| \bm{\eta} ;\bm{\eta}_{1},\bm{\eta}_{1}), \ldots, \bm{F}_{{\bm{y} }}(\bm{y}| \bm{\eta} ;\bm{\eta}_{N_B},\bm{\eta}_{N_B})\right\}.
    \end{aligned}
\end{equation}
The entries in FIM due to the observations of the received signals from $b^{\text{th}}$ LEO satellite can be obtained through the simplified expression:
$$
\begin{aligned}
     \bm{F}_{{\bm{y} }}(\bm{y}| \bm{\eta} ;\bm{\eta}_{b},\bm{\eta}_{b}) =   \frac{1}{N_{0}} \sum_{u,k}^{N_U N_K}\Re\left\{ \int \nabla_{\bm{\eta}_{b}}{\mu}_{bu,k} [t]\nabla_{\bm{\eta}_{b}}{\mu}_{bu,k}^{\mathrm{H}}[t] \; \; \text{d} t \right\}.
\end{aligned}
$$
Considering the $b^{\text{th}}$ LEO satellite, the FIM focusing on the delays at the $u^{\text{th}}$ receive antenna during the $k^{\text{th}}$ time slot is
$$
\begin{aligned}
\bm{F}_{{\bm{y} }}(\bm{y}| \bm{\eta} ;{\tau}_{bu,k},{\tau}_{bu,k}) &= -\bm{F}_{{\bm{y} }}(\bm{y}| \bm{\eta} ;{\tau}_{bu,k},{\delta}_{b}) = \underset{bu,k}{\operatorname{SNR}} \omega_{b,k},
\end{aligned}
$$
where $\omega_{b,k} = \Bigg[ \alpha_{1b,k}^2 +  2f_{ob,k} \alpha_{1b,k} \alpha_{2b,k} +  f_{ob,k}^2   \Bigg].$
The FIM relating the delay and the channel gain due to the observations of the received signals from $b^{\text{th}}$ LEO satellite to the $u^{\text{th}}$ receive antenna during the $k^{\text{th}}$ time slot is
$$
\begin{aligned}
\bm{F}_{{\bm{y} }}(\bm{y}| \bm{\eta} ;{\tau}_{bu,k},{\beta}_{bu,k}) &= \frac{1}{2 \pi \left|\beta_{bu,k}\right|^2}{\beta}^{\mathrm{H}}_{bu,k}\underset{bu,k}{\operatorname{SNR}} \; \Re\left\{j (\alpha_{2b,k} \alpha_{1b,k} + 1)
\right\}.
\end{aligned}
$$
The FIM relating the delay and the Doppler due to the observations of the received signals from $b^{\text{th}}$ LEO satellite to the $u^{\text{th}}$ receive antenna during the $k^{\text{th}}$ time slot is
$$
\begin{aligned}
\bm{F}_{{\bm{y} }}(\bm{y}| \bm{\eta} ;{\tau}_{bu,k},{\nu}_{b,k}) =& \frac{2\left|\beta_{bu,k}\right|^2}{N_0}  \Re\left\{-j 2 \pi f_{c} \int_{}^{} t_{obu,k} \nabla_{{\tau}_{bu,k}}s^{\mathrm{H}}(t_{obu,k}) s^{}(t_{obu,k}) \; dt_{obu,k}
\right\} \\
+&  \frac{2\left|\beta_{bu,k}\right|^2}{N_0} \Re\left\{-j^2 4 \pi^2 f_{ob,k}^{} f_{c} \int_{}^{} t_{obu,k} \left|s(t_{obu,k})\right|^2  \; dt_{obu,k}
\right\}.
\end{aligned}
$$
The FIM of the delay and the frequency offset due to the observations of the received signals from $b^{\text{th}}$ LEO satellite to the $u^{\text{th}}$ receive antenna during the $k^{\text{th}}$ time slot is
$$
\begin{aligned}
\bm{F}_{{\bm{y} }}(\bm{y}| \bm{\eta} ;{\tau}_{bu,k},{\epsilon}_{b}) =& \frac{2\left|\beta_{bu,k}\right|^2}{N_0}  \Re\left\{j 2 \pi  \int_{}^{} t_{obu,k} \nabla_{{\tau}_{bu,k}}s^{\mathrm{H}}(t_{obu,k}) s^{}(t_{obu,k}) \; dt_{obu,k}
\right\} \\
+&  \frac{2\left|\beta_{bu,k}\right|^2}{N_0} \Re\left\{j^2 4 \pi^2 f_{ob,k}^{}  \int_{}^{} t_{obu,k} \left|s(t_{obu,k})\right|^2  \; dt_{obu,k}
\right\}.
\end{aligned}
$$
\begin{remark}
    While the channel gain affects the information available about the delay, knowledge, or lack thereof of the Dopplers, the channel gain and the frequency offset do not affect the available information about the delay. Hence, the accuracy achievable while estimating the delays is independent of the knowledge of the Dopplers, channel gain, and the frequency offset.
    \end{remark}
\begin{proof}
    The proof follows by noticing that the FIM of the delay is dependent on the channel gain through the SNR, but the FIM relating the delay to the Dopplers, the channel gain, and the frequency offset is zero. 
\end{proof}
The FIM focusing on the Doppler observed with respect to the $b^{\text{th}}$ LEO satellite at the receiver during the $k^{\text{th}}$ time slot is presented next
$$
\begin{aligned}
\bm{F}_{{\bm{y} }}(\bm{y}| \bm{\eta} ;{\nu}_{b,k}, {\tau}_{bu,k}) = \bm{F}_{{\bm{y} }}(\bm{y}| \bm{\eta} ; {\tau}_{bu,k}, {\nu}_{b,k})^{\mathrm{H}}.
\end{aligned}
$$
The FIM of the Doppler observed with respect to the $b^{\text{th}}$ LEO satellite at the receiver during the $k^{\text{th}}$ time slot is
$$
\begin{aligned}
\bm{F}_{{\bm{y} }}(\bm{y}| \bm{\eta} ;{\nu}_{b,k}, {\nu}_{b,k}) = 0.5  \underset{bu,k}{\operatorname{SNR}} f_{c}^2 t_{obu,k}^2.
\end{aligned}
$$
The FIM of the Doppler observed with respect to the $b^{\text{th}}$ LEO satellite and the corresponding  channel gain at the receiver during the $k^{\text{th}}$ time slot is
$$
\begin{aligned}
\bm{F}_{{\bm{y} }}(\bm{y}| \bm{\eta} ;{\nu}_{b,k}, {\beta}_{bu,k}) &= \frac{2 \beta_{bu,k}}{N_0}   \Re\left\{ j 2 \pi f_{c} \int_{}^{} t_{obu,k} \left|s(t_{obu,k})\right|^2  \; dt_{obu,k} \right\}.
\end{aligned}
$$
The FIM of the Doppler observed with respect to the $b^{\text{th}}$ LEO satellite and the corresponding time offset at the receiver during the $k^{\text{th}}$ time slot is 
$$
\begin{aligned}
\bm{F}_{{\bm{y} }}(\bm{y}| \bm{\eta} ;{\nu}_{b,k}, {\delta}_{b}) =&\frac{2\left|\beta_{bu,k}\right|^2}{N_0} \Re\left\{j 2 \pi f_{c} \int_{}^{} t_{obu,k} s^{\mathrm{H}}(t_{obu,k}) \nabla_{{\delta}_{b}}s^{}(t_{obu,k}) \; dt_{obu,k}
\right\} \\
+&  \frac{2\left|\beta_{bu,k}\right|^2}{N_0} \Re\left\{j^2 4 \pi^2 f_{ob,k}^{} f_{c} \int_{}^{} t_{obu,k} \left|s(t_{obu,k})\right|^2  \; dt_{obu,k}
\right\}.
\end{aligned}
$$
The FIM of the Doppler observed with respect to the $b^{\text{th}}$ LEO satellite and the corresponding frequency offset during the $k^{\text{th}}$ time slot is
$$
\begin{aligned}
\bm{F}_{{\bm{y} }}(\bm{y}| \bm{\eta} ;{\nu}_{b,k}, {\epsilon}_{b}) = - 0.5 * \underset{bu,k}{\operatorname{SNR}} f_{c} t_{obu,k}^2.
\end{aligned}
$$
\begin{remark}
    From the above, the channel gain influences the information we have about Doppler. Also, the knowledge of delays, channel gain, and the time offset does not affect the information available about the Dopplers. Therefore, the accuracy in estimating Doppler remains unaffected by the knowledge of delays, channel gain, or time offset.
    \end{remark}
\begin{proof}
    The proof follows by noticing that the FIM of the Doppler is dependent on the channel gain through the SNR, but the FIM relating the Dopplers to the delays, the channel gain, and the time offset is zero. 
\end{proof}
The FIM focusing on the channel gain at the $u^{\text{th}}$ receive antenna during the $k^{\text{th}}$ time slot is presented next
$$
\begin{aligned}
\bm{F}_{{\bm{y} }}(\bm{y}| \bm{\eta} ;{\beta}_{bu,k},{\tau}_{bu,k}) &= \bm{F}_{{\bm{y} }}(\bm{y}| \bm{\eta} ;{\tau}_{bu,k},{\beta}_{bu,k})^{\mathrm{H}}.
\end{aligned}
$$

The FIM between the channel gain and the Doppler in the FIM due to the observations of the received signals from $b^{\text{th}}$ LEO satellite to the $u^{\text{th}}$ receive antenna during the $k^{\text{th}}$ time slot is
$$
\begin{aligned}
\bm{F}_{{\bm{y} }}(\bm{y}| \bm{\eta} ; {\beta}_{bu,k},{\nu}_{b,k})  = \bm{F}_{{\bm{y} }}(\bm{y}| \bm{\eta} ;{\nu}_{b,k}, {\beta}_{bu,k})^{\mathrm{H}}.
\end{aligned}
$$

The FIM of the channel gain in the FIM due to the observations of the received signals from $b^{\text{th}}$ LEO satellite to the $u^{\text{th}}$ receive antenna during the $k^{\text{th}}$ time slot is
$$
\begin{aligned}
\bm{F}_{{\bm{y} }}(\bm{y}| \bm{\eta} ; {\beta}_{bu,k}, {\beta}_{bu,k}) = \frac{1}{4 \pi^2 \left|\beta_{bu,k}\right|^2}\underset{bu,k}{\operatorname{SNR}}.
\end{aligned}
$$
The FIM between the channel gain and the time offset in the FIM due to the observations of the received signals from $b^{\text{th}}$ LEO satellite to the $u^{\text{th}}$ receive antenna during the $k^{\text{th}}$ time slot is
$$
\bm{F}_{{\bm{y} }}(\bm{y}| \bm{\eta} ;{\beta}_{bu,k},{\delta}_{b}) = - \bm{F}_{{\bm{y} }}(\bm{y}| \bm{\eta} ;{\beta}_{bu,k},{\tau}_{bu,k}).
$$
The FIM between the channel gain and the frequency offset in the FIM due to the observations of the received signals from $b^{\text{th}}$ LEO satellite to the $u^{\text{th}}$ receive antenna during the $k^{\text{th}}$ time slot is
$$
\begin{aligned}
\bm{F}_{{\bm{y} }}(\bm{y}| \bm{\eta} ;{\beta}_{bu,k},{\epsilon}_{b}) &= \frac{4 \pi \beta_{bu,k}}{N_0}    \Re\left\{ j   \int_{}^{} t_{obu,k} \left|s(t_{obu,k})\right|^2  \; dt_{obu,k} \right\}.
\end{aligned}
$$
\begin{remark}
The information available for the estimation of the channel gain is independent of the other parameters - delay, Doppler, time, and frequency offset. Hence, the accuracy achievable in the estimation of the channel gain does not depend on the knowledge of the other parameters.
\end{remark}
\begin{proof}
    The proof follows by noticing that the FIM relating the channel gain to the delays, Dopplers, time, and frequency offset is zero. 
\end{proof}

 The FIM focusing on the time offset at the $u^{\text{th}}$ receive antenna during the $k^{\text{th}}$ time slot with respect to the $b^{\text{th}}$ LEO satellite is presented next. The FIM between the time offset and the delay in the FIM due to the observations of the received signals from $b^{\text{th}}$ LEO satellite to the $u^{\text{th}}$ receive antenna during the $k^{\text{th}}$ time slot is
$$
\begin{aligned}
\bm{F}_{{\bm{y} }}(\bm{y}| \bm{\eta} ;{\delta}_{b}, {\tau}_{bu,k}) = \bm{F}_{{\bm{y} }}(\bm{y}| \bm{\eta} ; {\tau}_{bu,k}, {\delta}_{b}).
\end{aligned}
$$
The FIM between the time offset and the Doppler in the FIM due to the observations of the received signals from $b^{\text{th}}$ LEO satellite to the $u^{\text{th}}$ receive antenna during the $k^{\text{th}}$ time slot is
$$
\begin{aligned}
\bm{F}_{{\bm{y} }}(\bm{y}| \bm{\eta} ;{\delta}_{b}, {\nu}_{bu,k}) = \bm{F}_{{\bm{y} }}(\bm{y}| \bm{\eta} ;{\nu}_{bu,k}, {\delta}_{b})^{\mathrm{H}}.
\end{aligned}
$$
The FIM between the time offset  and the channel gain in the FIM due to the observations of the received signals from $b^{\text{th}}$ LEO satellite to the $u^{\text{th}}$ receive antenna during the $k^{\text{th}}$ time slot is
$$
\begin{aligned}
\bm{F}_{{\bm{y} }}(\bm{y}| \bm{\eta} ;{\delta}_{b}, {\beta}_{bu,k}) = \bm{F}_{{\bm{y} }}(\bm{y}| \bm{\eta} ; {\beta}_{bu,k},{\delta}_{b})^{\mathrm{H}}.
\end{aligned}
$$
The FIM of the time offset in the FIM due to the observations of the received signals from $b^{\text{th}}$ LEO satellite to the $u^{\text{th}}$ receive antenna during the $k^{\text{th}}$ time slot is
$$
\begin{aligned}
\bm{F}_{{\bm{y} }}(\bm{y}| \bm{\eta} ;{\delta}_{b}, {\delta}_{b}) = \bm{F}_{{\bm{y} }}(\bm{y}| \bm{\eta} ; {\tau}_{bu,k},{\tau}_{bu,k}) = -\bm{F}_{{\bm{y} }}(\bm{y}| \bm{\eta} ;{\delta}_{b}, {\tau}_{bu,k}).
\end{aligned}
$$
The FIM between the time offset  and the frequency offset in the FIM due to the observations of the received signals from $b^{\text{th}}$ LEO satellite to the $u^{\text{th}}$ receive antenna during the $k^{\text{th}}$ time slot is
$$
\begin{aligned}
\bm{F}_{{\bm{y} }}(\bm{y}| \bm{\eta} ;{\delta}_{b},{\epsilon}_{b}) = &\frac{2\left|\beta_{bu,k}\right|^2}{N_0} \Re\left\{j 2 \pi  \int_{}^{} t_{obu,k} \nabla_{{\delta}_{b}}s^{\mathrm{H}}(t_{obu,k}) s^{}(t_{obu,k}) \; dt_{obu,k}
\right\} \\
+&  \frac{2\left|\beta_{bu,k}\right|^2}{N_0} \Re\left\{-j^2 4 \pi^2 f_{ob,k}^{}  \int_{}^{} t_{obu,k} \left|s(t_{obu,k})\right|^2  \; dt_{obu,k}
\right\}.
\end{aligned}
$$
\begin{remark}
Neither the channel gain nor the frequency offset influences the information available about the time offset itself. Therefore, the accuracy in estimating the time offset is unaffected by knowledge of the Doppler, channel gain, or frequency offset.    
\end{remark}
\begin{proof}
    The proof follows by noticing that the FIM of the time offset is dependent on the channel gain through the SNR, but the FIM relating the time offset to the Dopplers, the channel gain, and the frequency offset is zero. 
\end{proof}
 The FIM focusing on the frequency offset at the $u^{\text{th}}$ receive antenna during the $k^{\text{th}}$ time slot with respect to the $b^{\text{th}}$ LEO satellite is presented next. The FIM between the frequency offset and the delay in the FIM due to the observations of the received signals from $b^{\text{th}}$ LEO satellite to the $u^{\text{th}}$ receive antenna during the $k^{\text{th}}$ time slot is
$$
\begin{aligned}
\bm{F}_{{\bm{y} }}(\bm{y}| \bm{\eta} ;{\epsilon}_{b}, {\tau}_{bu,k}) &= \bm{F}_{{\bm{y} }}(\bm{y}| \bm{\eta} ; {\tau}_{bu,k}, {\epsilon}_{b})^{\mathrm{H}}.
\end{aligned}
$$
The FIM of the frequency offset and the corresponding Doppler observed with respect to the $b^{\text{th}}$ LEO satellite  during the $k^{\text{th}}$ time slot is
$$
\begin{aligned}
\bm{F}_{{\bm{y} }}(\bm{y}| \bm{\eta} ;{\epsilon}_{b}, {\nu}_{b,k}) = - 0.5 * \underset{bu,k}{\operatorname{SNR}} f_{c}t_{obu,k}^2.
\end{aligned}
$$
The FIM between the frequency offset and the channel gain in the FIM due to the observations of the received signals from $b^{\text{th}}$ LEO satellite to the $u^{\text{th}}$ receive antenna during the $k^{\text{th}}$ time slot is
$$
\begin{aligned}
\bm{F}_{{\bm{y} }}(\bm{y}| \bm{\eta} ;{\epsilon}_{b}, {\beta}_{bu,k}) = \bm{F}_{{\bm{y} }}(\bm{y}| \bm{\eta} ;{\beta}_{bu,k},{\epsilon}_{b})^{\mathrm{H}}.
\end{aligned}
$$
The FIM between the frequency offset and the time offset in the FIM due to the observations of the received signals from $b^{\text{th}}$ LEO satellite to the $u^{\text{th}}$ receive antenna during the $k^{\text{th}}$ time slot is
$$
\begin{aligned}
\bm{F}_{{\bm{y} }}(\bm{y}| \bm{\eta} ;{\epsilon}_{b}, {\delta}_{b}) = \bm{F}_{{\bm{y} }}(\bm{y}| \bm{\eta} ;{\delta}_{b},{\epsilon}_{b})^{\mathrm{H}}.
\end{aligned}
$$

The FIM of the frequency offset in the FIM due to the observations of the received signals from $b^{\text{th}}$ LEO satellite to the $u^{\text{th}}$ receive antenna during the $k^{\text{th}}$ time slot is
$$
\begin{aligned}
\bm{F}_{{\bm{y} }}(\bm{y}| \bm{\eta} ;{\epsilon}_{b}, {\epsilon}_{b}) = 0.5 * \underset{bu,k}{\operatorname{SNR}}  t_{obu,k}^2.
\end{aligned}
$$

\begin{remark}

The channel gain and the time offset do not impact the information we have about the frequency offset itself. Thus, the accuracy in estimating the frequency offset is unaffected by the knowledge of delays, channel gain, and time offset. 
\end{remark}
\begin{proof}
    The proof follows by noticing that the FIM of the frequency offset is dependent on the channel gain through the SNR, but the FIM relating the frequency offset to the delays, the channel gain, and the time offset is zero. 
\end{proof}

The FIM of the channel parameters, based on the observations of the received signals, is used to derive the FIM of the receiver's location in the next section.

\section{Fisher Information Matrix for Location Parameters}
In the previous section, we highlighted the useful and nuisance parameters present in the signals received from the $N_B$ LEO satellites across the $N_U$ receive antennas during $N_{K}$ different time slots. Subsequently, we derived the information about these parameters present in the received signals and presented the structure of these parameters. In this section, we use the FIM for channel parameters to derive the FIM for the location parameters and highlight the FIM structure. This FIM for the location parameters will help us determine how feasible it is to localize a receiver with the signals received from LEO satellites. 

To proceed, we define $\bm{p}_{U} = \bm{p}_{U,0}$ and $\bm{v}_{U} = \bm{v}_{U,k}$, and the location parameters 
$$\bm{\kappa} = [\bm{p}_{U}, \bm{\Phi}_{U}, \bm{v}_{U}, \bm{\zeta}_{1}, \bm{\zeta}_{2},\cdots, \bm{\zeta}_{N_B}  ],$$
$$\text{where}$$ 
$$
\bm{\zeta}_{b} = \left[\bm{\beta}_{b}^{\mathrm{T}}, \delta_{b}, \epsilon_{b}\right]^{\mathrm{T}},
$$
and our goal is to derive the FIM of the entire location parameter vector, or different combinations of parameters, under different levels of uncertainty about the channel parameters. The FIM for the location parameters,  $\mathbf{J}_{\bm{y}|\bm{\kappa}}
$ can be obtained from the FIM for the channel parameters, $\mathbf{J}_{ \bm{\bm{y}}| \bm{\eta}}$, using the bijective transformation  $\mathbf{J}_{\bm{y}|\bm{\kappa}} \triangleq \mathbf{\Upsilon}_{\bm{\kappa}} \mathbf{J}_{ \bm{\bm{y}}| \bm{\eta}} \mathbf{\Upsilon}_{\bm{\kappa}}^{\mathrm{T}}$, where $\mathbf{\Upsilon}_{\bm{\kappa}}$ represents derivatives of the non-linear relationship between the geometric channel parameters, $ \bm{\eta}$, and the location parameters\cite{kay1993fundamentals}. The elements in the bijective transformation matrix $\mathbf{\Upsilon}_{\bm{\kappa}}$ are given in Appendix \ref{Appendix_Entries_in_transformation_matrix}. With no {\em a priori} information about the location parameters $\bm{\kappa}$, $\mathbf{J}_{\bm{y}; \bm{\kappa}} = \mathbf{J}_{\bm{y}|\bm{\kappa}}$. The EFIM taking  $\bm{\kappa}_{1} = [\bm{p}_{U}, \bm{\Phi}_{U}, \bm{v}_{U}]$ as the parameter of interest and $\bm{\kappa}_{2} = [\bm{\zeta}_{1}, \bm{\zeta}_{2},\cdots, \bm{\zeta}_{N_B}  ]$ as the nuisance parameters is now derived. %The elements in $\mathbf{J}_{\bm{y}; \bm{\kappa}_{1}}$ are found in Appendix \ref{Appendix_Entries_in_the_EFIM_Location_1}. 
\subsection{Elements in $\mathbf{J}_{ \bm{\bm{y}}; \bm{\kappa}_1}^{}$}
\label{subsection:FIM_channel_parameters}
The elements in $\mathbf{J}_{ \bm{\bm{y}}; \bm{\kappa}_1}^{}$ are presented through the following Lemmas. This FIM corresponds to the available information of the location parameters $\bm{\kappa}_{1}$ when the nuisance parameters are known. 
\begin{lemma}
\label{lemma:FIM_3D_position}
The FIM of the $3$D position of the receiver is
\begin{equation}
\label{equ_lemma:FIM_3D_position}
\begin{aligned}
{\bm{F}_{{{y} }}(\bm{y}_{}| \bm{\eta} ;\bm{p}_{U},\bm{p}_{U}) = } {\sum_{b,k^{},u^{}} \underset{bu,k}{\operatorname{SNR}}  \Bigg[\frac{\omega_{b,k} }{c^2}   \bm{\Delta}_{bu,k} 
 \bm{\Delta}_{bu^{},k^{}}^{\mathrm{T}}}  + {\frac{f_{c}^2 t_{obu,k}^2\nabla_{\bm{p}_{U}} \nu_{b,k} \nabla_{\bm{p}_{U}}^{\mathrm{T}}\nu_{b,k}}{2}  \Bigg]}.
\end{aligned}
\end{equation}

\end{lemma}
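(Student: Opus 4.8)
The plan is to obtain $\bm{F}_{\bm{y}}(\bm{y}|\bm{\eta};\bm{p}_U,\bm{p}_U)$ from the channel-parameter FIM of the previous section by the chain rule, i.e., via the bijective transformation $\mathbf{J}_{\bm{y}|\bm{\kappa}} = \mathbf{\Upsilon}_{\bm{\kappa}}\mathbf{J}_{\bm{y}|\bm{\eta}}\mathbf{\Upsilon}_{\bm{\kappa}}^{\mathrm{T}}$ and reading off the $\bm{p}_U$--$\bm{p}_U$ block. The first step is to determine which channel parameters depend on $\bm{p}_U$: from the system model, $\bm{p}_U$ enters only the delays $\tau_{bu,k}=\|\bm{p}_{u,k}-\bm{p}_{b,k}\|/c$ (through $\bm{p}_{u,k}=\bm{p}_{U,o}+\Tilde{\bm{p}}_{U,k}+\bm{s}_{u,k}$) and the Dopplers $\nu_{b,k}=\bm{\Delta}_{bU,k}^{\mathrm{T}}(\bm{v}_B-\bm{v}_U)$ (through the unit direction vector $\bm{\Delta}_{bU,k}$), whereas the channel gains $\bm{\beta}_b$, the time offset $\delta_b$ and the frequency offset $\epsilon_b$ do not. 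Hence the only nonzero columns of $\mathbf{\Upsilon}_{\bm{\kappa}}$ relevant to this block are $\nabla_{\bm{p}_U}\tau_{bu,k}$ and $\nabla_{\bm{p}_U}\nu_{b,k}$.

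Next I would compute these two Jacobians. Differentiating the Euclidean norm gives $\nabla_{\bm{p}_U}\tau_{bu,k}=(\bm{p}_{u,k}-\bm{p}_{b,k})/\!\left(c\,\|\bm{p}_{u,k}-\bm{p}_{b,k}\|\right)=\bm{\Delta}_{bu,k}/c$, the (scaled) unit vector from the $b$-th LEO to the $u$-th antenna, which produces the outer-product factor $\Delta_{bu,k}\Delta_{bu,k}^{\mathrm{T}}/c^2$. The gradient $\nabla_{\bm{p}_U}\nu_{b,k}=d_{bU,k}^{-1}\left(\mathbf{I}-\bm{\Delta}_{bU,k}\bm{\Delta}_{bU,k}^{\mathrm{T}}\right)(\bm{v}_B-\bm{v}_U)$ I would keep symbolic (its entries are tabulated in the transformation-matrix appendix), the only structural point being that the same vector appears on both sides of the outer product in the Doppler term.

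Then I would substitute the channel-FIM entries established above. Since the observation noise is independent across receive antennas and time slots and each $\mu_{bu,k}$ depends on its own delay alone, the channel FIM is block-diagonal across $b$ and, within a block, the delays $\{\tau_{bu,k}\}$ decouple across $(u,k)$; the surviving diagonal entries are $\bm{F}_{\bm{y}}(\bm{y}|\bm{\eta};\tau_{bu,k},\tau_{bu,k})=\operatorname{SNR}_{bu,k}\,\omega_{b,k}$ and $\bm{F}_{\bm{y}}(\bm{y}|\bm{\eta};\nu_{b,k},\nu_{b,k})=\tfrac{1}{2}\operatorname{SNR}_{bu,k}f_c^2 t_{obu,k}^2$. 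Forming the sandwich $\sum_{b,u,k}\big[\nabla_{\bm{p}_U}\tau_{bu,k}\,\bm{F}(\tau_{bu,k},\tau_{bu,k})\,\nabla_{\bm{p}_U}^{\mathrm{T}}\tau_{bu,k}+\nabla_{\bm{p}_U}\nu_{b,k}\,\bm{F}(\nu_{b,k},\nu_{b,k})\,\nabla_{\bm{p}_U}^{\mathrm{T}}\nu_{b,k}\big]$ then reproduces \eqref{equ_lemma:FIM_3D_position}.

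I expect the main obstacle to be the delay--Doppler cross term $\bm{F}_{\bm{y}}(\bm{y}|\bm{\eta};\tau_{bu,k},\nu_{b,k})$: in general it is nonzero (it carries a factor $\int t_{obu,k}|s(t_{obu,k})|^2\,dt_{obu,k}$), so the argument must show that its contribution to the $\bm{p}_U$ block vanishes under the signaling assumptions in force (a pulse whose energy is centered, so that this first moment is zero), leaving the position information as an exact sum of a delay part and a Doppler part with no interaction. A secondary bookkeeping point is the index discipline: the delay contribution is genuinely per-antenna through $\bm{\Delta}_{bu,k}$, while the Doppler gradient depends only on the centroid direction $\bm{\Delta}_{bU,k}$, so summing over $u$ in the Doppler term merely accumulates the per-antenna $\operatorname{SNR}_{bu,k}$ against a common outer product.
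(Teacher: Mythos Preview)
Your proposal is correct and follows essentially the same route as the paper's proof: apply the chain rule through the transformation $\mathbf{\Upsilon}_{\bm{\kappa}}$, note that only $\tau_{bu,k}$ and $\nu_{b,k}$ depend on $\bm{p}_U$, use the independence of observations across $(u,k)$ to collapse the double sum to a diagonal one, invoke the vanishing of the delay--Doppler cross term, and substitute the diagonal channel-FIM entries. The only minor difference is that the paper does not treat the cross term $\bm{F}_{\bm{y}}(\bm{y}|\bm{\eta};\tau_{bu,k},\nu_{b,k})=0$ as an obstacle to be argued here; it simply cites this as a fact already established earlier (in the proof of Remark~1), so you may invoke it directly rather than re-deriving the centered-pulse condition.
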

\begin{proof}
See Appendix \ref{Appendix_lemma_FIM_3D_position}.
\end{proof} 

\begin{lemma}
\label{lemma:FIM_3D_position_3D_orientation}
The FIM relating the $3$D position and $3$D orientation of the receiver is
\begin{equation}
\label{equ_lemma:FIM_3D_position_3D_orientation}
\begin{aligned}
&{\bm{F}_{{{y} }}(\bm{y}_{}| \bm{\eta} ;\bm{p}_{U},\bm{\Phi}_{U}) = }  {\sum_{b,k^{},u^{}} \underset{bu,k}{\operatorname{SNR}}  \Bigg[ \frac{\omega_{b,k}}{c}   \bm{\Delta}_{bu,k} 
 \nabla_{\bm{\Phi}_{U}}^{\mathrm{T}} \tau_{bu^{},k^{}}}    \Bigg].
\end{aligned}
\end{equation}

\end{lemma}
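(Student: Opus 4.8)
The plan is to read the block $\bm{F}_{\bm{y}}(\bm{y}|\bm{\eta};\bm{p}_U,\bm{\Phi}_U)$ directly off the bijective reparametrization $\mathbf{J}_{\bm{y}|\bm{\kappa}}=\mathbf{\Upsilon}_{\bm{\kappa}}\mathbf{J}_{\bm{y}|\bm{\eta}}\mathbf{\Upsilon}_{\bm{\kappa}}^{\mathrm{T}}$, taking the sub-block whose rows are indexed by the components of $\bm{p}_U$ and whose columns are indexed by the components of $\bm{\Phi}_U$. Componentwise this block equals $\sum_{\eta_i,\eta_j}(\nabla_{\bm{p}_U}\eta_i)\,\bm{F}_{\bm{y}}(\bm{y}|\bm{\eta};\eta_i,\eta_j)\,(\nabla_{\bm{\Phi}_U}\eta_j)^{\mathrm{T}}$, where $\eta_i,\eta_j$ range over the scalar channel parameters $\{\tau_{bu,k},\nu_{b,k},\beta_{bu,k},\delta_b,\epsilon_b\}_{b,u,k}$ and the scalar coefficients $\bm{F}_{\bm{y}}(\bm{y}|\bm{\eta};\eta_i,\eta_j)$ are precisely the channel-parameter FIM entries derived above. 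So the proof reduces to pruning this double sum using the sparsity of $\mathbf{\Upsilon}_{\bm{\kappa}}$ and the block structure of $\mathbf{J}_{\bm{y}|\bm{\eta}}$, and then evaluating the geometric gradients that remain.

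First I would observe that $\bm{\Phi}_U$ enters the channel parameters \emph{only} through the antenna offsets $\bm{s}_{u,k}=\bm{Q}(\alpha_U,\psi_U,\varphi_U)\tilde{\bm{s}}_u$, hence only through the delays $\tau_{bu,k}$: the Dopplers $\nu_{b,k}$ depend on the LEO--centroid geometry and on $\bm{v}_U$ but not on the rotation $\bm{Q}_U$, and the gains and clock offsets carry no orientation dependence, so $\nabla_{\bm{\Phi}_U}\nu_{b,k}=\nabla_{\bm{\Phi}_U}\beta_{bu,k}=\nabla_{\bm{\Phi}_U}\delta_b=\nabla_{\bm{\Phi}_U}\epsilon_b=\bm{0}$. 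On the $\bm{p}_U$ side only $\tau_{bu,k}$ and $\nu_{b,k}$ carry nonzero gradients. Since $\mathbf{J}_{\bm{y}|\bm{\eta}}$ is block-diagonal across LEOs and its delay sub-FIM is diagonal in the $(u,k)$ index (because $\tau_{bu,k}$ enters only $\mu_{bu,k}[t]$), the only surviving contributions are the delay--delay terms $\sum_{b,u,k}(\nabla_{\bm{p}_U}\tau_{bu,k})\,\bm{F}_{\bm{y}}(\bm{y}|\bm{\eta};\tau_{bu,k},\tau_{bu,k})\,(\nabla_{\bm{\Phi}_U}\tau_{bu,k})^{\mathrm{T}}$ and the Doppler--delay cross terms $\sum_{b,u,k}(\nabla_{\bm{p}_U}\nu_{b,k})\,\bm{F}_{\bm{y}}(\bm{y}|\bm{\eta};\nu_{b,k},\tau_{bu,k})\,(\nabla_{\bm{\Phi}_U}\tau_{bu,k})^{\mathrm{T}}$. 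Into the first I would substitute $\bm{F}_{\bm{y}}(\bm{y}|\bm{\eta};\tau_{bu,k},\tau_{bu,k})=\operatorname{SNR}_{bu,k}\,\omega_{b,k}$ and the elementary identity $\nabla_{\bm{p}_U}\tau_{bu,k}=\bm{\Delta}_{bu,k}/c$ obtained by differentiating $\tau_{bu,k}=\lVert\bm{p}_{u,k}-\bm{p}_{b,k}\rVert/c$ with $\bm{p}_{u,k}=\bm{p}_U+\tilde{\bm{p}}_{U,k}+\bm{s}_{u,k}$; this already produces the factor $\tfrac{\omega_{b,k}}{c}\bm{\Delta}_{bu,k}$ in the claim. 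The Doppler--delay family I would then argue contributes nothing --- consistent with its absence from Lemma~\ref{lemma:FIM_3D_position} --- since the channel-level coupling $\bm{F}_{\bm{y}}(\bm{y}|\bm{\eta};\nu_{b,k},\tau_{bu,k})$ reduces to terms built from $\int t_{obu,k}|s(t_{obu,k})|^2\,dt_{obu,k}$ and $\Re\{j\!\int t_{obu,k}\dot{s}^{\mathrm{H}}(t_{obu,k})s(t_{obu,k})\,dt_{obu,k}\}$, which vanish under the baseband signal model adopted here.

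Finally I would compute the only new geometric ingredient, $\nabla_{\bm{\Phi}_U}\tau_{bu,k}$. Differentiating $\tau_{bu,k}=\lVert\bm{p}_{U,k}+\bm{Q}_U\tilde{\bm{s}}_u-\bm{p}_{b,k}\rVert/c$ through $\bm{s}_{u,k}=\bm{Q}_U\tilde{\bm{s}}_u$ gives $\nabla_{\bm{\Phi}_U}\tau_{bu,k}=\tfrac{1}{c}\big(\partial\bm{s}_{u,k}/\partial\bm{\Phi}_U\big)^{\mathrm{T}}\bm{\Delta}_{bu,k}$, where the Jacobian $\partial\bm{s}_{u,k}/\partial\bm{\Phi}_U$ has columns $(\partial\bm{Q}_U/\partial\alpha_U)\tilde{\bm{s}}_u$, $(\partial\bm{Q}_U/\partial\psi_U)\tilde{\bm{s}}_u$, $(\partial\bm{Q}_U/\partial\varphi_U)\tilde{\bm{s}}_u$, assembled from the partials of the $3$D rotation matrix listed in Appendix~\ref{Appendix_Entries_in_transformation_matrix}. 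Collecting the pieces gives exactly $\sum_{b,k,u}\operatorname{SNR}_{bu,k}\big[\tfrac{\omega_{b,k}}{c}\bm{\Delta}_{bu,k}\,\nabla_{\bm{\Phi}_U}^{\mathrm{T}}\tau_{bu,k}\big]$.

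The main obstacle is the bookkeeping of the reparametrization rather than any individual calculation: one must enumerate precisely which channel parameters depend on $\bm{p}_U$ versus on $\bm{\Phi}_U$, use the (block-)diagonal structure of $\mathbf{J}_{\bm{y}|\bm{\eta}}$ so the double sum collapses to a single sum over $(b,u,k)$, and --- the delicate point --- justify that the delay--Doppler cross coupling drops out, leaving only the delay's simultaneous dependence on receiver position and orientation. The evaluation of $\nabla_{\bm{\Phi}_U}\tau_{bu,k}$ via the explicit form of $\bm{Q}(\alpha_U,\psi_U,\varphi_U)$ is the remaining routine, if error-prone, step.
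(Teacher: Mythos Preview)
Your proposal is correct and follows essentially the same route as the paper: expand the $(\bm{p}_U,\bm{\Phi}_U)$ block of $\mathbf{\Upsilon}_{\bm{\kappa}}\mathbf{J}_{\bm{y}|\bm{\eta}}\mathbf{\Upsilon}_{\bm{\kappa}}^{\mathrm{T}}$ via the chain rule, observe that only the delays carry $\bm{\Phi}_U$-dependence so the surviving terms are the delay--delay and Doppler--delay couplings, use independence across $(u,k)$ to collapse the double sum, and then invoke $\bm{F}_{\bm{y}}(\bm{y}|\bm{\eta};\nu_{b,k},\tau_{bu,k})=0$ (the paper simply cites this as an established fact from Remark~1, whereas you sketch a justification) to kill the cross term and substitute $\bm{F}_{\bm{y}}(\bm{y}|\bm{\eta};\tau_{bu,k},\tau_{bu,k})=\operatorname{SNR}_{bu,k}\omega_{b,k}$ and $\nabla_{\bm{p}_U}\tau_{bu,k}=\bm{\Delta}_{bu,k}/c$.
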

\begin{proof}
See Appendix \ref{Appendix_lemma_FIM_3D_position_3D_orientation}.
\end{proof} 

\begin{lemma}
\label{lemma:FIM_3D_position_3D_velocity}
The FIM relating the $3$D position and $3$D velocity of the receiver is
\begin{equation}
\label{equ_lemma:FIM_3D_position_3D_velocity}
\begin{aligned}
{\bm{F}_{{{y} }}(\bm{y}_{}| \bm{\eta} ;\bm{p}_{U},\bm{v}_{U}) = } 
\medmath{{\sum_{b,k^{},u^{}} \underset{bu,k}{\operatorname{SNR}}  \Bigg[ \frac{(k - 1) \omega_{b,k}\Delta_{t}}{c^2}    \bm{\Delta}_{bu,k} 
\bm{\Delta}_{bu,k}^{\mathrm{T}} }    - {\frac{f_{c}^2 t_{obu,k}^2\nabla_{\bm{p}_{U}} \nu_{b,k} \bm{\Delta}_{bU,k}^{\mathrm{T}}}{2 c}  \Bigg]}}.
\end{aligned}
\end{equation}

\end{lemma}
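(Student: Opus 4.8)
\emph{Proof strategy.} The plan is to obtain the $(\bm{p}_U,\bm{v}_U)$ block of $\mathbf{J}_{\bm{y}|\bm{\kappa}}$ from the channel-parameter FIM through the bijective transformation $\mathbf{J}_{\bm{y}|\bm{\kappa}} = \mathbf{\Upsilon}_{\bm{\kappa}}\,\mathbf{J}_{\bm{y}|\bm{\eta}}\,\mathbf{\Upsilon}_{\bm{\kappa}}^{\mathrm{T}}$, which for this block amounts to
$$
\bm{F}_{\bm{y}}(\bm{y}|\bm{\eta};\bm{p}_U,\bm{v}_U) = \sum_{a,a'}\big(\nabla_{\bm{p}_U}a\big)\,\bm{F}_{\bm{y}}(\bm{y}|\bm{\eta};a,a')\,\big(\nabla_{\bm{v}_U}a'\big)^{\mathrm{T}},
$$
where $a,a'$ range over the channel parameters $\{\tau_{bu,k},\nu_{b,k},\bm{\beta}_b,\delta_b,\epsilon_b\}$. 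First I would note that the channel gains, the time offsets, and the frequency offsets are functionally independent of $\bm{p}_U$ and $\bm{v}_U$, so their gradients vanish and they drop out. Combined with the block-diagonal structure of $\mathbf{J}_{\bm{y}|\bm{\eta}}$ across LEO indices and the delay--Doppler decoupling recorded in the Remarks (under which the cross block $\bm{F}_{\bm{y}}(\bm{y}|\bm{\eta};\tau_{bu,k},\nu_{b,k})$ makes no contribution), only two ``diagonal'' channel blocks survive: $\bm{F}_{\bm{y}}(\bm{y}|\bm{\eta};\tau_{bu,k},\tau_{bu,k}) = \operatorname{SNR}_{bu,k}\,\omega_{b,k}$ and $\bm{F}_{\bm{y}}(\bm{y}|\bm{\eta};\nu_{b,k},\nu_{b,k}) = 0.5\,\operatorname{SNR}_{bu,k}\,f_c^2 t_{obu,k}^2$.

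Next I would evaluate the required gradients from the system model. Writing $\bm{p}_{u,k}=\bm{p}_U+(k-1)\Delta_t\bm{v}_U+\bm{s}_{u,k}$ and $\tau_{bu,k}=\|\bm{p}_{u,k}-\bm{p}_{b,k}\|/c$, differentiating the Euclidean norm gives $\nabla_{\bm{p}_U}\tau_{bu,k}=\bm{\Delta}_{bu,k}/c$; since $\bm{v}_U$ enters $\tau_{bu,k}$ only through the term $(k-1)\Delta_t\bm{v}_U$, the chain rule gives $\nabla_{\bm{v}_U}\tau_{bu,k}=(k-1)\Delta_t\,\bm{\Delta}_{bu,k}/c$. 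For the Doppler, $\nu_{b,k}=\bm{\Delta}_{bU,k}^{\mathrm{T}}(\bm{v}_b-\bm{v}_U)$ yields $\nabla_{\bm{v}_U}\nu_{b,k}=-\bm{\Delta}_{bU,k}$, where the (second-order) dependence of the line-of-sight direction $\bm{\Delta}_{bU,k}$ on $\bm{v}_U$ through the propagated receiver position is neglected, consistent with the convention used elsewhere in the paper; the factor $\nabla_{\bm{p}_U}\nu_{b,k}$ is kept symbolic exactly as in Lemma~\ref{lemma:FIM_3D_position}.

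Substituting these into the transformation sum, the delay terms give $\sum_{b,u,k}\operatorname{SNR}_{bu,k}\,\frac{(k-1)\omega_{b,k}\Delta_t}{c^2}\,\bm{\Delta}_{bu,k}\bm{\Delta}_{bu,k}^{\mathrm{T}}$ and the Doppler terms give $-\sum_{b,u,k}\operatorname{SNR}_{bu,k}\,\frac{f_c^2 t_{obu,k}^2}{2}\,\nabla_{\bm{p}_U}\nu_{b,k}\,\bm{\Delta}_{bU,k}^{\mathrm{T}}$, the minus sign in the second coming from $\nabla_{\bm{v}_U}\nu_{b,k}=-\bm{\Delta}_{bU,k}$; their sum is precisely (\ref{equ_lemma:FIM_3D_position_3D_velocity}). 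I expect the main obstacle to be the bookkeeping over which partial derivatives legitimately vanish -- in particular confirming that the delay--Doppler cross block contributes nothing (which rests on the centered-pulse assumptions behind the Remarks) and that $\bm{\Delta}_{bU,k}$ is held fixed when differentiating with respect to $\bm{v}_U$ -- rather than any single calculation; once those modeling conventions are pinned down, the remainder parallels the proof of Lemma~\ref{lemma:FIM_3D_position} verbatim.
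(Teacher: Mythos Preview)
Your proposal is correct and follows essentially the same approach as the paper: expand the $(\bm{p}_U,\bm{v}_U)$ block of the transformed FIM as a sum of $(\nabla_{\bm{p}_U}a)\,\bm{F}_{\bm{y}}(\cdot;a,a')\,(\nabla_{\bm{v}_U}a')^{\mathrm{T}}$ over channel parameters, drop the cross terms by invoking independence across $(u,k)$ and the vanishing delay--Doppler FIM, and substitute the gradients $\nabla_{\bm{p}_U}\tau_{bu,k}=\bm{\Delta}_{bu,k}/c$, $\nabla_{\bm{v}_U}\tau_{bu,k}=(k-1)\Delta_t\bm{\Delta}_{bu,k}/c$, $\nabla_{\bm{v}_U}\nu_{b,k}=-\bm{\Delta}_{bU,k}$ together with the delay and Doppler FIM scalars. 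Your explicit justification for discarding the $\bm{\beta}_b,\delta_b,\epsilon_b$ terms and your remark on treating $\bm{\Delta}_{bU,k}$ as independent of $\bm{v}_U$ are, if anything, more careful than the paper's appendix, which simply starts from the $\tau$/$\nu$ terms and quotes the gradient identities without comment.
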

\begin{proof}
See Appendix \ref{Appendix_lemma_FIM_3D_position_3D_velocity}.
\end{proof}

\begin{lemma}
\label{lemma:FIM_3D_orientation_3D_orientation}
The FIM of the $3$D orientation of the receiver is
\begin{equation}
\label{equ_lemma:FIM_3D_orientation_3D_orientation}
\begin{aligned}
&{\bm{F}_{{{y} }}(\bm{y}_{}| \bm{\eta} ;\bm{\Phi}_{U},\bm{\Phi}_{U}) = } {\sum_{b,k^{},u^{}} \underset{bu,k}{\operatorname{SNR}}  \Bigg[ \omega_{b,k}     \nabla_{\bm{\Phi}_{U}} \tau_{bu^{},k^{}}
 \nabla_{\bm{\Phi}_{U}}^{\mathrm{T}} \tau_{bu^{},k^{}}}    \Bigg].
\end{aligned}
\end{equation}
\end{lemma}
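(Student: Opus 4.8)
The plan is to read this block off the channel-parameter FIM through the bijective transformation $\mathbf{J}_{\bm{y}|\bm{\kappa}}=\mathbf{\Upsilon}_{\bm{\kappa}}\mathbf{J}_{\bm{y}|\bm{\eta}}\mathbf{\Upsilon}_{\bm{\kappa}}^{\mathrm{T}}$ and to exploit the sparsity of both factors. First I would establish which channel parameters in $\bm{\eta}$ actually depend on the orientation $\bm{\Phi}_U$. The antenna position is $\bm{p}_{u,k}=\bm{p}_{U,k}+\bm{Q}(\alpha_U,\psi_U,\varphi_U)\tilde{\bm{s}}_u$, so the delay $\tau_{bu,k}=\lVert\bm{p}_{u,k}-\bm{p}_{b,k}\rVert/c$ does depend on $\bm{\Phi}_U$; on the other hand the Doppler $\nu_{b,k}=\bm{\Delta}_{bU,k}^{\mathrm{T}}(\bm{v}_B-\bm{v}_U)$ depends only on the centroid direction $\bm{\Delta}_{bU,k}$ and on $\bm{v}_U$, both of which are orientation-free, while the gains $\bm{\beta}_b$, the time offset $\delta_b$ and the frequency offset $\epsilon_b$ are separate nuisance parameters with no functional tie to $\bm{\Phi}_U$. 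Hence $\nabla_{\bm{\Phi}_U}\nu_{b,k}=\bm{0}$, and in $\mathbf{\Upsilon}_{\bm{\kappa}}$ the rows belonging to $\bm{\Phi}_U$ are supported only on the $\tau_{bu,k}$ coordinates; this is exactly why the Doppler-curvature term present in Lemma~\ref{lemma:FIM_3D_position} does not reappear here.

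Given that reduction, the chain rule gives $\nabla_{\bm{\Phi}_U}\ln\chi=\sum_{b,u,k}\nabla_{\bm{\Phi}_U}\tau_{bu,k}\,\partial_{\tau_{bu,k}}\ln\chi$, so that $\bm{F}_{{\bm{y} }}(\bm{y}|\bm{\eta};\bm{\Phi}_U,\bm{\Phi}_U)=\sum_{b,u,k}\sum_{b',u',k'}\nabla_{\bm{\Phi}_U}\tau_{bu,k}\,\bm{F}_{{\bm{y} }}(\bm{y}|\bm{\eta};\tau_{bu,k},\tau_{b'u',k'})\,\nabla_{\bm{\Phi}_U}^{\mathrm{T}}\tau_{b'u',k'}$. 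I would then invoke the sparsity of the delay sub-block of $\mathbf{J}_{\bm{y}|\bm{\eta}}$: that FIM is block-diagonal across satellites (by the independence footnote), and within the $b$-th block the noise-free waveform $\mu_{bu,k}[t]$ depends on no delay other than its own, so $\bm{F}_{{\bm{y} }}(\bm{y}|\bm{\eta};\tau_{bu,k},\tau_{b'u',k'})$ vanishes unless $(b,u,k)=(b',u',k')$. This collapses the double sum to a single one, and substituting the previously derived value $\bm{F}_{{\bm{y} }}(\bm{y}|\bm{\eta};\tau_{bu,k},\tau_{bu,k})=\operatorname{SNR}_{bu,k}\,\omega_{b,k}$ reproduces \eqref{equ_lemma:FIM_3D_orientation_3D_orientation}.

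The only calculation of substance hiding behind the symbolic $\nabla_{\bm{\Phi}_U}\tau_{bu,k}$ is differentiating $\lVert\bm{p}_{U,k}+\bm{Q}(\alpha_U,\psi_U,\varphi_U)\tilde{\bm{s}}_u-\bm{p}_{b,k}\rVert$ through the entries of the $3$D rotation matrix; since the lemma keeps this Jacobian implicit (it is catalogued with the entries of $\mathbf{\Upsilon}_{\bm{\kappa}}$ in the appendix), I would defer that computation. The genuine obstacle is therefore bookkeeping rather than analysis: one must keep the centroid-level dependence carried by $\bm{\Delta}_{bU,k}$ cleanly separated from the per-antenna dependence so that the Doppler term is correctly seen to drop out, and the block/diagonal sparsity of $\mathbf{J}_{\bm{y}|\bm{\eta}}$ must be stated precisely enough to justify collapsing the double sum to the single sum in the claim.
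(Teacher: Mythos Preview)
Your proposal is correct and follows essentially the same route as the paper's proof: write the $\bm{\Phi}_U$ block of $\mathbf{\Upsilon}_{\bm{\kappa}}\mathbf{J}_{\bm{y}|\bm{\eta}}\mathbf{\Upsilon}_{\bm{\kappa}}^{\mathrm{T}}$, observe that only the delays carry orientation dependence so the sum runs over $\nabla_{\bm{\Phi}_U}\tau_{bu,k}\,\bm{F}_{\bm{y}}(\bm{y}|\bm{\eta};\tau_{bu,k},\tau_{bu',k'})\,\nabla_{\bm{\Phi}_U}^{\mathrm{T}}\tau_{bu',k'}$, then invoke independence across antennas and time slots to kill the off-diagonal delay terms and substitute $\operatorname{SNR}_{bu,k}\,\omega_{b,k}$. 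Your write-up is in fact more explicit than the paper's about \emph{why} the Doppler contribution vanishes (the centroid-level $\bm{\Delta}_{bU,k}$ is orientation-free) and about the diagonal structure of the delay sub-block, which the paper compresses into the phrase ``due to independent measurements across receive antennas and time slots, and with appropriate substitutions.''
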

\begin{proof}
See Appendix \ref{Appendix_lemma_FIM_3D_orientation}.
\end{proof}

\begin{lemma}
\label{lemma:FIM_3D_orientation_3D_velocity}
The FIM relating the $3$D orientation and $3$D velocity of the receiver is
\begin{equation}
\label{equ_lemma:FIM_3D_orientation_3D_velocity}
\begin{aligned}
&{\bm{F}_{{{y} }}(\bm{y}_{}| \bm{\eta} ;\bm{\Phi}_{U},\bm{v}_{U})  } = {\sum_{b,k^{},u^{}} \underset{bu,k}{\operatorname{SNR}}  \Bigg[ \frac{(k - 1) \Delta_{t}^{}\omega_{b,k}}{c}     \nabla_{\bm{\Phi}_{U}} \tau_{bu^{},k^{}}
 \bm{\Delta}_{bu,k}^{\mathrm{T}} }    \Bigg].
\end{aligned}
\end{equation}

\end{lemma}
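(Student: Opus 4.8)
The plan is to obtain the $(\bm{\Phi}_{U},\bm{v}_{U})$ block of the location FIM from the channel‑parameter FIM through the bijective transformation $\mathbf{J}_{\bm{y}|\bm{\kappa}} = \mathbf{\Upsilon}_{\bm{\kappa}}\mathbf{J}_{\bm{y}|\bm{\eta}}\mathbf{\Upsilon}_{\bm{\kappa}}^{\mathrm{T}}$. Identifying the relevant rows of $\mathbf{\Upsilon}_{\bm{\kappa}}$ with the Jacobians $\nabla_{\bm{\Phi}_{U}}\bm{\eta}^{\mathrm{T}}$ and $\nabla_{\bm{v}_{U}}\bm{\eta}^{\mathrm{T}}$, and using the block‑diagonality of $\mathbf{J}_{\bm{y}|\bm{\eta}}$ across the $N_{B}$ LEOs, I would write $\bm{F}_{\bm{y}}(\bm{y}|\bm{\eta};\bm{\Phi}_{U},\bm{v}_{U}) = \sum_{b=1}^{N_{B}}(\nabla_{\bm{\Phi}_{U}}\bm{\eta}_{b}^{\mathrm{T}})\,\bm{F}_{\bm{y}}(\bm{y}|\bm{\eta};\bm{\eta}_{b},\bm{\eta}_{b})\,(\nabla_{\bm{v}_{U}}\bm{\eta}_{b}^{\mathrm{T}})^{\mathrm{T}}$, with $\bm{\eta}_{b} = [\bm{\tau}_{b}^{\mathrm{T}},\bm{\nu}_{b}^{\mathrm{T}},\bm{\beta}_{b}^{\mathrm{T}},\delta_{b},\epsilon_{b}]^{\mathrm{T}}$.

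The next step is to prune this product. The orientation $\bm{\Phi}_{U}$ enters the channel parameters only through the antenna offsets $\bm{s}_{u} = \bm{Q}_{U}\tilde{\bm{s}}_{u}$, hence only through the delays $\bm{\tau}_{b}$; the Dopplers, channel gains, and clock terms are orientation‑free, so $\nabla_{\bm{\Phi}_{U}}\bm{\nu}_{b}$, $\nabla_{\bm{\Phi}_{U}}\bm{\beta}_{b}$, $\nabla_{\bm{\Phi}_{U}}\delta_{b}$ and $\nabla_{\bm{\Phi}_{U}}\epsilon_{b}$ all vanish. On the other side, $\bm{v}_{U}$ enters $\nu_{b,k} = \bm{\Delta}_{bU,k}^{\mathrm{T}}(\bm{v}_{B}-\bm{v}_{U})$ directly and enters $\tau_{bu,k}$ through the displaced antenna position $\bm{p}_{u,k} = \bm{p}_{u,o} + (k-1)\Delta_{t}\bm{v}_{U}$. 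Consequently the only surviving term is the delay–delay coupling $\sum_{b,u,k}\bm{F}_{\bm{y}}(\bm{y}|\bm{\eta};\tau_{bu,k},\tau_{bu,k})\,\nabla_{\bm{\Phi}_{U}}\tau_{bu,k}\,\nabla_{\bm{v}_{U}}^{\mathrm{T}}\tau_{bu,k}$: the Doppler–Doppler piece disappears because $\nabla_{\bm{\Phi}_{U}}\nu_{b,k}=0$, and the delay–Doppler cross FIM is discarded following the same reduction already invoked in Lemmas~\ref{lemma:FIM_3D_position} and~\ref{lemma:FIM_3D_position_3D_velocity}.

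Then I would evaluate the two remaining gradients. The delay self‑information is already known, $\bm{F}_{\bm{y}}(\bm{y}|\bm{\eta};\tau_{bu,k},\tau_{bu,k}) = \operatorname{SNR}_{bu,k}\,\omega_{b,k}$. Differentiating $\tau_{bu,k} = \|\bm{p}_{u,k}-\bm{p}_{b,k}\|/c$ with respect to $\bm{v}_{U}$ by the chain rule through $\bm{p}_{u,k}$ gives $\nabla_{\bm{v}_{U}}\tau_{bu,k} = (k-1)\Delta_{t}\,\nabla_{\bm{p}_{u,k}}\tau_{bu,k} = \frac{(k-1)\Delta_{t}}{c}\,\bm{\Delta}_{bu,k}$, where $\bm{\Delta}_{bu,k}$ is the unit vector from the $b$‑th LEO to the $u$‑th receive antenna, and the $k=1$ slot contributes nothing. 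The gradient $\nabla_{\bm{\Phi}_{U}}\tau_{bu,k}$ is the entry of $\mathbf{\Upsilon}_{\bm{\kappa}}$ tabulated in Appendix~\ref{Appendix_Entries_in_transformation_matrix}. Substituting into the surviving sum and pulling the scalar $\frac{(k-1)\Delta_{t}}{c}$ out of the outer product yields exactly~\eqref{equ_lemma:FIM_3D_orientation_3D_velocity}.

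The main obstacle is bookkeeping rather than analytical depth: one must track that the delay at slot $k$ depends on $\bm{v}_{U}$ only through the $(k-1)\Delta_{t}$ displacement and not through $\bm{p}_{u,o}$, and one must justify that the delay–Doppler cross couplings $\bm{F}_{\bm{y}}(\bm{y}|\bm{\eta};\tau_{bu,k},\nu_{b,k})$ do not leak into the final expression --- this is consistent with, and inherits its justification from, the analogous step in the earlier position lemmas (ultimately the zero‑mean time‑support / centered‑pulse convention that nulls those terms). Everything else is a direct substitution of the previously derived channel‑parameter FIM entries and the Jacobian entries from the appendix.
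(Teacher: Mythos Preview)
Your proposal is correct and follows essentially the same route as the paper: expand the $(\bm{\Phi}_{U},\bm{v}_{U})$ block of the transformed FIM into delay--delay, delay--Doppler, Doppler--delay and Doppler--Doppler pieces, eliminate the last three using $\nabla_{\bm{\Phi}_{U}}\nu_{b,k}=0$ together with $\bm{F}_{\bm{y}}(\bm{y}|\bm{\eta};\nu_{b,k},\tau_{bu,k})=0$, and then substitute $\bm{F}_{\bm{y}}(\bm{y}|\bm{\eta};\tau_{bu,k},\tau_{bu,k})=\operatorname{SNR}_{bu,k}\,\omega_{b,k}$ and $\nabla_{\bm{v}_{U}}\tau_{bu,k}=(k-1)\Delta_{t}\bm{\Delta}_{bu,k}/c$. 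If anything, you are more explicit than the paper about why $\nabla_{\bm{\Phi}_{U}}\nu_{b,k}$ vanishes (orientation enters only through $\bm{s}_{u}=\bm{Q}_{U}\tilde{\bm{s}}_{u}$), which the paper's appendix leaves implicit.
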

\begin{proof}
See Appendix \ref{Appendix_lemma_FIM_3D_orientation_3D_velocity}.
\end{proof}

\begin{lemma}
\label{lemma:FIM_3D_velocity}
The FIM of the $3$D velocity of the receiver is
\begin{equation}
\label{equ_lemma:FIM_3D_velocity}
\begin{aligned}{\bm{F}_{{{y} }}(\bm{y}_{}| \bm{\eta} ;\bm{v}_{U},\bm{v}_{U}) = } \medmath{\sum_{b,k^{},u^{}} \underset{bu,k}{\operatorname{SNR}}  \Bigg[ \frac{(k - 1)^2 \Delta_{t}^{2}\omega_{b,k}}{c^2}     \bm{\Delta}_{bu,k} 
\bm{\Delta}_{bu,k}^{\mathrm{T}} }  + \medmath{\frac{f_{c}^2 t_{obu,k}^2\bm{\Delta}_{bU,k}\bm{\Delta}_{bU,k}^{\mathrm{T}}}{2 c^2}  \Bigg]}.
\end{aligned}
\end{equation}

\end{lemma}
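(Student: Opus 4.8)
The plan is to obtain $\bm{F}_{\bm{y}}(\bm{y}|\bm{\eta};\bm{v}_U,\bm{v}_U)$ exactly as in the proof of Lemma~\ref{lemma:FIM_3D_position}, i.e., by pushing the channel-parameter FIM $\mathbf{J}_{\bm{y}|\bm{\eta}}$ through the Jacobian $\mathbf{\Upsilon}_{\bm{\kappa}}$. First I would note that among the channel parameters in $\bm{\eta}_b$ only the delays $\tau_{bu,k}$ and the Dopplers $\nu_{b,k}$ have nonzero derivatives with respect to $\bm{v}_U$: the channel gains $\bm{\beta}_b$, the time offset $\delta_b$, and the frequency offset $\epsilon_b$ are receiver clock/hardware quantities and do not move with the receiver velocity. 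Hence only the $(\tau,\tau)$, $(\nu,\nu)$, and $(\tau,\nu)$ blocks contribute, and
\begin{equation*}
\begin{aligned}
\bm{F}_{\bm{y}}(\bm{y}|\bm{\eta};\bm{v}_U,\bm{v}_U)=\sum_{b,u,k}\Big[&\nabla_{\bm{v}_U}\tau_{bu,k}\,\bm{F}_{\bm{y}}(\bm{y}|\bm{\eta};\tau_{bu,k},\tau_{bu,k})\,\nabla_{\bm{v}_U}^{\mathrm{T}}\tau_{bu,k}\\
&+\nabla_{\bm{v}_U}\nu_{b,k}\,\bm{F}_{\bm{y}}(\bm{y}|\bm{\eta};\nu_{b,k},\nu_{b,k})\,\nabla_{\bm{v}_U}^{\mathrm{T}}\nu_{b,k}+(\text{cross terms})\Big].
\end{aligned}
\end{equation*}

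Next I would compute the two velocity gradients from the system model. Since $\mathbf{p}_{u,k}=\mathbf{p}_{u,o}+(k-1)\Delta_t\bm{v}_U$ and $\tau_{bu,k}=\|\mathbf{p}_{u,k}-\mathbf{p}_{b,k}\|/c$, the chain rule gives $\nabla_{\bm{v}_U}\tau_{bu,k}=\tfrac{(k-1)\Delta_t}{c}\bm{\Delta}_{bu,k}$, with $\bm{\Delta}_{bu,k}$ the line-of-sight unit vector from the $b$-th satellite to the $u$-th antenna. From $\nu_{b,k}=\bm{\Delta}_{bU,k}^{\mathrm T}(\bm{v}_B-\bm{v}_U)$, treating $\bm{\Delta}_{bU,k}$ as fixed to first order (its dependence on $\bm{v}_U$ through the position shift being second order), $\nabla_{\bm{v}_U}\nu_{b,k}=-\bm{\Delta}_{bU,k}$. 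Substituting the scalar FIMs already derived, $\bm{F}_{\bm{y}}(\bm{y}|\bm{\eta};\tau_{bu,k},\tau_{bu,k})=\operatorname{SNR}_{bu,k}\omega_{b,k}$ and $\bm{F}_{\bm{y}}(\bm{y}|\bm{\eta};\nu_{b,k},\nu_{b,k})=\tfrac12\operatorname{SNR}_{bu,k}f_c^2 t_{obu,k}^2$, the delay contribution is $\tfrac{(k-1)^2\Delta_t^2\omega_{b,k}}{c^2}\operatorname{SNR}_{bu,k}\bm{\Delta}_{bu,k}\bm{\Delta}_{bu,k}^{\mathrm T}$, while in the Doppler contribution the two minus signs cancel to give $\tfrac{f_c^2 t_{obu,k}^2}{2}\operatorname{SNR}_{bu,k}\bm{\Delta}_{bU,k}\bm{\Delta}_{bU,k}^{\mathrm T}$; summing over $b,u,k$ yields \eqref{equ_lemma:FIM_3D_velocity}. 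As a consistency check, the same gradients substituted into the position--velocity push-through reproduce the minus sign of Lemma~\ref{lemma:FIM_3D_position_3D_velocity}.

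The step I expect to be the main obstacle is disposing of the $(\tau,\nu)$ cross terms, which are dropped here just as the analogous delay--Doppler $\bm{p}_U$ cross term was dropped in Lemma~\ref{lemma:FIM_3D_position}. I would handle it by invoking that, after $\Re\{\cdot\}$ and integration over the (symmetric) pulse support, the residual is either zero or an $O(1/f_c)$ correction negligible against the retained $\omega_{b,k}$ and $f_c^2 t_{obu,k}^2$ terms; this is the one place that needs an argument rather than bookkeeping of gradients. A secondary subtlety is making explicit, as a standing far-field modeling assumption, the first-order treatment of $\bm{\Delta}_{bU,k}$ and the identification of the antenna-level direction $\bm{\Delta}_{bu,k}$ with the centroid-level direction $\bm{\Delta}_{bU,k}$.
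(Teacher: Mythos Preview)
Your approach is essentially the paper's: push the channel-parameter FIM through the Jacobian, keep only the $(\tau,\tau)$ and $(\nu,\nu)$ blocks, substitute $\nabla_{\bm v_U}\tau_{bu,k}=(k-1)\Delta_t\,\bm\Delta_{bu,k}/c$ and $\nabla_{\bm v_U}\nu_{b,k}=-\bm\Delta_{bU,k}$, and sum. The one place you overcomplicate things is the $(\tau,\nu)$ cross term: you flag it as ``the main obstacle'' and plan a symmetry/$O(1/f_c)$ argument, but the paper simply invokes the fact, already recorded in the channel-parameter FIM section and in Remarks~1--2, that $\bm F_{\bm y}(y_{bu,k}\mid\bm\eta;\nu_{b,k},\tau_{bu,k})=0$ exactly; no approximation is needed, and the same device was used verbatim in the proofs of Lemmas~\ref{lemma:FIM_3D_position}--\ref{lemma:FIM_3D_orientation_3D_velocity}. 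Likewise, your ``secondary subtlety'' is not one in the paper's framework: $\nabla_{\bm v_U}\nu_{b,k}=-\bm\Delta_{bU,k}$ is taken as the defined entry of the transformation matrix (Appendix~\ref{Appendix_Entries_in_transformation_matrix}), not an approximation to be justified.
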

\begin{proof}
See Appendix \ref{Appendix_lemma_FIM_3D_velocity}.
\end{proof}
\begin{remark}
    When all other location parameters are known, the information available at the $k^{\text{th}}$ time slot for the estimation of $3$D velocity through the delay is a factor $ \frac{(k - 1)^2 \Delta_{t}^{2}}{c^2}$ more than the information available for the estimation of $3$D position.
\end{remark}
\begin{remark}
It is impossible to estimate the $3$D velocity of the receiver using only the delays observed at one-time slot.
\end{remark}
\subsection{Elements in $\mathbf{J}_{ \bm{\bm{y}}; \bm{\kappa}_1}^{nu}$}
\label{subsection:information_loss_FIM_channel_parameters}
The elements in $\mathbf{J}_{ \bm{\bm{y}}; \bm{\kappa}_1}^{nu}$ are presented in this section. These elements represent  the loss of information about ${\bm{\kappa}_{1}}$  due to uncertainty in the nuisance parameters ${\bm{\kappa}_{2}}$.
\begin{lemma}
\label{lemma:information_loss_FIM_3D_position}
The loss of information about $3$D position of the receiver due to uncertainty in the nuisance parameters ${\bm{\kappa}_{2}}$ is
\begin{equation}
\label{equ_lemma:information_loss_FIM_3D_position}
\begin{aligned}
{\bm{G}_{{{y} }}(\bm{y}_{}| \bm{\eta} ;\bm{p}_{U},\bm{p}_{U}) = } \medmath{\sum_{b}   \norm{\sum_{k^{},u^{}} \underset{bu^{},k^{}}{\operatorname{SNR}}\bm{\Delta}_{bu^{},k^{}}^{\mathrm{T}} \frac{ \omega_{b,k}}{c} }}^{2}    \medmath{   \left(\sum_{u,k} \underset{bu,k}{\operatorname{SNR}} \omega_{b,k}\right)^{\mathrm{-1}}  }  + \medmath{\sum_{b}\norm{{\sum_{k^{},u^{} } \underset{bu^{},k^{}}{\operatorname{SNR}} \; \; \nabla_{\bm{p}_{U}}^{\mathrm{T}} \nu_{b,k^{}}   }  \frac{(f_{c}^{}) (t_{obu,k^{}}^{2})}{2} }^{2}  \left(\sum_{u,k} \frac{\underset{bu,k}{\operatorname{SNR}}  t_{obu,k}^{2}}{2}\right)^{-1}}
\end{aligned}
\end{equation}

\end{lemma}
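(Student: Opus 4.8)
The plan is to evaluate the Schur-complement loss term of Definition~\ref{definition_EFIM}, namely $\mathbf{J}_{\bm{y};\bm{\kappa}_1}^{nu} = \mathbf{J}_{\bm{y};\bm{\kappa}_1,\bm{\kappa}_2}\mathbf{J}_{\bm{y};\bm{\kappa}_2}^{-1}\mathbf{J}_{\bm{y};\bm{\kappa}_1,\bm{\kappa}_2}^{\mathrm{T}}$, and to read off its $\bm{p}_{U}$--$\bm{p}_{U}$ block, which is precisely $\bm{G}_{\bm{y}}(\bm{y}_{}| \bm{\eta} ;\bm{p}_{U},\bm{p}_{U})$; here $\bm{\kappa}_2 = [\bm{\zeta}_1^{\mathrm{T}},\ldots,\bm{\zeta}_{N_B}^{\mathrm{T}}]^{\mathrm{T}}$ with $\bm{\zeta}_b = [\bm{\beta}_b^{\mathrm{T}},\delta_b,\epsilon_b]^{\mathrm{T}}$. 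First I would use two facts already established above: $\mathbf{J}_{\bm{y}|\bm{\eta}}$ is block diagonal across the $N_B$ satellites, and each $\bm{\zeta}_b$ enters the noise-free signal only through satellite $b$ (indeed $\bm{\beta}_b,\delta_b,\epsilon_b$ are literally components of $\bm{\eta}_b$). Together these make $\mathbf{J}_{\bm{y};\bm{\kappa}_2}$ block diagonal in $b$ and each cross block $\mathbf{J}_{\bm{y};\bm{p}_U,\bm{\zeta}_b}$ supported on satellite $b$, so the loss collapses to $\sum_{b}\mathbf{J}_{\bm{y};\bm{p}_U,\bm{\zeta}_b}\mathbf{J}_{\bm{y};\bm{\zeta}_b}^{-1}\mathbf{J}_{\bm{y};\bm{p}_U,\bm{\zeta}_b}^{\mathrm{T}}$.

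Next I would pin down the internal structure of the per-satellite nuisance block (abbreviating $\bm{F}_{\bm{y}}(\bm{y}|\bm{\eta};\bm{x},\bm{x}')$ as $\bm{F}(\bm{x},\bm{x}')$). From the channel-parameter FIM entries above, $\bm{F}(\tau_{bu,k},\beta_{bu,k})$, $\bm{F}(\nu_{b,k},\beta_{bu,k})$, $\bm{F}(\beta_{bu,k},\delta_b)$ and $\bm{F}(\beta_{bu,k},\epsilon_b)$ are each of the form $\Re\{j\times(\text{real quantity})\}=0$; hence $\bm{\beta}_b$ is FIM-orthogonal to $\delta_b$, to $\epsilon_b$, and---propagating through the transformation used in Lemmas~\ref{lemma:FIM_3D_position}--\ref{lemma:FIM_3D_velocity}---to $\bm{p}_U$ as well. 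Thus $\bm{\beta}_b$ drops out of the loss entirely. Using in addition $\bm{F}(\delta_b,\epsilon_b)=0$, the effective nuisance block for satellite $b$ is the $2\times2$ diagonal matrix $\operatorname{diag}(\mathbf{J}_{\bm{y};\delta_b},\mathbf{J}_{\bm{y};\epsilon_b})$, with $\mathbf{J}_{\bm{y};\delta_b} = \sum_{u,k}\operatorname{SNR}_{bu,k}\omega_{b,k}$ and $\mathbf{J}_{\bm{y};\epsilon_b} = \sum_{u,k}\operatorname{SNR}_{bu,k}t_{obu,k}^2/2$ coming directly from the $\bm{F}(\delta_b,\delta_b)$ and $\bm{F}(\epsilon_b,\epsilon_b)$ entries; these are exactly the two inverse factors appearing in the claim.

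For the cross blocks I would differentiate the noise-free signal through $\bm{\eta}_b$ by the chain rule. Because $\delta_b$ enters only via $t_{obu,k}=t-\tau_{bu,k}+\delta_b$ one has $\partial_{\delta_b}=-\partial_{\tau_{bu,k}}$, so---once $\bm{F}(\tau_{bu,k},\nu_{b,k})$-type couplings are dropped---$\mathbf{J}_{\bm{y};\bm{p}_U,\delta_b} = -\sum_{u,k}\operatorname{SNR}_{bu,k}(\omega_{b,k}/c)\,\bm{\Delta}_{bu,k}$, using $\nabla_{\bm{p}_U}\tau_{bu,k}=\bm{\Delta}_{bu,k}/c$ from Appendix~\ref{Appendix_Entries_in_transformation_matrix}. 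Likewise $\epsilon_b$ enters only via $f_{ob,k}=f_c(1-\nu_{b,k})+\epsilon_b$, so $\partial_{\epsilon_b}=-f_c^{-1}\partial_{\nu_{b,k}}$, giving $\mathbf{J}_{\bm{y};\bm{p}_U,\epsilon_b} = -\sum_{u,k}(\operatorname{SNR}_{bu,k}f_c t_{obu,k}^2/2)\,\nabla_{\bm{p}_U}\nu_{b,k}$. Substituting these two vectors and the two scalar nuisance blocks into $\sum_b\mathbf{J}_{\bm{y};\bm{p}_U,\bm{\zeta}_b}\mathbf{J}_{\bm{y};\bm{\zeta}_b}^{-1}\mathbf{J}_{\bm{y};\bm{p}_U,\bm{\zeta}_b}^{\mathrm{T}}$ produces exactly the two rank-one (per-satellite) contributions $\mathbf{J}_{\bm{y};\bm{p}_U,\delta_b}\mathbf{J}_{\bm{y};\delta_b}^{-1}\mathbf{J}_{\bm{y};\bm{p}_U,\delta_b}^{\mathrm{T}}$ and $\mathbf{J}_{\bm{y};\bm{p}_U,\epsilon_b}\mathbf{J}_{\bm{y};\epsilon_b}^{-1}\mathbf{J}_{\bm{y};\bm{p}_U,\epsilon_b}^{\mathrm{T}}$ displayed in the statement.

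The step I expect to be the real obstacle is the bookkeeping of which inter-parameter FIM blocks vanish: the displayed entries for $\bm{F}(\tau_{bu,k},\nu_{b,k})$, and consequently (through the identities $\partial_{\delta_b}=-\partial_{\tau_{bu,k}}$, $\partial_{\epsilon_b}=-f_c^{-1}\partial_{\nu_{b,k}}$) also $\bm{F}(\nu_{b,k},\delta_b)$, $\bm{F}(\tau_{bu,k},\epsilon_b)$ and $\bm{F}(\delta_b,\epsilon_b)$, are not manifestly zero; making them vanish requires carefully invoking the signal model assumptions---that $|S_{b,k}[f]|^2$ is a non-negative density centered at $f=0$ and that the pulse energy is time-centered---so that the residual time--frequency correlation integrals drop. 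Only then does the position--nuisance coupling separate cleanly into a \emph{delay/time-offset} contribution and a \emph{Doppler/frequency-offset} contribution; the remaining chain-rule expansion of $\nabla_{\bm{p}_U}\tau_{bu,k}$ and $\nabla_{\bm{p}_U}\nu_{b,k}$ and the Schur products are routine.
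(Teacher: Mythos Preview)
Your proposal is correct and follows essentially the same route as the paper's own proof: expand the Schur-complement loss $\mathbf{J}_{\bm{y};\bm{\kappa}_1,\bm{\kappa}_2}\mathbf{J}_{\bm{y};\bm{\kappa}_2}^{-1}\mathbf{J}_{\bm{y};\bm{\kappa}_1,\bm{\kappa}_2}^{\mathrm{T}}$, use the per-satellite block-diagonality, treat the $\bm{\beta}_b$, $\delta_b$, $\epsilon_b$ nuisance sub-blocks separately (the paper writes the three contributions out explicitly in its long appendix display and then drops the $\bm{\beta}_b$ term), and substitute the channel-parameter FIM entries together with $\nabla_{\bm{p}_U}\tau_{bu,k}=\bm{\Delta}_{bu,k}/c$. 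Your only addition is that you spell out \emph{why} the $\bm{\beta}_b$ contribution and the $(\delta_b,\epsilon_b)$ off-diagonal vanish---via $\Re\{j\cdot(\text{real})\}=0$ and the time/frequency centering assumptions---whereas the paper simply invokes ``appropriate substitutions''; that extra care is warranted and your identification of the centering assumption as the load-bearing step is exactly right.
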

\begin{proof}
See Appendix \ref{Appendix_subsection:information_loss_3D_position_3D_position}.
\end{proof}

\begin{lemma}
\label{lemma:information_loss_FIM_3D_position_3D_orientation}
The loss of information about the FIM of the $3$D position and $3$D orientation of the receiver due to uncertainty in the nuisance parameters ${\bm{\kappa}_{2}}$ is
\begin{equation}
\label{equ_lemma:information_loss_FIM_3D_position_3D_orientation}
\begin{aligned}
{\bm{G}_{{{y} }}(\bm{y}_{}| \bm{\eta} ;\bm{p}_{U},\bm{\Phi}_{U}) = }  \medmath{ \frac{1}{c} \sum_{b}  \sum_{k^{},u^{}k^{'},u^{'}} \underset{bu^{},k^{}}{\operatorname{SNR}} \underset{bu^{'},k^{'}}{\operatorname{SNR}} \bm{\Delta}_{bu^{},k^{}}  \nabla_{\bm{\Phi}_{U}}^{\mathrm{T}} \tau_{bu^{'},k^{'}} \omega_{b,k} \omega_{b,k^{'}}}    \medmath{   \left(\sum_{u,k} \underset{bu,k}{\operatorname{SNR}} \omega_{b,k}\right)^{\mathrm{-1}}  } 
\end{aligned}
\end{equation}

\end{lemma}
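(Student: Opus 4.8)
The plan is to specialize the information‑loss matrix of Definition~\ref{definition_EFIM}, $\mathbf{J}_{\bm{y};\bm{\kappa}_1}^{nu} = \mathbf{J}_{\bm{y};\bm{\kappa}_1,\bm{\kappa}_2}\mathbf{J}_{\bm{y};\bm{\kappa}_2}^{-1}\mathbf{J}_{\bm{y};\bm{\kappa}_1,\bm{\kappa}_2}^{\mathrm{T}}$, to the $(\bm{p}_U,\bm{\Phi}_U)$ sub‑block. Since the nuisance parameters of distinct satellites are independent, $\mathbf{J}_{\bm{y};\bm{\kappa}_2} = \operatorname{diag}\{\mathbf{J}_{\bm{y};\bm{\zeta}_1},\dots,\mathbf{J}_{\bm{y};\bm{\zeta}_{N_B}}\}$ is block‑diagonal in $b$, so the correction splits as $\sum_b \mathbf{J}_{\bm{y};\bm{p}_U,\bm{\zeta}_b}\mathbf{J}_{\bm{y};\bm{\zeta}_b}^{-1}\mathbf{J}_{\bm{y};\bm{\Phi}_U,\bm{\zeta}_b}^{\mathrm{T}}$ and it suffices to treat one satellite at a time.

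Next I would prune $\bm{\zeta}_b = [\bm{\beta}_b^{\mathrm{T}},\delta_b,\epsilon_b]^{\mathrm{T}}$ using the orthogonality facts established in the earlier remarks. The channel‑gain information is orthogonal to every other parameter, so the $\bm{\beta}_b$ columns of $\mathbf{J}_{\bm{y};\bm{\kappa}_1,\bm{\zeta}_b}$ vanish and $\mathbf{J}_{\bm{y};\bm{\zeta}_b}$ carries an isolated $\bm{\beta}_b$ block; hence $\bm{\beta}_b$ contributes nothing. The time offset $\delta_b$ is orthogonal to the Dopplers and to $\epsilon_b$, while $\epsilon_b$ is orthogonal to the delays, so the effective nuisance is the pair $(\delta_b,\epsilon_b)$ with a diagonal $2\times 2$ information matrix, its $\delta_b$‑entry being $\sum_{u,k}\underset{bu,k}{\operatorname{SNR}}\,\omega_{b,k}$ (from $\bm{F}_{\bm{y}}(\bm{y}|\bm{\eta};\delta_b,\delta_b)=\sum_{u,k}\bm{F}_{\bm{y}}(\bm{y}|\bm{\eta};\tau_{bu,k},\tau_{bu,k})$).

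The decisive step is geometric: the receiver orientation enters the channel parameters only through the per‑antenna delays $\tau_{bu,k}$ via $\bm{s}_u = \bm{Q}_U\Tilde{\bm{s}}_u$, never through the Doppler $\nu_{b,k}$, which depends on the centroid position and velocity alone; thus $\nabla_{\bm{\Phi}_U}\nu_{b,k}=\bm{0}$. As $\epsilon_b$ couples to the channel parameters only through the Dopplers, this gives $\mathbf{J}_{\bm{y};\bm{\Phi}_U,\epsilon_b}=\bm{0}$, so only the $\delta_b$ term of the product survives and $\bm{G}_{\bm{y}}(\bm{y}|\bm{\eta};\bm{p}_U,\bm{\Phi}_U) = \sum_b \mathbf{J}_{\bm{y};\bm{p}_U,\delta_b}\bigl(\sum_{u,k}\underset{bu,k}{\operatorname{SNR}}\,\omega_{b,k}\bigr)^{-1}\mathbf{J}_{\bm{y};\bm{\Phi}_U,\delta_b}^{\mathrm{T}}$. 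I would then expand the two transferred gradients by the chain rule $\mathbf{J}_{\bm{y}|\bm{\kappa}}=\mathbf{\Upsilon}_{\bm{\kappa}}\mathbf{J}_{\bm{y}|\bm{\eta}}\mathbf{\Upsilon}_{\bm{\kappa}}^{\mathrm{T}}$: because $\delta_b$ couples only to $\{\tau_{bu,k}\}_{u,k}$ with $\bm{F}_{\bm{y}}(\bm{y}|\bm{\eta};\tau_{bu,k},\delta_b)=-\underset{bu,k}{\operatorname{SNR}}\,\omega_{b,k}$, and using $\nabla_{\bm{p}_U}\tau_{bu,k}=\tfrac{1}{c}\Delta_{bu,k}$ together with the $\mathbf{\Upsilon}_{\bm{\kappa}}$ entry for $\nabla_{\bm{\Phi}_U}\tau_{bu,k}$ from the Appendix, one gets $\mathbf{J}_{\bm{y};\bm{p}_U,\delta_b}=-\tfrac1c\sum_{u,k}\underset{bu,k}{\operatorname{SNR}}\,\omega_{b,k}\Delta_{bu,k}$ and $\mathbf{J}_{\bm{y};\bm{\Phi}_U,\delta_b}=-\sum_{u',k'}\underset{bu',k'}{\operatorname{SNR}}\,\omega_{b,k'}\nabla_{\bm{\Phi}_U}\tau_{bu',k'}$. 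Substituting (the two minus signs cancel) and merging the two independent summations into a single sum over $(u,k,u',k')$ yields exactly \eqref{equ_lemma:information_loss_FIM_3D_position_3D_orientation}.

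I expect the main obstacle to be the bookkeeping of vanishing cross‑blocks rather than any hard estimate: the argument rests on (i) $\nabla_{\bm{\Phi}_U}\nu_{b,k}=\bm{0}$, (ii) the collapse of $\mathbf{J}_{\bm{y};\bm{\zeta}_b}$ to a single scalar block once the orthogonal $\bm{\beta}_b$ and $\epsilon_b$ directions drop out of the $(\bm{p}_U,\bm{\Phi}_U)$ computation, and (iii) carrying the sign from $\bm{F}_{\bm{y}}(\bm{y}|\bm{\eta};\tau_{bu,k},\delta_b)=-\bm{F}_{\bm{y}}(\bm{y}|\bm{\eta};\tau_{bu,k},\tau_{bu,k})$ through both factors so the final matrix comes out positive; once these are in place the remaining manipulation is a one‑line rearrangement.
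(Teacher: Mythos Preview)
Your proposal is correct and follows essentially the same route as the paper's proof: both expand the information-loss submatrix $[\mathbf{J}_{\bm{y};\bm{\kappa}_1}^{nu}]_{[1:3,4:6]}$ per satellite, use the orthogonality of $\bm{\beta}_b$ and of $(\delta_b,\epsilon_b)$ to treat the nuisance block as diagonal, observe that the $\epsilon_b$ contribution vanishes because $\nabla_{\bm{\Phi}_U}\nu_{b,k}=\bm{0}$, and then substitute $\bm{F}_{\bm{y}}(\bm{y}|\bm{\eta};\tau_{bu,k},\delta_b)=-\underset{bu,k}{\operatorname{SNR}}\,\omega_{b,k}$ together with $\nabla_{\bm{p}_U}\tau_{bu,k}=\tfrac{1}{c}\Delta_{bu,k}$ to obtain \eqref{equ_lemma:information_loss_FIM_3D_position_3D_orientation}. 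Your write-up is considerably more explicit about why each cross-block vanishes than the paper's terse ``with appropriate substitutions,'' but the underlying argument is identical.
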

\begin{proof}
See Appendix \ref{Appendix_subsection:information_loss_3D_position_3D_orientation}.
\end{proof}

\begin{lemma}
\label{lemma:information_loss_FIM_3D_position_3D_veloctiy}
The loss of information about the FIM of the $3$D position and $3$D velocity of the receiver due to uncertainty in the nuisance parameters ${\bm{\kappa}_{2}}$ is
\begin{equation}
\label{equ_lemma:information_loss_FIM_3D_position_3D_veloctiy}
\begin{aligned}
\bm{G}_{{{y} }}(\bm{y}_{}| \bm{\eta} ;\bm{p}_{U},\bm{v}_{U}) &= \frac{\Delta_{t}}{c^2} \sum_{b,k^{},u^{}k^{'},u^{'}} \underset{bu^{},k^{}}{\operatorname{SNR}} \underset{bu^{'},k^{'}}{\operatorname{SNR}}  \medmath{   \bm{\Delta}_{bu^{},k^{}}  (k^{'} - 1) {\bm{\Delta}_{bu^{'},k^{'}}^{\mathrm{T}}} }    \medmath{   \left(\sum_{u,k} \underset{bu,k}{\operatorname{SNR}} \omega_{b,k}\right)^{\mathrm{-1}}}\\ &   - \frac{1}{c}\sum_{b,k^{},u^{}k^{'},u^{'}} \underset{bu^{},k^{}}{\operatorname{SNR}} \underset{bu^{'},k^{'}}{\operatorname{SNR}}    {{ \nabla_{\bm{p}_{U}} \nu_{b,k^{}} \bm{\Delta}_{bU,k^{'}}^{\mathrm{T}}   }  \frac{(f_{c}^{2}) (t_{obu,k^{}}^{2}t_{obu^{'},k^{'}}^{2})}{4}   \left(\sum_{u,k} \frac{\underset{bu,k}{\operatorname{SNR}}  t_{obu,k}^{2}}{2}\right)^{-1}}
\end{aligned}
\end{equation}
\end{lemma}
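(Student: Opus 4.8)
The plan is to specialize the information-loss identity $\mathbf{J}_{\bm{y};\bm{\kappa}_1}^{nu}=\mathbf{J}_{\bm{y};\bm{\kappa}_1,\bm{\kappa}_2}\mathbf{J}_{\bm{y};\bm{\kappa}_2}^{-1}\mathbf{J}_{\bm{y};\bm{\kappa}_1,\bm{\kappa}_2}^{\mathrm{T}}$ of Definition~\ref{definition_EFIM} to the $(\bm{p}_{U},\bm{v}_{U})$ sub-block, following the same template used for Lemmas~\ref{lemma:information_loss_FIM_3D_position}--\ref{lemma:information_loss_FIM_3D_position_3D_orientation}. First I would invoke the independence of the parameters of distinct LEO satellites, so that $\mathbf{J}_{\bm{y};\bm{\kappa}_2}$ is block diagonal over $b$ and the loss decomposes as $\bm{G}_{\bm{y}}(\bm{y}|\bm{\eta};\bm{p}_{U},\bm{v}_{U})=\sum_{b}\bm{F}_{\bm{y}}(\bm{y}|\bm{\eta};\bm{p}_{U},\bm{\zeta}_{b})\,\bm{F}_{\bm{y}}(\bm{y}|\bm{\eta};\bm{\zeta}_{b},\bm{\zeta}_{b})^{-1}\,\bm{F}_{\bm{y}}(\bm{y}|\bm{\eta};\bm{v}_{U},\bm{\zeta}_{b})^{\mathrm{T}}$, reducing the problem to one LEO at a time.

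Next I would collapse $\bm{\zeta}_{b}=[\bm{\beta}_{b}^{\mathrm{T}},\delta_{b},\epsilon_{b}]^{\mathrm{T}}$ to its effective content. Since $\bm{p}_{U}$ and $\bm{v}_{U}$ enter the channel parameters only through the delays $\bm{\tau}_{b}$ and the Dopplers $\bm{\nu}_{b}$, and since — by the decoupling remarks established above for the pulse shapes considered — the only nonzero cross blocks of the single-LEO channel FIM linking $\{\bm{\tau}_{b},\bm{\nu}_{b}\}$ to $\{\bm{\beta}_{b},\delta_{b},\epsilon_{b}\}$ are $\bm{F}_{\bm{y}}(\bm{y}|\bm{\eta};\tau_{bu,k},\delta_{b})=-\operatorname{SNR}_{bu,k}\omega_{b,k}$ and $\bm{F}_{\bm{y}}(\bm{y}|\bm{\eta};\nu_{b,k},\epsilon_{b})=-\tfrac{1}{2}\operatorname{SNR}_{bu,k}f_{c}t_{obu,k}^{2}$, the $\bm{\beta}_{b}$ columns of the cross FIM vanish and the $\delta_{b}$--$\epsilon_{b}$ coupling vanishes. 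Hence $\bm{F}_{\bm{y}}(\bm{y}|\bm{\eta};\bm{\zeta}_{b},\bm{\zeta}_{b})$ may be replaced by the diagonal matrix with accumulated entries $\bm{F}_{\bm{y}}(\bm{y}|\bm{\eta};\delta_{b},\delta_{b})=\sum_{u,k}\operatorname{SNR}_{bu,k}\omega_{b,k}$ and $\bm{F}_{\bm{y}}(\bm{y}|\bm{\eta};\epsilon_{b},\epsilon_{b})=\sum_{u,k}\tfrac{1}{2}\operatorname{SNR}_{bu,k}t_{obu,k}^{2}$, and the per-LEO loss splits into a ``delay part'' (routed through $\delta_{b}$, seeing only $\bm{\tau}_{b}$) plus a ``Doppler part'' (routed through $\epsilon_{b}$, seeing only $\bm{\nu}_{b}$).

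Then I would assemble the two pieces via the chain rule through $\bm{\Upsilon}_{\bm{\kappa}}$ (Appendix~\ref{Appendix_Entries_in_transformation_matrix}), using $\nabla_{\bm{p}_{U}}\tau_{bu,k}=\bm{\Delta}_{bu,k}/c$, $\nabla_{\bm{v}_{U}}\tau_{bu,k}=(k-1)\Delta_{t}\bm{\Delta}_{bu,k}/c$, and $\nabla_{\bm{v}_{U}}\nu_{b,k}=-\bm{\Delta}_{bU,k}$, which give $\bm{F}_{\bm{y}}(\bm{y}|\bm{\eta};\bm{p}_{U},\delta_{b})=-\tfrac{1}{c}\sum_{u,k}\operatorname{SNR}_{bu,k}\omega_{b,k}\bm{\Delta}_{bu,k}$, $\bm{F}_{\bm{y}}(\bm{y}|\bm{\eta};\bm{v}_{U},\delta_{b})=-\tfrac{\Delta_{t}}{c}\sum_{u,k}(k-1)\operatorname{SNR}_{bu,k}\omega_{b,k}\bm{\Delta}_{bu,k}$, $\bm{F}_{\bm{y}}(\bm{y}|\bm{\eta};\bm{p}_{U},\epsilon_{b})=-\tfrac{f_{c}}{2}\sum_{u,k}\operatorname{SNR}_{bu,k}t_{obu,k}^{2}\nabla_{\bm{p}_{U}}\nu_{b,k}$, and $\bm{F}_{\bm{y}}(\bm{y}|\bm{\eta};\bm{v}_{U},\epsilon_{b})=\tfrac{f_{c}}{2}\sum_{u,k}\operatorname{SNR}_{bu,k}t_{obu,k}^{2}\bm{\Delta}_{bU,k}$. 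Substituting these together with the two scalar inverses into the per-LEO sandwich and expanding each scalar-weighted outer product as a double sum over $(u,k)$ and $(u',k')$ yields the two terms of \eqref{equ_lemma:information_loss_FIM_3D_position_3D_veloctiy}: the two minus signs of the $\delta_{b}$ part combine into the leading $+\Delta_{t}/c^{2}$ term, while the single sign flip from $\nabla_{\bm{v}_{U}}\nu_{b,k}=-\bm{\Delta}_{bU,k}$ leaves the $\epsilon_{b}$ part with the overall $-f_{c}^{2}/4$ factor and the weight $\big(\sum_{u,k}\tfrac{1}{2}\operatorname{SNR}_{bu,k}t_{obu,k}^{2}\big)^{-1}$.

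The main obstacle is bookkeeping rather than anything conceptual. One must verify carefully that every mixed channel-FIM block other than $\bm{F}_{\bm{y}}(\bm{y}|\bm{\eta};\tau_{bu,k},\delta_{b})$ and $\bm{F}_{\bm{y}}(\bm{y}|\bm{\eta};\nu_{b,k},\epsilon_{b})$ — in particular $\bm{F}_{\bm{y}}(\bm{y}|\bm{\eta};\tau_{bu,k},\beta_{bu,k})$, $\bm{F}_{\bm{y}}(\bm{y}|\bm{\eta};\nu_{b,k},\beta_{bu,k})$, $\bm{F}_{\bm{y}}(\bm{y}|\bm{\eta};\tau_{bu,k},\epsilon_{b})$, $\bm{F}_{\bm{y}}(\bm{y}|\bm{\eta};\nu_{b,k},\delta_{b})$, and $\bm{F}_{\bm{y}}(\bm{y}|\bm{\eta};\delta_{b},\epsilon_{b})$ — vanishes; these are exactly the $\Re\{j(\cdot)\}$ terms that drop out for centered real-valued pulses, the same fact underlying the remarks above, and it is what justifies dropping the $\bm{\beta}_{b}$ block and the $\delta_{b}$--$\epsilon_{b}$ cross term so that the effective nuisance block is diagonal. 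The remaining care is tracking the $(k-1)\Delta_{t}/c$ factors and the signs so that the first term emerges with a ``$+$'' and the second with a ``$-$''; once the effective nuisance block is diagonal, the remainder is the identical sandwich computation already carried out for Lemmas~\ref{lemma:information_loss_FIM_3D_position} and~\ref{lemma:information_loss_FIM_3D_position_3D_orientation}.
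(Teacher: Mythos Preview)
Your proposal is correct and follows essentially the same route as the paper: the paper's Appendix~\ref{Appendix_subsection:information_loss_3D_position_3D_velocity} likewise expands $\mathbf{J}_{\bm{y};\bm{\kappa}_1,\bm{\kappa}_2}\mathbf{J}_{\bm{y};\bm{\kappa}_2}^{-1}\mathbf{J}_{\bm{y};\bm{\kappa}_1,\bm{\kappa}_2}^{\mathrm{T}}$ per LEO over the three nuisance groups $\bm{\beta}_{b},\delta_{b},\epsilon_{b}$, drops all the $\bm{\beta}_{b}$-related and mixed $\delta_{b}$--$\epsilon_{b}$ terms by the same vanishing cross-FIM arguments you invoke, and is left with exactly your ``delay part'' plus ``Doppler part'' sandwich (equation~\eqref{appendix_equ:lemma_nuisance_3D_position_3D_velocity}). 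Your write-up is in fact more explicit about the intermediate vectors $\bm{F}_{\bm{y}}(\bm{y}|\bm{\eta};\bm{p}_{U},\delta_{b})$, $\bm{F}_{\bm{y}}(\bm{y}|\bm{\eta};\bm{v}_{U},\epsilon_{b})$, etc., and about the sign bookkeeping than the paper's terse ``with appropriate substitutions.''
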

\begin{proof}
See Appendix \ref{Appendix_subsection:information_loss_3D_position_3D_velocity}.
\end{proof}

\begin{lemma}
\label{lemma:information_loss_FIM_3D_orientation_3D_orientation}
The loss of information about $3$D orientation of the receiver due to uncertainty in the nuisance parameters ${\bm{\kappa}_{2}}$ is
\begin{equation}
\label{equ_lemma:information_loss_FIM_3D_orientation_3D_orientation}
\begin{aligned}
{\bm{G}_{{{y} }}(\bm{y}_{}| \bm{\eta} ;\bm{\Phi}_{U},\bm{\Phi}_{U}) = } \medmath{\sum_{b}  \sum_{k^{},u^{}k^{'},u^{'}} \underset{bu^{},k^{}}{\operatorname{SNR}} \underset{bu^{'},k^{'}}{\operatorname{SNR}} \nabla_{\bm{\Phi}_{U}} \tau_{bu^{},k^{}}\nabla_{\bm{\Phi}_{U}}^{\mathrm{T}} \tau_{bu^{'},k^{'}} \omega_{b,k} \omega_{b,k^{'}}}    \medmath{   \left(\sum_{u,k} \underset{bu,k}{\operatorname{SNR}} \omega_{b,k}\right)^{\mathrm{-1}}  } 
\end{aligned}
\end{equation}

\end{lemma}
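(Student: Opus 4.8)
The plan is to specialize the EFIM decomposition of Definition~\ref{definition_EFIM} to the orientation block, $\bm{G}_{{y}}(\bm{y}_{}|\bm{\eta};\bm{\Phi}_{U},\bm{\Phi}_{U}) = \big[\mathbf{J}_{\bm{y};\bm{\kappa}_{1},\bm{\kappa}_{2}}\,\mathbf{J}_{\bm{y};\bm{\kappa}_{2}}^{-1}\,\mathbf{J}_{\bm{y};\bm{\kappa}_{1},\bm{\kappa}_{2}}^{\mathrm{T}}\big]_{\bm{\Phi}_{U},\bm{\Phi}_{U}}$, and then to collapse it using the sparsity of the couplings between $\bm{\Phi}_{U}$ and the nuisance vector $\bm{\kappa}_{2}=[\bm{\zeta}_{1},\cdots,\bm{\zeta}_{N_B}]$. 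First I would note that the channel-parameter FIM $\mathbf{J}_{\bm{y}|\bm{\eta}}$ is block diagonal across satellites and that each $\bm{\zeta}_{b}=[\bm{\beta}_{b}^{\mathrm{T}},\delta_{b},\epsilon_{b}]^{\mathrm{T}}$ is carried to the location parametrization by the identity map, so $\mathbf{J}_{\bm{y};\bm{\kappa}_{2}}$ is block diagonal in $b$; consequently the orientation block splits as $\sum_{b}\mathbf{J}_{\bm{y};\bm{\Phi}_{U},\bm{\zeta}_{b}}\,\mathbf{J}_{\bm{y};\bm{\zeta}_{b}}^{-1}\,\mathbf{J}_{\bm{y};\bm{\Phi}_{U},\bm{\zeta}_{b}}^{\mathrm{T}}$.

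The key step is to show that within each satellite's term only the time offset $\delta_{b}$ matters. The orientation $\bm{\Phi}_{U}$ acts on the channel parameters only through the delays $\{\tau_{bu,k}\}$, since the receiver rotation $\bm{Q}_{U}$ moves the antenna offsets $\bm{s}_{u}$ (hence the ranges) while leaving the Dopplers $\nu_{b,k}$ and the gains $\bm{\beta}_{b}$ unchanged; thus the $\bm{\Phi}_{U}$ row of $\mathbf{\Upsilon}_{\bm{\kappa}}$ is supported on the $\tau$-columns. Combining this with the decoupling facts already recorded for the channel parameters — $\bm{F}_{\bm{y}}(\bm{y}|\bm{\eta};\tau_{bu,k},\beta_{bu,k})=0$ and $\bm{F}_{\bm{y}}(\bm{y}|\bm{\eta};\tau_{bu,k},\epsilon_{b})=0$ (the delay Remark), together with the mutual decoupling of $\bm{\beta}_{b}$, $\delta_{b}$, $\epsilon_{b}$ inside $\bm{\zeta}_{b}$ (the channel-gain and time-offset Remarks) — yields $\mathbf{J}_{\bm{y};\bm{\Phi}_{U},\bm{\zeta}_{b}}=[\,\mathbf{0},\ \mathbf{J}_{\bm{y};\bm{\Phi}_{U},\delta_{b}},\ \mathbf{0}\,]$ and $[\mathbf{J}_{\bm{y};\bm{\zeta}_{b}}^{-1}]_{\delta_{b},\delta_{b}}=\mathbf{J}_{\bm{y};\delta_{b}}^{-1}$, so satellite $b$ contributes exactly $\mathbf{J}_{\bm{y};\bm{\Phi}_{U},\delta_{b}}\,\mathbf{J}_{\bm{y};\delta_{b}}^{-1}\,\mathbf{J}_{\bm{y};\bm{\Phi}_{U},\delta_{b}}^{\mathrm{T}}$.

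It remains to evaluate the two surviving quantities by the chain rule through the delays, using that the likelihood in (\ref{equ:likelihood}) factorizes over $(u,k)$ so the delay sub-block of $\mathbf{J}_{\bm{y}|\bm{\eta}}$ is diagonal in $(u,k)$ (the same structure behind Lemma~\ref{lemma:FIM_3D_orientation_3D_orientation}). Since $t_{obu,k}=t-\tau_{bu,k}+\delta_{b}$, a perturbation of $\delta_{b}$ acts on every $\mu_{bu,k}$ like a common perturbation of the $\tau_{bu,k}$, giving $\mathbf{J}_{\bm{y};\bm{\Phi}_{U},\delta_{b}}=\sum_{u,k}\nabla_{\bm{\Phi}_{U}}\tau_{bu,k}\,\bm{F}_{\bm{y}}(\bm{y}|\bm{\eta};\tau_{bu,k},\delta_{b})=-\sum_{u,k}\operatorname{SNR}_{bu,k}\,\omega_{b,k}\,\nabla_{\bm{\Phi}_{U}}\tau_{bu,k}$ with $\bm{F}_{\bm{y}}(\bm{y}|\bm{\eta};\tau_{bu,k},\delta_{b})=-\operatorname{SNR}_{bu,k}\omega_{b,k}$, and $\mathbf{J}_{\bm{y};\delta_{b}}=\bm{F}_{\bm{y}}(\bm{y}|\bm{\eta};\delta_{b},\delta_{b})=\sum_{u,k}\operatorname{SNR}_{bu,k}\omega_{b,k}$. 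Substituting into the per-satellite term, the minus sign squares away and one recovers the stated expression, the two $\omega$ factors taking the $(u',k')$ and $(u'',k'')$ indices of the two gradients forming the rank pattern. I expect the second step to be the main obstacle: it amounts to certifying that $\bm{\Phi}_{U}$ is Fisher-orthogonal to every nuisance coordinate except the $\delta_{b}$'s, which is exactly what reduces the nuisance inversion to the scalars $\mathbf{J}_{\bm{y};\delta_{b}}^{-1}$; once that holds, the remaining algebra parallels the companion results such as Lemma~\ref{lemma:information_loss_FIM_3D_position_3D_orientation} and is routine.
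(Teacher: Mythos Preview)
Your proposal is correct and follows essentially the same approach as the paper: expand the Schur-complement term $\mathbf{J}_{\bm{y};\bm{\kappa}_{1},\bm{\kappa}_{2}}\mathbf{J}_{\bm{y};\bm{\kappa}_{2}}^{-1}\mathbf{J}_{\bm{y};\bm{\kappa}_{1},\bm{\kappa}_{2}}^{\mathrm{T}}$ at the $(\bm{\Phi}_{U},\bm{\Phi}_{U})$ block, use the per-satellite block-diagonal structure and the vanishing cross-couplings (the delay, channel-gain, and time-offset Remarks) to eliminate the $\bm{\beta}_{b}$ and $\epsilon_{b}$ contributions, and evaluate the surviving $\delta_{b}$ term via the chain rule through the delays. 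If anything, your write-up is more explicit than the paper's in justifying why $\mathbf{J}_{\bm{y};\bm{\zeta}_{b}}$ is block diagonal in $(\bm{\beta}_{b},\delta_{b},\epsilon_{b})$, which is what licenses treating the three nuisance contributions additively.
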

\begin{proof}
See Appendix \ref{Appendix_subsection:information_loss_3D_orientation_3D_orientation}.
\end{proof}

\begin{lemma}
\label{lemma:information_loss_FIM_3D_orientation_3D_veloctiy}
The loss of information about the FIM of the $3$D orientation and $3$D velocity of the receiver due to uncertainty in the nuisance parameters ${\bm{\kappa}_{2}}$ is
\begin{equation}
\label{equ_lemma:information_loss_FIM_3D_orientation_3D_veloctiy}
\begin{aligned}
{\bm{G}_{{{y} }}(\bm{y}_{}| \bm{\eta} ;\bm{\Phi}_{U},\bm{v}_{U}) = \frac{\Delta_{t}}{c} \sum_{b}  } \medmath{ \sum_{k^{},u^{}k^{'},u^{'}} \underset{bu^{},k^{}}{\operatorname{SNR}} \underset{bu^{'},k^{'}}{\operatorname{SNR}} \nabla_{\bm{\Phi}_{U}} \tau_{bu^{},k^{}}(k^{'} - 1) {\bm{\Delta}_{bu^{'},k^{'}}^{\mathrm{T}}} \omega_{b,k} \omega_{b,k^{'}}}    \medmath{   \left(\sum_{u,k} \underset{bu,k}{\operatorname{SNR}} \omega_{b,k}\right)^{\mathrm{-1}}  } 
\end{aligned}
\end{equation}
\end{lemma}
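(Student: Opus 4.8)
The plan is to follow the template of Lemmas \ref{lemma:information_loss_FIM_3D_position}--\ref{lemma:information_loss_FIM_3D_orientation_3D_orientation}: by Definition \ref{definition_EFIM}, $\bm{G}_{\bm{y}}(\bm{y}|\bm{\eta};\bm{\Phi}_{U},\bm{v}_{U})$ is the $(\bm{\Phi}_{U},\bm{v}_{U})$ sub-block of the nuisance-loss term $\mathbf{J}_{\bm{y};\bm{\kappa}_1}^{nu} = \mathbf{J}_{\bm{y};\bm{\kappa}_1,\bm{\kappa}_2}\,\mathbf{J}_{\bm{y};\bm{\kappa}_2}^{-1}\,\mathbf{J}_{\bm{y};\bm{\kappa}_1,\bm{\kappa}_2}^{\mathrm{T}}$, i.e. $\bm{G}_{\bm{y}}(\bm{y}|\bm{\eta};\bm{\Phi}_{U},\bm{v}_{U}) = \mathbf{J}_{\bm{y};\bm{\Phi}_{U},\bm{\kappa}_2}\,\mathbf{J}_{\bm{y};\bm{\kappa}_2}^{-1}\,\mathbf{J}_{\bm{y};\bm{v}_{U},\bm{\kappa}_2}^{\mathrm{T}}$. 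Since $\mathbf{J}_{\bm{y}|\bm{\eta}}$ is block diagonal over the LEO index $b$ and the nuisance vector partitions as $\bm{\kappa}_2 = [\bm{\zeta}_1^{\mathrm{T}},\dots,\bm{\zeta}_{N_B}^{\mathrm{T}}]^{\mathrm{T}}$ with $\bm{\zeta}_b = [\bm{\beta}_b^{\mathrm{T}},\delta_b,\epsilon_b]^{\mathrm{T}}$, the loss separates as $\bm{G}_{\bm{y}}(\bm{y}|\bm{\eta};\bm{\Phi}_{U},\bm{v}_{U}) = \sum_{b} \mathbf{J}_{\bm{y};\bm{\Phi}_{U},\bm{\zeta}_b}\,\mathbf{J}_{\bm{y};\bm{\zeta}_b}^{-1}\,\mathbf{J}_{\bm{y};\bm{v}_{U},\bm{\zeta}_b}^{\mathrm{T}}$, so it suffices to treat a single $b$.

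Next I would collapse the $b$-term to a scalar-weighted rank-one product. The orthogonality Remarks established earlier for the channel-parameter FIM --- the delay and the time offset carry no information about the channel gain, the Dopplers, or the frequency offset, and the channel gain is decoupled from all other channel parameters --- show that $\mathbf{J}_{\bm{y};\bm{\zeta}_b}$ is block diagonal in $(\bm{\beta}_b,\delta_b,\epsilon_b)$, hence so is $\mathbf{J}_{\bm{y};\bm{\zeta}_b}^{-1}$. Moreover the $3$D orientation enters the observed channel parameters only through the antenna delays $\tau_{bu,k}$ (through $\bm{Q}_{U}$, exactly as in Lemmas \ref{lemma:FIM_3D_orientation_3D_orientation} and \ref{lemma:information_loss_FIM_3D_orientation_3D_orientation}), and $\tau_{bu,k}$ is orthogonal to $\bm{\beta}_b$ and to $\epsilon_b$; therefore $\mathbf{J}_{\bm{y};\bm{\Phi}_{U},\bm{\zeta}_b}$ is supported only on the $\delta_b$ coordinate, which annihilates the $\bm{\beta}_b$- and $\epsilon_b$-blocks of the product. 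The $b$-term thus reduces to $\mathbf{J}_{\bm{y};\bm{\Phi}_{U},\delta_b}\,\mathbf{J}_{\bm{y};\delta_b,\delta_b}^{-1}\,\mathbf{J}_{\bm{y};\bm{v}_{U},\delta_b}^{\mathrm{T}}$. This is also why --- unlike the $3$D-position losses of Lemmas \ref{lemma:information_loss_FIM_3D_position} and \ref{lemma:information_loss_FIM_3D_position_3D_veloctiy} --- no $f_{c}^{2}t_{obu,k}^{2}$ term and no frequency-offset denominator survive: orientation never reaches the Doppler path.

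Finally I would evaluate the three surviving factors by the chain rule through $\tau_{bu,k}$, reusing the channel-parameter FIM entries already derived, $\bm{F}_{\bm{y}}(\bm{y}|\bm{\eta};\tau_{bu,k},\delta_b) = -\operatorname{SNR}_{bu,k}\omega_{b,k}$ and $\bm{F}_{\bm{y}}(\bm{y}|\bm{\eta};\delta_b,\delta_b) = \operatorname{SNR}_{bu,k}\omega_{b,k}$ for each $(u,k)$, together with the transformation-matrix entries $\nabla_{\bm{\Phi}_{U}}\tau_{bu,k}$ and $\nabla_{\bm{v}_{U}}\tau_{bu,k} = \frac{(k-1)\Delta_{t}}{c}\bm{\Delta}_{bu,k}$ of Appendix~\ref{Appendix_Entries_in_transformation_matrix} (the same ingredients behind Lemmas \ref{lemma:FIM_3D_position_3D_velocity} and \ref{lemma:FIM_3D_orientation_3D_velocity}). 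This gives $\mathbf{J}_{\bm{y};\bm{\Phi}_{U},\delta_b} = -\sum_{u,k}\operatorname{SNR}_{bu,k}\omega_{b,k}\nabla_{\bm{\Phi}_{U}}\tau_{bu,k}$, $\mathbf{J}_{\bm{y};\delta_b,\delta_b} = \sum_{u,k}\operatorname{SNR}_{bu,k}\omega_{b,k}$, and $\mathbf{J}_{\bm{y};\bm{v}_{U},\delta_b} = -\frac{\Delta_{t}}{c}\sum_{u,k}(k-1)\operatorname{SNR}_{bu,k}\omega_{b,k}\bm{\Delta}_{bu,k}$; multiplying them together the two minus signs cancel, the $\Delta_{t}/c$ prefactor emerges, and summing over $b$ reproduces the claimed expression.

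The step I expect to be the main obstacle is the reduction in the second paragraph: one must argue carefully that, within $\bm{\zeta}_b$, the time offset $\delta_b$ is genuinely decoupled from the frequency offset $\epsilon_b$ (so that, after marginalizing the remaining nuisances, the effective information in $\delta_b$ is exactly $\sum_{u,k}\operatorname{SNR}_{bu,k}\omega_{b,k}$ with no Schur correction), and that the velocity's Doppler channel $\bm{v}_{U}\mapsto\nu_{b,k}$ contributes nothing to this particular cross-loss because $\nu_{b,k}$ meets the nuisance block only through $\epsilon_b$, against which $\mathbf{J}_{\bm{y};\bm{\Phi}_{U},\epsilon_b}=0$. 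The remainder is routine bookkeeping of the sign conventions of $\nabla_{\bm{\Phi}_{U}}\tau_{bu,k}$ and $\bm{F}_{\bm{y}}(\bm{y}|\bm{\eta};\tau_{bu,k},\delta_b)$, so that the final sign matches the $3$D orientation--velocity FIM of Lemma \ref{lemma:FIM_3D_orientation_3D_velocity}.
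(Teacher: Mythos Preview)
Your proposal is correct and follows essentially the same route as the paper: expand the $(\bm{\Phi}_U,\bm{v}_U)$ sub-block of $\mathbf{J}_{\bm{y};\bm{\kappa}_1,\bm{\kappa}_2}\mathbf{J}_{\bm{y};\bm{\kappa}_2}^{-1}\mathbf{J}_{\bm{y};\bm{\kappa}_1,\bm{\kappa}_2}^{\mathrm{T}}$, use the block-diagonality over $b$ and over $(\bm{\beta}_b,\delta_b,\epsilon_b)$ (Remarks 1--5) to kill the $\bm{\beta}_b$- and $\epsilon_b$-contributions, and evaluate the surviving $\delta_b$-term via the chain rule and the entries of Appendix~\ref{Appendix_Entries_in_transformation_matrix}. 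The only cosmetic difference is that the paper writes out all three nuisance blocks first and then simplifies (see \eqref{appendix_equ:lemma_nuisance_3D_orientation_3D_velocity}), whereas you argue the vanishing up front; your anticipated obstacle about the $\delta_b$--$\epsilon_b$ decoupling is exactly what the paper invokes via Remark~4.
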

\begin{proof}
See Appendix \ref{Appendix_subsection:information_loss_3D_orientation_3D_velocity}.
\end{proof}

\begin{lemma}
\label{lemma:information_loss_FIM_3D_velocity_3D_velocity}
The loss of information about $3$D velocity of the receiver due to uncertainty in the nuisance parameters ${\bm{\kappa}_{2}}$ is
\begin{equation}
\label{equ_lemma:information_loss_FIM_3D_velocity_3D_velocity}
\begin{aligned}
\medmath{\bm{G}_{{{y} }}(\bm{y}_{}| \bm{\eta} ;\bm{v}_{U},\bm{v}_{U})} = \medmath{\frac{\Delta_{t}^{2}}{c^2} \sum_{b}} \medmath{  \norm{\sum_{k^{},u^{}} \underset{bu^{},k^{}}{\operatorname{SNR}}  (k - 1)  \bm{\Delta}_{bu,k}^{\mathrm{T}}  \omega_{b,k} }^2
}    \medmath{   \left(\sum_{u,k} \underset{bu,k}{\operatorname{SNR}} \omega_{b,k}\right)^{\mathrm{-1}}  } 
+ \frac{1}{c^2}\medmath{\sum_{b}\norm{{\sum_{k^{},u^{} } \underset{bu^{},k^{}}{\operatorname{SNR}} \; \; \bm{\Delta}_{bU,k}^{\mathrm{T}}   }  \frac{(f_{c}^{}) (t_{obu,k^{}}^{2})}{2} }^{2}  \left(\sum_{u,k} \frac{\underset{bu,k}{\operatorname{SNR}}  t_{obu,k}^{2}}{2}\right)^{-1}}
\end{aligned}
\end{equation}
\end{lemma}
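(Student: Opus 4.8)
The plan is to evaluate the $(\bm{v}_U,\bm{v}_U)$ block of the information-loss matrix $\mathbf{J}_{\bm{y};\bm{\kappa}_1}^{nu} = \mathbf{J}_{\bm{y};\bm{\kappa}_1,\bm{\kappa}_2}\,\mathbf{J}_{\bm{y};\bm{\kappa}_2}^{-1}\,\mathbf{J}_{\bm{y};\bm{\kappa}_1,\bm{\kappa}_2}^{\mathrm{T}}$ from Definition~\ref{definition_EFIM}, with nuisance vector $\bm{\kappa}_2 = [\bm{\zeta}_1^{\mathrm{T}},\dots,\bm{\zeta}_{N_B}^{\mathrm{T}}]^{\mathrm{T}}$ and $\bm{\zeta}_b = [\bm{\beta}_b^{\mathrm{T}},\delta_b,\epsilon_b]^{\mathrm{T}}$. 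The first observation is that, because $\mathbf{J}_{\bm{y}|\bm{\eta}}$ is block diagonal across the $N_B$ satellites (the diagonal form stated just after \eqref{equ:likelihood}), and because the transformation $\mathbf{\Upsilon}_{\bm{\kappa}}$ routes $\bm{v}_U$ only into the per-satellite delays $\bm{\tau}_b$ and Dopplers $\bm{\nu}_b$ while each $\bm{\zeta}_b$ maps identically into the channel block of satellite $b$ alone, the loss term splits into a sum over $b$, and within the $b$-th summand only $\mathbf{J}_{\bm{v}_U,\bm{\zeta}_b}$ and $\mathbf{J}_{\bm{\zeta}_b}$ appear.

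The second step is to reduce $\mathbf{J}_{\bm{\zeta}_b}$ to its $(\delta_b,\epsilon_b)$ part. The channel-gain rows $\bm{\beta}_b$ carry no cross information with $\delta_b$ or $\epsilon_b$ (the Remark stating that the channel-gain FIM is decoupled from the delay, Doppler, time offset and frequency offset), so marginalizing them out changes neither the $(\delta_b,\epsilon_b)$ sub-block nor the $\bm{v}_U$ coupling; moreover the scalar cross term $\bm{F}_{\bm{y}}(\bm{y}|\bm{\eta};\delta_b,\epsilon_b)$ vanishes under the centered-spectrum modeling assumption already implicit in the definitions of $\alpha_{1b,k}$ and $\alpha_{2b,k}$ (the integrals $\int t_{obu,k}|s(t_{obu,k})|^2\,dt_{obu,k}$ and $\int t_{obu,k}\,\nabla_{\delta_b}s^{\mathrm{H}}\,s\,dt_{obu,k}$ are zero). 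Hence the Schur complement inside each summand collapses into two independent rank-one contributions, one channeled through $\delta_b$ and one through $\epsilon_b$.

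The third step is plain substitution using the already-tabulated channel-parameter entries and the entries of $\mathbf{\Upsilon}_{\bm{\kappa}}$. For the $\delta_b$ channel, the chain rule gives $\mathbf{J}_{\bm{v}_U,\delta_b} = \sum_{u,k}(\nabla_{\bm{v}_U}\tau_{bu,k})\,\bm{F}_{\bm{y}}(\bm{y}|\bm{\eta};\tau_{bu,k},\delta_b)$, the Doppler route dropping out because $\bm{F}_{\bm{y}}(\bm{y}|\bm{\eta};\nu_{b,k},\delta_b)=0$ (Remark on the Doppler); with $\nabla_{\bm{v}_U}\tau_{bu,k} = \tfrac{(k-1)\Delta_t}{c}\bm{\Delta}_{bu,k}$, $\bm{F}_{\bm{y}}(\bm{y}|\bm{\eta};\tau_{bu,k},\delta_b) = -\mathrm{SNR}_{bu,k}\,\omega_{b,k}$, and $\mathbf{J}_{\delta_b} = \sum_{u,k}\mathrm{SNR}_{bu,k}\,\omega_{b,k}$, the term $\mathbf{J}_{\bm{v}_U,\delta_b}\mathbf{J}_{\delta_b}^{-1}\mathbf{J}_{\bm{v}_U,\delta_b}^{\mathrm{T}}$ produces the first line of \eqref{equ_lemma:information_loss_FIM_3D_velocity_3D_velocity}. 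For the $\epsilon_b$ channel, since $\bm{F}_{\bm{y}}(\bm{y}|\bm{\eta};\tau_{bu,k},\epsilon_b)=0$ (Remark on the delay) we get $\mathbf{J}_{\bm{v}_U,\epsilon_b} = \sum_{u,k}(\nabla_{\bm{v}_U}\nu_{b,k})\,\bm{F}_{\bm{y}}(\bm{y}|\bm{\eta};\nu_{b,k},\epsilon_b)$ with $\nabla_{\bm{v}_U}\nu_{b,k} = -\bm{\Delta}_{bU,k}$, $\bm{F}_{\bm{y}}(\bm{y}|\bm{\eta};\nu_{b,k},\epsilon_b) = -\tfrac12\mathrm{SNR}_{bu,k}f_c t_{obu,k}^2$, and $\mathbf{J}_{\epsilon_b} = \sum_{u,k}\tfrac12\mathrm{SNR}_{bu,k}t_{obu,k}^2$, which yields the second line; summing over $b$ finishes the proof, the sign ambiguities in the gradients being irrelevant since each piece is an outer product of a vector with itself.

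The step I expect to be the real obstacle is the second one: rigorously certifying that $\mathbf{J}_{\bm{\zeta}_b}$ may be replaced by its diagonal $(\delta_b,\epsilon_b)$ block — that is, that neither the coupling through the vector $\bm{\beta}_b$ nor any residual $\delta_b$–$\epsilon_b$ coupling contributes — because it is precisely this reduction that turns a potentially unwieldy matrix inverse into the two scalar reciprocals $\big(\sum_{u,k}\mathrm{SNR}_{bu,k}\omega_{b,k}\big)^{-1}$ and $\big(\sum_{u,k}\tfrac12\mathrm{SNR}_{bu,k}t_{obu,k}^2\big)^{-1}$ that appear in the statement. Once that is in place the remainder is bookkeeping, and the resulting expression can be cross-checked against the known-nuisance counterpart in Lemma~\ref{lemma:FIM_3D_velocity}, whose two additive terms it mirrors in structure.
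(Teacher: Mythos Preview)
Your proposal is correct and follows essentially the same approach as the paper: expand the $(\bm{v}_U,\bm{v}_U)$ block of $\mathbf{J}_{\bm{y};\bm{\kappa}_1,\bm{\kappa}_2}\,\mathbf{J}_{\bm{y};\bm{\kappa}_2}^{-1}\,\mathbf{J}_{\bm{y};\bm{\kappa}_1,\bm{\kappa}_2}^{\mathrm{T}}$ per satellite, observe that the $\bm{\beta}_b$ contribution vanishes and that the $\delta_b$ and $\epsilon_b$ channels decouple, and then substitute the channel-parameter FIM entries and the Jacobians $\nabla_{\bm{v}_U}\tau_{bu,k}$, $\nabla_{\bm{v}_U}\nu_{b,k}$ from Appendix~\ref{Appendix_Entries_in_transformation_matrix}. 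If anything, you are more explicit than the paper about why $\mathbf{J}_{\bm{\zeta}_b}$ may be replaced by its diagonal $(\delta_b,\epsilon_b)$ blocks---the paper simply writes the three nuisance contributions as separate summands in \eqref{appendix_equ:lemma_nuisance_3D_velocity_3D_velocity} without commenting on the vanishing cross terms.
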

\begin{proof}
See Appendix \ref{Appendix_subsection:information_loss_3D_velocity_3D_velocity}.
\end{proof}
The elements in the EFIM for the location parameters, $\mathbf{J}_{ \bm{\bm{y}}; \bm{\kappa}_1}^{\mathrm{e}}$ are obtained by appropriately combining the Lemmas in Section \ref{subsection:FIM_channel_parameters} and the Lemmas in Section \ref{subsection:information_loss_FIM_channel_parameters}. The EFIM for the location parameters is 
$$\mathbf{J}_{ \bm{\bm{y}}; \bm{\kappa}_1}^{\mathrm{e}} =\mathbf{J}_{ \bm{\bm{y}}; \bm{\kappa}_1}^{} - \mathbf{J}_{ \bm{\bm{y}}; \bm{\kappa}_1}^{nu} =\mathbf{J}_{ \bm{\bm{y}}; \bm{\kappa}_1}^{}-
\mathbf{J}_{ \bm{\bm{y}}; \bm{\kappa}_1, \bm{\kappa}_2}^{} \mathbf{J}_{ \bm{\bm{y}}; \bm{\kappa}_2}^{-1} \mathbf{J}_{ \bm{\bm{y}}; \bm{\kappa}_1, \bm{\kappa}_2}^{\mathrm{T}}.$$

%\subsection{Elements in $\mathbf{J}_{ \bm{\bm{y}}; \bm{\kappa}_1}^{nu}$}
%\label{subsection:information_loss_FIM_channel_parameters}
%\section{Equivalent Fisher Information Matrix for Location Parameters}
\clearpage
\subsection{FIM for $3$D Localization}
In this section, we consider available information for estimating one of the location parameters when the other two location parameters are known. We start with the FIM for the $3$D position estimation when both the $3$D orientation and $3$D velocity are known, and then we proceed to the FIM for the $3$D orientation estimation when both the $3$D position and $3$D velocity are known. Finally, we present the FIM for the $3$D velocity estimation when the $3$D position and $3$D orientation are known.

\begin{theorem}
\label{theorem:FIM_3D_position}
First, when both the $3$D orientation and $3$D velocity are known, the EFIM of the $3$D position of the receiver is given by (\ref{equ_theorem:FIM_3D_position}). Second, when both the $3$D position and $3$D velocity are known, the EFIM of the $3$D orientation of the receiver is given by (\ref{equ_theorem:FIM_3D_orientation}). Finally, when both the $3$D position and $3$D orientation are known, the EFIM of the $3$D velocity of the receiver is given by (\ref{equ_theorem:FIM_3D_velocity}).

\begin{figure*}
\begin{align}
\begin{split}
\label{equ_theorem:FIM_3D_position}
\mathbf{J}_{ \bm{\bm{y}}; \bm{p}_{U}}^{\mathrm{e}} &= [\mathbf{J}_{ \bm{\bm{y}}; \bm{\kappa}_{1}}^{\mathrm{e}}]_{[1:3,1:3]} 
= \medmath{\bm{F}_{{{y} }}(\bm{y}_{}| \bm{\eta} ;\bm{p}_{U},\bm{p}_{U})  } - \medmath{ \bm{G}_{{{y} }}(\bm{y}_{}| \bm{\eta} ;\bm{p}_{U},\bm{p}_{U}) } \\ &= {\sum_{b,k^{},u^{}} \underset{bu,k}{\operatorname{SNR}}  \Bigg[ \frac{\omega_{b,k}}{c^2}   \bm{\Delta}_{bu,k} 
 \bm{\Delta}_{bu^{},k^{}}^{\mathrm{T}}}  + {\frac{f_{c}^2 t_{obu,k}^2\nabla_{\bm{p}_{U}} \nu_{b,k} \nabla_{\bm{p}_{U}}^{\mathrm{T}}\nu_{b,k}}{2}  \Bigg]} \\ &- \medmath{\Bigg[\sum_{b}   \frac{1}{c^2}\norm{\sum_{k^{},u^{}} \underset{bu^{},k^{}}{\operatorname{SNR}}\bm{\Delta}_{bu^{},k^{}}^{\mathrm{T}}  \omega_{b,k} }}^{2}    \medmath{   \left(\sum_{u,k} \underset{bu,k}{\operatorname{SNR}} \omega_{b,k}\right)^{\mathrm{-1}}  }  + \medmath{\sum_{b}\norm{{\sum_{k^{},u^{} } \underset{bu^{},k^{}}{\operatorname{SNR}} \; \; \nabla_{\bm{p}_{U}}^{\mathrm{T}} \nu_{b,k^{}}   }  \frac{(f_{c}^{}) (t_{obu,k^{}}^{2})}{2} }^{2}  \left(\sum_{u,k} \frac{\underset{bu,k}{\operatorname{SNR}}  t_{obu,k}^{2}}{2}\right)^{-1}\Bigg]}.
\end{split}
\end{align}
\end{figure*}

\begin{figure*}
\begin{align}
\begin{split}
\label{equ_theorem:FIM_3D_orientation}
\mathbf{J}_{ \bm{\bm{y}}; \bm{\Phi}_{U}}^{\mathrm{e}} &= [\mathbf{J}_{ \bm{\bm{y}}; \bm{\kappa}_{1}}^{\mathrm{e}}]_{[4:6,4:6]} 
= \medmath{\bm{F}_{{{y} }}(\bm{y}_{}| \bm{\eta} ;\bm{\Phi}_{U},\bm{\Phi}_{U})  } - \medmath{ \bm{G}_{{{y} }}(\bm{y}_{}| \bm{\eta} ;\bm{\Phi}_{U},\bm{\Phi}_{U}) } \\ &= {\sum_{b,k^{},u^{}} \underset{bu,k}{\operatorname{SNR}}  \Bigg[ \omega_{b,k}     \nabla_{\bm{\Phi}_{U}} \tau_{bu^{},k^{}}
 \nabla_{\bm{\Phi}_{U}}^{\mathrm{T}} \tau_{bu^{},k^{}}}    \Bigg]  - \medmath{\sum_{b}  \sum_{k^{},u^{}k^{'},u^{'}} \underset{bu^{},k^{}}{\operatorname{SNR}} \underset{bu^{'},k^{'}}{\operatorname{SNR}} \nabla_{\bm{\Phi}_{U}} \tau_{bu^{},k^{}}\nabla_{\bm{\Phi}_{U}}^{\mathrm{T}} \tau_{bu^{'},k^{'}} \omega_{b,k} \omega_{b,k^{'}}}    \medmath{   \left(\sum_{u,k} \underset{bu,k}{\operatorname{SNR}} \omega_{b,k}\right)^{\mathrm{-1}}}.
\end{split}
\end{align}
\end{figure*}

\begin{figure*}
\begin{align}
\begin{split}
\label{equ_theorem:FIM_3D_velocity}
\mathbf{J}_{ \bm{\bm{y}}; \bm{v}_{U}}^{\mathrm{e}} &= [\mathbf{J}_{ \bm{\bm{y}}; \bm{\kappa}_{1}}^{\mathrm{e}}]_{[7:9,7:9]} 
= \medmath{\bm{F}_{{{y} }}(\bm{y}_{}| \bm{\eta} ;\bm{v}_{U},\bm{v}_{U})  } - \medmath{ \bm{G}_{{{y} }}(\bm{y}_{}| \bm{\eta} ;\bm{v}_{U},\bm{v}_{U}) } \\  \\ &=\medmath{\sum_{b,k^{},u^{}} \underset{bu,k}{\operatorname{SNR}}  \Bigg[ \frac{(k - 1)^2 \Delta_{t}^{2}\omega_{b,k}}{c^2}     \bm{\Delta}_{bu,k} 
\bm{\Delta}_{bu,k}^{\mathrm{T}} }  + \medmath{\frac{f_{c}^2 t_{obu,k}^2\bm{\Delta}_{bU,k}\bm{\Delta}_{bU,k}^{\mathrm{T}}}{2}  \Bigg]} \\ &- \Bigg[\medmath{\frac{\Delta_{t}^{2}}{c^2} \sum_{b}} \medmath{  \norm{\sum_{k^{},u^{}} \underset{bu^{},k^{}}{\operatorname{SNR}}  (k - 1)  \bm{\Delta}_{bu,k}^{\mathrm{T}}  \omega_{b,k} }^2
}    \medmath{   \left(\sum_{u,k} \underset{bu,k}{\operatorname{SNR}} \omega_{b,k}\right)^{\mathrm{-1}}  } 
+ \medmath{ \frac{1}{c^2}\sum_{b}\norm{{\sum_{k^{},u^{} } \underset{bu^{},k^{}}{\operatorname{SNR}} \; \; \bm{\Delta}_{bU,k}^{\mathrm{T}}   }  \frac{(f_{c}^{}) (t_{obu,k^{}}^{2})}{2} }^{2}  \left(\sum_{u,k} \frac{\underset{bu,k}{\operatorname{SNR}}  t_{obu,k}^{2}}{2}\right)^{-1}}\Bigg].
\end{split}
\end{align}
\end{figure*}

\end{theorem}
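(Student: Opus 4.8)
The plan is to read off each of the three EFIMs directly from Definition~\ref{definition_EFIM}, specialised to the appropriate parameter of interest, using as inputs only the blocks already assembled in Sections~\ref{subsection:FIM_channel_parameters} and~\ref{subsection:information_loss_FIM_channel_parameters}. The first thing I would record is the elementary fact that a location parameter assumed \emph{known} may simply be deleted from the parameter vector: conditioning on a deterministic parameter is the limiting case of infinite \emph{a priori} information, so its rows and columns vanish from the FIM and its off-diagonal blocks contribute nothing to any Schur complement. Consequently, when $\bm{\Phi}_{U}$ and $\bm{v}_{U}$ are known the relevant FIM acts only on $[\bm{p}_{U},\bm{\zeta}_{1},\dots,\bm{\zeta}_{N_B}]$, with $\bm{\kappa}_{2}=[\bm{\zeta}_{1},\dots,\bm{\zeta}_{N_B}]$ still treated as nuisance.

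Applying Definition~\ref{definition_EFIM} with $\bm{\eta}_{1}=\bm{p}_{U}$ and $\bm{\eta}_{2}=\bm{\kappa}_{2}$ gives $\mathbf{J}^{\mathrm{e}}_{\bm{y};\bm{p}_{U}} = \mathbf{J}_{\bm{y};\bm{p}_{U}} - \mathbf{J}_{\bm{y};\bm{p}_{U},\bm{\kappa}_{2}}\mathbf{J}_{\bm{y};\bm{\kappa}_{2}}^{-1}\mathbf{J}_{\bm{y};\bm{p}_{U},\bm{\kappa}_{2}}^{\mathrm{T}}$, in which the first term is exactly $\bm{F}_{\bm{y}}(\bm{y}|\bm{\eta};\bm{p}_{U},\bm{p}_{U})$ of Lemma~\ref{lemma:FIM_3D_position} and the subtracted term is exactly $\bm{G}_{\bm{y}}(\bm{y}|\bm{\eta};\bm{p}_{U},\bm{p}_{U})$ of Lemma~\ref{lemma:information_loss_FIM_3D_position}; substituting these produces (\ref{equ_theorem:FIM_3D_position}). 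The identification with $[\mathbf{J}^{\mathrm{e}}_{\bm{y};\bm{\kappa}_{1}}]_{[1:3,1:3]}$ is then immediate, because the $[1:3,1:3]$ block of the nine-dimensional Schur-complement correction involves only the $\bm{p}_{U}$--$\bm{\kappa}_{2}$ coupling, which is precisely what Lemma~\ref{lemma:information_loss_FIM_3D_position} evaluates. I would run the identical argument twice more: with $\bm{\eta}_{1}=\bm{\Phi}_{U}$ (deleting $\bm{p}_{U},\bm{v}_{U}$), combining Lemma~\ref{lemma:FIM_3D_orientation_3D_orientation} with Lemma~\ref{lemma:information_loss_FIM_3D_orientation_3D_orientation} to get (\ref{equ_theorem:FIM_3D_orientation}); and with $\bm{\eta}_{1}=\bm{v}_{U}$ (deleting $\bm{p}_{U},\bm{\Phi}_{U}$), combining Lemma~\ref{lemma:FIM_3D_velocity} with Lemma~\ref{lemma:information_loss_FIM_3D_velocity_3D_velocity} to get (\ref{equ_theorem:FIM_3D_velocity}).

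All the genuinely substantive computation lives inside the lemmas being invoked, so what remains is bookkeeping: confirming that the index ranges $[1:3]$, $[4:6]$, $[7:9]$ match the ordering $\bm{\kappa}_{1}=[\bm{p}_{U},\bm{\Phi}_{U},\bm{v}_{U}]$, and that $\mathbf{J}_{\bm{y};\bm{\kappa}_{2}}$ is block diagonal across the $N_B$ satellites (from the footnoted independence assumption together with $\mathbf{\Upsilon}_{\bm{\kappa}}$ mapping each $\bm{\zeta}_{b}$ to itself), which is what lets each nuisance-loss term collapse to a single sum over $b$. The step I expect to need the most care is the justification that deleting the known location parameters leaves $\mathbf{J}_{\bm{y};\bm{\kappa}_{2}}$ and the cross block $\mathbf{J}_{\bm{y};\bm{p}_{U},\bm{\kappa}_{2}}$ unchanged — i.e.\ that ``$\bm{\Phi}_{U},\bm{v}_{U}$ known'' reduces to extracting the diagonal sub-block of $\mathbf{J}^{\mathrm{e}}_{\bm{y};\bm{\kappa}_{1}}$ rather than a further Schur complement over those blocks — since conflating the two would in general replace each $\bm{F}-\bm{G}$ by a strictly smaller matrix.
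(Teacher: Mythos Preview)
Your proposal is correct and follows essentially the same approach as the paper's proof, which simply states that the result follows ``by subtracting the appropriate Lemmas in Section~\ref{subsection:information_loss_FIM_channel_parameters} from the Lemmas in Section~\ref{subsection:FIM_channel_parameters}, following the EFIM definition and then selecting the appropriate diagonal.'' Your write-up is in fact more careful than the paper's: the observation you flag at the end---that deleting the known location parameters commutes with Schur-complementing out $\bm{\kappa}_{2}$, so that the $[1{:}3,1{:}3]$ block of $\mathbf{J}^{\mathrm{e}}_{\bm{y};\bm{\kappa}_1}$ really does coincide with the EFIM for $\bm{p}_U$ alone---is exactly the justification the paper leaves implicit.
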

\begin{proof}
The proof follows by subtracting the appropriate Lemmas in Section \ref{subsection:information_loss_FIM_channel_parameters} from the Lemmas in Section \ref{subsection:FIM_channel_parameters}, following the EFIM definition and then selecting the appropriate diagonal.
\end{proof}

\clearpage
\subsection{FIM for $6$D Localization}
Here, we focus on the estimation of any two of the location parameters. First, we derive the FIM for the $3$D position and $3$D orientation when the $3$D velocity is known. In this scenario, the resulting EFIM matrix can be written as
\begin{equation}
\label{equ:FIM_6D_3D_position_3D_orientation}
\begin{aligned}
\left[\begin{array}{cc}
\mathbf{J}_{ \bm{\bm{y}}; \bm{p}_{U}}^{\mathrm{e}} & \mathbf{J}_{ \bm{\bm{y}};[ \bm{p}_{U}, \bm{\Phi}_{U}]}^{\mathrm{e}} \\
(\mathbf{J}_{ \bm{\bm{y}};[ \bm{p}_{U}, \bm{\Phi}_{U}]}^{\mathrm{e}})^{\mathrm{T}} & \mathbf{J}_{ \bm{\bm{y}}; \bm{\Phi}_{U}}^{\mathrm{e}} \\
\end{array}\right].
\end{aligned}
\end{equation}
Now, to isolate and obtain the EFIM for $3$D position for this case of $6$D estimation with known $3$D velocity, we apply the EFIM to the matrix in (\ref{equ:FIM_6D_3D_position_3D_orientation}) and obtain
\begin{equation}
\label{equ:FIM_6D_3D_position_3D_position_3D_orientation}
\mathbf{J}_{ \bm{\bm{y}}; \bm{p}_{U}}^{\mathrm{ee}} = \mathbf{J}_{ \bm{\bm{y}}; \bm{p}_{U}}^{\mathrm{e}} - \mathbf{J}_{ \bm{\bm{y}}; [ \bm{p}_{U}, \bm{\Phi}_{U}]}^{\mathrm{e}}  [\mathbf{J}_{ \bm{\bm{y}}; \bm{\Phi}_{U}}^{\mathrm{e}}]^{-1}  [\mathbf{J}_{ \bm{\bm{y}}; [ \bm{p}_{U}, \bm{\Phi}_{U}]}^{\mathrm{e}}]^{\mathrm{T}}.
\end{equation}

\begin{theorem}
\label{theorem:FIM_6D_3D_position_3D_position_3D_orientation}
The following are necessary conditions to estimate the $3$D position when the $3$D orientation is unknown but the $3$D velocity is known: i) there has to be enough information to estimate the $3$D position when both the $3$D velocity and $3$D orientation are known, and ii) there has to be enough information to estimate the $3$D orientation when both the $3$D position and $3$D velocity are known.
\end{theorem}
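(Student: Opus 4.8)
The plan is to read off the two necessary conditions directly from the Schur-complement structure of the $3$D-position EFIM in (\ref{equ:FIM_6D_3D_position_3D_position_3D_orientation}). Throughout, I would interpret ``there is enough information to estimate a set of parameters'' as the corresponding EFIM being nonsingular, equivalently positive definite: every EFIM appearing here is a Schur complement of the (positive semidefinite) FIM in Definition~\ref{definition_EFIM}, hence positive semidefinite, so for these matrices nonsingularity and positive definiteness coincide. I would also use the identifications, immediate from Definition~\ref{definition_EFIM} and Theorem~\ref{theorem:FIM_3D_position}, that $\mathbf{J}_{ \bm{\bm{y}}; \bm{p}_{U}}^{\mathrm{e}}$ is the EFIM of the $3$D position when the $3$D orientation and $3$D velocity are known, that $\mathbf{J}_{ \bm{\bm{y}}; \bm{\Phi}_{U}}^{\mathrm{e}}$ is the EFIM of the $3$D orientation when the $3$D position and $3$D velocity are known, and that the $6\times 6$ matrix in (\ref{equ:FIM_6D_3D_position_3D_orientation}) is the EFIM of the pair $[\bm{p}_{U},\bm{\Phi}_{U}]$ with known velocity; the latter is a principal submatrix of $\mathbf{J}_{ \bm{\bm{y}}; \bm{\kappa}_{1}}^{\mathrm{e}}$ and hence positive semidefinite.

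First I would prove that condition ii) is necessary. The expression defining $\mathbf{J}_{ \bm{\bm{y}}; \bm{p}_{U}}^{\mathrm{ee}}$ in (\ref{equ:FIM_6D_3D_position_3D_position_3D_orientation}) contains the factor $[\mathbf{J}_{ \bm{\bm{y}}; \bm{\Phi}_{U}}^{\mathrm{e}}]^{-1}$, so $\mathbf{J}_{ \bm{\bm{y}}; \bm{p}_{U}}^{\mathrm{ee}}$ is well defined --- equivalently, the $3$D position admits a finite Cram\'er--Rao bound in this scenario --- only if $\mathbf{J}_{ \bm{\bm{y}}; \bm{\Phi}_{U}}^{\mathrm{e}}$ is invertible; since that matrix is positive semidefinite, invertibility is positive definiteness, which is exactly condition ii). (Put differently: if $\mathbf{J}_{ \bm{\bm{y}}; \bm{\Phi}_{U}}^{\mathrm{e}}$ is singular then the $6\times 6$ EFIM in (\ref{equ:FIM_6D_3D_position_3D_orientation}) is singular, so the leading $3\times 3$ block of its inverse does not exist and the $3$D position is unidentifiable.) Next, assuming ii), we have $[\mathbf{J}_{ \bm{\bm{y}}; \bm{\Phi}_{U}}^{\mathrm{e}}]^{-1}\succ 0$, so the subtracted term $\mathbf{J}_{ \bm{\bm{y}};[ \bm{p}_{U}, \bm{\Phi}_{U}]}^{\mathrm{e}}\,[\mathbf{J}_{ \bm{\bm{y}}; \bm{\Phi}_{U}}^{\mathrm{e}}]^{-1}\,(\mathbf{J}_{ \bm{\bm{y}};[ \bm{p}_{U}, \bm{\Phi}_{U}]}^{\mathrm{e}})^{\mathrm{T}}$ is positive semidefinite; hence $\mathbf{J}_{ \bm{\bm{y}}; \bm{p}_{U}}^{\mathrm{ee}}\preceq \mathbf{J}_{ \bm{\bm{y}}; \bm{p}_{U}}^{\mathrm{e}}$, and positive definiteness of the left-hand side forces $\mathbf{J}_{ \bm{\bm{y}}; \bm{p}_{U}}^{\mathrm{e}}\succ 0$, which is condition i). Chaining the two implications, feasibility of $3$D position estimation with unknown orientation and known velocity requires both i) and ii).

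The argument is essentially monotonicity of the Schur complement, so the only real care needed is the semidefinite bookkeeping: one must justify that each EFIM here is positive semidefinite, so that ``nonsingular'' may be upgraded to ``positive definite'' without loss, and I would do this once at the outset via the fact that EFIMs are Schur complements of the FIM and the FIM is positive semidefinite. The one point I expect to need a careful sentence is the necessity of ii): I must rule out the degenerate possibility that the orientation block is singular yet the position remains estimable through a pseudo-inverse correction; this is handled by noting that the EFIM-inversion identity $(\mathbf{J}_{ \bm{\bm{y}}; \bm{p}_{U}}^{\mathrm{ee}})^{-1}=[(\mathbf{M})^{-1}]_{[1:3,1:3]}$ (with $\mathbf{M}$ the $6\times 6$ matrix in (\ref{equ:FIM_6D_3D_position_3D_orientation})) requires $\mathbf{M}$, hence its principal submatrix $\mathbf{J}_{ \bm{\bm{y}}; \bm{\Phi}_{U}}^{\mathrm{e}}$, to be invertible. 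The remaining block identifications with Theorem~\ref{theorem:FIM_3D_position} --- that deleting the orientation (resp.\ position) rows and columns of $\mathbf{J}_{ \bm{\bm{y}}; \bm{\kappa}_{1}}^{\mathrm{e}}$ amounts to treating those parameters as known --- are immediate, and I do not anticipate any substantive obstacle beyond keeping the $\succeq/\succ$ distinctions straight.
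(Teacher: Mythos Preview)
Your proposal is correct and follows essentially the same approach as the paper: both read the necessary conditions directly from the Schur-complement formula (\ref{equ:FIM_6D_3D_position_3D_position_3D_orientation}), noting that $\mathbf{J}_{ \bm{\bm{y}}; \bm{\Phi}_{U}}^{\mathrm{e}}$ must be invertible for $\mathbf{J}_{ \bm{\bm{y}}; \bm{p}_{U}}^{\mathrm{ee}}$ to exist and that $\mathbf{J}_{ \bm{\bm{y}}; \bm{p}_{U}}^{\mathrm{e}}$ must be positive definite for $\mathbf{J}_{ \bm{\bm{y}}; \bm{p}_{U}}^{\mathrm{ee}}$ to be positive definite. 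Your version is simply more detailed, making explicit the positive-semidefiniteness bookkeeping and the monotonicity $\mathbf{J}_{ \bm{\bm{y}}; \bm{p}_{U}}^{\mathrm{ee}}\preceq \mathbf{J}_{ \bm{\bm{y}}; \bm{p}_{U}}^{\mathrm{e}}$ that the paper leaves implicit.
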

\begin{proof}
    The proof follows by examining (\ref{equ:FIM_6D_3D_position_3D_position_3D_orientation}) and noticing that  $\mathbf{J}_{ \bm{\bm{y}}; \bm{\Phi}_{U}}^{\mathrm{e}}$ has to be invertible for $\mathbf{J}_{ \bm{\bm{y}}; \bm{p}_{U}}^{\mathrm{ee}}$ to exist and $\mathbf{J}_{ \bm{\bm{y}}; \bm{p}_{U}}^{\mathrm{e}}$ has to be positive definite for $\mathbf{J}_{ \bm{\bm{y}}; \bm{p}_{U}}^{\mathrm{ee}}$ to be positive definite.
\end{proof}

\begin{corollary}
\label{corollary:FIM_6D_3D_position_3D_position_3D_orientation_1}

The minimal infrastructure to estimate the $3$D orientation when the $3$D position and $3$D velocity are known is a necessary condition to estimate the $3$D position when the $3$D orientation is unknown but the $3$D velocity is known. 
\end{corollary}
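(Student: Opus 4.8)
The plan is to obtain the corollary directly from condition~(ii) in Theorem~\ref{theorem:FIM_6D_3D_position_3D_position_3D_orientation}. First I would recall that, by the Schur-complement reduction in~(\ref{equ:FIM_6D_3D_position_3D_position_3D_orientation}), the matrix $\mathbf{J}_{ \bm{\bm{y}}; \bm{p}_{U}}^{\mathrm{ee}}$ is well defined only when $\mathbf{J}_{ \bm{\bm{y}}; \bm{\Phi}_{U}}^{\mathrm{e}}$ is invertible; since $\mathbf{J}_{ \bm{\bm{y}}; \bm{\Phi}_{U}}^{\mathrm{e}}$ is a $3\times 3$ symmetric positive semidefinite matrix, invertibility is equivalent to positive definiteness, i.e.\ to $\mathbf{J}_{ \bm{\bm{y}}; \bm{\Phi}_{U}}^{\mathrm{e}}\succ 0$.

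Second, I would identify this condition with estimability of the $3$D orientation when the $3$D position and $3$D velocity are known: by Definition~\ref{definition_EFIM} and the information inequality stated before Definition~\ref{definition_FIM_1}, an unbiased estimator of $\bm{\Phi}_{U}$ with finite error covariance exists in that scenario precisely when the EFIM $\mathbf{J}_{ \bm{\bm{y}}; \bm{\Phi}_{U}}^{\mathrm{e}}$ given by~(\ref{equ_theorem:FIM_3D_orientation}) is positive definite. Hence ``the minimal infrastructure to estimate the $3$D orientation when the $3$D position and $3$D velocity are known'' is, by definition, the minimal choice of $N_B$, $N_U$, $N_K$ (together with the attendant operating frequency and signal parameters) for which $\mathbf{J}_{ \bm{\bm{y}}; \bm{\Phi}_{U}}^{\mathrm{e}}\succ 0$.

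Third, combining the two observations: any infrastructure that suffices to estimate the $3$D position when the $3$D orientation is unknown but the $3$D velocity is known must, by condition~(ii) of Theorem~\ref{theorem:FIM_6D_3D_position_3D_position_3D_orientation}, render $\mathbf{J}_{ \bm{\bm{y}}; \bm{\Phi}_{U}}^{\mathrm{e}}$ positive definite, hence must already suffice to estimate the $3$D orientation with the $3$D position and $3$D velocity known. Therefore the minimal infrastructure for the orientation task is a necessary condition for the position task, which is the claim; the proof is essentially a specialization of the theorem to its second necessary condition.

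The main obstacle I anticipate is making precise the notion of ``minimal infrastructure'' and verifying that the passage from Theorem~\ref{theorem:FIM_6D_3D_position_3D_position_3D_orientation} to the corollary is an equivalence at the level of requirements and not merely a one-sided implication — specifically, ensuring that the orientation-EFIM assembled from the rank-one terms $\nabla_{\bm{\Phi}_{U}} \tau_{bu,k}\nabla_{\bm{\Phi}_{U}}^{\mathrm{T}} \tau_{bu,k}$ appearing in~(\ref{equ_theorem:FIM_3D_orientation}) cannot be singular for a configuration that nevertheless permits position estimation. Establishing this reduces to checking that the Schur-complement step in~(\ref{equ:FIM_6D_3D_position_3D_position_3D_orientation}) introduces no spurious cancellations, which follows from positive semidefiniteness of the joint EFIM in~(\ref{equ:FIM_6D_3D_position_3D_orientation}) but should be stated explicitly rather than left implicit.
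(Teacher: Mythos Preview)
Your proposal is correct and follows essentially the same approach the paper intends: the corollary is an immediate restatement of condition~(ii) in Theorem~\ref{theorem:FIM_6D_3D_position_3D_position_3D_orientation}, since the Schur-complement expression~(\ref{equ:FIM_6D_3D_position_3D_position_3D_orientation}) requires $\mathbf{J}_{ \bm{\bm{y}}; \bm{\Phi}_{U}}^{\mathrm{e}}$ to be invertible (equivalently, positive definite), which is precisely the orientation-estimability requirement. The ``obstacle'' you raise about spurious cancellations is not actually needed here---the corollary asserts only a necessary condition, so the one-sided implication from Theorem~\ref{theorem:FIM_6D_3D_position_3D_position_3D_orientation} suffices and no additional analysis of the rank-one structure in~(\ref{equ_theorem:FIM_3D_orientation}) is required.
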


\begin{corollary}
\label{corollary:FIM_6D_3D_position_3D_position_3D_orientation_2}

The minimal infrastructure to estimate the $3$D position when the $3$D velocity and $3$D orientation are known is a necessary condition to estimate the $3$D position when the $3$D orientation is unknown but the $3$D velocity is known. 
\end{corollary}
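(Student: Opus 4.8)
The plan is to read the claim directly off Theorem~\ref{theorem:FIM_6D_3D_position_3D_position_3D_orientation} together with the Schur-complement structure in~(\ref{equ:FIM_6D_3D_position_3D_orientation})--(\ref{equ:FIM_6D_3D_position_3D_position_3D_orientation}). First I would recall that estimating the $3$D position in the $6$D setting with known $3$D velocity is possible exactly when the doubly-reduced matrix $\mathbf{J}_{ \bm{\bm{y}}; \bm{p}_{U}}^{\mathrm{ee}}$ of~(\ref{equ:FIM_6D_3D_position_3D_position_3D_orientation}) is positive definite, and that this matrix is obtained from $\mathbf{J}_{ \bm{\bm{y}}; \bm{p}_{U}}^{\mathrm{e}}$ by subtracting $\mathbf{J}_{ \bm{\bm{y}}; [ \bm{p}_{U}, \bm{\Phi}_{U}]}^{\mathrm{e}}[\mathbf{J}_{ \bm{\bm{y}}; \bm{\Phi}_{U}}^{\mathrm{e}}]^{-1}[\mathbf{J}_{ \bm{\bm{y}}; [ \bm{p}_{U}, \bm{\Phi}_{U}]}^{\mathrm{e}}]^{\mathrm{T}}$.

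Next I would observe that, whenever $\mathbf{J}_{ \bm{\bm{y}}; \bm{\Phi}_{U}}^{\mathrm{e}}$ is positive definite (which Theorem~\ref{theorem:FIM_6D_3D_position_3D_position_3D_orientation} already requires for $\mathbf{J}_{ \bm{\bm{y}}; \bm{p}_{U}}^{\mathrm{ee}}$ to exist), the subtracted term is positive semidefinite, being a congruence $\bm{A}\bm{B}^{-1}\bm{A}^{\mathrm{T}}$ of a positive-definite matrix. Hence $\mathbf{J}_{ \bm{\bm{y}}; \bm{p}_{U}}^{\mathrm{ee}} \preceq \mathbf{J}_{ \bm{\bm{y}}; \bm{p}_{U}}^{\mathrm{e}}$, so positive definiteness of $\mathbf{J}_{ \bm{\bm{y}}; \bm{p}_{U}}^{\mathrm{ee}}$ forces positive definiteness of $\mathbf{J}_{ \bm{\bm{y}}; \bm{p}_{U}}^{\mathrm{e}}$. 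But $\mathbf{J}_{ \bm{\bm{y}}; \bm{p}_{U}}^{\mathrm{e}}$ is precisely the EFIM in~(\ref{equ_theorem:FIM_3D_position}) governing $3$D-position estimation when both the $3$D velocity and the $3$D orientation are known; its positive definiteness is exactly the statement that the available infrastructure (number of LEOs, transmission time slots, receive antennas, operating frequency) suffices for that sub-problem. This is condition (i) of Theorem~\ref{theorem:FIM_6D_3D_position_3D_position_3D_orientation}, and the corollary follows by identifying ``minimal infrastructure'' with the threshold at which $\mathbf{J}_{ \bm{\bm{y}}; \bm{p}_{U}}^{\mathrm{e}} \succ 0$ first holds.

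I expect the only subtlety --- not really an obstacle --- to be the bookkeeping that links the abstract positive-definiteness condition to the phrase ``minimal infrastructure'': one should note that dropping resources below the level needed to make $\mathbf{J}_{ \bm{\bm{y}}; \bm{p}_{U}}^{\mathrm{e}}$ full rank leaves it rank-deficient, whence by the monotonicity $\mathbf{J}_{ \bm{\bm{y}}; \bm{p}_{U}}^{\mathrm{ee}} \preceq \mathbf{J}_{ \bm{\bm{y}}; \bm{p}_{U}}^{\mathrm{e}}$ the matrix $\mathbf{J}_{ \bm{\bm{y}}; \bm{p}_{U}}^{\mathrm{ee}}$ is also rank-deficient, so the $3$D position is not estimable in the harder problem either. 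This is merely the contrapositive of the chain of implications above, so no new computation is needed; the corollary is essentially Theorem~\ref{theorem:FIM_6D_3D_position_3D_position_3D_orientation}(i) restated at the level of physical resources.
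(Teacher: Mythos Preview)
Your proposal is correct and follows essentially the same approach as the paper: the corollary is an immediate consequence of condition~(i) in Theorem~\ref{theorem:FIM_6D_3D_position_3D_position_3D_orientation}, which in turn rests on the Schur-complement monotonicity $\mathbf{J}_{ \bm{\bm{y}}; \bm{p}_{U}}^{\mathrm{ee}} \preceq \mathbf{J}_{ \bm{\bm{y}}; \bm{p}_{U}}^{\mathrm{e}}$ you spell out. The paper states the corollary without a separate proof, so your explicit justification of the positive-semidefiniteness of the subtracted term and the contrapositive ``minimal infrastructure'' interpretation simply fills in details the paper leaves implicit.
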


Similarly, to isolate and obtain the EFIM for $3$D orientation for this case of $6$D estimation with known $3$D velocity, we apply the EFIM to the matrix in (\ref{equ:FIM_6D_3D_position_3D_orientation}) and obtain
\begin{equation}
\label{equ:FIM_6D_3D_orientation_3D_position_3D_orientation}
\mathbf{J}_{ \bm{\bm{y}}; \bm{\Phi}_{U}}^{\mathrm{ee}} = \mathbf{J}_{ \bm{\bm{y}}; \bm{\Phi}_{U}}^{\mathrm{e}} -  [\mathbf{J}_{ \bm{\bm{y}}; [ \bm{p}_{U}, \bm{\Phi}_{U}]}^{\mathrm{e}}]^{\mathrm{T}} [\mathbf{J}_{ \bm{\bm{y}}; \bm{p}_{U}}^{\mathrm{e}}]^{-1} \mathbf{J}_{ \bm{\bm{y}}; [ \bm{p}_{U}, \bm{\Phi}_{U}]}^{\mathrm{e}}  .
\end{equation}

\begin{theorem}
\label{theorem:FIM_6D_3D_orientation_3D_position_3D_velocity}
The following are necessary conditions to estimate the $3$D orientation when the $3$D position is unknown but the $3$D velocity is known: i) there has to be enough information to estimate the $3$D orientation when both the $3$D position and $3$D velocity are known, and ii) there has to be enough information to estimate the $3$D position when both the $3$D velocity and $3$D orientation are known.
\end{theorem}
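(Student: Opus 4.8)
The plan is to mirror, with the roles of $3$D position and $3$D orientation interchanged, the Schur-complement argument already used for Theorem~\ref{theorem:FIM_6D_3D_position_3D_position_3D_orientation}. The object to analyze is the joint $6$D EFIM block matrix in (\ref{equ:FIM_6D_3D_position_3D_orientation}), whose diagonal blocks are $\mathbf{J}_{\bm{y};\bm{p}_{U}}^{\mathrm{e}}$ and $\mathbf{J}_{\bm{y};\bm{\Phi}_{U}}^{\mathrm{e}}$ and whose off-diagonal block is $\mathbf{J}_{\bm{y};[\bm{p}_{U},\bm{\Phi}_{U}]}^{\mathrm{e}}$; since this matrix is itself an EFIM (a Schur complement of the full location FIM, cf. Definition~\ref{definition_EFIM}) it is positive semidefinite. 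The quantity of interest, $\mathbf{J}_{\bm{y};\bm{\Phi}_{U}}^{\mathrm{ee}}$ in (\ref{equ:FIM_6D_3D_orientation_3D_position_3D_orientation}), is exactly the Schur complement of the $\bm{p}_{U}$-block in that matrix, and the $3$D orientation admits a finite Cramér--Rao bound precisely when $\mathbf{J}_{\bm{y};\bm{\Phi}_{U}}^{\mathrm{ee}}\succ 0$, so the whole argument reduces to reading off what positive definiteness of that Schur complement forces.

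First I would establish condition~ii). The right-hand side of (\ref{equ:FIM_6D_3D_orientation_3D_position_3D_orientation}) contains the factor $[\mathbf{J}_{\bm{y};\bm{p}_{U}}^{\mathrm{e}}]^{-1}$, so the expression is only well-defined when $\mathbf{J}_{\bm{y};\bm{p}_{U}}^{\mathrm{e}}$ is invertible. Because $\mathbf{J}_{\bm{y};\bm{p}_{U}}^{\mathrm{e}}$ is a principal submatrix of a positive semidefinite EFIM, invertibility is equivalent to positive definiteness, and by Theorem~\ref{theorem:FIM_3D_position} together with the explicit form (\ref{equ_theorem:FIM_3D_position}) this is precisely the statement that there is enough information to estimate the $3$D position when both the $3$D velocity and $3$D orientation are known. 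Hence condition~ii) is necessary for $\mathbf{J}_{\bm{y};\bm{\Phi}_{U}}^{\mathrm{ee}}$ even to be defined.

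Next I would establish condition~i). The subtracted term $[\mathbf{J}_{\bm{y};[\bm{p}_{U},\bm{\Phi}_{U}]}^{\mathrm{e}}]^{\mathrm{T}}[\mathbf{J}_{\bm{y};\bm{p}_{U}}^{\mathrm{e}}]^{-1}\mathbf{J}_{\bm{y};[\bm{p}_{U},\bm{\Phi}_{U}]}^{\mathrm{e}}$ has the form $A^{\mathrm{T}}M^{-1}A$ with $M\succ 0$ (using condition~ii) just proved), hence it is positive semidefinite; therefore $\mathbf{J}_{\bm{y};\bm{\Phi}_{U}}^{\mathrm{ee}}\preceq \mathbf{J}_{\bm{y};\bm{\Phi}_{U}}^{\mathrm{e}}$, so positive definiteness of $\mathbf{J}_{\bm{y};\bm{\Phi}_{U}}^{\mathrm{ee}}$ forces $\mathbf{J}_{\bm{y};\bm{\Phi}_{U}}^{\mathrm{e}}\succ 0$. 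By Theorem~\ref{theorem:FIM_3D_position} with the explicit form (\ref{equ_theorem:FIM_3D_orientation}), this is exactly the requirement that there be enough information to estimate the $3$D orientation when both the $3$D position and $3$D velocity are known, which is condition~i). Combining the two displays the claimed pair of necessary conditions.

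The only genuinely delicate point is the semidefiniteness bookkeeping: one must be careful to invoke that (\ref{equ:FIM_6D_3D_position_3D_orientation}) is a bona fide positive semidefinite information matrix (so the ``diagonal block plus Schur complement'' monotonicity facts apply), which follows because an EFIM inherits positive semidefiniteness from the FIM via Definition~\ref{definition_EFIM}. I do not expect to be able to upgrade these conditions to sufficient ones: even when both $\mathbf{J}_{\bm{y};\bm{p}_{U}}^{\mathrm{e}}$ and $\mathbf{J}_{\bm{y};\bm{\Phi}_{U}}^{\mathrm{e}}$ are positive definite, the cross-information term can in principle remove enough of $\mathbf{J}_{\bm{y};\bm{\Phi}_{U}}^{\mathrm{e}}$ to leave the Schur complement singular, so the statement is deliberately confined to the necessary direction.
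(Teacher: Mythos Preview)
Your proposal is correct and follows essentially the same Schur-complement reasoning as the paper: the paper's proof simply observes from (\ref{equ:FIM_6D_3D_orientation_3D_position_3D_orientation}) that $\mathbf{J}_{\bm{y};\bm{p}_{U}}^{\mathrm{e}}$ must be invertible for $\mathbf{J}_{\bm{y};\bm{\Phi}_{U}}^{\mathrm{ee}}$ to exist and that $\mathbf{J}_{\bm{y};\bm{\Phi}_{U}}^{\mathrm{e}}$ must be positive definite for $\mathbf{J}_{\bm{y};\bm{\Phi}_{U}}^{\mathrm{ee}}$ to be positive definite. Your version spells out more carefully why invertibility of the PSD block amounts to positive definiteness and why the subtracted $A^{\mathrm{T}}M^{-1}A$ term is PSD, but the argument is the same.
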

\begin{proof}
    The proof follows by examining (\ref{equ:FIM_6D_3D_orientation_3D_position_3D_orientation}) and noticing that  $\mathbf{J}_{ \bm{\bm{y}}; \bm{p}_{U}}^{\mathrm{e}}$ has to be invertible for $\mathbf{J}_{ \bm{\bm{y}}; \bm{\Phi}_{U}}^{\mathrm{ee}}$ to exist and $\mathbf{J}_{ \bm{\bm{y}}; \bm{\Phi}_{U}}^{\mathrm{e}}$ has to be positive definite for $\mathbf{J}_{ \bm{\bm{y}}; \bm{\Phi}_{U}}^{\mathrm{ee}}$ to be positive definite.
\end{proof}

\begin{corollary}
\label{corollary:FIM_6D_3D_orientation_3D_position_3D_orientation_1}

The minimal infrastructure to estimate the $3$D position when the $3$D velocity and $3$D orientation are known is a necessary condition to estimate the $3$D orientation when the $3$D position is unknown but the $3$D velocity is known. 
\end{corollary}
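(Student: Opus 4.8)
The plan is to obtain the statement as an immediate specialization of part~(ii) of Theorem~\ref{theorem:FIM_6D_3D_orientation_3D_position_3D_velocity}. First I would unpack the phrase ``minimal infrastructure to estimate the $3$D position when the $3$D velocity and $3$D orientation are known'': by Theorem~\ref{theorem:FIM_3D_position} and equation~(\ref{equ_theorem:FIM_3D_position}), this information is captured by the EFIM $\mathbf{J}_{\bm{y}; \bm{p}_{U}}^{\mathrm{e}}$, so ``minimal infrastructure'' denotes the smallest configuration of LEO satellites, transmission time slots, and receive antennas for which $\mathbf{J}_{\bm{y}; \bm{p}_{U}}^{\mathrm{e}}$ is positive definite (equivalently, full rank, since it is a Fisher information matrix and hence positive semidefinite by construction).

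Next I would invoke the Schur-complement expression~(\ref{equ:FIM_6D_3D_orientation_3D_position_3D_orientation}) for $\mathbf{J}_{\bm{y}; \bm{\Phi}_{U}}^{\mathrm{ee}}$, the EFIM for the $3$D orientation when the $3$D position is unknown and the $3$D velocity is known. As already noted in the proof of Theorem~\ref{theorem:FIM_6D_3D_orientation_3D_position_3D_velocity}, the factor $[\mathbf{J}_{\bm{y}; \bm{p}_{U}}^{\mathrm{e}}]^{-1}$ appearing there forces $\mathbf{J}_{\bm{y}; \bm{p}_{U}}^{\mathrm{e}}$ to be invertible whenever $\mathbf{J}_{\bm{y}; \bm{\Phi}_{U}}^{\mathrm{ee}}$ is to exist (and a fortiori to be positive definite). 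Hence any configuration that supports $3$D orientation estimation under unknown $3$D position and known $3$D velocity must already support $3$D position estimation under known $3$D velocity and $3$D orientation, which is exactly the claimed necessity of the minimal infrastructure.

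There is essentially no obstacle here; the only point requiring care is the identification, for Fisher information matrices, of ``enough information to estimate a parameter'' with ``positive definiteness of the associated EFIM'' and with ``invertibility'', so that the minimality notions in the two scenarios line up. Once that is granted, the corollary is simply a re-reading of Theorem~\ref{theorem:FIM_6D_3D_orientation_3D_position_3D_velocity}(ii) in the language of minimal infrastructure, and no additional computation with the Lemmas of Sections~\ref{subsection:FIM_channel_parameters} and~\ref{subsection:information_loss_FIM_channel_parameters} is needed.
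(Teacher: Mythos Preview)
Your proposal is correct and matches the paper's approach: the corollary is stated without its own proof and is intended as an immediate restatement of condition~(ii) in Theorem~\ref{theorem:FIM_6D_3D_orientation_3D_position_3D_velocity}, whose proof already notes that $\mathbf{J}_{\bm{y}; \bm{p}_{U}}^{\mathrm{e}}$ must be invertible for $\mathbf{J}_{\bm{y}; \bm{\Phi}_{U}}^{\mathrm{ee}}$ in~(\ref{equ:FIM_6D_3D_orientation_3D_position_3D_orientation}) to exist. Your unpacking of ``minimal infrastructure'' as positive definiteness of the relevant EFIM is exactly the implicit identification the paper relies on.
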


\begin{corollary}
\label{corollary:FIM_6D_3D_orientation_3D_position_3D_orientation_2}

The minimal infrastructure to estimate the $3$D orientation when the $3$D position and $3$D velocity are known is a necessary condition to estimate the $3$D orientation when the $3$D position is unknown but the $3$D velocity is known. 
\end{corollary}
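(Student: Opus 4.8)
The plan is to derive this corollary directly from part~(i) of Theorem~\ref{theorem:FIM_6D_3D_orientation_3D_position_3D_velocity}, after unpacking what ``minimal infrastructure'' and ``enough information'' mean in terms of the EFIM. First I would recall that, by the EFIM characterization used throughout, ``being able to estimate the $3$D orientation when the $3$D position and $3$D velocity are known'' is equivalent to the matrix $\mathbf{J}_{\bm{y};\bm{\Phi}_{U}}^{\mathrm{e}}$ of~(\ref{equ_theorem:FIM_3D_orientation}) being positive definite (equivalently, of full rank $3$). The ``minimal infrastructure'' for this task is then, by definition, the smallest configuration of $(N_B, N_K, N_U)$ --- number of LEO satellites, number of transmission time slots, and number of receive antennas --- for which the sum defining $\mathbf{J}_{\bm{y};\bm{\Phi}_{U}}^{\mathrm{e}}$ attains rank $3$; below this configuration the matrix is rank-deficient and estimation of the $3$D orientation is infeasible even with everything else known.

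Next I would invoke Theorem~\ref{theorem:FIM_6D_3D_orientation_3D_position_3D_velocity}, condition~(i): estimating the $3$D orientation when the $3$D position is unknown, with the $3$D velocity known, requires that there be enough information to estimate the $3$D orientation when the $3$D position and $3$D velocity are known, i.e.\ requires $\mathbf{J}_{\bm{y};\bm{\Phi}_{U}}^{\mathrm{e}} \succ \mathbf{0}$ so that the Schur complement in~(\ref{equ:FIM_6D_3D_orientation_3D_position_3D_orientation}) is itself positive definite. Contrapositively, if the infrastructure is strictly below the minimal configuration identified in the first step, then $\mathbf{J}_{\bm{y};\bm{\Phi}_{U}}^{\mathrm{e}}$ is singular, and hence by the theorem the $6$D task with unknown position is infeasible as well. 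Therefore possessing the minimal infrastructure for the ``orientation-only'' problem is a necessary condition for the ``position-and-orientation'' problem, which is exactly the claim.

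The step I expect to be the main obstacle is making the notion of ``minimal infrastructure'' fully precise: one must argue that feasibility of the simpler $3$D-orientation problem is governed purely by a rank/counting condition on $(N_B, N_K, N_U)$ through the structure of the terms in Section~\ref{subsection:FIM_channel_parameters}, and that this condition is monotone --- adding a satellite, a time slot, or a receive antenna only adds positive-semidefinite contributions to $\mathbf{J}_{\bm{y};\bm{\Phi}_{U}}^{\mathrm{e}}$ and so never destroys feasibility --- so that a well-defined minimum exists. Once this monotonicity is granted, the corollary is an immediate consequence of Theorem~\ref{theorem:FIM_6D_3D_orientation_3D_position_3D_velocity}(i), and no new computation with the FIM entries of Section~\ref{subsection:FIM_channel_parameters} or the information-loss terms of Section~\ref{subsection:information_loss_FIM_channel_parameters} is required.
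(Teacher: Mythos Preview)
Your proposal is correct and matches the paper's approach: the corollary is treated as an immediate consequence of condition~(i) in Theorem~\ref{theorem:FIM_6D_3D_orientation_3D_position_3D_velocity}, which requires $\mathbf{J}_{\bm{y};\bm{\Phi}_{U}}^{\mathrm{e}}$ to be positive definite for $\mathbf{J}_{\bm{y};\bm{\Phi}_{U}}^{\mathrm{ee}}$ to be positive definite. Your added discussion of monotonicity in $(N_B,N_K,N_U)$ is a helpful clarification of what ``minimal infrastructure'' means, but the paper leaves this implicit and offers no further proof beyond the theorem.
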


The following theorem handles the case of joint estimation of the $3$D position and $3$D orientation.
\begin{theorem}
\label{theorem:FIM_6D_3D_joint_3D_position_3D_orientation}
The $3$D position and $3$D orientation  can be jointly estimated if the following conditions are met: i) there is enough information to estimate the $3$D position when the $3$D velocity and $3$D orientation are known, ii) there is enough information to estimate the $3$D orientation when the $3$D position and $3$D velocity are known, and iii) either (\ref{equ:FIM_6D_3D_position_3D_position_3D_orientation}) or  (\ref{equ:FIM_6D_3D_orientation_3D_position_3D_orientation}) is invertible.
\end{theorem}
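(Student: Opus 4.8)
The plan is to observe that the $3$D position and $3$D orientation are jointly estimable (with $\bm{v}_U$ known) precisely when the $6\times 6$ matrix in (\ref{equ:FIM_6D_3D_position_3D_orientation}), which I abbreviate
\[
\bm{M} \;=\; \begin{bmatrix} \mathbf{J}_{ \bm{\bm{y}}; \bm{p}_{U}}^{\mathrm{e}} & \mathbf{J}_{ \bm{\bm{y}};[ \bm{p}_{U}, \bm{\Phi}_{U}]}^{\mathrm{e}} \\ (\mathbf{J}_{ \bm{\bm{y}};[ \bm{p}_{U}, \bm{\Phi}_{U}]}^{\mathrm{e}})^{\mathrm{T}} & \mathbf{J}_{ \bm{\bm{y}}; \bm{\Phi}_{U}}^{\mathrm{e}} \end{bmatrix} \;=\; \begin{bmatrix} \bm{A} & \bm{B} \\ \bm{B}^{\mathrm{T}} & \bm{C} \end{bmatrix},
\]
is nonsingular, equivalently positive definite. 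The key structural fact is that $\bm{M}$, being a principal block of the EFIM $\mathbf{J}_{ \bm{\bm{y}}; \bm{\kappa}_1}^{\mathrm{e}}$ and hence a Schur complement of the positive semidefinite observed FIM $\mathbf{J}_{\bm{y}|\bm{\eta}}$, is itself positive semidefinite. With that in hand the whole statement reduces to the standard Schur-complement test for definiteness of a symmetric PSD block matrix.

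First I would translate the hypotheses into matrix conditions. Condition (i) is exactly $\bm{A} = \mathbf{J}_{ \bm{\bm{y}}; \bm{p}_{U}}^{\mathrm{e}} \succ \bm{0}$ and condition (ii) is exactly $\bm{C} = \mathbf{J}_{ \bm{\bm{y}}; \bm{\Phi}_{U}}^{\mathrm{e}} \succ \bm{0}$ (these are the first two cases of Theorem~\ref{theorem:FIM_3D_position}). Because $\bm{A}$ and $\bm{C}$ are then invertible, the two Schur complements
\[
\bm{S}_{\bm{C}} \,:=\, \bm{A} - \bm{B}\bm{C}^{-1}\bm{B}^{\mathrm{T}} \,=\, \mathbf{J}_{ \bm{\bm{y}}; \bm{p}_{U}}^{\mathrm{ee}}, \qquad \bm{S}_{\bm{A}} \,:=\, \bm{C} - \bm{B}^{\mathrm{T}}\bm{A}^{-1}\bm{B} \,=\, \mathbf{J}_{ \bm{\bm{y}}; \bm{\Phi}_{U}}^{\mathrm{ee}},
\]
from (\ref{equ:FIM_6D_3D_position_3D_position_3D_orientation}) and (\ref{equ:FIM_6D_3D_orientation_3D_position_3D_orientation}) are both well defined, and condition (iii) says at least one of them is invertible.

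Next I would close the argument with the block-$LDL$ factorizations. When $\bm{A}\succ\bm{0}$,
\[
\bm{M} \,=\, \begin{bmatrix} \bm{I} & \bm{0} \\ \bm{B}^{\mathrm{T}}\bm{A}^{-1} & \bm{I} \end{bmatrix}\!\begin{bmatrix} \bm{A} & \bm{0} \\ \bm{0} & \bm{S}_{\bm{A}} \end{bmatrix}\!\begin{bmatrix} \bm{I} & \bm{A}^{-1}\bm{B} \\ \bm{0} & \bm{I} \end{bmatrix},
\]
so $\bm{M}\succ\bm{0} \Longleftrightarrow \bm{A}\succ\bm{0}$ and $\bm{S}_{\bm{A}}\succ\bm{0}$; the mirror factorization through $\bm{C}$ gives $\bm{M}\succ\bm{0}\Longleftrightarrow \bm{C}\succ\bm{0}$ and $\bm{S}_{\bm{C}}\succ\bm{0}$. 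Since $\bm{M}\succeq\bm{0}$ and $\bm{A}\succ\bm{0}$ (resp. $\bm{C}\succ\bm{0}$), the corresponding Schur complement is automatically PSD, so ``invertible'' upgrades to ``positive definite''. Hence: if (iii) is realized through (\ref{equ:FIM_6D_3D_orientation_3D_position_3D_orientation}), then (i) together with $\bm{S}_{\bm{A}}\succ\bm{0}$ give $\bm{M}\succ\bm{0}$; if it is realized through (\ref{equ:FIM_6D_3D_position_3D_position_3D_orientation}), then (ii) together with $\bm{S}_{\bm{C}}\succ\bm{0}$ give $\bm{M}\succ\bm{0}$. Either way $\bm{M}$ is nonsingular, the error-covariance bound $\bm{M}^{-1}$ exists, and the $3$D position and $3$D orientation are jointly estimable.

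The main obstacle is not computational but logical: making airtight the two facts that (a) $\bm{M}\succeq\bm{0}$ rather than merely symmetric, which rests on viewing the EFIM as a Schur complement of the PSD observed FIM $\mathbf{J}_{\bm{y}|\bm{\eta}}$, and (b) that a Schur complement of a PSD matrix which happens to be invertible is in fact positive definite --- this is exactly what lets hypothesis (iii) ask only for invertibility rather than definiteness. A secondary bookkeeping item is to confirm that (\ref{equ:FIM_6D_3D_position_3D_position_3D_orientation}) and (\ref{equ:FIM_6D_3D_orientation_3D_position_3D_orientation}) are genuinely the two Schur complements of $\bm{M}$ and are simultaneously well-posed under (i)--(ii); the rest is the routine block-$LDL$ identity.
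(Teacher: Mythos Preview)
Your proposal is correct and takes essentially the same approach as the paper, which simply states that ``the proof follows from applying the Schur's complement and the matrix inversion lemma to (\ref{equ:FIM_6D_3D_position_3D_orientation}).'' You have fleshed out that sketch carefully---in particular your observation that $\bm{M}\succeq\bm{0}$ (so that invertibility of a Schur complement upgrades to positive definiteness) is a nice point the paper leaves implicit---but the underlying argument via the block Schur-complement/$LDL$ factorization is the same.
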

\begin{proof}
    The proof follows from applying the Schur's complement and the matrix inversion lemma to (\ref{equ:FIM_6D_3D_position_3D_orientation}).
\end{proof}

Now to investigate the $6$D case  when the $3$D position is known, we start by presenting the appropriate EFIM matrix, which can be written as
\begin{equation}
\label{equ:FIM_6D_3D_orientation_3D_velocity}
\begin{aligned}
\left[\begin{array}{cc}
\mathbf{J}_{ \bm{\bm{y}}; \bm{\Phi}_{U}}^{\mathrm{e}} & \mathbf{J}_{ \bm{\bm{y}};[  \bm{\Phi}_{U}, \bm{v}_{U}]}^{\mathrm{e}} \\
(\mathbf{J}_{ \bm{\bm{y}};[  \bm{\Phi}_{U}, \bm{v}_{U}]}^{\mathrm{e}})^{\mathrm{T}} & \mathbf{J}_{ \bm{\bm{y}}; \bm{v}_{U}}^{\mathrm{e}} \\
\end{array}\right].
\end{aligned}
\end{equation}
Now, to isolate and obtain the EFIM for $3$D orientation for this case of $6$D estimation with known $3$D position, we apply the EFIM to the matrix in (\ref{equ:FIM_6D_3D_orientation_3D_velocity}) and obtain
\begin{equation}
\label{equ:FIM_6D_3D_orientation_3D_orientation_3D_velocity}
\mathbf{J}_{ \bm{\bm{y}}; \bm{\Phi}_{U}}^{\mathrm{ee}} = \mathbf{J}_{ \bm{\bm{y}}; \bm{\Phi}_{U}}^{\mathrm{e}} - \mathbf{J}_{ \bm{\bm{y}}; [ \bm{\Phi}_{U}, \bm{v}_{U}]}^{\mathrm{e}}  [\mathbf{J}_{ \bm{\bm{y}}; \bm{v}_{U}}^{\mathrm{e}}]^{-1}  [\mathbf{J}_{ \bm{\bm{y}}; [ \bm{\Phi}_{U}, \bm{v}_{U}]}^{\mathrm{e}}]^{\mathrm{T}}.
\end{equation}

\begin{theorem}
\label{theorem:FIM_6D_3D_orientation_3D_velocity_3D_orientation}
To estimate the 3D orientation when the 3D velocity is unknown but the 3D position is known, the following conditions must be satisfied: i) there must be sufficient information to estimate the 3D orientation when both the 3D velocity and 3D position are known, and ii) there must be sufficient information to estimate the 3D velocity when both the 3D position and 3D orientation are known.
\end{theorem}
\begin{proof}
    The proof follows by examining (\ref{equ:FIM_6D_3D_orientation_3D_orientation_3D_velocity}) and noticing that  $\mathbf{J}_{ \bm{\bm{y}}; \bm{v}_{U}}^{\mathrm{e}}$ has to be invertible for $\mathbf{J}_{ \bm{\bm{y}}; \bm{\Phi}_{U}}^{\mathrm{ee}}$ to exist and $\mathbf{J}_{ \bm{\bm{y}}; \bm{\Phi}_{U}}^{\mathrm{e}}$ has to be positive definite for $\mathbf{J}_{ \bm{\bm{y}}; \bm{\Phi}_{U}}^{\mathrm{ee}}$ to be positive definite.
\end{proof}

\begin{corollary}
The minimal infrastructure to estimate the $3$D orientation when the $3$D position and $3$D velocity are known is a necessary condition to estimate the $3$D orientation when the $3$D velocity is unknown but the $3$D position is known. 
\end{corollary}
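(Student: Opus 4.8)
The plan is to obtain the claim as a direct specialization of part (i) of Theorem~\ref{theorem:FIM_6D_3D_orientation_3D_velocity_3D_orientation}, spelling out why ``enough information'' there is the same thing as ``at least the minimal infrastructure'' here. First I would recall the EFIM in (\ref{equ:FIM_6D_3D_orientation_3D_orientation_3D_velocity}),
\[
\mathbf{J}_{ \bm{\bm{y}}; \bm{\Phi}_{U}}^{\mathrm{ee}} = \mathbf{J}_{ \bm{\bm{y}}; \bm{\Phi}_{U}}^{\mathrm{e}} - \mathbf{J}_{ \bm{\bm{y}}; [ \bm{\Phi}_{U}, \bm{v}_{U}]}^{\mathrm{e}}\,[\mathbf{J}_{ \bm{\bm{y}}; \bm{v}_{U}}^{\mathrm{e}}]^{-1}\,[\mathbf{J}_{ \bm{\bm{y}}; [ \bm{\Phi}_{U}, \bm{v}_{U}]}^{\mathrm{e}}]^{\mathrm{T}},
\]
the matrix that governs $3$D orientation estimation when $\bm{v}_{U}$ is unknown but $\bm{p}_{U}$ is known; estimation is feasible in that scenario precisely when $\mathbf{J}_{ \bm{\bm{y}}; \bm{\Phi}_{U}}^{\mathrm{ee}} \succ 0$. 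The correction term subtracted on the right is a Gram-type (Schur-complement) term, hence positive semidefinite, so $\mathbf{J}_{ \bm{\bm{y}}; \bm{\Phi}_{U}}^{\mathrm{ee}} \preceq \mathbf{J}_{ \bm{\bm{y}}; \bm{\Phi}_{U}}^{\mathrm{e}}$ and therefore $\mathbf{J}_{ \bm{\bm{y}}; \bm{\Phi}_{U}}^{\mathrm{ee}} \succ 0 \Rightarrow \mathbf{J}_{ \bm{\bm{y}}; \bm{\Phi}_{U}}^{\mathrm{e}} \succ 0$. By Theorem~\ref{theorem:FIM_3D_position} and (\ref{equ_theorem:FIM_3D_orientation}), $\mathbf{J}_{ \bm{\bm{y}}; \bm{\Phi}_{U}}^{\mathrm{e}} \succ 0$ is exactly the feasibility condition for $3$D orientation estimation when both $\bm{p}_{U}$ and $\bm{v}_{U}$ are known.

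Next I would translate this matrix implication into the language of infrastructure. Write $I$ for a configuration of (number of LEOs, operating frequency, number of time slots, number of receive antennas), and note that the set $\mathcal{C}$ of configurations with $\mathbf{J}_{ \bm{\bm{y}}; \bm{\Phi}_{U}}^{\mathrm{e}}(I) \succ 0$ is, by the additive (block-diagonal-in-$b$, summed-over-$u,k$) structure of $\mathbf{J}_{ \bm{\bm{y}}; \bm{\Phi}_{U}}^{\mathrm{e}}$ coming from Lemma~\ref{lemma:FIM_3D_orientation_3D_orientation} and Lemma~\ref{lemma:information_loss_FIM_3D_orientation_3D_orientation}, upward closed; since the underlying counts are positive integers bounded below, $\mathcal{C}$ has minimal elements, which are what the statement calls the minimal infrastructure for orientation estimation with known position and velocity. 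Any configuration $I$ that makes $\mathbf{J}_{ \bm{\bm{y}}; \bm{\Phi}_{U}}^{\mathrm{ee}}(I) \succ 0$ lies in $\mathcal{C}$ by the first paragraph, hence dominates such a minimal element; equivalently, possessing the minimal infrastructure for the known-$(\bm{p}_{U},\bm{v}_{U})$ problem is necessary for the unknown-$\bm{v}_{U}$, known-$\bm{p}_{U}$ problem, which is the assertion. This is, as expected, the same pattern as Corollaries~\ref{corollary:FIM_6D_3D_orientation_3D_position_3D_orientation_1}--\ref{corollary:FIM_6D_3D_orientation_3D_position_3D_orientation_2}.

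The only point that needs care is the meaning of ``minimal infrastructure'': the partial order on configurations is a product order and need not be a lattice, so there may be several incomparable minimal elements of $\mathcal{C}$, and the conclusion is most safely phrased as ``every feasible configuration for the unknown-velocity problem dominates some minimal configuration for the known-position-and-velocity problem.'' Everything else is immediate --- the positive semidefiniteness of Schur-complement corrections together with part (i) of Theorem~\ref{theorem:FIM_6D_3D_orientation_3D_velocity_3D_orientation} --- so no new computation is required, and the corollary simply repackages that theorem in resource terms.
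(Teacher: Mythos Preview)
Your proposal is correct and follows essentially the same approach as the paper: the corollary is an immediate restatement, in infrastructure language, of part~(i) of Theorem~\ref{theorem:FIM_6D_3D_orientation_3D_velocity_3D_orientation}, whose proof already notes that $\mathbf{J}_{\bm{y};\bm{\Phi}_{U}}^{\mathrm{e}}$ must be positive definite for $\mathbf{J}_{\bm{y};\bm{\Phi}_{U}}^{\mathrm{ee}}$ to be positive definite. Your write-up is more explicit than the paper's (you spell out the positive-semidefiniteness of the Schur-complement correction and the partial-order reading of ``minimal infrastructure''), but the underlying argument is the same.
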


\begin{corollary}
The minimal infrastructure to estimate the $3$D velocity when the $3$D position and $3$D orientation are known is a necessary condition to estimate the $3$D orientation when the $3$D velocity is unknown but the $3$D position is known. 
\end{corollary}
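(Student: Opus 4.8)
The plan is to follow the same template as the proofs of Theorems~\ref{theorem:FIM_6D_3D_position_3D_position_3D_orientation} and~\ref{theorem:FIM_6D_3D_orientation_3D_position_3D_velocity}, reading the two necessary conditions directly off the structure of the Schur complement in~(\ref{equ:FIM_6D_3D_orientation_3D_orientation_3D_velocity}). I would first record the two facts that make this work: (a) the $6$D EFIM block matrix in~(\ref{equ:FIM_6D_3D_orientation_3D_velocity}) is positive semidefinite whenever it is defined, because it is a principal submatrix of $\mathbf{J}_{\bm{y};\bm{\kappa}_1}^{\mathrm{e}}$, which is itself a Schur complement of a Gram-type (hence PSD) FIM; and (b) its diagonal blocks are exactly the ``all-other-parameters-known'' EFIMs, namely $\mathbf{J}_{\bm{y};\bm{\Phi}_U}^{\mathrm{e}}$ from~(\ref{equ_theorem:FIM_3D_orientation}) and $\mathbf{J}_{\bm{y};\bm{v}_U}^{\mathrm{e}}$ from~(\ref{equ_theorem:FIM_3D_velocity}).

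For condition (ii), I would observe that the matrix being inverted in~(\ref{equ:FIM_6D_3D_orientation_3D_orientation_3D_velocity}) is $\mathbf{J}_{\bm{y};\bm{v}_U}^{\mathrm{e}}$; for $\mathbf{J}_{\bm{y};\bm{\Phi}_U}^{\mathrm{ee}}$ to exist this block must be nonsingular, i.e.\ positive definite, which is precisely the statement that there is enough information to estimate the $3$D velocity when both the $3$D position and $3$D orientation are known. For condition (i), since $\mathbf{J}_{\bm{y};\bm{v}_U}^{\mathrm{e}}\succ 0$ the subtracted term $\mathbf{J}_{\bm{y};[\bm{\Phi}_U,\bm{v}_U]}^{\mathrm{e}}[\mathbf{J}_{\bm{y};\bm{v}_U}^{\mathrm{e}}]^{-1}[\mathbf{J}_{\bm{y};[\bm{\Phi}_U,\bm{v}_U]}^{\mathrm{e}}]^{\mathrm{T}}$ is positive semidefinite, so~(\ref{equ:FIM_6D_3D_orientation_3D_orientation_3D_velocity}) yields $\mathbf{J}_{\bm{y};\bm{\Phi}_U}^{\mathrm{ee}}\preceq\mathbf{J}_{\bm{y};\bm{\Phi}_U}^{\mathrm{e}}$; requiring the left side to be positive definite forces the right side to be positive definite, which is the statement that there is enough information to estimate the $3$D orientation when both the $3$D position and $3$D velocity are known. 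Both conditions are thus implied by the existence of a positive-definite $\mathbf{J}_{\bm{y};\bm{\Phi}_U}^{\mathrm{ee}}$, hence necessary.

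I expect the only point requiring a little care --- not really an obstacle --- to be justifying fact (a): that~(\ref{equ:FIM_6D_3D_orientation_3D_velocity}) and the information-loss term are positive semidefinite. This is handled by noting that every FIM/EFIM appearing in Section~\ref{subsection:FIM_channel_parameters} is a nonnegatively weighted sum of outer products (see Lemmas~\ref{lemma:FIM_3D_orientation_3D_orientation} and~\ref{lemma:FIM_3D_velocity}), that subtracting the nuisance-information terms of Section~\ref{subsection:information_loss_FIM_channel_parameters} amounts to a Schur complement of such a matrix, and that Schur complements of PSD matrices with respect to positive-definite blocks remain PSD. The two accompanying corollaries then follow immediately by specializing condition (i) to the minimal infrastructure identified for $\mathbf{J}_{\bm{y};\bm{\Phi}_U}^{\mathrm{e}}$ and condition (ii) to that identified for $\mathbf{J}_{\bm{y};\bm{v}_U}^{\mathrm{e}}$; everything else is bookkeeping matching the diagonal blocks of~(\ref{equ:FIM_6D_3D_orientation_3D_velocity}) to the EFIMs of Theorem~\ref{theorem:FIM_3D_position}.
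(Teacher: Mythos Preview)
Your proposal is correct and follows essentially the same route as the paper: the corollary is an immediate specialization of Theorem~\ref{theorem:FIM_6D_3D_orientation_3D_velocity_3D_orientation}, whose proof in the paper is exactly your condition (ii) argument---that $\mathbf{J}_{\bm{y};\bm{v}_U}^{\mathrm{e}}$ must be invertible for $\mathbf{J}_{\bm{y};\bm{\Phi}_U}^{\mathrm{ee}}$ in~(\ref{equ:FIM_6D_3D_orientation_3D_orientation_3D_velocity}) to exist---together with your condition (i) argument via the PSD subtracted term. The paper states the corollary without a separate proof, treating it as a direct consequence of the theorem; your additional care in justifying fact~(a) is more than the paper provides but entirely in the same spirit.
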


Similarly, to isolate and obtain the EFIM for $3$D velocity for this case of $6$D estimation with known $3$D position, we apply the EFIM to the matrix in (\ref{equ:FIM_6D_3D_orientation_3D_velocity}) and obtain
\begin{equation}
\label{equ:FIM_6D_3D_velocity_3D_orientation_3D_velocity}
\mathbf{J}_{ \bm{\bm{y}}; \bm{v}_{U}}^{\mathrm{ee}} = \mathbf{J}_{ \bm{\bm{y}}; \bm{v}_{U}}^{\mathrm{e}} -  [\mathbf{J}_{ \bm{\bm{y}}; [ \bm{\Phi}_{U},\bm{v}_{U}]}^{\mathrm{e}}]^{\mathrm{T}} [\mathbf{J}_{ \bm{\bm{y}}; \bm{\Phi}_{U}}^{\mathrm{e}}]^{-1} \mathbf{J}_{ \bm{\bm{y}}; [  \bm{\Phi}_{U},\bm{v}_{U}]}^{\mathrm{e}} .
\end{equation}

\begin{theorem}
\label{theorem:FIM_6D_3D_velocity_3D_velocity_3D_orientation}
The following are necessary conditions to estimate the $3$D velocity when the $3$D orientation is unknown but the $3$D position is known: i) there has to be enough information to estimate the $3$D orientation when both the $3$D velocity and $3$D position are known, and ii) there has to be enough information to estimate the $3$D velocity when both the $3$D position and $3$D orientation are known.
\end{theorem}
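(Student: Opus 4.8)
The plan is to read the two necessary conditions directly off the Schur-complement expression (\ref{equ:FIM_6D_3D_velocity_3D_orientation_3D_velocity}) for the effective Fisher information $\mathbf{J}_{ \bm{\bm{y}}; \bm{v}_{U}}^{\mathrm{ee}}$ of the $3$D velocity in the joint $(\bm{\Phi}_{U},\bm{v}_{U})$ problem, mirroring the arguments used for Theorems \ref{theorem:FIM_6D_3D_position_3D_position_3D_orientation}--\ref{theorem:FIM_6D_3D_orientation_3D_velocity_3D_orientation}. Throughout, ``enough information to estimate'' a block is taken to mean that the corresponding EFIM is positive definite; since every matrix appearing here is obtained from the PSD matrix $\mathbf{J}_{\bm{y}|\bm{\kappa}}$ by taking principal blocks and Schur complements with respect to positive definite pivots, it is PSD, so for these matrices ``invertible'' and ``positive definite'' are equivalent, and I would record this equivalence once at the start.

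First I would establish condition (i). The right-hand side of (\ref{equ:FIM_6D_3D_velocity_3D_orientation_3D_velocity}) contains the factor $[\mathbf{J}_{ \bm{\bm{y}}; \bm{\Phi}_{U}}^{\mathrm{e}}]^{-1}$, hence $\mathbf{J}_{ \bm{\bm{y}}; \bm{v}_{U}}^{\mathrm{ee}}$ is well defined only if $\mathbf{J}_{ \bm{\bm{y}}; \bm{\Phi}_{U}}^{\mathrm{e}}$ is invertible, i.e.\ positive definite. By Theorem \ref{theorem:FIM_3D_position} this is exactly the assertion that the $3$D orientation is estimable when the $3$D position and $3$D velocity are known, which is condition (i). Equivalently, one may note that the $2\times 2$ block matrix in (\ref{equ:FIM_6D_3D_orientation_3D_velocity}) is a principal submatrix of the full EFIM, hence PSD; if the joint $(\bm{\Phi}_{U},\bm{v}_{U})$ problem is solvable it is actually PD, and then every principal block of it---in particular the $(1,1)$ block $\mathbf{J}_{ \bm{\bm{y}}; \bm{\Phi}_{U}}^{\mathrm{e}}$---is PD.

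Next I would establish condition (ii). Assuming condition (i), $[\mathbf{J}_{ \bm{\bm{y}}; \bm{\Phi}_{U}}^{\mathrm{e}}]^{-1}\succ 0$, so the subtracted term $[\mathbf{J}_{ \bm{\bm{y}}; [ \bm{\Phi}_{U},\bm{v}_{U}]}^{\mathrm{e}}]^{\mathrm{T}} [\mathbf{J}_{ \bm{\bm{y}}; \bm{\Phi}_{U}}^{\mathrm{e}}]^{-1} \mathbf{J}_{ \bm{\bm{y}}; [  \bm{\Phi}_{U},\bm{v}_{U}]}^{\mathrm{e}}$ in (\ref{equ:FIM_6D_3D_velocity_3D_orientation_3D_velocity}) is PSD, whence $\mathbf{J}_{ \bm{\bm{y}}; \bm{v}_{U}}^{\mathrm{e}} \succeq \mathbf{J}_{ \bm{\bm{y}}; \bm{v}_{U}}^{\mathrm{ee}}$. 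If the $3$D velocity is estimable with the orientation unknown, then $\mathbf{J}_{ \bm{\bm{y}}; \bm{v}_{U}}^{\mathrm{ee}} \succ 0$, and the inequality forces $\mathbf{J}_{ \bm{\bm{y}}; \bm{v}_{U}}^{\mathrm{e}} \succ 0$, which by Lemma \ref{lemma:FIM_3D_velocity} (combined with the corresponding information-loss term) is the estimability of the $3$D velocity when the $3$D position and $3$D orientation are known, i.e.\ condition (ii). Again one can instead simply invoke PD of the block matrix in (\ref{equ:FIM_6D_3D_orientation_3D_velocity}) to force PD of its $(2,2)$ block $\mathbf{J}_{ \bm{\bm{y}}; \bm{v}_{U}}^{\mathrm{e}}$.

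The matrix manipulations are routine; the point that needs care is the logical direction, since the claim concerns \emph{necessary} (not sufficient) conditions, so the argument must run from ``$\mathbf{J}_{ \bm{\bm{y}}; \bm{v}_{U}}^{\mathrm{ee}}$ exists and is positive definite'' to the two conclusions and not the reverse---sufficiency would additionally require that the cross term $\mathbf{J}_{ \bm{\bm{y}}; [ \bm{\Phi}_{U},\bm{v}_{U}]}^{\mathrm{e}}$ not remove too much information, which is not asserted. I expect the only real (and minor) obstacle to be phrasing the ``invertible $\Leftrightarrow$ positive definite'' equivalence for these EFIMs cleanly, and making explicit that the necessary conditions can be read off either Schur complement of (\ref{equ:FIM_6D_3D_orientation_3D_velocity}), since positive definiteness of the block matrix is symmetric in its two diagonal blocks.
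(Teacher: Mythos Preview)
Your proposal is correct and follows essentially the same approach as the paper: the paper's proof simply examines (\ref{equ:FIM_6D_3D_velocity_3D_orientation_3D_velocity}) and notes that $\mathbf{J}_{ \bm{\bm{y}}; \bm{\Phi}_{U}}^{\mathrm{e}}$ must be invertible for $\mathbf{J}_{ \bm{\bm{y}}; \bm{v}_{U}}^{\mathrm{ee}}$ to exist and that $\mathbf{J}_{ \bm{\bm{y}}; \bm{v}_{U}}^{\mathrm{e}}$ must be positive definite for $\mathbf{J}_{ \bm{\bm{y}}; \bm{v}_{U}}^{\mathrm{ee}}$ to be positive definite. Your write-up is more explicit about why the subtracted Schur term is PSD and about the necessary-versus-sufficient direction, which only adds clarity to the same argument.
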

\begin{proof}
    The proof follows by examining (\ref{equ:FIM_6D_3D_velocity_3D_orientation_3D_velocity}) and noticing that  $\mathbf{J}_{ \bm{\bm{y}}; \bm{\Phi}_{U}}^{\mathrm{e}}$ has to be invertible for $\mathbf{J}_{ \bm{\bm{y}}; \bm{v}_{U}}^{\mathrm{ee}}$ to exist and $\mathbf{J}_{ \bm{\bm{y}}; \bm{v}_{U}}^{\mathrm{e}}$ has to be positive definite for $\mathbf{J}_{ \bm{\bm{y}}; \bm{v}_{U}}^{\mathrm{ee}}$ to be positive definite.
\end{proof}

\begin{corollary}
\label{corollary:FIM_6D_3D_velocity_3D_velocity_3D_orientation_1}
The minimal infrastructure to estimate the $3$D velocity when the $3$D position and $3$D orientation are known is a necessary condition to estimate the $3$D velocity when the $3$D orientation is unknown but the $3$D position is known. 
\end{corollary}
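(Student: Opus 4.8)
The plan is to obtain the claim directly from part~(ii) of Theorem~\ref{theorem:FIM_6D_3D_velocity_3D_velocity_3D_orientation}. Recall that the EFIM for the $3$D velocity in the $6$D problem with known $3$D position is, by~(\ref{equ:FIM_6D_3D_velocity_3D_orientation_3D_velocity}),
$$
\mathbf{J}_{ \bm{\bm{y}}; \bm{v}_{U}}^{\mathrm{ee}} = \mathbf{J}_{ \bm{\bm{y}}; \bm{v}_{U}}^{\mathrm{e}} -  [\mathbf{J}_{ \bm{\bm{y}}; [ \bm{\Phi}_{U},\bm{v}_{U}]}^{\mathrm{e}}]^{\mathrm{T}} [\mathbf{J}_{ \bm{\bm{y}}; \bm{\Phi}_{U}}^{\mathrm{e}}]^{-1} \mathbf{J}_{ \bm{\bm{y}}; [  \bm{\Phi}_{U},\bm{v}_{U}]}^{\mathrm{e}}.
$$
The first step is to note that the subtracted term is positive semidefinite whenever $[\mathbf{J}_{ \bm{\bm{y}}; \bm{\Phi}_{U}}^{\mathrm{e}}]^{-1}$ exists, since it is a congruence transform of a positive definite matrix; hence $\mathbf{J}_{ \bm{\bm{y}}; \bm{v}_{U}}^{\mathrm{ee}} \preceq \mathbf{J}_{ \bm{\bm{y}}; \bm{v}_{U}}^{\mathrm{e}}$ on the regime in which the $3$D orientation is itself estimable. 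Consequently $\mathbf{J}_{ \bm{\bm{y}}; \bm{v}_{U}}^{\mathrm{ee}} \succ 0$ can hold only if $\mathbf{J}_{ \bm{\bm{y}}; \bm{v}_{U}}^{\mathrm{e}} \succ 0$.

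Next I would translate this matrix statement into the \emph{infrastructure} language of the corollary. By definition, the minimal infrastructure (the smallest admissible triple $N_B$, $N_K$, $N_U$) to estimate the $3$D velocity when the $3$D position and $3$D orientation are known is exactly the configuration making $\mathbf{J}_{ \bm{\bm{y}}; \bm{v}_{U}}^{\mathrm{e}}$ of~(\ref{equ_theorem:FIM_3D_velocity}) --- the difference of Lemma~\ref{lemma:FIM_3D_velocity} and Lemma~\ref{lemma:information_loss_FIM_3D_velocity_3D_velocity} --- positive definite. Since estimating the $3$D velocity with unknown $3$D orientation requires $\mathbf{J}_{ \bm{\bm{y}}; \bm{v}_{U}}^{\mathrm{ee}} \succ 0$, and by the previous step this fails whenever $\mathbf{J}_{ \bm{\bm{y}}; \bm{v}_{U}}^{\mathrm{e}}$ is singular, any configuration that supports the orientation-unknown problem must already support the orientation-known problem. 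That is precisely the asserted necessary condition.

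I do not expect any genuine obstacle: the statement is a direct specialization of Theorem~\ref{theorem:FIM_6D_3D_velocity_3D_velocity_3D_orientation}(ii), and the only point needing a line of justification is the positive-semidefiniteness of the Schur-complement correction term, which holds on the domain where $\mathbf{J}_{ \bm{\bm{y}}; \bm{\Phi}_{U}}^{\mathrm{e}}$ is invertible (i.e.\ where the orientation-unknown problem is even well posed). With that observation in hand the corollary is immediate.
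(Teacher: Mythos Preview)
Your proposal is correct and follows essentially the same approach as the paper: the corollary is an immediate consequence of part~(ii) of Theorem~\ref{theorem:FIM_6D_3D_velocity_3D_velocity_3D_orientation}, obtained by reading the positive-definiteness requirement on $\mathbf{J}_{ \bm{\bm{y}}; \bm{v}_{U}}^{\mathrm{e}}$ as a minimal-infrastructure condition. Your extra line justifying that the Schur-complement correction is positive semidefinite simply makes explicit what the paper leaves implicit in its theorem proof.
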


\begin{corollary}
\label{corollary:FIM_6D_3D_velocity_3D_velocity_3D_orientation_2}

The minimal infrastructure to estimate the $3$D orientation when the $3$D position and $3$D velocity are known is a necessary condition to estimate the $3$D velocity when the $3$D orientation is unknown but the $3$D position is known. 
\end{corollary}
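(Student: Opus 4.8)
The plan is to obtain Corollary~\ref{corollary:FIM_6D_3D_velocity_3D_velocity_3D_orientation_2} as an immediate specialization of Theorem~\ref{theorem:FIM_6D_3D_velocity_3D_velocity_3D_orientation}, in exactly the way that Corollary~\ref{corollary:FIM_6D_3D_velocity_3D_velocity_3D_orientation_1} isolates condition~(ii) of that theorem while the present corollary isolates its condition~(i). The single conceptual point to nail down is the dictionary entry translating the informal phrase ``minimal infrastructure to estimate the $3$D orientation when the $3$D position and $3$D velocity are known'' into the algebraic statement that the matrix $\mathbf{J}_{ \bm{\bm{y}}; \bm{\Phi}_{U}}^{\mathrm{e}}$ of (\ref{equ_theorem:FIM_3D_orientation}) --- assembled from Lemma~\ref{lemma:FIM_3D_orientation_3D_orientation} and Lemma~\ref{lemma:information_loss_FIM_3D_orientation_3D_orientation} --- is nonsingular.

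First I would record that $\mathbf{J}_{ \bm{\bm{y}}; \bm{\Phi}_{U}}^{\mathrm{e}}$ is positive semidefinite by the EFIM construction, so nonsingularity, positive definiteness, and the existence of a finite error bound on $\bm{\Phi}_{U}$ (when $\bm{p}_{U}$ and $\bm{v}_{U}$ are known) are all equivalent; ``minimal infrastructure'' is precisely the smallest combination of $N_B$, $N_K$, and $N_U$ for which this equivalence holds. Next I would read off (\ref{equ:FIM_6D_3D_velocity_3D_orientation_3D_velocity}): when the $3$D orientation is an unknown nuisance and the $3$D position is known, the information about the $3$D velocity is the Schur complement
$$\mathbf{J}_{ \bm{\bm{y}}; \bm{v}_{U}}^{\mathrm{ee}} = \mathbf{J}_{ \bm{\bm{y}}; \bm{v}_{U}}^{\mathrm{e}} -  [\mathbf{J}_{ \bm{\bm{y}}; [ \bm{\Phi}_{U},\bm{v}_{U}]}^{\mathrm{e}}]^{\mathrm{T}} [\mathbf{J}_{ \bm{\bm{y}}; \bm{\Phi}_{U}}^{\mathrm{e}}]^{-1} \mathbf{J}_{ \bm{\bm{y}}; [  \bm{\Phi}_{U},\bm{v}_{U}]}^{\mathrm{e}},$$
whose very formation presupposes that $[\mathbf{J}_{ \bm{\bm{y}}; \bm{\Phi}_{U}}^{\mathrm{e}}]^{-1}$ exists; this is condition~(i) of Theorem~\ref{theorem:FIM_6D_3D_velocity_3D_velocity_3D_orientation}. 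Chaining the two observations yields the corollary: if the system cannot meet the minimal infrastructure to estimate $\bm{\Phi}_{U}$ with $\bm{p}_{U},\bm{v}_{U}$ known, then $\mathbf{J}_{ \bm{\bm{y}}; \bm{\Phi}_{U}}^{\mathrm{e}}$ is singular, $\mathbf{J}_{ \bm{\bm{y}}; \bm{v}_{U}}^{\mathrm{ee}}$ does not exist, and $\bm{v}_{U}$ is not estimable when $\bm{\Phi}_{U}$ is unknown and $\bm{p}_{U}$ is known.

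The one place requiring a little care is to argue that this failure is intrinsic and not an artifact of the particular Schur-complement ordering chosen in (\ref{equ:FIM_6D_3D_velocity_3D_orientation_3D_velocity}). I would settle it by working directly with the $2\times 2$ block matrix (\ref{equ:FIM_6D_3D_orientation_3D_velocity}): the error covariance of $\bm{v}_{U}$ after marginalizing over $\bm{\Phi}_{U}$ is the lower-right block of its inverse, and a positive-semidefinite block matrix whose leading $\bm{\Phi}_{U}$-block is rank deficient has a null vector supported on the $\bm{\Phi}_{U}$-coordinates, hence is itself singular and has no inverse, so no finite bound on $\bm{v}_{U}$ is available. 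I expect this invertibility bookkeeping --- together with the verbal identification of ``minimal infrastructure'' with nonsingularity of $\mathbf{J}_{ \bm{\bm{y}}; \bm{\Phi}_{U}}^{\mathrm{e}}$ --- to be essentially the only content; everything else is a direct quotation of Theorem~\ref{theorem:FIM_6D_3D_velocity_3D_velocity_3D_orientation}.
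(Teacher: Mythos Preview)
Your proposal is correct and follows essentially the same route as the paper: the corollary is an immediate consequence of Theorem~\ref{theorem:FIM_6D_3D_velocity_3D_velocity_3D_orientation}, whose proof already identifies condition~(i) with the invertibility of $\mathbf{J}_{ \bm{\bm{y}}; \bm{\Phi}_{U}}^{\mathrm{e}}$ in the Schur complement (\ref{equ:FIM_6D_3D_velocity_3D_orientation_3D_velocity}). Your final paragraph on the intrinsic singularity of the block matrix (\ref{equ:FIM_6D_3D_orientation_3D_velocity}) is more than the paper provides and is a welcome tightening, but the core argument is the same.
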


The following theorem handles the case of joint estimation of the $3$D velocity and $3$D orientation.
\begin{theorem}
\label{theorem:FIM_6D_3D_joint_3D_velocity_3D_orientation}
The $3$D velocity and $3$D orientation  can be jointly estimated if the following conditions are met: i) there is enough information to estimate the $3$D velocity when the $3$D position and $3$D orientation are known, ii) there is enough information to estimate the $3$D orientation when the $3$D position and $3$D velocity are known, and iii) either (\ref{equ:FIM_6D_3D_orientation_3D_orientation_3D_velocity}) or  (\ref{equ:FIM_6D_3D_velocity_3D_orientation_3D_velocity}) is invertible.
\end{theorem}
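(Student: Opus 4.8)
The plan is to read the estimability claim as positive definiteness of the $6$D EFIM in (\ref{equ:FIM_6D_3D_orientation_3D_velocity}) and to establish it through the Schur-complement characterization of positive definiteness of a symmetric block matrix, in direct analogy with the proof of Theorem~\ref{theorem:FIM_6D_3D_joint_3D_position_3D_orientation}. Write $\bm{M}$ for the block matrix in (\ref{equ:FIM_6D_3D_orientation_3D_velocity}), with diagonal blocks $\bm{A}=\mathbf{J}_{\bm{y};\bm{\Phi}_{U}}^{\mathrm{e}}$ and $\bm{C}=\mathbf{J}_{\bm{y};\bm{v}_{U}}^{\mathrm{e}}$ and off-diagonal block $\bm{B}=\mathbf{J}_{\bm{y};[\bm{\Phi}_{U},\bm{v}_{U}]}^{\mathrm{e}}$. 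Since $\bm{M}$ is an EFIM extracted from the full FIM, it is symmetric and $\bm{M}\succeq 0$; the pair $[\bm{\Phi}_{U},\bm{v}_{U}]$ is jointly estimable precisely when $\bm{M}$ is nonsingular, which for a positive semidefinite matrix is the same as $\bm{M}\succ 0$.

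First I would translate the hypotheses into matrix conditions. Condition (i), ``enough information to estimate $\bm{v}_{U}$ when $\bm{p}_{U}$ and $\bm{\Phi}_{U}$ are known,'' is exactly $\bm{C}=\mathbf{J}_{\bm{y};\bm{v}_{U}}^{\mathrm{e}}\succ 0$ (the EFIM in (\ref{equ_theorem:FIM_3D_velocity})), and condition (ii) is exactly $\bm{A}=\mathbf{J}_{\bm{y};\bm{\Phi}_{U}}^{\mathrm{e}}\succ 0$ (the EFIM in (\ref{equ_theorem:FIM_3D_orientation})). In particular both pivot blocks are invertible, so both Schur complements
$$
\mathbf{J}_{\bm{y};\bm{\Phi}_{U}}^{\mathrm{ee}}=\bm{A}-\bm{B}\bm{C}^{-1}\bm{B}^{\mathrm{T}},\qquad
\mathbf{J}_{\bm{y};\bm{v}_{U}}^{\mathrm{ee}}=\bm{C}-\bm{B}^{\mathrm{T}}\bm{A}^{-1}\bm{B}
$$
are well defined; these are precisely the matrices in (\ref{equ:FIM_6D_3D_orientation_3D_orientation_3D_velocity}) and (\ref{equ:FIM_6D_3D_velocity_3D_orientation_3D_velocity}).

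Next I would invoke the standard block-matrix fact: for symmetric $\bm{M}$ with $\bm{A}\succ 0$, $\bm{M}\succ 0$ iff $\bm{C}-\bm{B}^{\mathrm{T}}\bm{A}^{-1}\bm{B}\succ 0$; and with $\bm{C}\succ 0$, $\bm{M}\succ 0$ iff $\bm{A}-\bm{B}\bm{C}^{-1}\bm{B}^{\mathrm{T}}\succ 0$. Because $\bm{M}\succeq 0$ already, each Schur complement is automatically positive semidefinite, hence is positive definite exactly when it is nonsingular. Hypothesis (iii) supplies that nonsingularity for one of the two; applying the matrix inversion lemma one verifies that, when $\bm{A}$ and $\bm{C}$ are both invertible, nonsingularity of one Schur complement is equivalent to nonsingularity of the other (each being equivalent to nonsingularity of $\bm{M}$), so it is immaterial which of (\ref{equ:FIM_6D_3D_orientation_3D_orientation_3D_velocity}) or (\ref{equ:FIM_6D_3D_velocity_3D_orientation_3D_velocity}) is assumed invertible. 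Pairing the matching conditions --- say (ii) with nonsingularity of (\ref{equ:FIM_6D_3D_velocity_3D_orientation_3D_velocity}) --- yields $\bm{M}\succ 0$, i.e.\ joint estimability of $[\bm{\Phi}_{U},\bm{v}_{U}]$. I expect no genuine obstacle here; the only point requiring care is the bookkeeping that upgrades hypothesis (iii) from ``invertible'' to ``positive definite,'' which rests on the positive semidefiniteness of the EFIM together with the Woodbury-based equivalence of the two Schur-complement nonsingularity conditions, exactly the role the matrix inversion lemma plays in Theorem~\ref{theorem:FIM_6D_3D_joint_3D_position_3D_orientation}.
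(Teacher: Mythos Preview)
Your proposal is correct and follows essentially the same approach as the paper, which simply states that the result follows from applying the Schur complement and the matrix inversion lemma to the block EFIM (\ref{equ:FIM_6D_3D_orientation_3D_velocity}). Your write-up is in fact more careful than the paper's one-line proof, explicitly justifying the upgrade from invertibility to positive definiteness via the positive semidefiniteness of the EFIM.
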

\begin{proof}
    The proof follows from applying the Schur's complement and the matrix inversion lemma to (\ref{equ:FIM_6D_3D_position_3D_orientation}).
\end{proof}

Finally, we present the scenario for the $3$D position and $3$D velocity estimation when the $3$D orientation is known. We present the appropriate EFIM matrix below
\begin{equation}
\label{equ:FIM_6D_3D_position_3D_velocity}
\begin{aligned}
\left[\begin{array}{cc}
\mathbf{J}_{ \bm{\bm{y}}; \bm{p}_{U}}^{\mathrm{e}} & \mathbf{J}_{ \bm{\bm{y}};[ \bm{p}_{U}, \bm{v}_{U}]}^{\mathrm{e}} \\
(\mathbf{J}_{ \bm{\bm{y}};[ \bm{p}_{U}, \bm{v}_{U}]}^{\mathrm{e}})^{\mathrm{T}} & \mathbf{J}_{ \bm{\bm{y}}; \bm{v}_{U}}^{\mathrm{e}} \\
\end{array}\right].
\end{aligned}
\end{equation}
Now, to isolate and obtain the EFIM for $3$D position for this case of $6$D estimation with known $3$D orientation, we apply the EFIM to the matrix in (\ref{equ:FIM_6D_3D_position_3D_velocity}) and obtain
\begin{equation}
\label{equ:FIM_6D_3D_position_3D_position_3D_velociy}
\mathbf{J}_{ \bm{\bm{y}}; \bm{p}_{U}}^{\mathrm{ee}} = \mathbf{J}_{ \bm{\bm{y}}; \bm{p}_{U}}^{\mathrm{e}} - \mathbf{J}_{ \bm{\bm{y}}; [ \bm{p}_{U}, \bm{v}_{U}]}^{\mathrm{e}}  [\mathbf{J}_{ \bm{\bm{y}}; \bm{v}_{U}}^{\mathrm{e}}]^{-1}  [\mathbf{J}_{ \bm{\bm{y}}; [ \bm{p}_{U}, \bm{v}_{U}]}^{\mathrm{e}}]^{\mathrm{T}}.
\end{equation}
\begin{theorem}
\label{theorem:FIM_6D_3D_position_3D_position_3D_velocity}
The following are necessary conditions to estimate the $3$D position when the $3$D velocity is unknown but the $3$D orientation is known: i) there has to be enough information to estimate the $3$D position when both the $3$D velocity and $3$D orientation are known, and ii) there has to be enough information to estimate the $3$D velocity when both the $3$D position and $3$D orientation are known.
\end{theorem}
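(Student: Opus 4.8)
The plan is to follow the same template used for the companion $6$D results, in particular Theorem~\ref{theorem:FIM_6D_3D_position_3D_position_3D_orientation}, arguing directly from the block EFIM in~(\ref{equ:FIM_6D_3D_position_3D_velocity}). Its three blocks are exactly the matrices already characterized: $\mathbf{J}_{\bm{y};\bm{p}_{U}}^{\mathrm{e}}$ is the EFIM of the $3$D position when the $3$D orientation and $3$D velocity are both known (Theorem~\ref{theorem:FIM_3D_position}), $\mathbf{J}_{\bm{y};\bm{v}_{U}}^{\mathrm{e}}$ is the EFIM of the $3$D velocity when the $3$D position and $3$D orientation are both known, and $\mathbf{J}_{\bm{y};[\bm{p}_{U},\bm{v}_{U}]}^{\mathrm{e}}$ is the cross block, with the $3$D orientation held known throughout. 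Estimating the $3$D position while the $3$D velocity is a nuisance corresponds to the reduced matrix $\mathbf{J}_{\bm{y};\bm{p}_{U}}^{\mathrm{ee}}$ in~(\ref{equ:FIM_6D_3D_position_3D_position_3D_velociy}), obtained by the EFIM reduction of Definition~\ref{definition_EFIM}, so ``enough information to estimate the $3$D position'' is read as: $\mathbf{J}_{\bm{y};\bm{p}_{U}}^{\mathrm{ee}}$ exists and is positive definite.

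First I would show condition (ii) is necessary. The Schur-complement expression in~(\ref{equ:FIM_6D_3D_position_3D_position_3D_velociy}) contains $[\mathbf{J}_{\bm{y};\bm{v}_{U}}^{\mathrm{e}}]^{-1}$, so if $\mathbf{J}_{\bm{y};\bm{v}_{U}}^{\mathrm{e}}$ is singular the reduction is undefined; moreover the full block matrix in~(\ref{equ:FIM_6D_3D_position_3D_velocity}) then admits a nonzero null vector of the form $[\bm{0}^{\mathrm{T}},\bm{w}^{\mathrm{T}}]^{\mathrm{T}}$ with $\mathbf{J}_{\bm{y};\bm{v}_{U}}^{\mathrm{e}}\bm{w}=\bm{0}$, so it cannot be positive definite. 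Hence invertibility of $\mathbf{J}_{\bm{y};\bm{v}_{U}}^{\mathrm{e}}$ --- i.e.\ enough information to estimate the $3$D velocity when the $3$D position and $3$D orientation are known --- is necessary. Next I would show condition (i) is necessary: when $[\mathbf{J}_{\bm{y};\bm{v}_{U}}^{\mathrm{e}}]^{-1}$ exists it is positive definite, so the subtracted term in~(\ref{equ:FIM_6D_3D_position_3D_position_3D_velociy}) is positive semidefinite and $\mathbf{J}_{\bm{y};\bm{p}_{U}}^{\mathrm{ee}}\preceq\mathbf{J}_{\bm{y};\bm{p}_{U}}^{\mathrm{e}}$. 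If $\mathbf{J}_{\bm{y};\bm{p}_{U}}^{\mathrm{e}}$ were rank-deficient, choosing $\bm{x}\neq\bm{0}$ with $\bm{x}^{\mathrm{T}}\mathbf{J}_{\bm{y};\bm{p}_{U}}^{\mathrm{e}}\bm{x}=0$ forces $0\le\bm{x}^{\mathrm{T}}\mathbf{J}_{\bm{y};\bm{p}_{U}}^{\mathrm{ee}}\bm{x}\le0$, so $\mathbf{J}_{\bm{y};\bm{p}_{U}}^{\mathrm{ee}}$ is singular as well; thus positive definiteness of $\mathbf{J}_{\bm{y};\bm{p}_{U}}^{\mathrm{e}}$ is necessary. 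Combining the two yields the stated conditions.

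I expect the only real care needed to be bookkeeping rather than analysis: confirming that the three blocks assembled in~(\ref{equ:FIM_6D_3D_position_3D_velocity}) are precisely the objects from Lemmas~\ref{lemma:FIM_3D_position}--\ref{lemma:FIM_3D_velocity} and their nuisance counterparts in Lemmas~\ref{lemma:information_loss_FIM_3D_position}--\ref{lemma:information_loss_FIM_3D_velocity_3D_velocity} with the orientation treated as known, and being explicit that these two items are claimed only as \emph{necessary} conditions --- the Schur structure delivers necessity at once, whereas sufficiency would additionally require the reduced matrix $\mathbf{J}_{\bm{y};\bm{p}_{U}}^{\mathrm{ee}}$ itself to be full rank, which the companion joint-estimation theorems (e.g.\ Theorem~\ref{theorem:FIM_6D_3D_joint_3D_position_3D_orientation}) handle separately.
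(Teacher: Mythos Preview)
Your proposal is correct and follows essentially the same approach as the paper: the paper's proof is the one-liner that examining~(\ref{equ:FIM_6D_3D_position_3D_position_3D_velociy}) shows $\mathbf{J}_{\bm{y};\bm{v}_{U}}^{\mathrm{e}}$ must be invertible for $\mathbf{J}_{\bm{y};\bm{p}_{U}}^{\mathrm{ee}}$ to exist and $\mathbf{J}_{\bm{y};\bm{p}_{U}}^{\mathrm{e}}$ must be positive definite for $\mathbf{J}_{\bm{y};\bm{p}_{U}}^{\mathrm{ee}}$ to be positive definite. Your writeup simply spells out the supporting linear algebra (the Schur-complement monotonicity $\mathbf{J}_{\bm{y};\bm{p}_{U}}^{\mathrm{ee}}\preceq\mathbf{J}_{\bm{y};\bm{p}_{U}}^{\mathrm{e}}$ and the null-vector argument) that the paper leaves implicit.
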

\begin{proof}
    The proof follows by examining (\ref{equ:FIM_6D_3D_position_3D_position_3D_velociy}) and noticing that  $\mathbf{J}_{ \bm{\bm{y}}; \bm{v}_{U}}^{\mathrm{e}}$ has to be invertible for $\mathbf{J}_{ \bm{\bm{y}}; \bm{p}_{U}}^{\mathrm{ee}}$ to exist and $\mathbf{J}_{ \bm{\bm{y}}; \bm{p}_{U}}^{\mathrm{ee}}$ has to be positive definite for $\mathbf{J}_{ \bm{\bm{y}}; \bm{p}_{U}}^{\mathrm{ee}}$ to be positive definite.
\end{proof}

\begin{corollary}
\label{corollary:FIM_6D_3D_position_3D_position_3D_velocity_1}
The minimal infrastructure to estimate the $3$D velocity when the $3$D position and $3$D orientation are known is a necessary condition to estimate the $3$D position when the $3$D velocity is unknown but the $3$D orientation is known. 
\end{corollary}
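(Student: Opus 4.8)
The plan is to obtain this corollary as an immediate specialization of Theorem~\ref{theorem:FIM_6D_3D_position_3D_position_3D_velocity}, recasting its second necessary condition as a statement about physical resources --- the number of LEO satellites $N_B$, of transmission slots $N_K$, and of receive antennas $N_U$. First I would recall the Schur-complement expression~(\ref{equ:FIM_6D_3D_position_3D_position_3D_velociy}) for the effective FIM $\mathbf{J}_{ \bm{\bm{y}}; \bm{p}_{U}}^{\mathrm{ee}}$ of the $3$D position when the $3$D velocity is a nuisance and the $3$D orientation is known. The subtracted term carries the factor $[\mathbf{J}_{ \bm{\bm{y}}; \bm{v}_{U}}^{\mathrm{e}}]^{-1}$, so the EFIM $\mathbf{J}_{ \bm{\bm{y}}; \bm{v}_{U}}^{\mathrm{e}}$ of the $3$D velocity when both the $3$D position and $3$D orientation are known --- equation~(\ref{equ_theorem:FIM_3D_velocity}) --- must be nonsingular, hence (being positive semidefinite, like every EFIM appearing here) positive definite, merely for the right-hand side of~(\ref{equ:FIM_6D_3D_position_3D_position_3D_velociy}) to be defined and a fortiori for $\mathbf{J}_{ \bm{\bm{y}}; \bm{p}_{U}}^{\mathrm{ee}}$ to be positive definite; this is precisely condition (ii) of Theorem~\ref{theorem:FIM_6D_3D_position_3D_position_3D_velocity}.

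The second step is to translate the feasibility requirement that $\mathbf{J}_{ \bm{\bm{y}}; \bm{v}_{U}}^{\mathrm{e}}$ be positive definite into the notion of \emph{minimal infrastructure}. The matrix~(\ref{equ_theorem:FIM_3D_velocity}) is a weighted sum over $b,k,u$ of rank-one contributions $\bm{\Delta}_{bu,k}\bm{\Delta}_{bu,k}^{\mathrm{T}}$ and $\bm{\Delta}_{bU,k}\bm{\Delta}_{bU,k}^{\mathrm{T}}$ --- the delay-borne part scaled by $(k-1)^2\Delta_{t}^{2}$, so a single slot is degenerate, as in the remarks following Lemma~\ref{lemma:FIM_3D_velocity} --- minus the nuisance loss $\bm{G}_{{{y} }}(\bm{y}_{}| \bm{\eta} ;\bm{v}_{U},\bm{v}_{U})$, and its rank is therefore governed by $N_B$, $N_K$, and $N_U$. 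I would then record the monotonicity fact: appending an independent LEO, slot, or antenna adds a positive semidefinite term to the full location FIM $\mathbf{J}_{\bm{y}|\bm{\kappa}}$, and since both submatrix extraction and Schur complementation preserve the Loewner order (using $A \succeq B \Rightarrow A^{-1} \preceq B^{-1}$ for the Schur step), the EFIM $\mathbf{J}_{ \bm{\bm{y}}; \bm{v}_{U}}^{\mathrm{e}}$ derived from $\mathbf{J}_{\bm{y}|\bm{\kappa}}$ is non-decreasing. Consequently the set of triples $(N_B,N_K,N_U)$ for which $\mathbf{J}_{ \bm{\bm{y}}; \bm{v}_{U}}^{\mathrm{e}} \succ 0$ is upward closed, so the notion of a minimal admissible configuration is meaningful and every admissible configuration componentwise dominates one; this is what the statement calls the minimal infrastructure to estimate the $3$D velocity when the $3$D position and $3$D orientation are known.

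Finally I would assemble the chain: if $(N_B,N_K,N_U)$ lies below this minimal infrastructure, then $\mathbf{J}_{ \bm{\bm{y}}; \bm{v}_{U}}^{\mathrm{e}}$ is singular, the inverse in~(\ref{equ:FIM_6D_3D_position_3D_position_3D_velociy}) does not exist, $\mathbf{J}_{ \bm{\bm{y}}; \bm{p}_{U}}^{\mathrm{ee}}$ is undefined, and by condition (ii) of Theorem~\ref{theorem:FIM_6D_3D_position_3D_position_3D_velocity} the $3$D position cannot be estimated when the $3$D velocity is unknown and the $3$D orientation is known; hence the minimal infrastructure for the velocity subproblem is necessary for the position subproblem, which is the corollary. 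I do not expect a genuine obstacle --- the result is essentially a re-reading of Theorem~\ref{theorem:FIM_6D_3D_position_3D_position_3D_velocity} --- so the only point deserving care is the well-definedness of ``minimal infrastructure,'' which rests on the Loewner-monotonicity argument above and on noting the degenerate single-slot case; the remainder is a direct citation of part (ii) of that theorem.
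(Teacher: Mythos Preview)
Your proposal is correct and follows the same route as the paper: the corollary is an immediate restatement of condition~(ii) of Theorem~\ref{theorem:FIM_6D_3D_position_3D_position_3D_velocity}, read off from the Schur-complement expression~(\ref{equ:FIM_6D_3D_position_3D_position_3D_velociy}) where $[\mathbf{J}_{\bm{y};\bm{v}_U}^{\mathrm{e}}]^{-1}$ must exist. The paper does not supply a separate proof for this corollary at all, so your Loewner-monotonicity justification for the well-definedness of ``minimal infrastructure'' goes beyond what the paper provides, but is consistent with its intended meaning.
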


\begin{corollary}
\label{corollary:FIM_6D_3D_position_3D_position_3D_velocity_2}
The minimal infrastructure to estimate the $3$D position when the $3$D velocity and $3$D orientation are known is a necessary condition to estimate the $3$D position when the $3$D velocity is unknown but the $3$D orientation is known. 
\end{corollary}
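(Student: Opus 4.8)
The plan is to derive this corollary directly from condition~(i) of Theorem~\ref{theorem:FIM_6D_3D_position_3D_position_3D_velocity}, spelling out the intermediate reasoning through the reduced EFIM in~(\ref{equ:FIM_6D_3D_position_3D_position_3D_velociy}). First I would translate the two feasibility statements into matrix conditions. By the identity $(\mathbf{J}^{\mathrm{e}})^{-1}=[\mathbf{J}^{-1}]_{[1:n,1:n]}$ of Definition~\ref{definition_EFIM}, having ``enough information to estimate the $3$D position when the $3$D velocity and $3$D orientation are known'' is equivalent to the $3\times 3$ EFIM $\mathbf{J}_{\bm{y};\bm{p}_{U}}^{\mathrm{e}}$ in~(\ref{equ_theorem:FIM_3D_position}) being positive definite, and ``minimal infrastructure'' for that task is the smallest configuration of $N_B$, $N_K$, $N_U$ (with the operating choice of $f_c$) at which this first occurs. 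Likewise, estimating the $3$D position when the $3$D velocity is unknown but the $3$D orientation is known is equivalent to the reduced EFIM $\mathbf{J}_{\bm{y};\bm{p}_{U}}^{\mathrm{ee}}$ in~(\ref{equ:FIM_6D_3D_position_3D_position_3D_velociy}) being positive definite.

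The key step is a monotonicity observation on~(\ref{equ:FIM_6D_3D_position_3D_position_3D_velociy}). Write
\[
\mathbf{J}_{ \bm{\bm{y}}; \bm{p}_{U}}^{\mathrm{ee}} = \mathbf{J}_{ \bm{\bm{y}}; \bm{p}_{U}}^{\mathrm{e}} - \mathbf{M}, \quad \mathbf{M} \triangleq \mathbf{J}_{ \bm{\bm{y}}; [ \bm{p}_{U}, \bm{v}_{U}]}^{\mathrm{e}}\,[\mathbf{J}_{ \bm{\bm{y}}; \bm{v}_{U}}^{\mathrm{e}}]^{-1}\,[\mathbf{J}_{ \bm{\bm{y}}; [ \bm{p}_{U}, \bm{v}_{U}]}^{\mathrm{e}}]^{\mathrm{T}}.
\]
Since $\mathbf{J}_{\bm{y};\bm{v}_{U}}^{\mathrm{e}}$ is a principal block of the positive semidefinite matrix in~(\ref{equ:FIM_6D_3D_position_3D_velocity}), it is positive semidefinite, and it is positive definite precisely when $\mathbf{J}_{\bm{y};\bm{p}_{U}}^{\mathrm{ee}}$ is defined at all---the invertibility already invoked in Theorem~\ref{theorem:FIM_6D_3D_position_3D_position_3D_velocity}. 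Hence $[\mathbf{J}_{\bm{y};\bm{v}_{U}}^{\mathrm{e}}]^{-1}\succ 0$, so $\mathbf{M}\succeq 0$ and therefore $\mathbf{J}_{\bm{y};\bm{p}_{U}}^{\mathrm{ee}}\preceq \mathbf{J}_{\bm{y};\bm{p}_{U}}^{\mathrm{e}}$; moreover $\mathbf{J}_{\bm{y};\bm{p}_{U}}^{\mathrm{ee}}\succeq 0$ as a Schur complement of a positive semidefinite matrix. Taking the contrapositive gives the claim: if the available number of LEO satellites, time slots, and receive antennas is below the minimum needed to make $\mathbf{J}_{\bm{y};\bm{p}_{U}}^{\mathrm{e}}$ positive definite, there is a nonzero $\bm{z}$ with $\bm{z}^{\mathrm{T}}\mathbf{J}_{\bm{y};\bm{p}_{U}}^{\mathrm{e}}\bm{z}=0$, whence $0\le \bm{z}^{\mathrm{T}}\mathbf{J}_{\bm{y};\bm{p}_{U}}^{\mathrm{ee}}\bm{z}\le \bm{z}^{\mathrm{T}}\mathbf{J}_{\bm{y};\bm{p}_{U}}^{\mathrm{e}}\bm{z}=0$, so $\mathbf{J}_{\bm{y};\bm{p}_{U}}^{\mathrm{ee}}$ is singular and the $3$D position is not estimable in the harder setting. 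Thus the minimal infrastructure for the easier task is necessary for the harder one.

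I do not expect a genuine obstacle: the corollary is, in substance, the first necessary condition of Theorem~\ref{theorem:FIM_6D_3D_position_3D_position_3D_velocity} re-expressed in terms of ``minimal infrastructure,'' and the only point requiring justification is the positive semidefiniteness of the Schur-complement term $\mathbf{M}$, which rests on $\mathbf{J}_{\bm{y};\bm{v}_{U}}^{\mathrm{e}}\succ 0$---an assumption built into the very existence of $\mathbf{J}_{\bm{y};\bm{p}_{U}}^{\mathrm{ee}}$. The only care worth taking is to state ``minimal infrastructure'' unambiguously as the threshold configuration at which $\mathbf{J}_{\bm{y};\bm{p}_{U}}^{\mathrm{e}}$ in~(\ref{equ_theorem:FIM_3D_position}) first becomes positive definite, using its explicit dependence on $N_B$, $N_K$, $N_U$ and $f_c$.
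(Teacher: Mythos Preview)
Your proposal is correct and follows essentially the same approach as the paper: the corollary is an immediate consequence of condition~(i) of Theorem~\ref{theorem:FIM_6D_3D_position_3D_position_3D_velocity}, which in turn rests on the Schur-complement structure of~(\ref{equ:FIM_6D_3D_position_3D_position_3D_velociy}). You provide more explicit justification of the monotonicity $\mathbf{J}_{\bm{y};\bm{p}_{U}}^{\mathrm{ee}}\preceq \mathbf{J}_{\bm{y};\bm{p}_{U}}^{\mathrm{e}}$ than the paper does, but the underlying argument is the same.
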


Similarly, to isolate and obtain the EFIM for $3$D velocity for this case of $6$D estimation with known $3$D orientation, we apply the EFIM to the matrix in (\ref{equ:FIM_6D_3D_position_3D_velocity}) and obtain
\begin{equation}
\label{equ:FIM_6D_3D_velocity_3D_position_3D_velocity}
\mathbf{J}_{ \bm{\bm{y}}; \bm{v}_{U}}^{\mathrm{ee}} = \mathbf{J}_{ \bm{\bm{y}}; \bm{v}_{U}}^{\mathrm{e}} -  [\mathbf{J}_{ \bm{\bm{y}}; [ \bm{p}_{U}, \bm{v}_{U}]}^{\mathrm{e}}]^{\mathrm{T}} [\mathbf{J}_{ \bm{\bm{y}}; \bm{p}_{U}}^{\mathrm{e}}]^{-1}  \mathbf{J}_{ \bm{\bm{y}}; [ \bm{p}_{U}, \bm{v}_{U}]}^{\mathrm{e}}.
\end{equation}

\begin{theorem}
\label{theorem:FIM_6D_3D_velocity_3D_position_3D_velocity}
The following are necessary conditions to estimate the $3$D velocity when the $3$D position is unknown but the $3$D orientation is known: i) there has to be enough information to estimate the $3$D velocity when both the $3$D position and $3$D orientation are known, and ii) there has to be enough information to estimate the $3$D position when both the $3$D velocity and $3$D orientation are known.\end{theorem}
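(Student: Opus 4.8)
The plan is to mirror the argument used for Theorems~\ref{theorem:FIM_6D_3D_position_3D_position_3D_velocity} and~\ref{theorem:FIM_6D_3D_orientation_3D_velocity_3D_orientation}, now applied to the Schur complement in~(\ref{equ:FIM_6D_3D_velocity_3D_position_3D_velocity}). The matrix of interest, $\mathbf{J}_{ \bm{\bm{y}}; \bm{v}_{U}}^{\mathrm{ee}}$, is what remains after taking the $2\times 2$ EFIM block matrix~(\ref{equ:FIM_6D_3D_position_3D_velocity}) over $[\bm{p}_{U},\bm{v}_{U}]$ and eliminating the $\bm{p}_{U}$ rows and columns via Schur's complement with respect to the $\bm{p}_{U}$ block. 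So the proof reduces to reading off what structural requirements that elimination forces, exactly as in the companion theorems.

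First I would handle existence, which gives condition (ii). The Schur complement with respect to the $\bm{p}_{U}$ block is defined only when the pivot block $\mathbf{J}_{ \bm{\bm{y}}; \bm{p}_{U}}^{\mathrm{e}}$ is invertible; since an effective Fisher information matrix is symmetric positive semidefinite, invertibility here is equivalent to positive definiteness, i.e.\ to there being enough information to estimate the $3$D position when the $3$D velocity and $3$D orientation are known.

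Next I would handle positive definiteness of the reduced matrix, which gives condition (i). Write~(\ref{equ:FIM_6D_3D_velocity_3D_position_3D_velocity}) as $\mathbf{J}_{ \bm{\bm{y}}; \bm{v}_{U}}^{\mathrm{ee}} = \mathbf{J}_{ \bm{\bm{y}}; \bm{v}_{U}}^{\mathrm{e}} - \mathbf{B}$ with $\mathbf{B} = [\mathbf{J}_{ \bm{\bm{y}}; [ \bm{p}_{U}, \bm{v}_{U}]}^{\mathrm{e}}]^{\mathrm{T}} [\mathbf{J}_{ \bm{\bm{y}}; \bm{p}_{U}}^{\mathrm{e}}]^{-1} \mathbf{J}_{ \bm{\bm{y}}; [ \bm{p}_{U}, \bm{v}_{U}]}^{\mathrm{e}} \succeq \bm{0}$, the semidefiniteness following from $[\mathbf{J}_{ \bm{\bm{y}}; \bm{p}_{U}}^{\mathrm{e}}]^{-1} \succ \bm{0}$. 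Then $\mathbf{J}_{ \bm{\bm{y}}; \bm{v}_{U}}^{\mathrm{ee}} \preceq \mathbf{J}_{ \bm{\bm{y}}; \bm{v}_{U}}^{\mathrm{e}}$, so $\bm{z}^{\mathrm{T}} \mathbf{J}_{ \bm{\bm{y}}; \bm{v}_{U}}^{\mathrm{ee}} \bm{z} \le \bm{z}^{\mathrm{T}} \mathbf{J}_{ \bm{\bm{y}}; \bm{v}_{U}}^{\mathrm{e}} \bm{z}$ for all $\bm{z}$, and hence $\mathbf{J}_{ \bm{\bm{y}}; \bm{v}_{U}}^{\mathrm{ee}} \succ \bm{0}$ forces $\mathbf{J}_{ \bm{\bm{y}}; \bm{v}_{U}}^{\mathrm{e}} \succ \bm{0}$, i.e.\ there must be enough information to estimate the $3$D velocity when both the $3$D position and $3$D orientation are known. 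The closed forms of the blocks $\mathbf{J}_{ \bm{\bm{y}}; \bm{v}_{U}}^{\mathrm{e}}$ and $\mathbf{J}_{ \bm{\bm{y}}; \bm{p}_{U}}^{\mathrm{e}}$ are exactly those of Lemmas~\ref{lemma:FIM_3D_position}--\ref{lemma:FIM_3D_velocity} combined with the corresponding information-loss lemmas, so no new computation is needed.

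The one point that deserves care---and the reason the conditions are only necessary---is that $\mathbf{B}$ is generically nonzero, so positive definiteness of both diagonal blocks does not by itself yield $\mathbf{J}_{ \bm{\bm{y}}; \bm{v}_{U}}^{\mathrm{ee}} \succ \bm{0}$; one also needs the cross term $\mathbf{J}_{ \bm{\bm{y}}; [ \bm{p}_{U}, \bm{v}_{U}]}^{\mathrm{e}}$ to be small enough relative to the diagonal blocks (equivalently, the full block matrix~(\ref{equ:FIM_6D_3D_position_3D_velocity}) to be positive definite). I would flag this to delimit the statement's scope; the two accompanying corollaries then follow immediately by specializing ``enough information'' to ``minimal infrastructure'' for each of the two conditions.
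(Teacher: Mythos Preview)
Your proposal is correct and follows essentially the same approach as the paper's proof, which simply notes that $\mathbf{J}_{\bm{y};\bm{p}_{U}}^{\mathrm{e}}$ must be invertible for $\mathbf{J}_{\bm{y};\bm{v}_{U}}^{\mathrm{ee}}$ to exist and that $\mathbf{J}_{\bm{y};\bm{v}_{U}}^{\mathrm{e}}$ must be positive definite for $\mathbf{J}_{\bm{y};\bm{v}_{U}}^{\mathrm{ee}}$ to be positive definite. Your version is in fact more careful: you make explicit why the Schur correction $\mathbf{B}$ is positive semidefinite (hence the Loewner ordering $\mathbf{J}_{\bm{y};\bm{v}_{U}}^{\mathrm{ee}}\preceq\mathbf{J}_{\bm{y};\bm{v}_{U}}^{\mathrm{e}}$), and you correctly flag that the conditions are only necessary, a point the paper leaves implicit.
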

\begin{proof}
    The proof follows by examining (\ref{equ:FIM_6D_3D_velocity_3D_position_3D_velocity}) and noticing that  $\mathbf{J}_{ \bm{\bm{y}}; \bm{p}_{U}}^{\mathrm{e}}$ has to be invertible for $\mathbf{J}_{ \bm{\bm{y}}; \bm{v}_{U}}^{\mathrm{ee}}$ to exist and $\mathbf{J}_{ \bm{\bm{y}}; \bm{v}_{U}}^{\mathrm{ee}}$ has to be positive definite for $\mathbf{J}_{ \bm{\bm{y}}; \bm{v}_{U}}^{\mathrm{ee}}$ to be positive definite.
\end{proof}

\begin{corollary}
\label{corollary:FIM_6D_3D_velocity_3D_position_3D_velocity_1}
The minimal infrastructure to estimate the $3$D position when the $3$D velocity and $3$D orientation are known is a necessary condition to estimate the $3$D velocity when the $3$D position is unknown but the $3$D orientation is known. 
\end{corollary}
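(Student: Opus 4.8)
The plan is to obtain this corollary as an immediate specialization of part~(ii) of Theorem~\ref{theorem:FIM_6D_3D_velocity_3D_position_3D_velocity}. First I would recall that estimating the $3$D velocity when the $3$D position is unknown but the $3$D orientation is known amounts to requiring that the EFIM $\mathbf{J}_{\bm{y};\bm{v}_{U}}^{\mathrm{ee}}$ in~(\ref{equ:FIM_6D_3D_velocity_3D_position_3D_velocity}) exist and be positive definite; by construction this EFIM is the Schur complement of the block $\mathbf{J}_{\bm{y};\bm{p}_{U}}^{\mathrm{e}}$ in the $6$D matrix~(\ref{equ:FIM_6D_3D_position_3D_velocity}).

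Next I would note that forming this Schur complement requires $\mathbf{J}_{\bm{y};\bm{p}_{U}}^{\mathrm{e}}$ to be invertible, and, since $\mathbf{J}_{\bm{y};\bm{p}_{U}}^{\mathrm{e}}$ is a Fisher information matrix (hence positive semidefinite), invertibility is equivalent to positive definiteness, $\mathbf{J}_{\bm{y};\bm{p}_{U}}^{\mathrm{e}}\succ 0$. But $\mathbf{J}_{\bm{y};\bm{p}_{U}}^{\mathrm{e}}$ is precisely the EFIM for the $3$D position when both the $3$D velocity and the $3$D orientation are known, given in~(\ref{equ_theorem:FIM_3D_position}) of Theorem~\ref{theorem:FIM_3D_position}. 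Hence having ``the minimal infrastructure to estimate the $3$D position when the $3$D velocity and $3$D orientation are known'' --- i.e.\ $\mathbf{J}_{\bm{y};\bm{p}_{U}}^{\mathrm{e}}\succ 0$ --- is necessary for $\mathbf{J}_{\bm{y};\bm{v}_{U}}^{\mathrm{ee}}$ to exist and be positive definite, which is exactly the asserted necessary condition. This mirrors condition~(ii) of Theorem~\ref{theorem:FIM_6D_3D_velocity_3D_position_3D_velocity} verbatim, so the corollary is really just that condition read back out.

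The only subtlety to handle carefully is the dictionary between the operational statement ``there is enough information / minimal infrastructure to estimate a parameter block'' and the algebraic statement ``the corresponding EFIM is positive definite''; once that correspondence is fixed, the corollary follows with no computation. I do not anticipate a genuine obstacle here beyond stating that correspondence explicitly and pointing out that no degenerate, merely singular-semidefinite, case is being silently discarded when we pass from ``invertible'' to ``positive definite''.
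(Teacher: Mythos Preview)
Your proposal is correct and matches the paper's approach exactly: the corollary is stated without a separate proof precisely because it is an immediate restatement of condition~(ii) of Theorem~\ref{theorem:FIM_6D_3D_velocity_3D_position_3D_velocity}, obtained from the Schur-complement structure of~(\ref{equ:FIM_6D_3D_velocity_3D_position_3D_velocity}) which requires $\mathbf{J}_{\bm{y};\bm{p}_{U}}^{\mathrm{e}}$ to be invertible (hence positive definite). Your remark about the dictionary between ``enough information/minimal infrastructure'' and ``positive definite EFIM'' is the only point that could use explicit mention, and you have handled it correctly.
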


\begin{corollary}
\label{corollary:FIM_6D_3D_velocity_3D_position_3D_velocity_2}
The minimal infrastructure to estimate the $3$D velocity when the $3$D position and $3$D orientation are known is a necessary condition to estimate the $3$D velocity when the $3$D position is unknown but the $3$D orientation is known. 
\end{corollary}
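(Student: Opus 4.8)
The plan is to read this corollary off part (i) of Theorem~\ref{theorem:FIM_6D_3D_velocity_3D_position_3D_velocity}, once the phrase ``minimal infrastructure'' is made precise. Throughout, ``there is enough information to estimate a parameter'' means that the corresponding EFIM is positive definite (equivalently, invertible), so that its inverse --- the relevant Cram\'er--Rao block --- is finite; under this convention the ``minimal infrastructure to estimate the $3$D velocity when the $3$D position and $3$D orientation are known'' is the minimal system configuration $(N_B,N_K,N_U)$, together with the attendant geometry, for which the EFIM $\mathbf{J}_{\bm{y};\bm{v}_U}^{\mathrm{e}}$ of~(\ref{equ_theorem:FIM_3D_velocity}) is positive definite. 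So it suffices to show that this positive definiteness is necessary for the positive definiteness of $\mathbf{J}_{\bm{y};\bm{v}_U}^{\mathrm{ee}}$, which by Theorem~\ref{theorem:FIM_6D_3D_velocity_3D_position_3D_velocity} governs the estimability of the $3$D velocity with the $3$D position unknown.

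I would start from the Schur-complement identity~(\ref{equ:FIM_6D_3D_velocity_3D_position_3D_velocity}), which expresses the EFIM for the $3$D velocity in the $6$D problem with known $3$D orientation as
\[
\mathbf{J}_{\bm{y};\bm{v}_U}^{\mathrm{ee}} = \mathbf{J}_{\bm{y};\bm{v}_U}^{\mathrm{e}} - \left(\mathbf{J}_{\bm{y};[\bm{p}_U,\bm{v}_U]}^{\mathrm{e}}\right)^{\mathrm{T}}\left[\mathbf{J}_{\bm{y};\bm{p}_U}^{\mathrm{e}}\right]^{-1}\mathbf{J}_{\bm{y};[\bm{p}_U,\bm{v}_U]}^{\mathrm{e}},
\]
and observe that the subtracted matrix has the form $A^{\mathrm{T}}M^{-1}A$ with $M=\mathbf{J}_{\bm{y};\bm{p}_U}^{\mathrm{e}}\succ 0$ (its invertibility being the existence condition already noted in Theorem~\ref{theorem:FIM_6D_3D_velocity_3D_position_3D_velocity}), hence is positive semidefinite; therefore $\mathbf{J}_{\bm{y};\bm{v}_U}^{\mathrm{e}}\succeq\mathbf{J}_{\bm{y};\bm{v}_U}^{\mathrm{ee}}$. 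If a configuration permits estimation of the $3$D velocity with the $3$D position unknown, then $\mathbf{J}_{\bm{y};\bm{v}_U}^{\mathrm{ee}}\succ 0$, and the dominance yields $\mathbf{J}_{\bm{y};\bm{v}_U}^{\mathrm{e}}\succeq\mathbf{J}_{\bm{y};\bm{v}_U}^{\mathrm{ee}}\succ 0$. Thus that configuration already suffices to estimate the $3$D velocity when both the $3$D position and $3$D orientation are known; equivalently, the admissible-configuration set for the harder problem is contained in that for the easier one, which is exactly the asserted necessity. (This is just part (i) of Theorem~\ref{theorem:FIM_6D_3D_velocity_3D_position_3D_velocity} re-expressed in ``infrastructure'' terms, so the proof can alternatively be quoted verbatim from there.)

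The one step needing care --- and essentially the only place where anything beyond a one-line citation is involved --- is the translation between the positive-definiteness statements and the word ``infrastructure.'' I would spell out that $N_B$, $N_K$, $N_U$ and the geometric quantities ($\Delta_{bu,k}$, $\nabla_{\bm{p}_U}\nu_{b,k}$, $t_{obu,k}$, and so on) enter $\mathbf{J}_{\bm{y};\bm{v}_U}^{\mathrm{e}}$ and $\mathbf{J}_{\bm{y};\bm{v}_U}^{\mathrm{ee}}$ only through the explicit sums in~(\ref{equ_theorem:FIM_3D_velocity}) and~(\ref{equ:FIM_6D_3D_velocity_3D_position_3D_velocity}); then ``minimal infrastructure'' is unambiguously the minimal tuple rendering the relevant matrix full rank, and the set inclusion --- and hence the necessity claim --- follows purely from the matrix inequality, with no further computation required.
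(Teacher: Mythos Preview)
Your proposal is correct and follows essentially the same approach as the paper: the corollary is an immediate restatement, in ``infrastructure'' language, of condition~(i) in Theorem~\ref{theorem:FIM_6D_3D_velocity_3D_position_3D_velocity}, and the underlying justification is precisely the Schur-complement observation that the subtracted term in~(\ref{equ:FIM_6D_3D_velocity_3D_position_3D_velocity}) is positive semidefinite, forcing $\mathbf{J}_{\bm{y};\bm{v}_U}^{\mathrm{e}}\succeq\mathbf{J}_{\bm{y};\bm{v}_U}^{\mathrm{ee}}$. The paper simply states the corollary without proof, treating it as self-evident from the theorem; your write-up makes explicit both the matrix inequality and the translation to configuration sets, which is a welcome clarification but not a different argument.
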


The following theorem handles the case of joint estimation of the $3$D position and $3$D velocity.
\begin{theorem}
\label{theorem:FIM_6D_3D_joint_3D_position_3D_velocity}
The $3$D position and $3$D velocity  can be jointly estimated if the following conditions are met: i) there is enough information to estimate the $3$D position when the $3$D velocity and $3$D orientation are known, ii) there is enough information to estimate the $3$D velocity when the $3$D position and $3$D orientation are known, and iii) either (\ref{equ:FIM_6D_3D_position_3D_position_3D_velociy}) or  (\ref{equ:FIM_6D_3D_velocity_3D_position_3D_velocity}) is invertible.
\end{theorem}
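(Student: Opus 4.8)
The plan is to read the $6\times6$ matrix displayed in (\ref{equ:FIM_6D_3D_position_3D_velocity}) as the effective Fisher information matrix $\mathbf{M}$ for the joint estimation of $[\bm{p}_{U},\bm{v}_{U}]$ under known $3$D orientation: by the information inequality and Definition~\ref{definition_EFIM}, the pair $(\bm{p}_{U},\bm{v}_{U})$ is jointly estimable precisely when $\mathbf{M}$ is positive definite. Since $\mathbf{M}$ is the principal submatrix (obtained by deleting the orientation rows and columns) of the effective FIM $\mathbf{J}_{\bm{y};\bm{\kappa}_1}^{\mathrm{e}}$, which is itself a Schur complement of the positive semidefinite channel-parameter FIM and hence positive semidefinite, we already know $\mathbf{M}\succeq 0$; the whole task therefore collapses to promoting ``$\mathbf{M}$ nonsingular'' to ``$\mathbf{M}\succ 0$.'' Write $\mathbf{M}$ with diagonal blocks $A=\mathbf{J}_{\bm{y};\bm{p}_{U}}^{\mathrm{e}}$ and $C=\mathbf{J}_{\bm{y};\bm{v}_{U}}^{\mathrm{e}}$ and off-diagonal block $B=\mathbf{J}_{\bm{y};[\bm{p}_{U},\bm{v}_{U}]}^{\mathrm{e}}$.

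Next I would restate the hypotheses in these terms. Condition (i), that there is enough information to estimate the $3$D position when the $3$D velocity and $3$D orientation are known, is exactly $A\succ 0$ (the first case of Theorem~\ref{theorem:FIM_3D_position}); condition (ii) is exactly $C\succ 0$ (the last case of Theorem~\ref{theorem:FIM_3D_position}); and condition (iii) asserts nonsingularity of at least one of the two Schur complements $\mathbf{J}_{\bm{y};\bm{p}_{U}}^{\mathrm{ee}}=A-BC^{-1}B^{\mathrm{T}}$ from (\ref{equ:FIM_6D_3D_position_3D_position_3D_velociy}) and $\mathbf{J}_{\bm{y};\bm{v}_{U}}^{\mathrm{ee}}=C-B^{\mathrm{T}}A^{-1}B$ from (\ref{equ:FIM_6D_3D_velocity_3D_position_3D_velocity}); note that both Schur complements are well defined precisely because (i) and (ii) make $A$ and $C$ invertible.

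The core step is the Schur-complement characterization of block positive definiteness~\cite{horn2012matrix}: with $A\succ 0$ one has $\mathbf{M}\succ 0$ if and only if $\mathbf{J}_{\bm{y};\bm{v}_{U}}^{\mathrm{ee}}=C-B^{\mathrm{T}}A^{-1}B\succ 0$, and, symmetrically, with $C\succ 0$ one has $\mathbf{M}\succ 0$ if and only if $\mathbf{J}_{\bm{y};\bm{p}_{U}}^{\mathrm{ee}}=A-BC^{-1}B^{\mathrm{T}}\succ 0$. Because $\mathbf{M}\succeq 0$ and the pivot blocks are positive definite, each Schur complement is automatically positive semidefinite, so for each of them ``nonsingular'' already means ``positive definite''; moreover the determinant identity $\det\mathbf{M}=\det A\,\det(C-B^{\mathrm{T}}A^{-1}B)=\det C\,\det(A-BC^{-1}B^{\mathrm{T}})$, with $\det A\neq0$ and $\det C\neq0$, shows the two Schur complements are nonsingular together, so the ``either'' in (iii) loses nothing. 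Combining: (i), (ii) and (iii) give $A\succ 0$, $C\succ 0$ and a positive-definite Schur complement, hence $\mathbf{M}\succ 0$, hence the error-covariance bound for $[\bm{p}_{U},\bm{v}_{U}]$ is finite, which is the claimed joint estimability; the converse (that these are also necessary) follows from the same equivalences and reuses Theorems~\ref{theorem:FIM_6D_3D_position_3D_position_3D_velocity} and~\ref{theorem:FIM_6D_3D_velocity_3D_position_3D_velocity}.

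The one genuinely non-mechanical point is the a priori semidefiniteness of $\mathbf{M}$: one must verify that (\ref{equ:FIM_6D_3D_position_3D_velocity}) really is the principal submatrix of the positive semidefinite $\mathbf{J}_{\bm{y};\bm{\kappa}_1}^{\mathrm{e}}$ on the position--velocity coordinates, which is where Definition~\ref{definition_EFIM} (applied with $\bm{\kappa}_2=[\bm{\zeta}_1,\dots,\bm{\zeta}_{N_B}]$) and the block structure recorded in Sections~\ref{subsection:FIM_channel_parameters} and~\ref{subsection:information_loss_FIM_channel_parameters} are needed; without it, ``nonsingular'' would not upgrade to ``positive definite.'' Everything past that is the routine Schur-complement / matrix-inversion-lemma bookkeeping already exploited in the surrounding theorems.
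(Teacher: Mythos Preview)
Your proposal is correct and follows essentially the same approach as the paper: the paper's proof is the one-line ``apply Schur's complement and the matrix inversion lemma to (\ref{equ:FIM_6D_3D_position_3D_velocity}),'' and you have simply unpacked that line carefully, including the helpful observation that positive semidefiniteness of the ambient EFIM lets you upgrade ``nonsingular Schur complement'' to ``positive definite Schur complement.'' Your identification of (\ref{equ:FIM_6D_3D_position_3D_velocity}) as the relevant principal submatrix of $\mathbf{J}_{\bm{y};\bm{\kappa}_1}^{\mathrm{e}}$ and your use of the determinant factorization to reconcile the two halves of condition (iii) are exactly the ingredients the paper's terse proof is pointing at.
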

\begin{proof}
    The proof follows from applying the Schur's complement and the matrix inversion lemma to (\ref{equ:FIM_6D_3D_position_3D_velocity}).
\end{proof}

\clearpage
\subsection{FIM for $9$D Localization}
In this section, we analyze the FIM when all location parameters are unknown. More specifically, we present the available information when all the location parameters - $3$D position, $3$D orientation and $3$D velocity are parameters to be estimated.

\begin{theorem}
\label{theorem:FIM_9D_position}
If $\mathbf{J}_{ \bm{\bm{y}}; \bm{\Phi}_{U}}^{\mathrm{e}}$ is invertible then the loss in information about $\bm{p}_{U}$ due to the unknown $\bm{\Phi}_{U}$ and $\bm{v}_{U}$ which is specified by  $\mathbf{J}_{ \bm{\bm{y}}; \bm{p}_{U}}^{nu}$ exists if and only if $ \bm{S} = \mathbf{J}_{ \bm{\bm{y}}; \bm{v}_{U}}^{\mathrm{e}} - \mathbf{J}_{ \bm{\bm{y}}; [ \bm{v}_{U}, \bm{\Phi}_{U}]}^{\mathrm{e}}  [\mathbf{J}_{ \bm{\bm{y}}; \bm{\Phi}_{U}}^{\mathrm{e}}]^{-1}  \mathbf{J}_{ \bm{\bm{y}}; [ \bm{\Phi}_{U}, \bm{v}_{U}]}^{\mathrm{e}}$ is invertible. Subsequently, $\mathbf{J}_{ \bm{\bm{y}}; \bm{p}_{U}}^{nu}$ is given by (\ref{equ_theorem:FIM_3D_nuisance_position}), and the EFIM for the $3$D position in this $9$D localization scenario is 
\begin{equation}
\label{equ:FIM_9D_3D_position_3D_position_3D_orientation_3D_velocity}
\mathbf{J}_{ \bm{\bm{y}}; \bm{p}_{U}}^{\mathrm{eee}} = \mathbf{J}_{ \bm{\bm{y}}; \bm{p}_{U}}^{\mathrm{e}} -  \mathbf{J}_{ \bm{\bm{y}}; \bm{p}_{U}}^{nu}.
\end{equation}

\begin{figure*}
\begin{align}
\begin{split}
\label{equ_theorem:FIM_3D_nuisance_position}
\mathbf{J}_{ \bm{\bm{y}}; \bm{p}_{U}}^{nu}  &= \mathbf{J}_{ \bm{\bm{y}}; [\bm{p}_{U}, \bm{\Phi}_{U}]}^{e} [\mathbf{J}_{ \bm{\bm{y}};  \bm{\Phi}_{U}}^{e} ]^{-1} \mathbf{J}_{ \bm{\bm{y}}; [ \bm{\Phi}_{U}, \bm{p}_{U}]}^{e} + \mathbf{J}_{ \bm{\bm{y}}; [\bm{p}_{U}, \bm{\Phi}_{U}]}^{e} [\mathbf{J}_{ \bm{\bm{y}};  \bm{\Phi}_{U}}^{e} ]^{-1} \mathbf{J}_{ \bm{\bm{y}}; [\bm{\Phi}_{U}, \bm{v}_{U}]}^{e} \bm{S}^{-1}  \mathbf{J}_{ \bm{\bm{y}}; [ \bm{v}_{U}, \bm{\Phi}_{U}]}^{e} [\mathbf{J}_{ \bm{\bm{y}};  \bm{\Phi}_{U}}^{e} ]^{-1} \mathbf{J}_{ \bm{\bm{y}}; [ \bm{\Phi}_{U}, \bm{p}_{U}]}^{e} \\ 
& -  \mathbf{J}_{ \bm{\bm{y}}; [\bm{p}_{U}, \bm{v}_{U}]}^{e} \bm{S}^{-1}  \mathbf{J}_{ \bm{\bm{y}}; [ \bm{v}_{U}, \bm{\Phi}_{U}]}^{e} [\mathbf{J}_{ \bm{\bm{y}};  \bm{\Phi}_{U}}^{e} ]^{-1} \mathbf{J}_{ \bm{\bm{y}}; [ \bm{\Phi}_{U}, \bm{p}_{U}]}^{e} - \mathbf{J}_{ \bm{\bm{y}}; [\bm{p}_{U}, \bm{\Phi}_{U}]}^{e} [\mathbf{J}_{ \bm{\bm{y}};  \bm{\Phi}_{U}}^{e} ]^{-1} \mathbf{J}_{ \bm{\bm{y}}; [\bm{\Phi}_{U}, \bm{v}_{U}]}^{e} \bm{S}^{-1}  \mathbf{J}_{ \bm{\bm{y}}; [ \bm{v}_{U}, \bm{p}_{U}]}^{e} \\
& + \mathbf{J}_{ \bm{\bm{y}}; [\bm{p}_{U}, \bm{v}_{U}]}^{e} \bm{S}^{-1}  \mathbf{J}_{ \bm{\bm{y}}; [ \bm{v}_{U}, \bm{p}_{U}]}^{e}.
\end{split}
\end{align}
\end{figure*}
\end{theorem}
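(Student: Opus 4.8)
The plan is to obtain (\ref{equ_theorem:FIM_3D_nuisance_position}) and (\ref{equ:FIM_9D_3D_position_3D_position_3D_orientation_3D_velocity}) by applying the EFIM reduction of Definition~\ref{definition_EFIM} to the $9\times 9$ matrix $\mathbf{J}_{\bm{y};\bm{\kappa}_1}^{\mathrm{e}}$, whose $3\times 3$ sub-blocks indexed by $(\bm{p}_{U},\bm{\Phi}_{U},\bm{v}_{U})$ are precisely the matrices $\mathbf{J}_{\bm{y};\bm{p}_{U}}^{\mathrm{e}}$, $\mathbf{J}_{\bm{y};\bm{\Phi}_{U}}^{\mathrm{e}}$, $\mathbf{J}_{\bm{y};\bm{v}_{U}}^{\mathrm{e}}$ and their cross blocks already assembled from Lemmas~\ref{lemma:FIM_3D_position}--\ref{lemma:information_loss_FIM_3D_velocity_3D_velocity}. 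Partitioning $\bm{\kappa}_{1}=[\bm{p}_{U};\,(\bm{\Phi}_{U},\bm{v}_{U})]$ with $\bm{p}_{U}$ the parameter of interest, Definition~\ref{definition_EFIM} gives $\mathbf{J}_{\bm{y};\bm{p}_{U}}^{\mathrm{eee}}=\mathbf{J}_{\bm{y};\bm{p}_{U}}^{\mathrm{e}}-\mathbf{J}_{\bm{y};\bm{p}_{U}}^{nu}$ with
$$\mathbf{J}_{\bm{y};\bm{p}_{U}}^{nu}=\begin{bmatrix}\mathbf{J}_{\bm{y};[\bm{p}_{U},\bm{\Phi}_{U}]}^{\mathrm{e}} & \mathbf{J}_{\bm{y};[\bm{p}_{U},\bm{v}_{U}]}^{\mathrm{e}}\end{bmatrix}\,\bm{M}^{-1}\begin{bmatrix}\mathbf{J}_{\bm{y};[\bm{\Phi}_{U},\bm{p}_{U}]}^{\mathrm{e}}\\ \mathbf{J}_{\bm{y};[\bm{v}_{U},\bm{p}_{U}]}^{\mathrm{e}}\end{bmatrix},\qquad \bm{M}\triangleq\begin{bmatrix}\mathbf{J}_{\bm{y};\bm{\Phi}_{U}}^{\mathrm{e}} & \mathbf{J}_{\bm{y};[\bm{\Phi}_{U},\bm{v}_{U}]}^{\mathrm{e}}\\ \mathbf{J}_{\bm{y};[\bm{v}_{U},\bm{\Phi}_{U}]}^{\mathrm{e}} & \mathbf{J}_{\bm{y};\bm{v}_{U}}^{\mathrm{e}}\end{bmatrix},$$
so the whole statement reduces to deciding when $\bm{M}$ is invertible and then evaluating this quadratic form.

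Since $\mathbf{J}_{\bm{y};\bm{\Phi}_{U}}^{\mathrm{e}}$ is invertible by hypothesis, the block-inversion identity (Schur complement, \cite{horn2012matrix}) shows that $\bm{M}$ is invertible if and only if its Schur complement
$$\bm{S}=\mathbf{J}_{\bm{y};\bm{v}_{U}}^{\mathrm{e}}-\mathbf{J}_{\bm{y};[\bm{v}_{U},\bm{\Phi}_{U}]}^{\mathrm{e}}[\mathbf{J}_{\bm{y};\bm{\Phi}_{U}}^{\mathrm{e}}]^{-1}\mathbf{J}_{\bm{y};[\bm{\Phi}_{U},\bm{v}_{U}]}^{\mathrm{e}}$$
is invertible (indeed $\det\bm{M}=\det(\mathbf{J}_{\bm{y};\bm{\Phi}_{U}}^{\mathrm{e}})\det(\bm{S})$ in this situation); since $\mathbf{J}_{\bm{y};\bm{p}_{U}}^{nu}$ is by construction nothing other than the displayed expression involving $\bm{M}^{-1}$, its existence is equivalent to the invertibility of $\bm{S}$, which is the stated equivalence. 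The same identity yields the closed form
$$\bm{M}^{-1}=\begin{bmatrix}\bm{P}+\bm{P}\bm{C}\bm{S}^{-1}\bm{C}^{\mathrm{T}}\bm{P} & -\bm{P}\bm{C}\bm{S}^{-1}\\ -\bm{S}^{-1}\bm{C}^{\mathrm{T}}\bm{P} & \bm{S}^{-1}\end{bmatrix},\qquad \bm{P}\triangleq[\mathbf{J}_{\bm{y};\bm{\Phi}_{U}}^{\mathrm{e}}]^{-1},\quad \bm{C}\triangleq\mathbf{J}_{\bm{y};[\bm{\Phi}_{U},\bm{v}_{U}]}^{\mathrm{e}},$$
where $\mathbf{J}_{\bm{y};[\bm{v}_{U},\bm{\Phi}_{U}]}^{\mathrm{e}}=\bm{C}^{\mathrm{T}}$ and $\bm{P},\bm{S}$ are symmetric.

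Substituting $\bm{M}^{-1}$ into the formula for $\mathbf{J}_{\bm{y};\bm{p}_{U}}^{nu}$ and multiplying out the $3\times 6$, $6\times 6$ and $6\times 3$ factors produces exactly five terms: the pure-orientation term $\mathbf{J}_{\bm{y};[\bm{p}_{U},\bm{\Phi}_{U}]}^{\mathrm{e}}\bm{P}\mathbf{J}_{\bm{y};[\bm{\Phi}_{U},\bm{p}_{U}]}^{\mathrm{e}}$; the term $\mathbf{J}_{\bm{y};[\bm{p}_{U},\bm{\Phi}_{U}]}^{\mathrm{e}}\bm{P}\bm{C}\bm{S}^{-1}\bm{C}^{\mathrm{T}}\bm{P}\mathbf{J}_{\bm{y};[\bm{\Phi}_{U},\bm{p}_{U}]}^{\mathrm{e}}$; the two cross terms $-\mathbf{J}_{\bm{y};[\bm{p}_{U},\bm{v}_{U}]}^{\mathrm{e}}\bm{S}^{-1}\bm{C}^{\mathrm{T}}\bm{P}\mathbf{J}_{\bm{y};[\bm{\Phi}_{U},\bm{p}_{U}]}^{\mathrm{e}}$ and $-\mathbf{J}_{\bm{y};[\bm{p}_{U},\bm{\Phi}_{U}]}^{\mathrm{e}}\bm{P}\bm{C}\bm{S}^{-1}\mathbf{J}_{\bm{y};[\bm{v}_{U},\bm{p}_{U}]}^{\mathrm{e}}$; and the pure-velocity term $\mathbf{J}_{\bm{y};[\bm{p}_{U},\bm{v}_{U}]}^{\mathrm{e}}\bm{S}^{-1}\mathbf{J}_{\bm{y};[\bm{v}_{U},\bm{p}_{U}]}^{\mathrm{e}}$. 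Rewriting $\bm{P}$ and $\bm{C}$ back in the $\mathbf{J}$-notation, these match term-for-term the five summands of (\ref{equ_theorem:FIM_3D_nuisance_position}), and (\ref{equ:FIM_9D_3D_position_3D_position_3D_orientation_3D_velocity}) is then just the defining EFIM relation $\mathbf{J}_{\bm{y};\bm{p}_{U}}^{\mathrm{eee}}=\mathbf{J}_{\bm{y};\bm{p}_{U}}^{\mathrm{e}}-\mathbf{J}_{\bm{y};\bm{p}_{U}}^{nu}$, which is meaningful precisely when $\mathbf{J}_{\bm{y};\bm{p}_{U}}^{nu}$ exists. There is no analytic obstacle; the only care required is the bookkeeping in the five-term expansion and the decision to eliminate $\bm{\Phi}_{U}$ before $\bm{v}_{U}$, so that the residual Schur complement $\bm{S}$ sits on the $\bm{v}_{U}$ block — eliminating $\bm{v}_{U}$ first would give an algebraically equal but cosmetically different formula.
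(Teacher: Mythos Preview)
Your proposal is correct and follows essentially the same route as the paper: partition the $9\times 9$ EFIM $\mathbf{J}_{\bm{y};\bm{\kappa}_1}^{\mathrm{e}}$ with $\bm{p}_{U}$ as the parameter of interest, write $\mathbf{J}_{\bm{y};\bm{p}_{U}}^{nu}=\bm{B}\bm{M}^{-1}\bm{B}^{\mathrm{T}}$, and invoke the block-inverse (Schur complement) formula from \cite{horn2012matrix} to expand $\bm{M}^{-1}$ and obtain the five-term expression in (\ref{equ_theorem:FIM_3D_nuisance_position}). Your version is in fact more explicit than the paper's, since you spell out the invertibility equivalence $\det\bm{M}=\det(\mathbf{J}_{\bm{y};\bm{\Phi}_{U}}^{\mathrm{e}})\det(\bm{S})$ and write the block inverse of $\bm{M}$ in full before multiplying out.
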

\begin{proof}
See Appendix \ref{Appendix_subsection:FIM_9D_position}.
\end{proof}

Next, we focus on the available information for the estimation of $\bm{v}_{U}.$

\begin{theorem}
\label{theorem:FIM_9D_velocity}
If $\mathbf{J}_{ \bm{\bm{y}}; \bm{p}_{U}}^{\mathrm{e}}$ is invertible then the loss in information about $\bm{v}_{U}$ due to the unknown $\bm{p}_{U}$ and $\bm{\Phi}_{U}$ which is specified by  $\mathbf{J}_{ \bm{\bm{y}}; \bm{v}_{U}}^{nu}$ exists if and only if $ \bm{S} = \mathbf{J}_{ \bm{\bm{y}}; \bm{\Phi}_{U}}^{\mathrm{e}} - \mathbf{J}_{ \bm{\bm{y}}; [  \bm{\Phi}_{U}, \bm{p}_{U}]}^{\mathrm{e}}  [\mathbf{J}_{ \bm{\bm{y}}; \bm{p}_{U}}^{\mathrm{e}}]^{-1}  [\mathbf{J}_{ \bm{\bm{y}}; [ \bm{p}_{U}, \bm{\Phi}_{U}]}^{\mathrm{e}}]$ is invertible. Subsequently, $\mathbf{J}_{ \bm{\bm{y}}; \bm{v}_{U}}^{nu}$ is given by (\ref{equ_theorem:FIM_3D_nuisance_velocity}), and the EFIM for the $3$D velocity in this $9$D localization scenario is 
\begin{equation}
\label{equ:FIM_9D_3D_velocity_3D_position_3D_orientation_3D_velocity}
\mathbf{J}_{ \bm{\bm{y}}; \bm{v}_{U}}^{\mathrm{eee}} = \mathbf{J}_{ \bm{\bm{y}}; \bm{v}_{U}}^{\mathrm{e}} -  \mathbf{J}_{ \bm{\bm{y}}; \bm{v}_{U}}^{nu}.
\end{equation}
\begin{figure*}
\begin{align}
\begin{split}
\label{equ_theorem:FIM_3D_nuisance_velocity}
\mathbf{J}_{ \bm{\bm{y}}; \bm{v}_{U}}^{nu}  &= \mathbf{J}_{ \bm{\bm{y}}; [\bm{v}_{U},\bm{p}_{U}]}^{e} [\mathbf{J}_{ \bm{\bm{y}};  \bm{p}_{U}}^{e} ]^{-1} \mathbf{J}_{ \bm{\bm{y}}; [  \bm{p}_{U}, \bm{v}_{U}]}^{e} + \mathbf{J}_{ \bm{\bm{y}}; [ \bm{v}_{U}, \bm{p}_{U}]}^{e} [\mathbf{J}_{ \bm{\bm{y}};  \bm{p}_{U}}^{e} ]^{-1} \mathbf{J}_{ \bm{\bm{y}}; [\bm{p}_{U}, \bm{\Phi}_{U}]}^{e} \bm{S}^{-1}  \mathbf{J}_{ \bm{\bm{y}}; [ \bm{\Phi}_{U}, \bm{p}_{U}]}^{e} [\mathbf{J}_{ \bm{\bm{y}};  \bm{p}_{U}}^{e} ]^{-1} \mathbf{J}_{ \bm{\bm{y}}; [  \bm{p}_{U}, \bm{v}_{U}]}^{e} \\ 
& -  \mathbf{J}_{ \bm{\bm{y}}; [\bm{v}_{U}, \bm{\Phi}_{U}]}^{e} \bm{S}^{-1}  \mathbf{J}_{ \bm{\bm{y}}; [ \bm{\Phi}_{U}, \bm{p}_{U}]}^{e} [\mathbf{J}_{ \bm{\bm{y}};  \bm{p}_{U}}^{e} ]^{-1} \mathbf{J}_{ \bm{\bm{y}}; [ \bm{p}_{U}, \bm{v}_{U}]}^{e} - \mathbf{J}_{ \bm{\bm{y}}; [\bm{v}_{U}, \bm{p}_{U}]}^{e} [\mathbf{J}_{ \bm{\bm{y}};  \bm{p}_{U}}^{e} ]^{-1} \mathbf{J}_{ \bm{\bm{y}}; [\bm{p}_{U}, \bm{\Phi}_{U}]}^{e} \bm{S}^{-1}  \mathbf{J}_{ \bm{\bm{y}}; [ \bm{\Phi}_{U}, \bm{v}_{U}]}^{e} \\
& + \mathbf{J}_{ \bm{\bm{y}}; [ \bm{v}_{U}, \bm{\Phi}_{U}]}^{e} \bm{S}^{-1}  \mathbf{J}_{ \bm{\bm{y}}; [ \bm{\Phi}_{U}, \bm{v}_{U}]}^{e}.
\end{split}
\end{align}
\end{figure*}
\end{theorem}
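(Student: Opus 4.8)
The plan is to obtain $\mathbf{J}_{ \bm{\bm{y}}; \bm{v}_{U}}^{\mathrm{eee}}$ by applying the EFIM operation of Definition~\ref{definition_EFIM} to the $9\times 9$ matrix $\mathbf{J}_{ \bm{\bm{y}}; \bm{\kappa}_1}^{\mathrm{e}}$, this time taking $\bm{v}_{U}$ as the parameter of interest and $[\bm{p}_{U},\bm{\Phi}_{U}]$ as the nuisance sub-vector. After reordering so that the $\bm{v}_{U}$ block comes first, $\mathbf{J}_{ \bm{\bm{y}}; \bm{\kappa}_1}^{\mathrm{e}}$ reads
\[
\begin{bmatrix}
\mathbf{J}_{ \bm{\bm{y}}; \bm{v}_{U}}^{\mathrm{e}} & \bm{c}_{1} & \bm{c}_{2} \\
\bm{c}_{1}^{\mathrm{T}} & \mathbf{J}_{ \bm{\bm{y}}; \bm{p}_{U}}^{\mathrm{e}} & \bm{B} \\
\bm{c}_{2}^{\mathrm{T}} & \bm{B}^{\mathrm{T}} & \mathbf{J}_{ \bm{\bm{y}}; \bm{\Phi}_{U}}^{\mathrm{e}}
\end{bmatrix},
\]
with $\bm{c}_{1} = \mathbf{J}_{ \bm{\bm{y}}; [\bm{v}_{U},\bm{p}_{U}]}^{\mathrm{e}}$, $\bm{c}_{2} = \mathbf{J}_{ \bm{\bm{y}}; [\bm{v}_{U},\bm{\Phi}_{U}]}^{\mathrm{e}}$, $\bm{B} = \mathbf{J}_{ \bm{\bm{y}}; [\bm{p}_{U},\bm{\Phi}_{U}]}^{\mathrm{e}}$, whose blocks are supplied by Theorem~\ref{theorem:FIM_3D_position} and the companion cross-block Lemmas of Section~\ref{subsection:FIM_channel_parameters}. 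By Definition~\ref{definition_EFIM}, $\mathbf{J}_{ \bm{\bm{y}}; \bm{v}_{U}}^{\mathrm{eee}} = \mathbf{J}_{ \bm{\bm{y}}; \bm{v}_{U}}^{\mathrm{e}} - \mathbf{J}_{ \bm{\bm{y}}; \bm{v}_{U}}^{nu}$ with $\mathbf{J}_{ \bm{\bm{y}}; \bm{v}_{U}}^{nu} = [\bm{c}_{1}\ \ \bm{c}_{2}]\,\bm{M}^{-1}\,[\bm{c}_{1}\ \ \bm{c}_{2}]^{\mathrm{T}}$, where $\bm{M}$ is the $6\times 6$ nuisance block with diagonal blocks $\mathbf{J}_{ \bm{\bm{y}}; \bm{p}_{U}}^{\mathrm{e}}$, $\mathbf{J}_{ \bm{\bm{y}}; \bm{\Phi}_{U}}^{\mathrm{e}}$ and off-diagonal block $\bm{B}$.

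The existence claim therefore reduces to characterizing when $\bm{M}^{-1}$ exists. Since $\mathbf{J}_{ \bm{\bm{y}}; \bm{p}_{U}}^{\mathrm{e}}$ is invertible by hypothesis, the block-matrix inversion identity \cite{horn2012matrix} gives that $\bm{M}$ is invertible if and only if its Schur complement with respect to the $\bm{p}_{U}$-block, $\bm{S} = \mathbf{J}_{ \bm{\bm{y}}; \bm{\Phi}_{U}}^{\mathrm{e}} - \bm{B}^{\mathrm{T}}[\mathbf{J}_{ \bm{\bm{y}}; \bm{p}_{U}}^{\mathrm{e}}]^{-1}\bm{B}$, is invertible; using $\bm{B}^{\mathrm{T}} = \mathbf{J}_{ \bm{\bm{y}}; [\bm{\Phi}_{U},\bm{p}_{U}]}^{\mathrm{e}}$ this is precisely the $\bm{S}$ in the statement. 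This yields both directions of the "if and only if," and when $\bm{S}$ is singular the product $\bm{c}\,\bm{M}^{-1}\bm{c}^{\mathrm{T}}$ — hence $\mathbf{J}_{ \bm{\bm{y}}; \bm{v}_{U}}^{nu}$ — is undefined.

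For the closed form, substitute
\[
\bm{M}^{-1} =
\begin{bmatrix}
\bm{A}^{-1} + \bm{A}^{-1}\bm{B}\bm{S}^{-1}\bm{B}^{\mathrm{T}}\bm{A}^{-1} & -\bm{A}^{-1}\bm{B}\bm{S}^{-1} \\
-\bm{S}^{-1}\bm{B}^{\mathrm{T}}\bm{A}^{-1} & \bm{S}^{-1}
\end{bmatrix},
\qquad \bm{A} \triangleq \mathbf{J}_{ \bm{\bm{y}}; \bm{p}_{U}}^{\mathrm{e}},
\]
into $[\bm{c}_{1}\ \ \bm{c}_{2}]\,\bm{M}^{-1}\,[\bm{c}_{1}\ \ \bm{c}_{2}]^{\mathrm{T}}$ and expand: the five resulting matrix products $\bm{c}_{1}\bm{A}^{-1}\bm{c}_{1}^{\mathrm{T}}$, $\bm{c}_{1}\bm{A}^{-1}\bm{B}\bm{S}^{-1}\bm{B}^{\mathrm{T}}\bm{A}^{-1}\bm{c}_{1}^{\mathrm{T}}$, $-\bm{c}_{1}\bm{A}^{-1}\bm{B}\bm{S}^{-1}\bm{c}_{2}^{\mathrm{T}}$, $-\bm{c}_{2}\bm{S}^{-1}\bm{B}^{\mathrm{T}}\bm{A}^{-1}\bm{c}_{1}^{\mathrm{T}}$, $\bm{c}_{2}\bm{S}^{-1}\bm{c}_{2}^{\mathrm{T}}$, after restoring $\bm{c}_{1},\bm{c}_{2},\bm{B}$ and the transpose relations among the cross blocks, are exactly the five terms of (\ref{equ_theorem:FIM_3D_nuisance_velocity}). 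Then $\mathbf{J}_{ \bm{\bm{y}}; \bm{v}_{U}}^{\mathrm{eee}} = \mathbf{J}_{ \bm{\bm{y}}; \bm{v}_{U}}^{\mathrm{e}} - \mathbf{J}_{ \bm{\bm{y}}; \bm{v}_{U}}^{nu}$ is immediate, and positive definiteness of $\mathbf{J}_{ \bm{\bm{y}}; \bm{\kappa}_1}^{\mathrm{e}}$ transfers to this Schur complement by the usual argument.

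The computation is routine block linear algebra; the one point requiring care is the order of elimination within the two-block nuisance vector. The identity above eliminates $\bm{p}_{U}$ first — which is why the hypothesis is placed on $\mathbf{J}_{ \bm{\bm{y}}; \bm{p}_{U}}^{\mathrm{e}}$ — and then $\bm{\Phi}_{U}$ through $\bm{S}$, so one must check that this grouping reproduces (\ref{equ_theorem:FIM_3D_nuisance_velocity}) rather than the algebraically equal but differently bracketed expression obtained by eliminating $\bm{\Phi}_{U}$ first. Otherwise the argument is the mirror image of the proof of Theorem~\ref{theorem:FIM_9D_position}, with the roles of $\bm{p}_{U}$ and $\bm{v}_{U}$ interchanged.
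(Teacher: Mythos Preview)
Your proof is correct and follows essentially the same approach as the paper: reorder the $9\times 9$ EFIM so that $\bm{v}_{U}$ sits in the leading block, write the loss as $[\bm{c}_{1}\ \bm{c}_{2}]\,\bm{M}^{-1}[\bm{c}_{1}\ \bm{c}_{2}]^{\mathrm{T}}$, and invoke the $2\times 2$ block-inversion formula \cite{horn2012matrix} on the nuisance block $\bm{M}$ to obtain both the invertibility criterion on $\bm{S}$ and the five-term expansion~(\ref{equ_theorem:FIM_3D_nuisance_velocity}). The paper's own argument is terser---it simply points back to the proof of Theorem~\ref{theorem:FIM_9D_position} after displaying the reordered matrix---but your added detail on the Schur-complement ``if and only if'' and the elimination order is accurate and does not deviate in substance.
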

\begin{proof}
See Appendix \ref{Appendix_subsection:FIM_9D_velocity}.
\end{proof}

Finally, we focus on the available information for the estimation of $\bm{\Phi}_{U}.$

\begin{theorem}
\label{theorem:FIM_9D_orientation}
If $\mathbf{J}_{ \bm{\bm{y}}; \bm{p}_{U}}^{\mathrm{e}}$ is invertible then the loss in information about $\bm{\Phi}_{U}$ due to the unknown $\bm{p}_{U}$ and $\bm{v}_{U}$ which is specified by  $\mathbf{J}_{ \bm{\bm{y}}; \bm{\Phi}_{U}}^{nu}$ exists if and only if $ \bm{S} = \mathbf{J}_{ \bm{\bm{y}}; \bm{v}_{U}}^{\mathrm{e}} - \mathbf{J}_{ \bm{\bm{y}}; [ \bm{v}_{U}, \bm{p}_{U}]}^{\mathrm{e}}  [\mathbf{J}_{ \bm{\bm{y}}; \bm{p}_{U}}^{\mathrm{e}}]^{-1}  [\mathbf{J}_{ \bm{\bm{y}}; [ \bm{p}_{U}, \bm{v}_{U}]}^{\mathrm{e}}]$ is invertible. Subsequently, $\mathbf{J}_{ \bm{\bm{y}}; \bm{\Phi}_{U}}^{nu}$ is given by (\ref{equ_theorem:FIM_3D_nuisance_orientation}), and the EFIM for the $3$D orientation in this $9$D localization scenario is 
\begin{equation}
\label{equ:FIM_9D_3D_orientation_3D_position_3D_orientation_3D_velocity}
\mathbf{J}_{ \bm{\bm{y}}; \bm{\Phi}_{U}}^{\mathrm{eee}} = \mathbf{J}_{ \bm{\bm{y}}; \bm{\Phi}_{U}}^{\mathrm{e}} -  \mathbf{J}_{ \bm{\bm{y}}; \bm{\Phi}_{U}}^{nu}.
\end{equation}

\begin{figure*}
\begin{align}
\begin{split}
\label{equ_theorem:FIM_3D_nuisance_orientation}
\mathbf{J}_{ \bm{\bm{y}}; \bm{\Phi}_{U}}^{nu}  &= \mathbf{J}_{ \bm{\bm{y}}; [\bm{\Phi}_{U},\bm{p}_{U}]}^{e} [\mathbf{J}_{ \bm{\bm{y}};  \bm{p}_{U}}^{e} ]^{-1} \mathbf{J}_{ \bm{\bm{y}}; [  \bm{p}_{U}, \bm{\Phi}_{U}]}^{e} + \mathbf{J}_{ \bm{\bm{y}}; [ \bm{\Phi}_{U}, \bm{p}_{U}]}^{e} [\mathbf{J}_{ \bm{\bm{y}};  \bm{p}_{U}}^{e} ]^{-1} \mathbf{J}_{ \bm{\bm{y}}; [\bm{p}_{U}, \bm{v}_{U}]}^{e} \bm{S}^{-1}  \mathbf{J}_{ \bm{\bm{y}}; [ \bm{v}_{U}, \bm{p}_{U}]}^{e} [\mathbf{J}_{ \bm{\bm{y}};  \bm{p}_{U}}^{e} ]^{-1} \mathbf{J}_{ \bm{\bm{y}}; [  \bm{p}_{U}, \bm{\Phi}_{U}]}^{e} \\ 
& -  \mathbf{J}_{ \bm{\bm{y}}; [\bm{\Phi}_{U}, \bm{v}_{U}]}^{e} \bm{S}^{-1}  \mathbf{J}_{ \bm{\bm{y}}; [ \bm{v}_{U}, \bm{p}_{U}]}^{e} [\mathbf{J}_{ \bm{\bm{y}};  \bm{p}_{U}}^{e} ]^{-1} \mathbf{J}_{ \bm{\bm{y}}; [ \bm{p}_{U}, \bm{\Phi}_{U}]}^{e} - \mathbf{J}_{ \bm{\bm{y}}; [\bm{\Phi}_{U}, \bm{p}_{U}]}^{e} [\mathbf{J}_{ \bm{\bm{y}};  \bm{p}_{U}}^{e} ]^{-1} \mathbf{J}_{ \bm{\bm{y}}; [\bm{p}_{U}, \bm{v}_{U}]}^{e} \bm{S}^{-1}  \mathbf{J}_{ \bm{\bm{y}}; [ \bm{v}_{U}, \bm{\Phi}_{U}]}^{e} \\
& + \mathbf{J}_{ \bm{\bm{y}}; [\bm{\Phi}_{U}, \bm{v}_{U}]}^{e} \bm{S}^{-1}  \mathbf{J}_{ \bm{\bm{y}}; [ \bm{v}_{U}, \bm{\Phi}_{U}]}^{e}.
\end{split}
\end{align}
\end{figure*}
\end{theorem}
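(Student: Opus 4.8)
The plan is to obtain $\mathbf{J}_{ \bm{\bm{y}}; \bm{\Phi}_{U}}^{\mathrm{eee}}$ by invoking the EFIM (Schur-complement) construction of Definition \ref{definition_EFIM} a second time: inside the already nuisance-reduced $9\times 9$ matrix $\mathbf{J}_{ \bm{\bm{y}}; \bm{\kappa}_1}^{\mathrm{e}}$, whose $3\times 3$ sub-blocks are exactly the quantities in Lemmas \ref{lemma:FIM_3D_position}--\ref{lemma:FIM_3D_velocity} minus Lemmas \ref{lemma:information_loss_FIM_3D_position}--\ref{lemma:information_loss_FIM_3D_velocity_3D_velocity}, take $\bm{\Phi}_{U}$ as the parameter of interest and $[\bm{p}_{U},\bm{v}_{U}]$ as the nuisance block. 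This yields
\begin{equation*}
\mathbf{J}_{ \bm{\bm{y}}; \bm{\Phi}_{U}}^{\mathrm{eee}} = \mathbf{J}_{ \bm{\bm{y}}; \bm{\Phi}_{U}}^{\mathrm{e}} - \begin{bmatrix} \mathbf{J}_{ \bm{\bm{y}}; [\bm{\Phi}_{U},\bm{p}_{U}]}^{\mathrm{e}} & \mathbf{J}_{ \bm{\bm{y}}; [\bm{\Phi}_{U},\bm{v}_{U}]}^{\mathrm{e}} \end{bmatrix} \mathbf{M}^{-1} \begin{bmatrix} \mathbf{J}_{ \bm{\bm{y}}; [\bm{p}_{U},\bm{\Phi}_{U}]}^{\mathrm{e}} \\ \mathbf{J}_{ \bm{\bm{y}}; [\bm{v}_{U},\bm{\Phi}_{U}]}^{\mathrm{e}} \end{bmatrix}, \quad \mathbf{M} = \begin{bmatrix} \mathbf{J}_{ \bm{\bm{y}}; \bm{p}_{U}}^{\mathrm{e}} & \mathbf{J}_{ \bm{\bm{y}}; [\bm{p}_{U},\bm{v}_{U}]}^{\mathrm{e}} \\ \mathbf{J}_{ \bm{\bm{y}}; [\bm{v}_{U},\bm{p}_{U}]}^{\mathrm{e}} & \mathbf{J}_{ \bm{\bm{y}}; \bm{v}_{U}}^{\mathrm{e}} \end{bmatrix},
\end{equation*}
so that $\mathbf{J}_{ \bm{\bm{y}}; \bm{\Phi}_{U}}^{nu}$ is precisely the subtracted term and it is well defined exactly when $\mathbf{M}$ is invertible.

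Second, I would reduce invertibility of $\mathbf{M}$ to the stated condition on $\bm{S}$. Since $\mathbf{J}_{ \bm{\bm{y}}; \bm{p}_{U}}^{\mathrm{e}}$ is invertible by hypothesis, the block-determinant identity gives $\det\mathbf{M} = \det\!\big(\mathbf{J}_{ \bm{\bm{y}}; \bm{p}_{U}}^{\mathrm{e}}\big)\,\det(\bm{S})$, where $\bm{S} = \mathbf{J}_{ \bm{\bm{y}}; \bm{v}_{U}}^{\mathrm{e}} - \mathbf{J}_{ \bm{\bm{y}}; [\bm{v}_{U},\bm{p}_{U}]}^{\mathrm{e}}[\mathbf{J}_{ \bm{\bm{y}}; \bm{p}_{U}}^{\mathrm{e}}]^{-1}\mathbf{J}_{ \bm{\bm{y}}; [\bm{p}_{U},\bm{v}_{U}]}^{\mathrm{e}}$ is the Schur complement named in the statement; hence $\mathbf{M}$ is invertible if and only if $\bm{S}$ is, which is the ``if and only if'' claim.

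Third, I would expand $\mathbf{M}^{-1}$ with the block-inverse formula pivoting on the $(1,1)$ block,
\begin{equation*}
\mathbf{M}^{-1} = \begin{bmatrix} \mathbf{A}^{-1} + \mathbf{A}^{-1}\mathbf{B}\bm{S}^{-1}\mathbf{B}^{\mathrm{T}}\mathbf{A}^{-1} & -\mathbf{A}^{-1}\mathbf{B}\bm{S}^{-1} \\ -\bm{S}^{-1}\mathbf{B}^{\mathrm{T}}\mathbf{A}^{-1} & \bm{S}^{-1} \end{bmatrix}, \quad \mathbf{A} = \mathbf{J}_{ \bm{\bm{y}}; \bm{p}_{U}}^{\mathrm{e}},\; \mathbf{B} = \mathbf{J}_{ \bm{\bm{y}}; [\bm{p}_{U},\bm{v}_{U}]}^{\mathrm{e}},
\end{equation*}
substitute into the triple product $[\mathbf{J}_{ \bm{\bm{y}}; [\bm{\Phi}_{U},\bm{p}_{U}]}^{\mathrm{e}}\ \ \mathbf{J}_{ \bm{\bm{y}}; [\bm{\Phi}_{U},\bm{v}_{U}]}^{\mathrm{e}}]\,\mathbf{M}^{-1}\,[\cdots]^{\mathrm{T}}$, and collect terms, using the symmetry relations $\mathbf{J}_{ \bm{\bm{y}}; [\bm{v}_{U},\bm{p}_{U}]}^{\mathrm{e}} = (\mathbf{J}_{ \bm{\bm{y}}; [\bm{p}_{U},\bm{v}_{U}]}^{\mathrm{e}})^{\mathrm{T}}$ and similarly for the other cross-blocks. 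This reproduces the six explicitly listed terms of (\ref{equ_theorem:FIM_3D_nuisance_orientation}): the bare $\mathbf{J}^e[\mathbf{J}_{\bm{p}_U}^e]^{-1}\mathbf{J}^e$ term, the quadratic-in-$\bm{S}^{-1}$ term, the two mixed cross terms, and the pure $\mathbf{J}^e\bm{S}^{-1}\mathbf{J}^e$ term. Identity (\ref{equ:FIM_9D_3D_orientation_3D_position_3D_orientation_3D_velocity}) is then immediate, and the interpretation as the relevant block of the inverse of the full $9\times 9$ matrix follows from the last line of Definition \ref{definition_EFIM}.

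The only real obstacle is bookkeeping: one must keep the nuisance sub-blocks in the fixed order ($\bm{p}_{U}$ before $\bm{v}_{U}$) throughout the expansion and track the signs so that the $\bm{p}_{U}$--$\bm{v}_{U}$ cross terms land correctly; there is no conceptual difficulty beyond repeated use of Schur complements. It is worth remarking that the argument is structurally identical to the proofs of Theorems \ref{theorem:FIM_9D_position} and \ref{theorem:FIM_9D_velocity}, obtained by permuting which $3\times 3$ block plays the role of ``parameter of interest'' versus ``pivot block,'' so the appendix can simply cite that symmetry after carrying out the expansion once.
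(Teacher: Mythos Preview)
Your proposal is correct and mirrors the paper's own argument: the paper permutes the $9\times 9$ EFIM to put $\bm{\Phi}_U$ in the leading block, writes the nuisance term as $\bm{B}\bm{C}^{-1}\bm{B}^{\mathrm{T}}$ with $\bm{C}$ the $6\times 6$ $(\bm{p}_U,\bm{v}_U)$ block, and then expands $\bm{C}^{-1}$ via the block-inverse formula exactly as you do (indeed, the paper simply says the proof is ``identical'' to that of Theorem~\ref{theorem:FIM_9D_position}). One tiny slip: the displayed expression (\ref{equ_theorem:FIM_3D_nuisance_orientation}) has five terms, not six.
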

\begin{proof}
See Appendix \ref{Appendix_subsection:FIM_9D_orientation}.
\end{proof}

%\newpage

%\clearpage
%\clearpage

\section{Numerical Results}
This section presents simulation results that describe the available information in signals received from LEO satellites during multiple transmission time slots on receivers with multiple antennas. We start by showing the minimum infrastructure needed to estimate different location parameters. More specifically, we present the minimum number of LEO satellites, time slots, and receive antennas that contribute to i) $3$D position estimation when the $3$D orientation and $3$D velocity are known, ii) $3$D orientation estimation when the $3$D position and $3$D velocity are known,  iii) $3$D velocity estimation when the $3$D position and $3$D orientation are known, iv) $3$D position and $3$D orientation estimation when the $3$D velocity is known, v) $3$D position and $3$D velocity estimation when the $3$D orientation is known,  vi) $3$D orientation and $3$D velocity estimation when the $3$D position is known, and vii)  $3$D position, $3$D velocity, and $3$D orientation estimation. We also present the CRLB for $3$D position, $3$D orientation, and $3$D velocity estimation in the $9$D localization case. We present the CRLB for the $3$D position as a function of the receive antenna, the CRLB for the $3$D position as a function of the spacing between transmission time slots, and the CRLB for the $3$D position as a function of the operating frequency.  We present the CRLB for the $3$D velocity as a function of the receive antenna, the CRLB for the $3$D velocity as a function of the spacing between transmission time slots, and the CRLB for the $3$D velocity as a function of the operating frequency. Finally, we present the CRLB for the $3$D orientation as a function of the receive antenna, the CRLB for the $3$D orientation as a function of the spacing between transmission time slots, and the CRLB for the $3$D orientation as a function of the operating frequency. 

We use the following simulation parameters. The SNR is assumed constant across the transmission time slots and receive antennas, and the following set of SNR values is considered: $\{40 \text{ dB}, 20 \text{ dB}, 0 \text{ dB}, -20 \text{ dB}\}$. The $x,y, \text{ and } z$ components of the position of the LEO satellites are randomly chosen, but LEO satellites are approximately $2000 \text{ km}$ from the receiver. The $x,y, \text{ and } z$ components of the velocity of the LEO satellites are randomly chosen and change every transmission time slot to depict acceleration, but the LEO satellites have a speed of  $8000 \text{ m/s}$. The receiver's position's $x,y, \text{ and } z$ components are randomly chosen, but the receiver is approximately $30 \text{ m}$ from the origin. The $x,y, \text{ and } z$ components of the receiver's velocity are randomly chosen and remain constant to depict constant velocity, but the receiver has a speed of  $25 \text{ m/s}$. The effective baseband bandwidth, $\alpha_{1b,k}$, is $100 \text{ MHz}$ and the BCC, $\alpha_{2b,k}$, is $0 \text{ MHz}$. 

\begin{table*}[!t]
%% increase table row spacing, adjust to taste
\renewcommand{\arraystretch}{1.3}
% if using array.sty, it might be a good idea to tweak the value of
% \extrarowheight as needed to properly center the text within the cells
\caption{Legend for Estimation Possibilities for $\bm{p}_{U}$ in the $3$D Localization Setup under Different Combinations of $N_B$, $N_K$, and $N_U$}
\label{Legend_Estimation_Possibilities_for_PU_in_the_3D_Localization}
\centering
%% Some packages, such as MDW tools, offer better commands for making tables
%% than the plain LaTeX2e tabular which is used here.
\begin{tabular}{|c|c|}
\hline
a & A single TOA from a LEO satellite to the centroid of the receiver with the unit vector from the LEO satellite to\\  &  the centroid of the receiver.\\
\hline
b & Two TOAs from two LEO satellites and a single Doppler measurement from either of the two LEO satellites.\\
\hline
c &  Two Dopplers from two LEO satellites and a single TOA measurement from either of the two LEO satellites.\\
\hline
d & Two unit vectors from two distinct LEOs obtainable due to the presence of multiple receive antennas, \\ & which capture multiple TOAs from a single LEO satellite.\\
\hline
e &   A single TOA measurement from a distinct LEO and a unit vector from the other LEO satellites.\\
\hline
f &  A single Doppler measurement from a distinct LEO and a unit vector from the other LEO satellites.\\
\hline
g &   Three TOA measurements from three distinct LEO satellites.\\
\hline
h &  Three Doppler measurements from three distinct LEO satellites.\\
\hline
i &   Two TOAs from the same LEO satellite obtained during two distinct time slots in combination \\ & with the Doppler  measurements obtained during either of the time slots with respect to the LEO satellite.\\
\hline
j &  Two Doppler measurements obtained during the two distinct time slots with respect to the LEO satellite in \\ & combination  with the  TOA obtained during either of the time slots.\\
\hline
k &  Two unit vectors obtained during the two distinct time slots from the same LEO satellite \\ & obtainable due to the presence of multiple receive antennas.\\
\hline
l &   A single TOA measurement from the LEO during either of the time slots \\ & and a unit vector from the LEO during either of the time slots.\\
\hline
m &  A single Doppler measurement from the LEO during either of the time slots \\ & and a unit vector from the  LEO during either of the time slots.\\
\hline
n &   Three TOA measurements obtained during three different time slots from a single LEO satellite.\\
\hline
o &  Three Doppler measurements obtained during three different time slots from a single LEO satellite.\\
\hline
p &   Four TOA measurements obtained during four different time slots from a single LEO satellite \\ & with one of the TOA measurements serving as a reference measurement for time differencing.\\
\hline
q &  Four Doppler measurements obtained during four different time slots from a single LEO satellite \\ &  with one of the Doppler measurements serving as a reference measurement for frequency differencing.\\
\hline
\end{tabular}
\end{table*}

\subsection{$3$D Position Estimation}
Here, we investigate the minimal number of time slots, LEO satellites, and receive antennas that produce a positive definite FIM for the $3$D position of the receiver, which is defined by (\ref{equ_theorem:FIM_3D_position}).   We start with the cases where we only have measurements taken during a single time slot from various LEO satellites considering single or multiple antennas.

\subsubsection{$N_K = 1$, $N_B = 1$, \textit{and} $N_U > 1$}
Under this condition, the information is insufficient to find the $3$D position of the receiver when there is a time offset. However, without a time offset, the information is enough to find the $3$D position of the receiver. The $3$D position of the receiver can be obtained by combining the available TOA measurements from the single LEO satellite with the multiple receive antennas.

\subsubsection{$N_K = 1$, $N_B = 2$, \textit{and} $N_U = 1$}
Under this condition, the information is insufficient to find the $3$D position of the receiver when there is a time offset, a frequency offset, or both. However, without a time offset or frequency offset, the information is enough to find the $3$D position of the receiver. These can be done with two TOAs from two LEO satellites and a single Doppler measurement from either of the two LEO satellites or with two Dopplers from two LEO satellites and a single TOA measurement from either of the two LEO satellites.

\subsubsection{$N_K = 1$, $N_B = 2$, \textit{and} $N_U > 1$}
Under this condition, there is always information to find the $3$D position of the receiver, even with the presence of a time offset, a frequency offset, or both. Without a time offset or frequency offset, the $3$D position can be found using: i) two TOA measurements from two distinct LEOs and a Doppler measurement from either of the LEO, ii)  two Doppler measurements from two distinct LEOs and a TOA measurement from either of the LEO satellites, iii) two unit vectors from two distinct LEOs obtainable due to the presence of multiple receive antennas, which capture multiple TOAs from a single LEO satellite, iv) one TOA measurement from a distinct LEO and a unit vector from the other LEO satellites, and v) one Doppler measurement from a distinct LEO and a unit vector from the other LEO. With only a time offset, iii) and v) can be used to find a $3$D position.  With only a frequency offset, iii) and iv) can be used to find the $3$D position. When both a time and frequency offset are present, iii) can be used to find the $3$D position.

\begin{figure}[htb!]
\centering
%\subfloat[]{\includegraphics[height=1.8in, width= 1.5in]{Results/ber_ring_ratio_parallel.eps}
\subfloat[]{\includegraphics[ width= 3.2in]{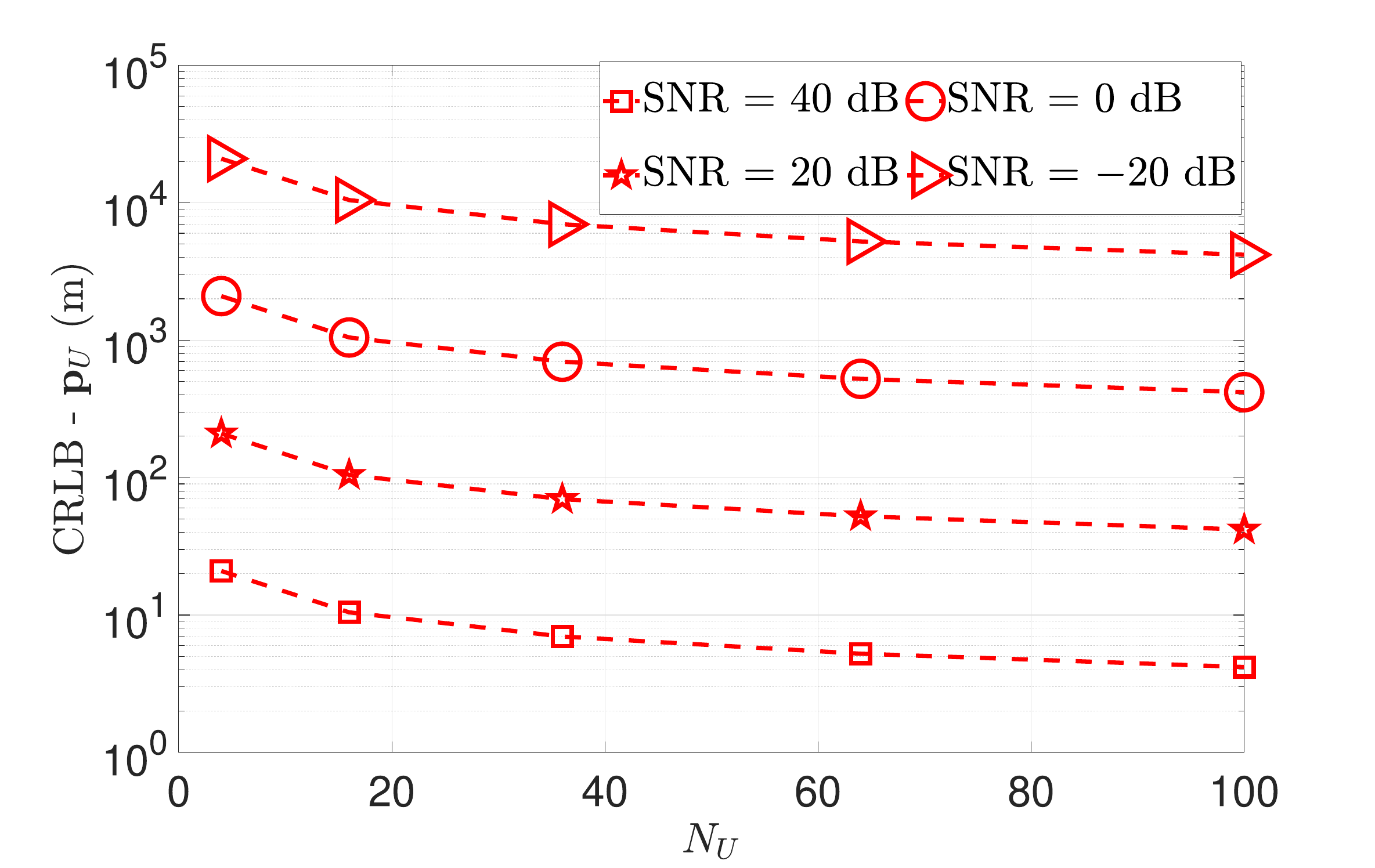}
\label{fig:Results/9D_PU_NU/_NB_3_N_K_3_N_U___delta_t_index_5_fcIndex_1_SNRIndex_1}}
\hfil
\subfloat[]{\includegraphics[ width= 3.2in]{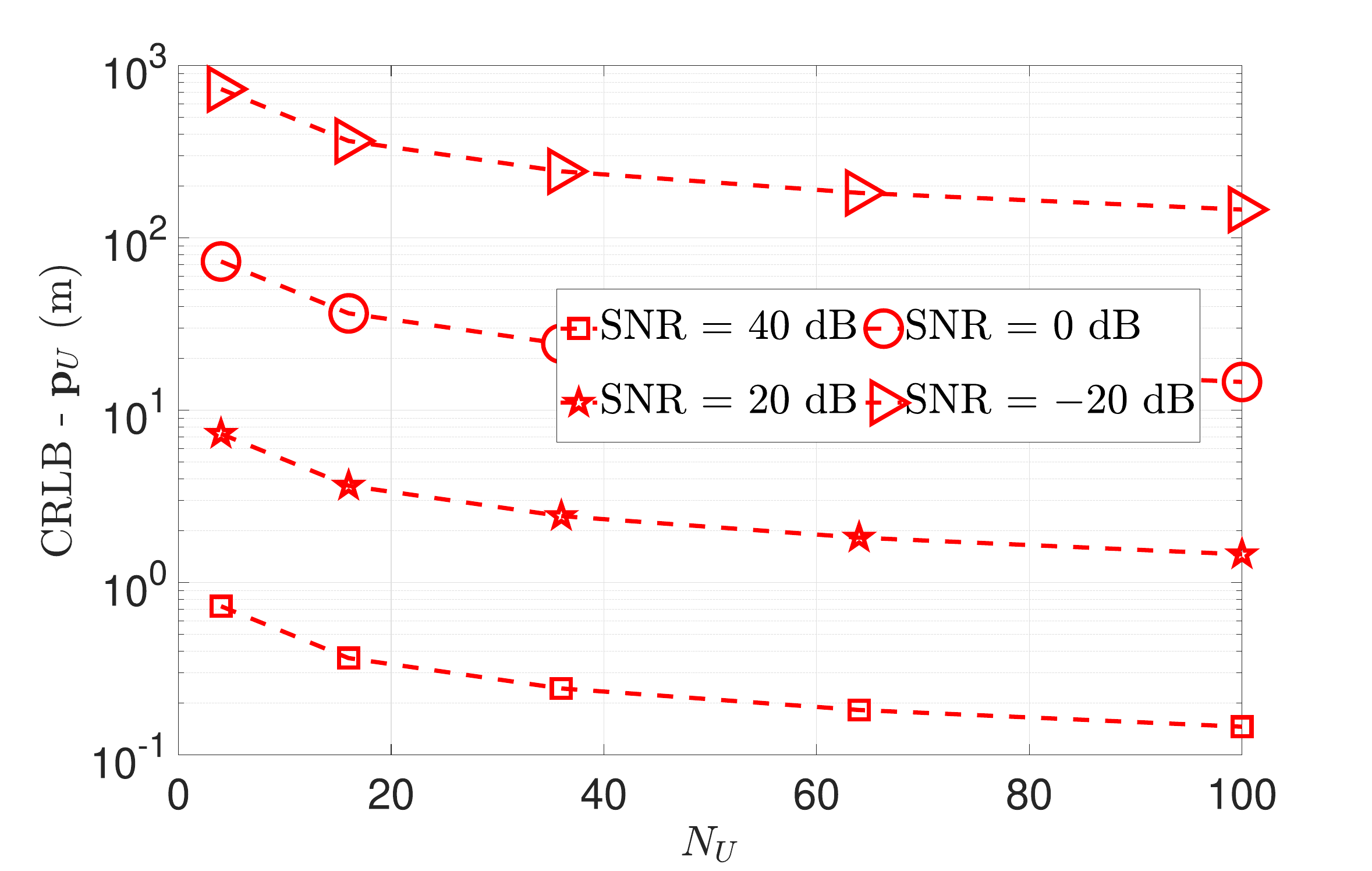}
\label{fig:Results/9D_PU_NU/_NB_3_N_K_3_N_U___delta_t_index_8_fcIndex_1_SNRIndex_1}}
\caption{CRLB for $\bm{p}_{U}$ in the $9$D localization scenario with $f_c = 1 \text{ GHz}$: (a) $\Delta_t = 10 \text{ s}$ and (b) $\Delta_t = 80 \text{ s}$.}
\label{Results:9D_PU_NB_3_N_K_3_N_U___delta_t_index_5_8_fcIndex_1_SNRIndex_1_NU}
\end{figure}

\begin{figure}[htb!]
\centering
%\subfloat[]{\includegraphics[height=1.8in, width= 1.5in]{Results/ber_ring_ratio_parallel.eps}
\subfloat[]{\includegraphics[ width= 3.2in]{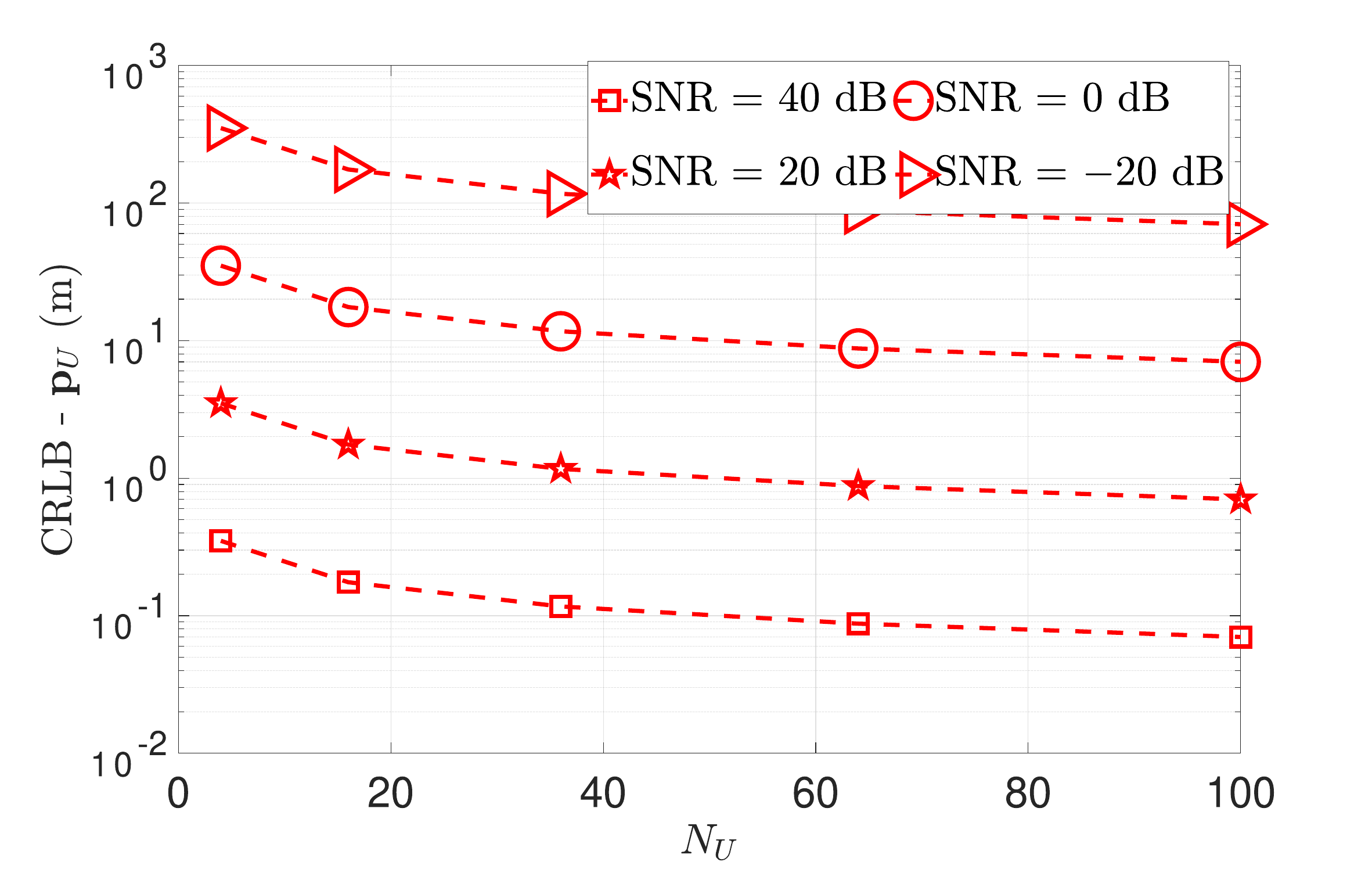}
\label{fig:Results/9D_PU_NU/_NB_3_N_K_3_N_U___delta_t_index_5_fcIndex_4_SNRIndex_1}}
\hfil
\subfloat[]{\includegraphics[ width= 3.2in]{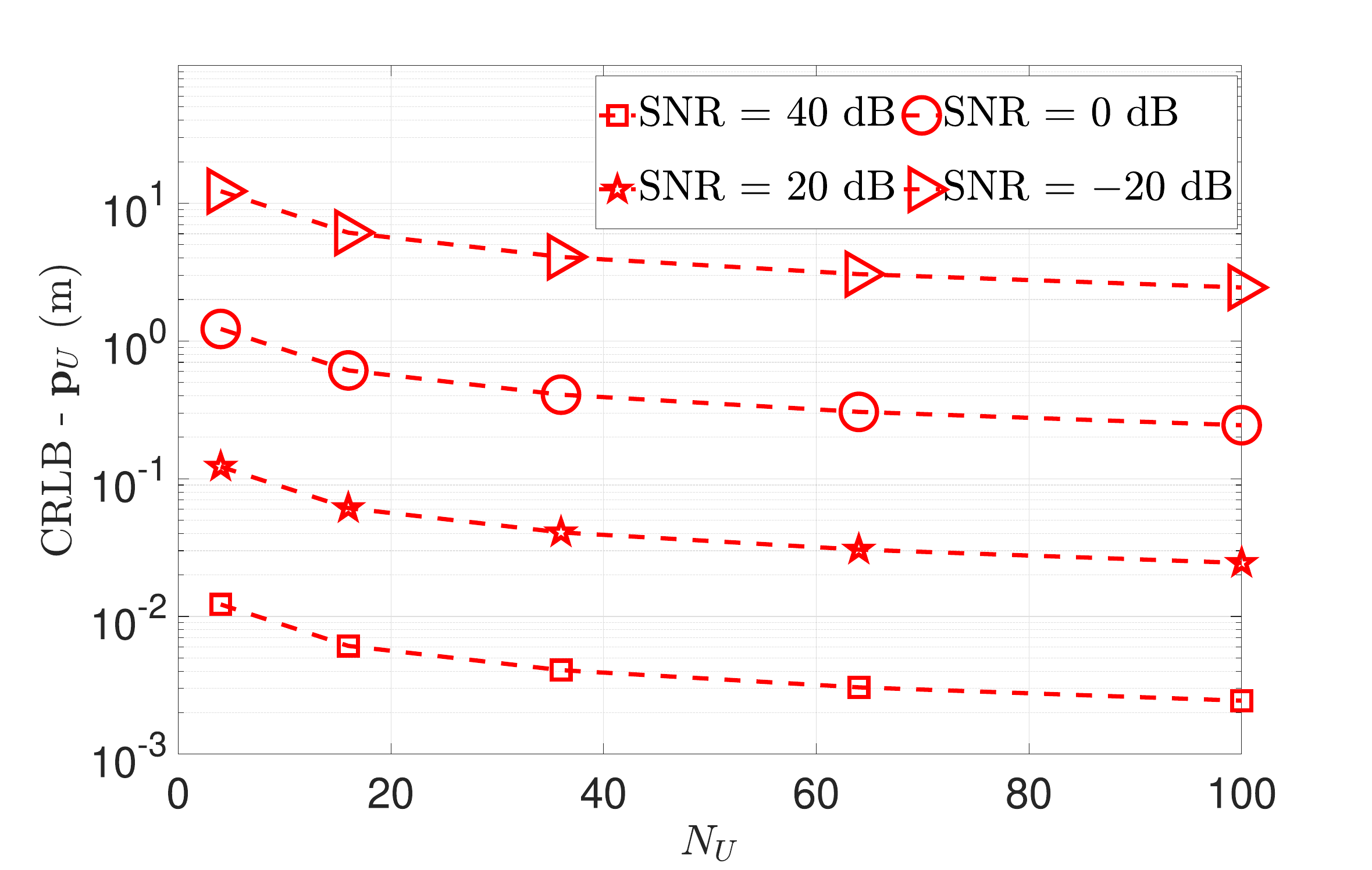}
\label{fig:Results/9D_PU_NU/_NB_3_N_K_3_N_U___delta_t_index_8_fcIndex_4_SNRIndex_1}}
\caption{CRLB for $\bm{p}_{U}$ in the $9$D localization scenario with $f_c = 60 \text{ GHz}$: (a) $\Delta_t = 10 \text{ s}$ and (b) $\Delta_t = 80 \text{ s}$.}
\label{Results:9D_PU_NB_3_N_K_3_N_U___delta_t_index_5_8_fcIndex_4_SNRIndex_1_NU}
\end{figure}

\begin{figure}[htb!]
\centering
%\subfloat[]{\includegraphics[height=1.8in, width= 1.5in]{Results/ber_ring_ratio_parallel.eps}
\subfloat[]{\includegraphics[ width= 3.2in]{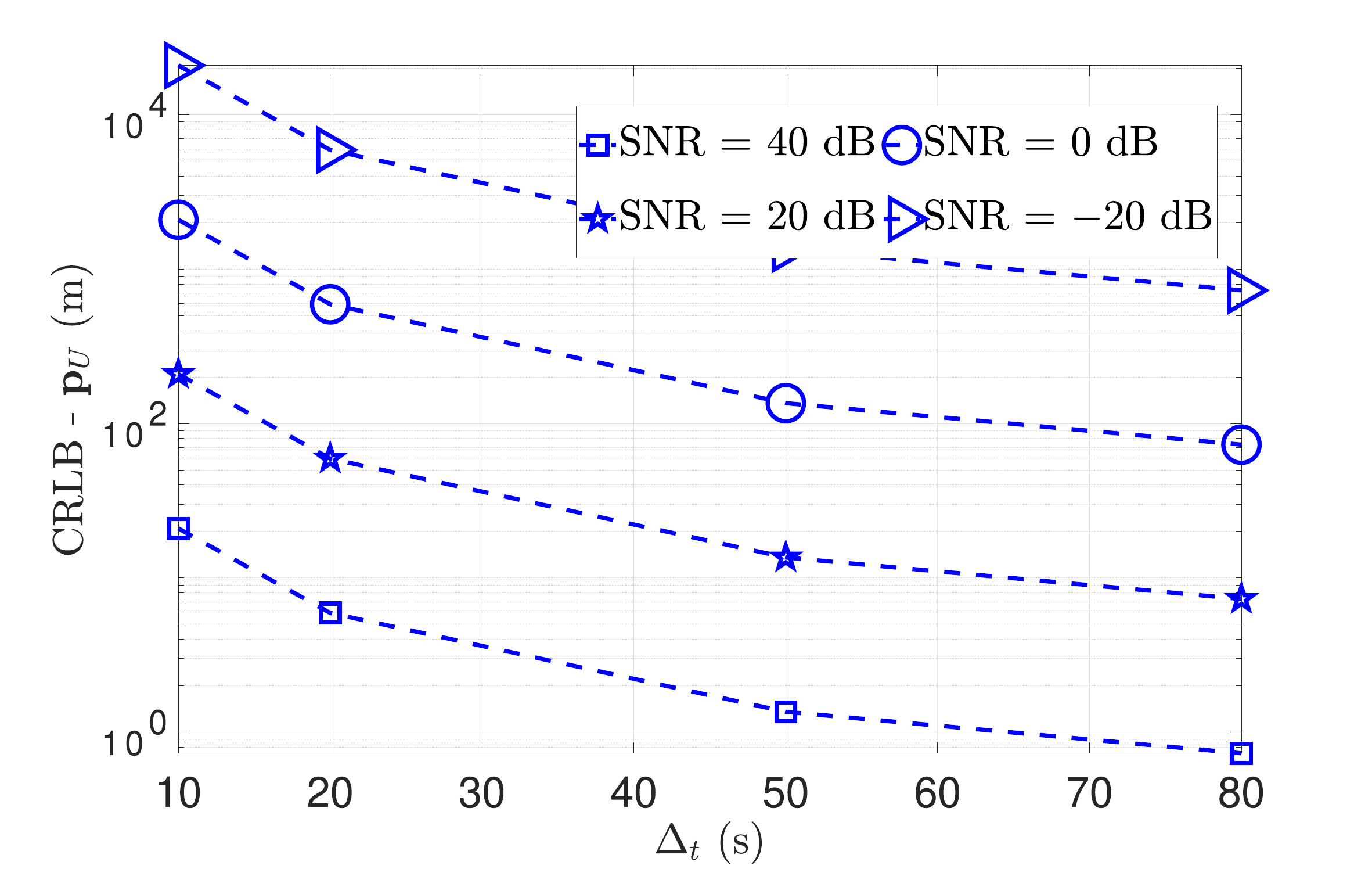}
\label{fig:Results/9D_PU_Time/_NB_3_N_K_3_N_U__delta_t_index___fcIndex_1_SNRIndex__}}
\hfil
\subfloat[]{\includegraphics[ width= 3.2in]{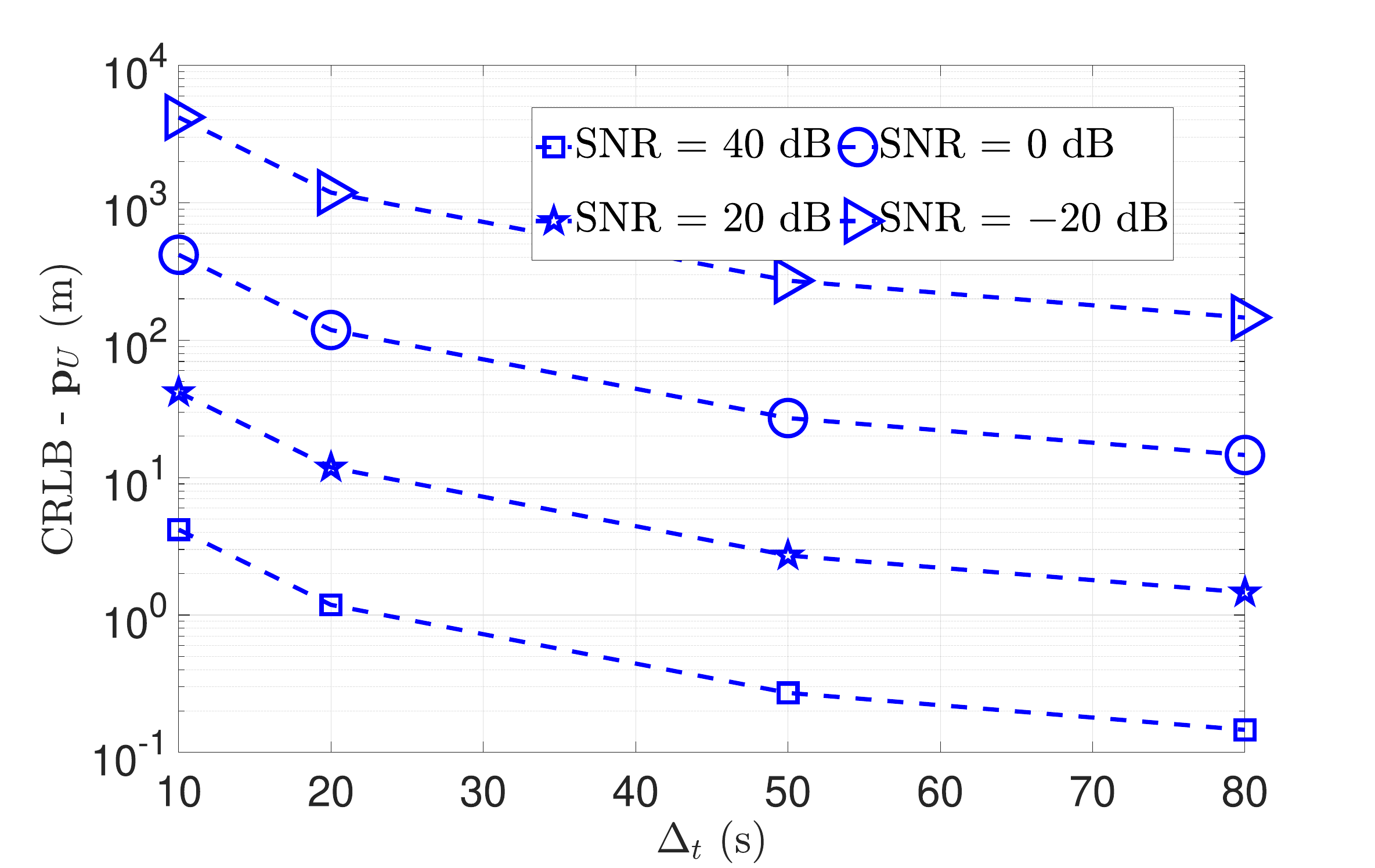}
\label{fig:Results/9D_PU_Time/_NB_3_N_K_3_N_U_10_delta_t_index___fcIndex_1_SNRIndex__}}
\caption{CRLB for $\bm{p}_{U}$ in the $9$D localization scenario with $f_c = 1 \text{ GHz}$: (a) $N_U = 4$  and (b) $N_U = 100$.}
\label{Results:_NB_3_N_K_3_N_U_2_10_delta_t_index___fcIndex_1_SNRIndex_1deltat}
\end{figure}

\begin{figure}[htb!]
\centering
%\subfloat[]{\includegraphics[height=1.8in, width= 1.5in]{Results/ber_ring_ratio_parallel.eps}
\subfloat[]{\includegraphics[ width= 3.2in]{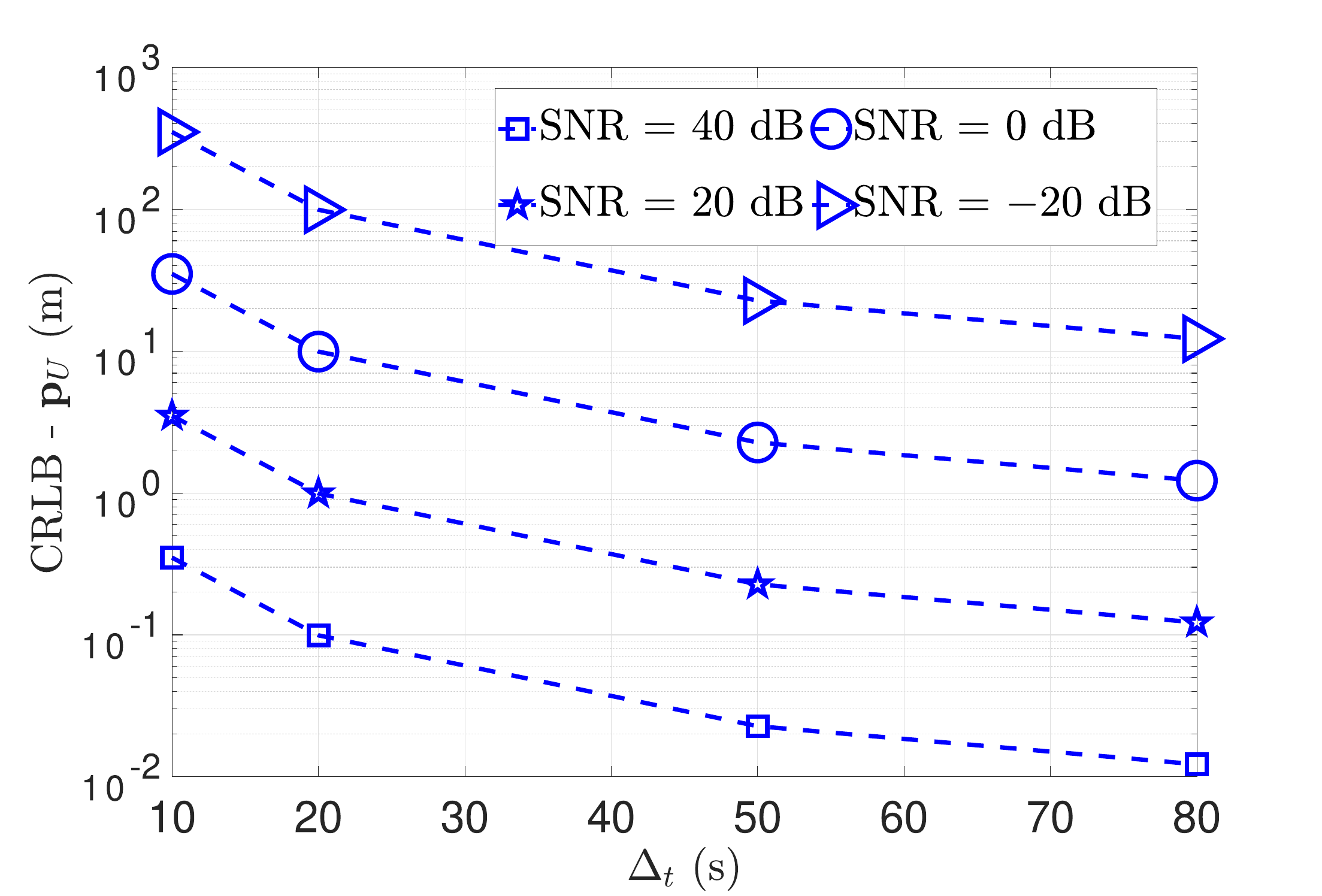}
\label{fig:Results/9D_PU_Time/_NB_3_N_K_3_N_U_2_delta_t_index___fcIndex_4_SNRIndex__}}
\hfil
\subfloat[]{\includegraphics[ width= 3.2in]{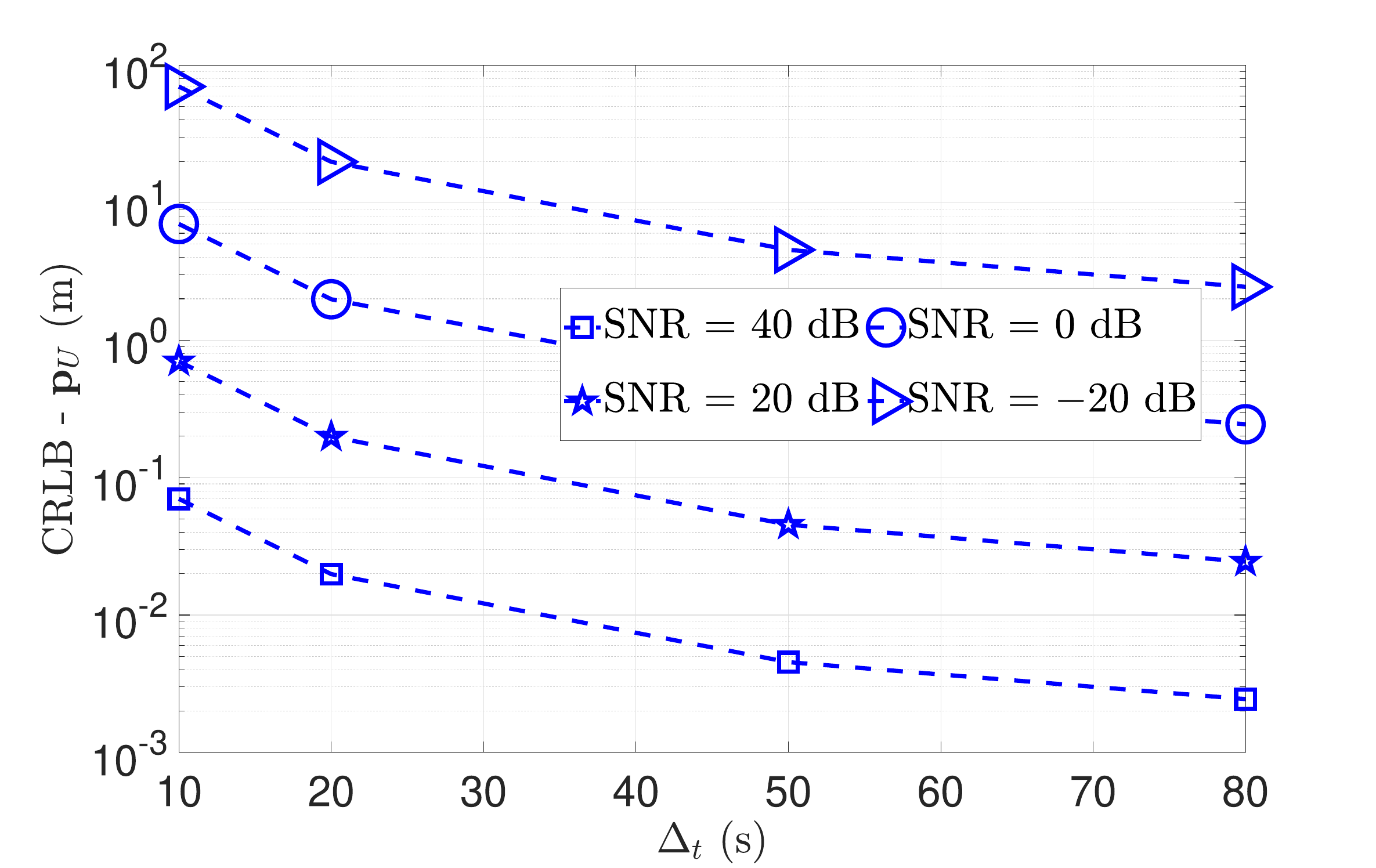}
\label{fig:Results/9D_PU_Time/_NB_3_N_K_3_N_U_10_delta_t_index___fcIndex_5_SNRIndex__}}
\caption{CRLB for $\bm{p}_{U}$ in the $9$D localization scenario with $f_c = 60 \text{ GHz}$: (a) $N_U = 4$  and (b) $N_U = 100$.}

\label{Results:_NB_3_N_K_3_N_U_2_10_delta_t_index___fcIndex_4_SNRIndex_1deltat}
\end{figure}

\begin{figure}[htb!]
\centering
%\subfloat[]{\includegraphics[height=1.8in, width= 1.5in]{Results/ber_ring_ratio_parallel.eps}
\subfloat[]{\includegraphics[ width= 3.2in]{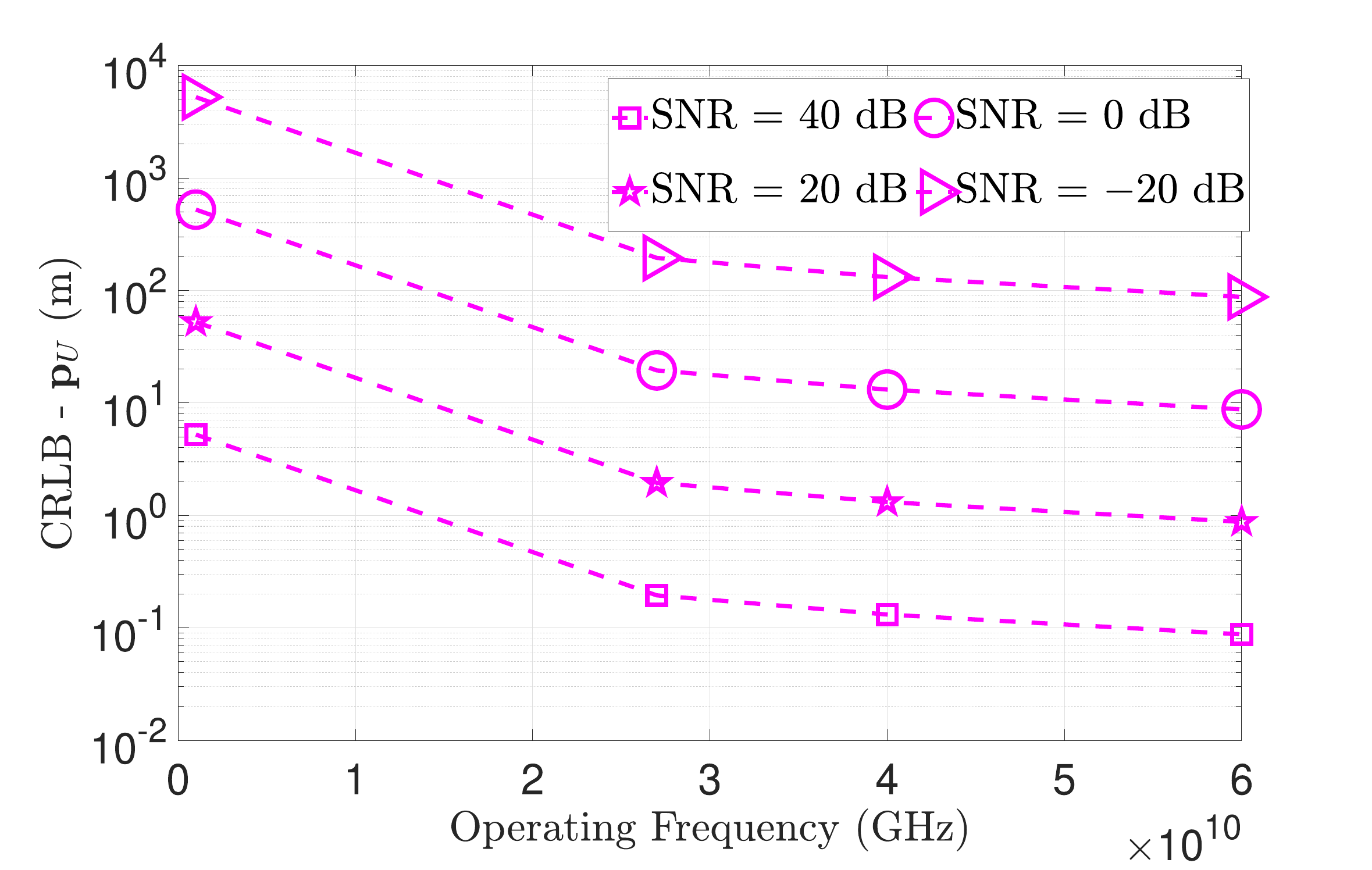}
\label{fig:Results/9D_PU_fc/_NB_3_N_K_3_N_U_2_delta_t_index_1_fcIndex_1_SNRIndex_1}}
\hfil
\subfloat[]{\includegraphics[ width= 3.2in]{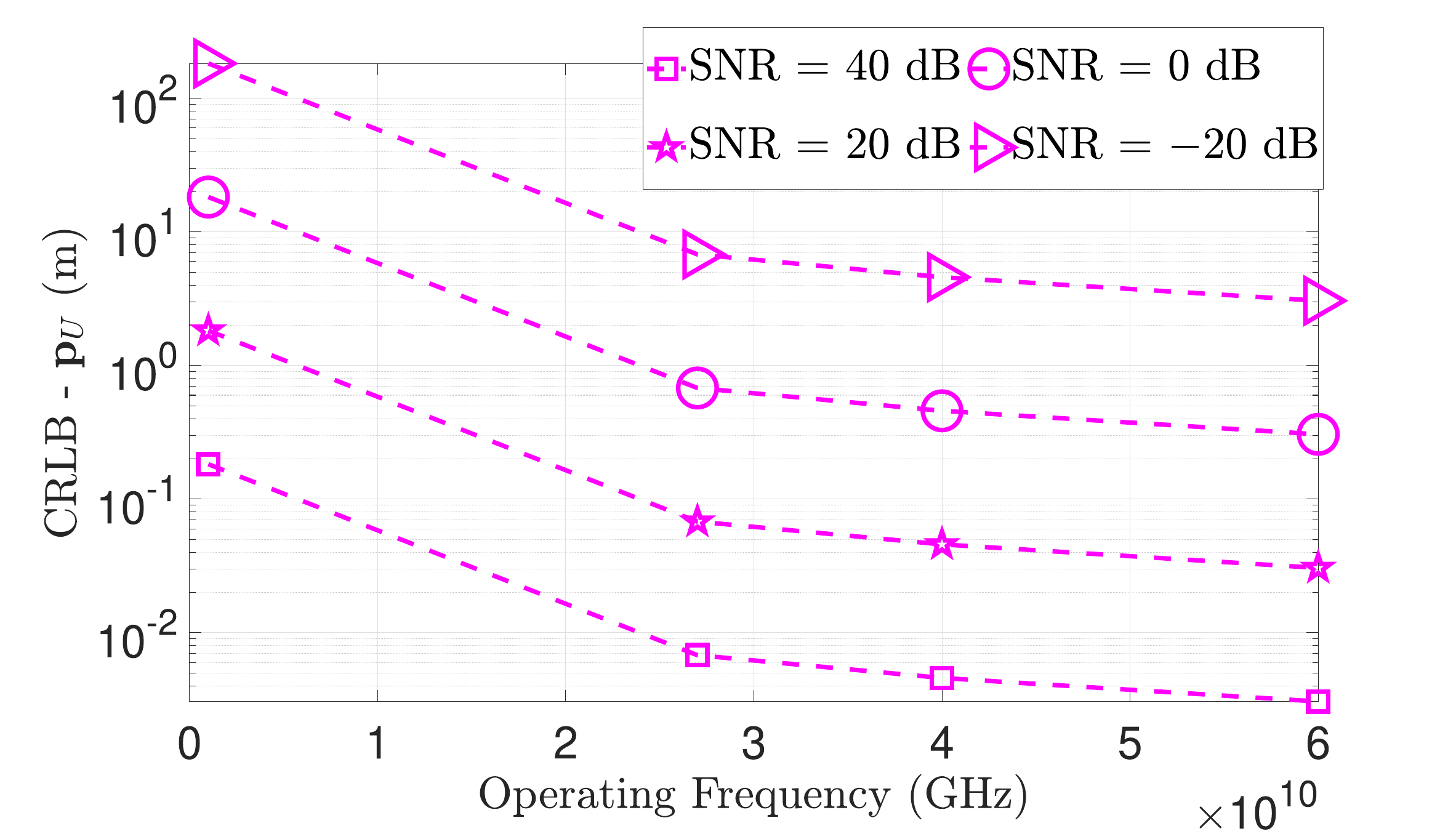}
\label{fig:Results/9D_PU_fc/_NB_3_N_K_3_N_U_2_delta_t_index_3_fcIndex_1_SNRIndex_1}}
\caption{CRLB for $\bm{p}_{U}$ in the $9$D localization scenario with $N_U = 64$: (a) $\Delta_t = 10 \text{ s}$ and (b)  $\Delta_t = 80 \text{ s}$.}
\label{Results:9D_PU_fc_NB_3_N_K_3_N_U_2_delta_t_index_1_3_fcIndex_1_SNRIndex_1_NU}
\end{figure}

\subsubsection{$N_K = 1$, $N_B = 3$, \textit{and} $N_U = 1$}
Under this condition, the information is insufficient to find the $3$D position of the receiver when both a time offset and frequency offset are present. Without a time offset and frequency offset, the $3$D position can be found using: i) three TOA measurements from three distinct LEO satellites, ii) three Doppler measurements from three distinct LEO satellites, iii) two TOA measurements from two distinct LEOs and Doppler measurement from either of the LEO, and iv)  two Doppler measurements from two distinct LEOs and a TOA measurement from either of the LEO satellites.
With only a time offset, ii) can be used to find the $3$D position, while with only a frequency offset, i) can be utilized to find the $3$D position.

\subsubsection{$N_K = 1$, $N_B = 3$, \textit{and} $N_U > 1$}
Under this condition, the information is enough to find the $3$D position of the receiver even when there is a time offset, a frequency offset, or both. 
%Under this condition, the information is insufficient to find the $3$D position of the receiver when both a time offset and frequency offset are present. 
Without a time offset or frequency offset, the $3$D position can be found using: i) three TOA measurements from three distinct LEO satellites, ii) three Doppler measurements from three distinct LEO satellites, iii) two TOA measurements from two distinct LEOs and a Doppler measurement from either of the LEO, iv)  two Doppler measurements from two distinct LEOs and a TOA measurement from either of the LEO satellites, v) two unit vectors from two distinct LEO satellites obtainable due to the presence of multiple receive antennas, which capture multiple TOAs from a single LEO satellite, vi) one TOA measurement from a distinct LEO and a unit vector from the other LEO satellites, and vii) one Doppler measurement from a distinct LEO and a unit vector from the other LEO. With only a time offset, ii), v), and vii) can be used to find $3$D position, while with only a frequency offset, i), v), and vi) can be used to find $3$D position. When both a time offset and frequency offset are present, v) can be used to find the $3$D position.
%With only a time offset, ii) can be used to find $3$D position, while with only a frequency offset, i) can be used to find $3$D position. 

Now, we focus on the cases when we only have measurements taken from a single LEO satellite considering single and multiple receive antennas, and different number of time slots. 

\subsubsection{$N_K = 2$, $N_B = 1$, \textit{and} $N_U = 1$}
Under this condition, the presence of either a time or frequency offset means there is insufficient information to find the $3$D position of the receiver. Without a time and frequency offset, there is enough information to find the $3$D position of the receiver using: i) two TOAs from the same LEO satellite obtained during two distinct time slots in combination with the Doppler measurements obtained during either of the time slots with respect to the LEO satellite, and ii) two Doppler measurements obtained during the two distinct time slots with respect to the LEO satellite in combination with the TOA obtained during either of the time slots.

\subsubsection{$N_K = 2$, $N_B = 1$, \textit{and} $N_U > 1$}
Without a time and frequency offset, there is enough information to find the $3$D position of the receiver using: i) two TOAs from the same LEO satellite obtained during two distinct time slots in combination with the Doppler measurements obtained during either of the time slots with respect to the LEO satellite, and ii) two Doppler measurements obtained during the two distinct time slots with respect to the LEO satellite in combination with the TOA obtained during either of the time slots.  iii) two unit vectors obtained during the two distinct time slots from the same LEO satellite obtainable due to the presence of multiple receive antennas, iv) one TOA measurement from the LEO during either of the time slots and a unit vector from the LEO during either of the time slots and v) one Doppler measurement from the LEO during either of the time slots and a unit vector from the other LEO during either of the time slots. With only a time offset, iii) and v) can be used to find the $3$D position.  With only a frequency offset, iii) and iv) can be used to find the $3$D position. When both a time offset and frequency offset
are present, iii) can be used to find the $3$D position.

\subsubsection{$N_K = 3$, $N_B = 1$, \textit{and} $N_U = 1$}
Under this condition, the presence of either a time or frequency offset means there is insufficient information to find the $3$D position of the receiver. Without a time and frequency offset, there is enough information to find the $3$D position of the receiver using: i) three TOA measurements obtained during three different time slots from a single LEO satellite, ii) three Doppler measurements obtained during three different time slots from a single LEO satellite, iii) two TOAs from the same LEO satellite obtained during two distinct time slots in combination with the Doppler measurements obtained during either of the time slots with respect to the LEO satellite, and iv) two Doppler measurements obtained during the two distinct time slots with respect to the LEO satellite in combination with the TOA obtained during either of the time slots. While with a time offset, ii) can be used to provide the $3$D position of the receiver, with a frequency offset, i) can be used to provide the $3$D position of the receiver.

\subsubsection{$N_K = 3$, $N_B = 1$, \textit{and} $N_U > 1$}
Without a time and frequency offset, there is enough information to find the $3$D position of the receiver using: i) three TOA measurements obtained during three different time slots from a single LEO satellite, ii) three Doppler measurements obtained during three different time slots from a single LEO satellite, iii) two TOAs from the same LEO satellite obtained during two distinct time slots in combination with the Doppler measurements obtained during either of the time slots with respect to the LEO satellite, and iv) two Doppler measurements obtained during the two distinct time slots with respect to the LEO satellite in combination with the TOA obtained during either of the time slots.  v) two unit vectors obtained during the two distinct time slots from the same LEO satellite obtainable due to the presence of multiple receive antennas, vi) one TOA measurement from the LEO during either of the time slots and a unit vector from the LEO during either of the time slots, and vii) one Doppler measurement from the LEO during either of the time slots and a unit vector from the other LEO during either of the time slots. With only a time offset, ii), v), and vii) can be used to find the $3$D position.  With only a frequency offset, i), v), and vi)  can be used to find the $3$D position. When both a time offset and frequency offset are present, v) can be used to find the $3$D position.
%With only a time offset, ii) can be used to find the $3$D position.  With only a frequency offset, i), v), and vi)  can be used to find the $3$D position. %When both a time offset and frequency offset are present, v) can be used to find the $3$D position.

We next present cases for $N_K > 3$. These cases are unique and special because they allow time and frequency difference techniques to handle scenarios with both time and frequency offsets and only a single receive antenna is available.

\subsubsection{$N_K = 4$, $N_B = 1$, \textit{and} $N_U = 1$}
Without a time and frequency offset, there is enough information to find the $3$D position of the receiver using: i) four TOA measurements obtained during four different time slots from a single LEO satellite with one of the TOA measurements serving as a reference measurement for time differencing, ii) four Doppler measurements obtained during four different time slots from a single LEO satellite with one of the Doppler measurements serving as a reference measurement for frequency differencing, iii) three TOA measurements obtained during three different time slots from a single LEO satellite, iv) three Doppler measurements obtained during three different time slots from a single LEO satellite, v) two TOAs from the same LEO satellite obtained during two distinct time slots in combination with the Doppler measurements obtained during either of the time slots with respect to the LEO satellite, and vi) two Doppler measurements obtained during the two distinct time slots with respect to the LEO satellite in combination with the TOA obtained during either of the time slots. While with a time offset, i), ii), and iv) can be used to provide the $3$D position of the receiver, with a frequency offset, i), ii), and iii) can be used to provide the $3$D position of the receiver. With both a time and frequency offset, i) and ii) can be used to provide the $3$D position of the receiver.

\subsubsection{$N_K = 4$, $N_B = 1$, \textit{and} $N_U > 1$}
Without a time and frequency offset, there is enough information to find the $3$D position of the receiver using: i) four TOA measurements obtained during four different time slots from a single LEO satellite with one of the TOA measurements serving as a reference measurement for time differencing, ii) four Doppler measurements obtained during four different time slots from a single LEO satellite with one of the Doppler measurements serving as a reference measurement for frequency differencing, iii) three TOA measurements obtained during three different time slots from a single LEO satellite, iv) three Doppler measurements obtained during three different time slots from a single LEO satellite, v) two TOAs from the same LEO satellite obtained during two distinct time slots in combination with the Doppler measurements obtained during either of the time slots with respect to the LEO satellite, and vi) two Doppler measurements obtained during the two distinct time slots with respect to the LEO satellite in combination with the TOA obtained during either of the time slots.  vii) two unit vectors obtained during the two distinct time slots from the same LEO satellite obtainable due to the presence of multiple receive antennas, viii) one TOA measurement from the LEO during either of the time slots and a unit vector from the LEO during either of the time slots and ix) one Doppler measurement from the LEO during either of the time slots and a unit vector from the LEO during either of the time slots. %With only a time offset, i) ii), iv), vii), and ix) can be used to find the $3$D position.  %With only a frequency offset, i), ii), iii), vii), and viii)  can be used to find the $3$D position. When both a time offset and frequency offset
%are present, i), ii), and vii) can be used to find the $3$D position.
With only a time offset, i) ii), iv), and v) can be used to find the $3$D position.  With only a frequency offset, i), ii), (iii), v), (vii), and viii)  can be used to find the $3$D position. When both a time offset and frequency offset are present, i), ii), and v) can be used to find the $3$D position.

\begin{table*}[!t]
%% increase table row spacing, adjust to taste
\renewcommand{\arraystretch}{1.3}
% if using array.sty, it might be a good idea to tweak the value of
% \extrarowheight as needed to properly center the text within the cells
\caption{Legend for Estimation Possibilities for $\bm{\Phi}_{U}$ in the $3$D Localization Setup under Different Combinations of $N_B$, $N_K$, and $N_U$}
\label{Legend_Estimation_Possibilities_for_PHIU_in_the_3D_Localization}
\centering
%% Some packages, such as MDW tools, offer better commands for making tables
%% than the plain LaTeX2e tabular which is used here.
\begin{tabular}{|c|c|}
\hline
a & Multiple TOA measurements received across the receive antennas from two LEO satellites.\\
\hline
b & Multiple TOA measurements received across the receive antennas from a single LEO satellite during both time slots.\\
\hline
\end{tabular}
\end{table*}

\begin{table*}[!t]
%% increase table row spacing, adjust to taste
\renewcommand{\arraystretch}{1.3}
% if using array.sty, it might be a good idea to tweak the value of
% \extrarowheight as needed to properly center the text within the cells
\caption{Legend for Estimation Possibilities for $\bm{v}_{U}$ in the $3$D Localization Setup under Different Combinations of $N_B$, $N_K$, and $N_U$}
\label{Legend_Estimation_Possibilities_for_vU_in_the_3D_Localization}
\centering
%% Some packages, such as MDW tools, offer better commands for making tables
%% than the plain LaTeX2e tabular which is used here.
\begin{tabular}{|c|c|}
\hline
a & Three Doppler measurements from three LEO satellites.\\
\hline
b & A combination of the frequency differencing of two Doppler measurements from two distinct \\ & time slots from the first LEO satellite, frequency differencing of the other two Doppler measurements from\\ & two distinct time slots from the other LEO satellite, and time differencing of any \\ &  two delay measurements from any two distinct time slots from any of the LEO satellites.\\
\hline
c & A combination of the time differencing of two delay measurements from two distinct \\ & time slots from the first LEO satellite,  time differencing of two delay measurements from \\ & two distinct time slots from the second LEO satellite, and frequency differencing of two \\ & Doppler measurements from two distinct time slots from any of the two satellites.\\
\hline
d &  A combination of the frequency differencing of two Doppler measurements from two distinct \\ & time slots from the first LEO satellite,  frequency differencing of two Doppler measurements from \\ & two distinct time slots from the second LEO satellite, and two \\ & delay measurements from two distinct time slots from either of the two satellites. \\
\hline
e &  A combination of the time differencing of two delay measurements from two distinct \\ & time slots from the first  LEO satellite,   time differencing of two delay measurements from \\ & two distinct time slots from the second LEO satellite,  and one of the  Doppler measurement from \\ &either of the distinct time slots from either of the LEO satellites. \\
\hline
f & A combination of two delay  measurements from the two distinct  time slots from either of the \\ & first satellite,  two delay  measurements from  the two distinct time slots from \\ & the second satellite,  and one of the four Doppler measurements from \\ & either of the distinct time slots from either of the LEO satellites. \\
\hline
g & Three of the four Doppler measurements from both of the distinct time slots from both of the LEO satellites.\\
\hline
\end{tabular}
\end{table*}

\begin{table*}[!t]
%% increase table row spacing, adjust to taste
\renewcommand{\arraystretch}{1.3}
% if using array.sty, it might be a good idea to tweak the value of
% \extrarowheight as needed to properly center the text within the cells
\caption{Legend for Estimation Possibilities for $\bm{v}_{U}$ in the $3$D Localization Setup under Different Combinations of $N_B$, $N_K$, and $N_U$}
\label{Legend_Estimation_Possibilities_for_vU_in_the_3D_Localization_1}
\centering
%% Some packages, such as MDW tools, offer better commands for making tables
%% than the plain LaTeX2e tabular which is used here.
\begin{tabular}{|c|c|}
\hline
h & Four TOA measurements obtained during four different time slots from a single LEO satellite \\ & with one of the TOA measurements serving as a reference measurement for time differencing. \\
\hline
i & Four Doppler measurements obtained during four different time slots from a single LEO satellite \\ & with one of the Doppler measurements serving as a reference measurement for frequency differencing. \\
\hline
j & A combination of the frequency differencing of two Doppler measurements from two distinct time slots from \\ & the first LEO satellite,  frequency differencing of \\ &  the other two Doppler measurements from two distinct time slots from the LEO satellite, \\ & and time differencing of any two delay measurements from any two distinct time slots from the LEO satellite. \\
\hline
k & A combination of the time differencing of two delay measurements  from two distinct time slots \\ & from the LEO satellite,  time differencing of  the other two delay measurements \\ & from two distinct time slots from the LEO satellite,  and frequency differencing \\ & of any of the two Doppler measurements from two distinct time slots from the LEO satellite. \\
\hline
l & A combination of the frequency differencing of two Doppler measurements  \\ & from two distinct time slots from the LEO satellite, \\ &  frequency differencing of the other two Doppler measurements from two distinct time slots from the LEO satellite,  \\ & and two delay measurements from two distinct time slots from the LEO satellite. \\ 
\hline
m & A combination of the time differencing of two delay measurements from two distinct time slots from the first \\ & LEO satellite,  time differencing of the other two delay measurements from two distinct time slots from the \\ &  LEO satellite, and one of the Doppler measurement from one of the distinct time slots from the LEO satellite. \\ 
\hline
n & A combination of two delay measurements from the two distinct time slots from the LEO satellite, \\ & two delay measurements from the two distinct time slots from the LEO satellite,  \\ & and one of the three Doppler measurements from the distinct time slots from the LEO satellite. \\ 
\hline 
o & Three Doppler measurements from the distinct time slots from the LEO satellite. \\
\hline
\end{tabular}
\end{table*}

\begin{figure}[htb!]
\centering
%\subfloat[]{\includegraphics[height=1.8in, width= 1.5in]{Results/ber_ring_ratio_parallel.eps}
\subfloat[]{\includegraphics[ width= 3.2in]{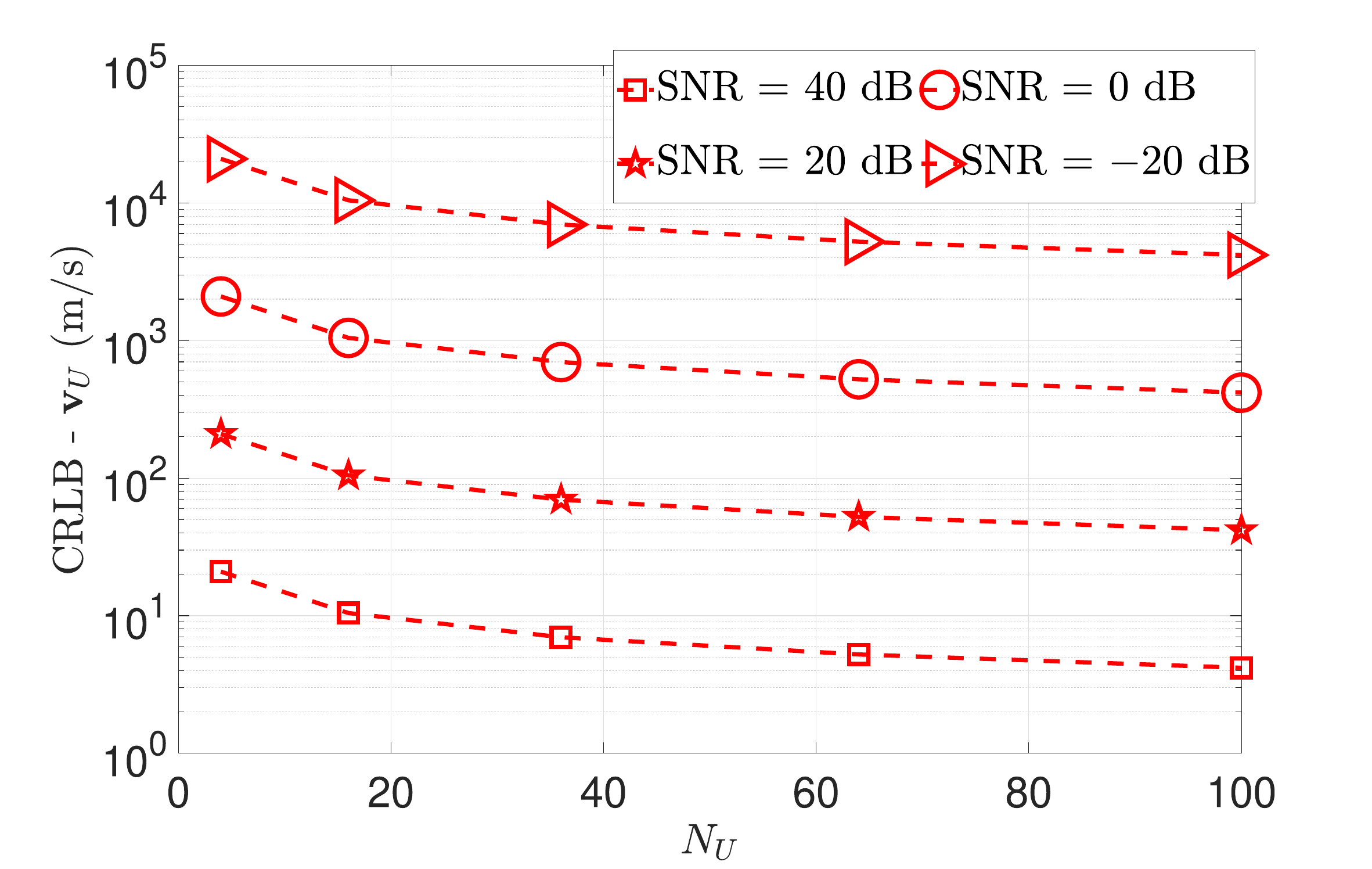}
\label{fig:Results/9D_VU_NU/_NB_3_N_K_3_N_U___delta_t_index_5_fcIndex_1_SNRIndex__}}
\hfil
\subfloat[]{\includegraphics[ width= 3.2in]{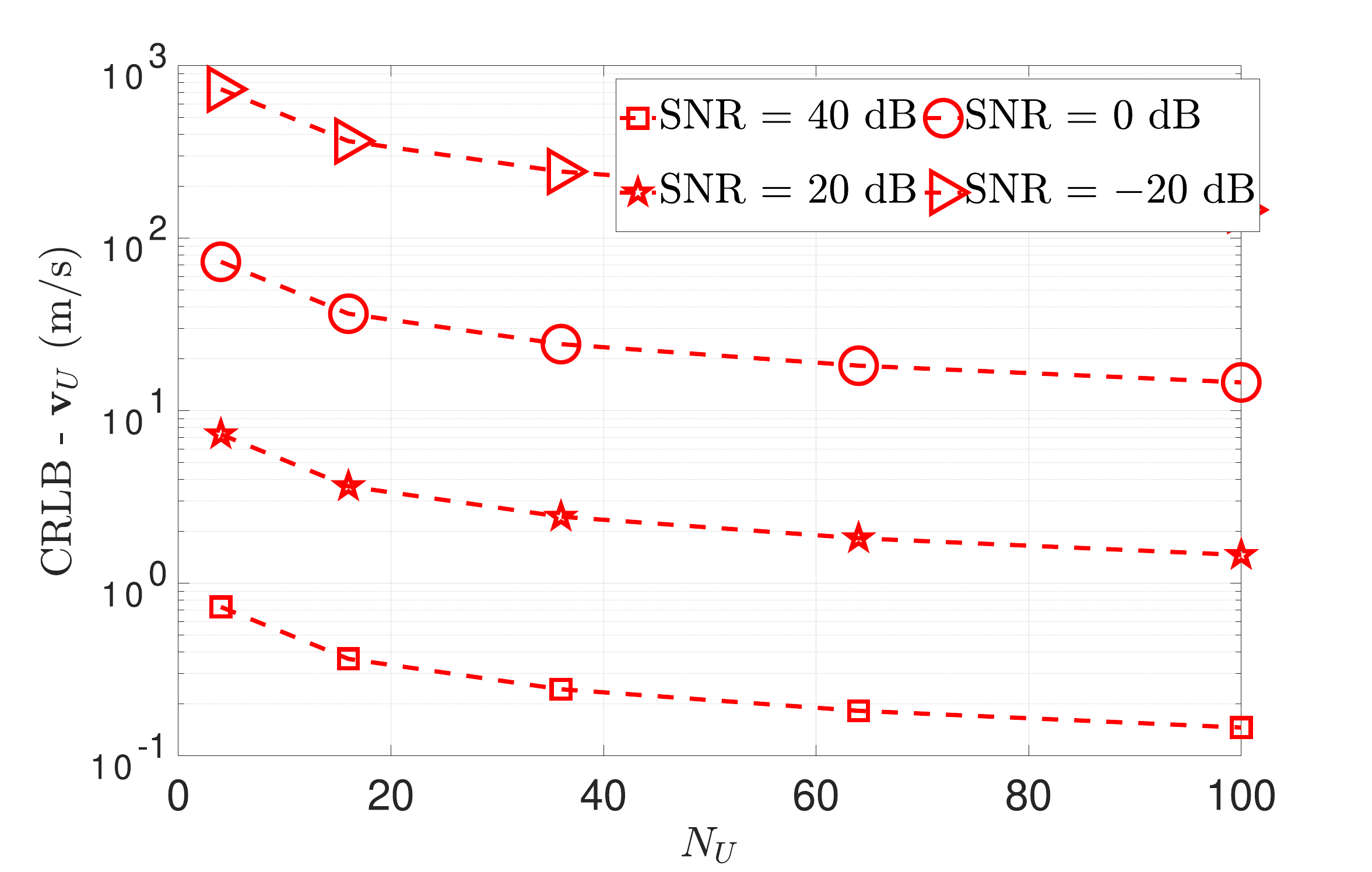}
\label{fig:Results/9D_VU_NU/_NB_3_N_K_3_N_U___delta_t_index_8_fcIndex_1_SNRIndex__}}
\caption{CRLB for $\bm{v}_{U}$ in the $9$D localization scenario with $f_c = 1 \text{ GHz}$: (a) $\Delta_t = 10 \text{ s}$ and (b) $\Delta_t = 80 \text{ s}$.}
\label{fig:Results/9D_VU_NU/_NB_3_N_K_3_N_U___delta_t_index_5_8_fcIndex_1_SNRIndex__}
\end{figure}

\begin{figure}[htb!]
\centering
%\subfloat[]{\includegraphics[height=1.8in, width= 1.5in]{Results/ber_ring_ratio_parallel.eps}
\subfloat[]{\includegraphics[ width= 3.2in]{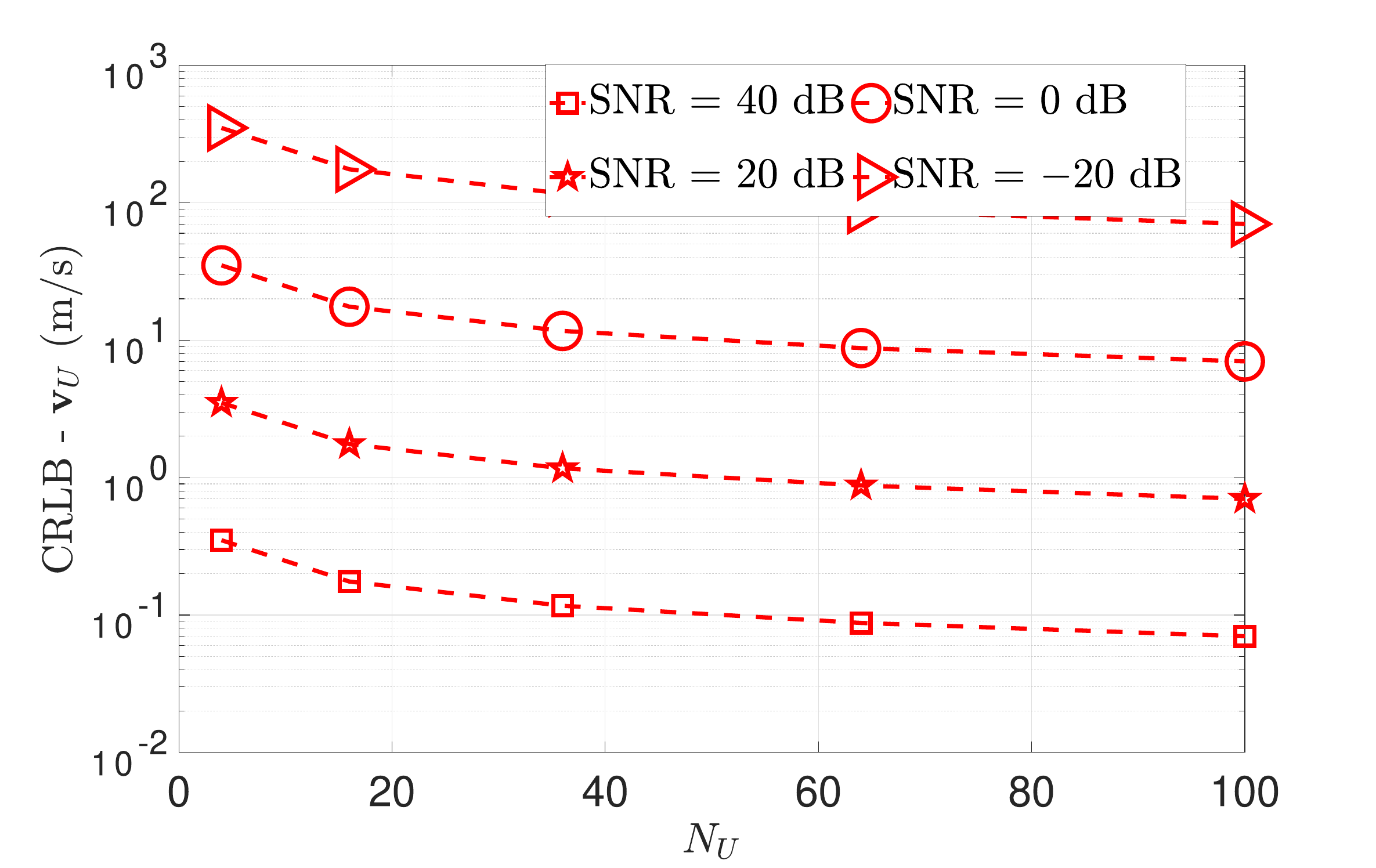}
\label{fig:Results/9D_VU_NU/_NB_3_N_K_3_N_U___delta_t_index_5_fcIndex_4_SNRIndex__}}
\hfil
\subfloat[]{\includegraphics[ width= 3.2in]{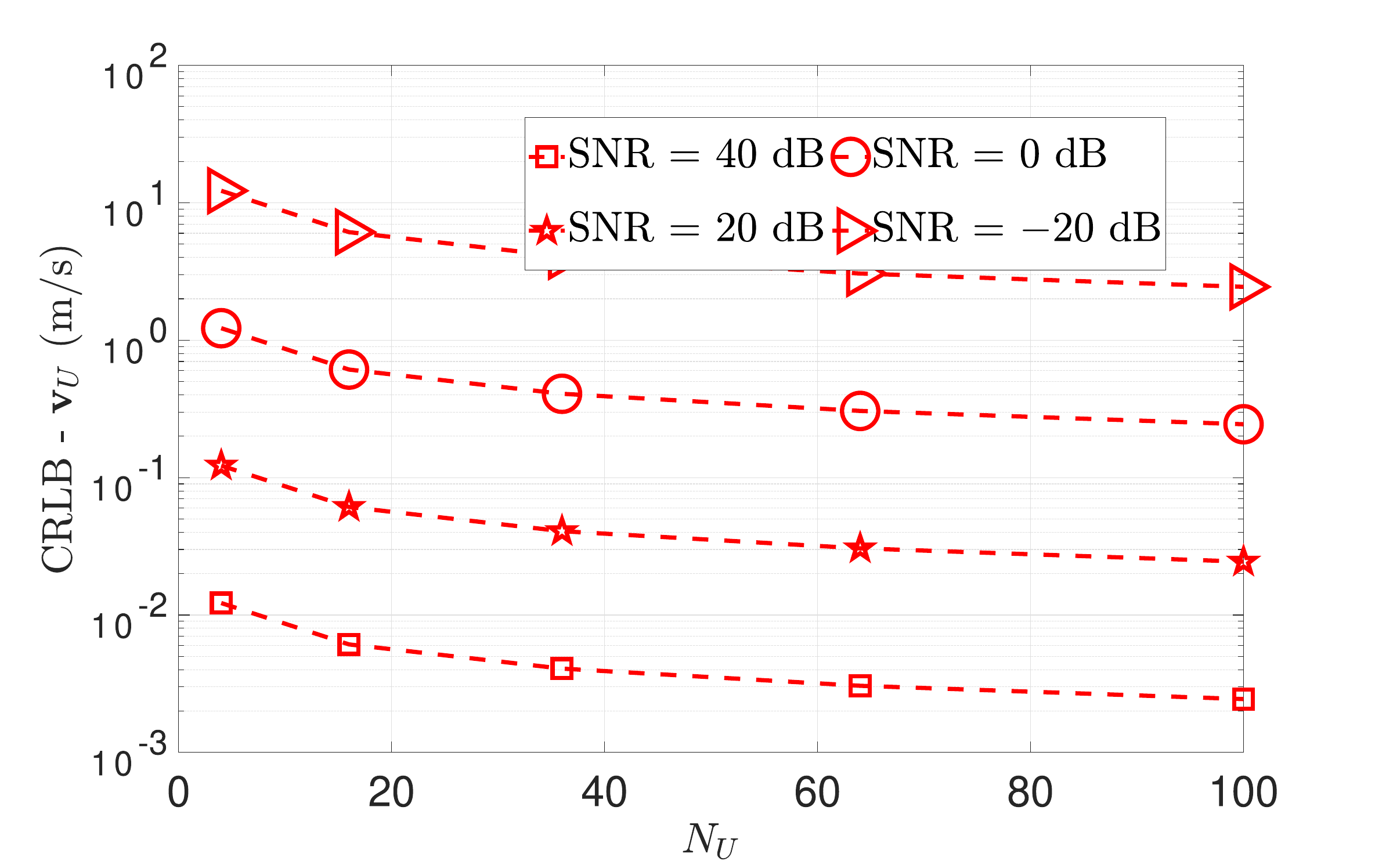}
\label{fig:Results/9D_VU_NU/_NB_3_N_K_3_N_U___delta_t_index_8_fcIndex_4_SNRIndex__}}
\caption{CRLB for $\bm{v}_{U}$ in the $9$D localization scenario with $f_c = 60 \text{ GHz}$: (a) $\Delta_t = 10 \text{ s}$ and (b) $\Delta_t = 80 \text{ s}$.}
\label{fig:Results/9D_VU_NU/_NB_3_N_K_3_N_U___delta_t_index_5_8_fcIndex_4_SNRIndex__}
\end{figure}

\begin{figure}[htb!]
\centering
%\subfloat[]{\includegraphics[height=1.8in, width= 1.5in]{Results/ber_ring_ratio_parallel.eps}
\subfloat[]{\includegraphics[ width= 3.2in]{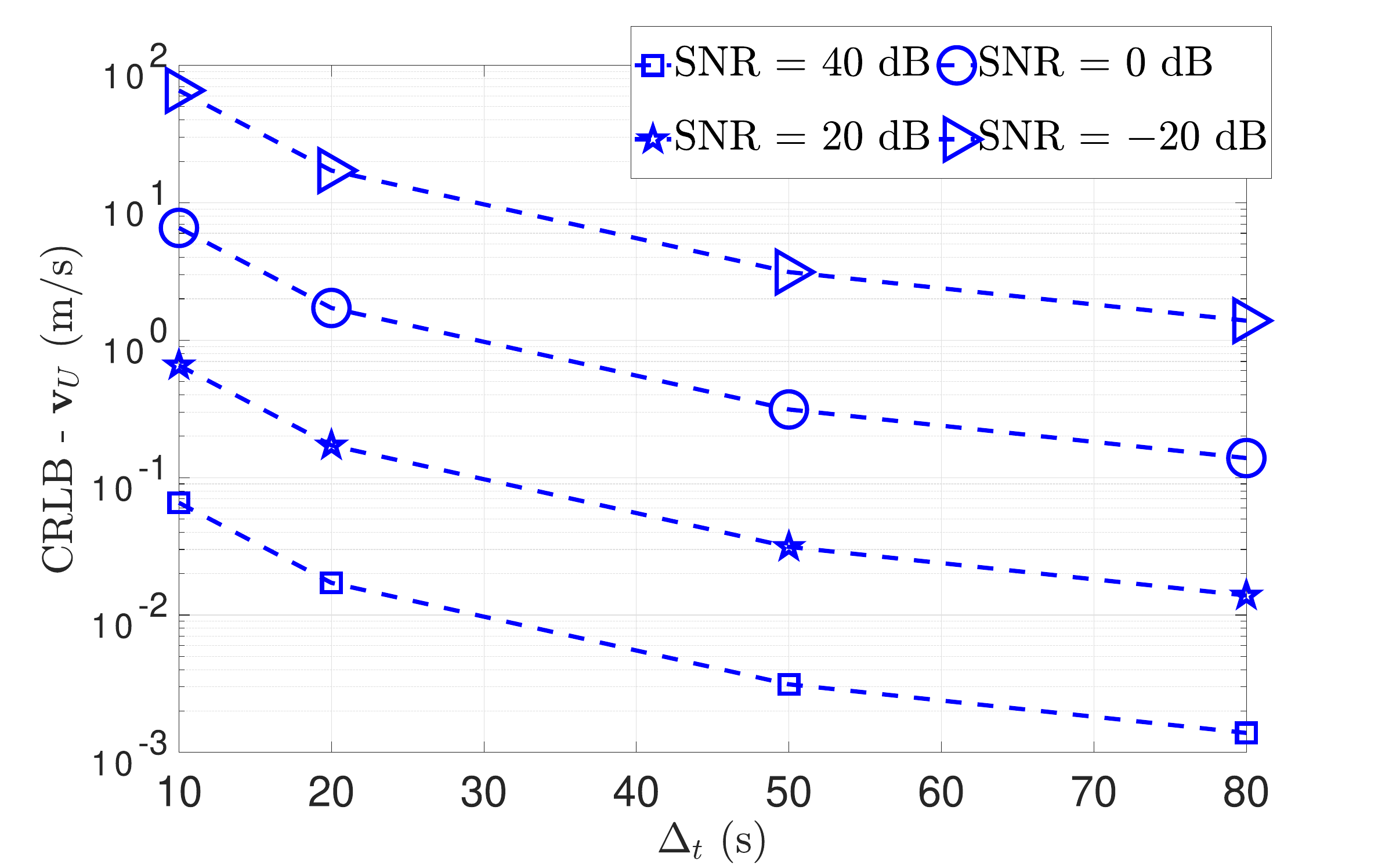}
\label{fig:Results/9D_VU_Time/_NB_3_N_K_3_N_U_2_delta_t_index___fcIndex_1_SNRIndex__}}
\hfil
\subfloat[]{\includegraphics[ width= 3.2in]{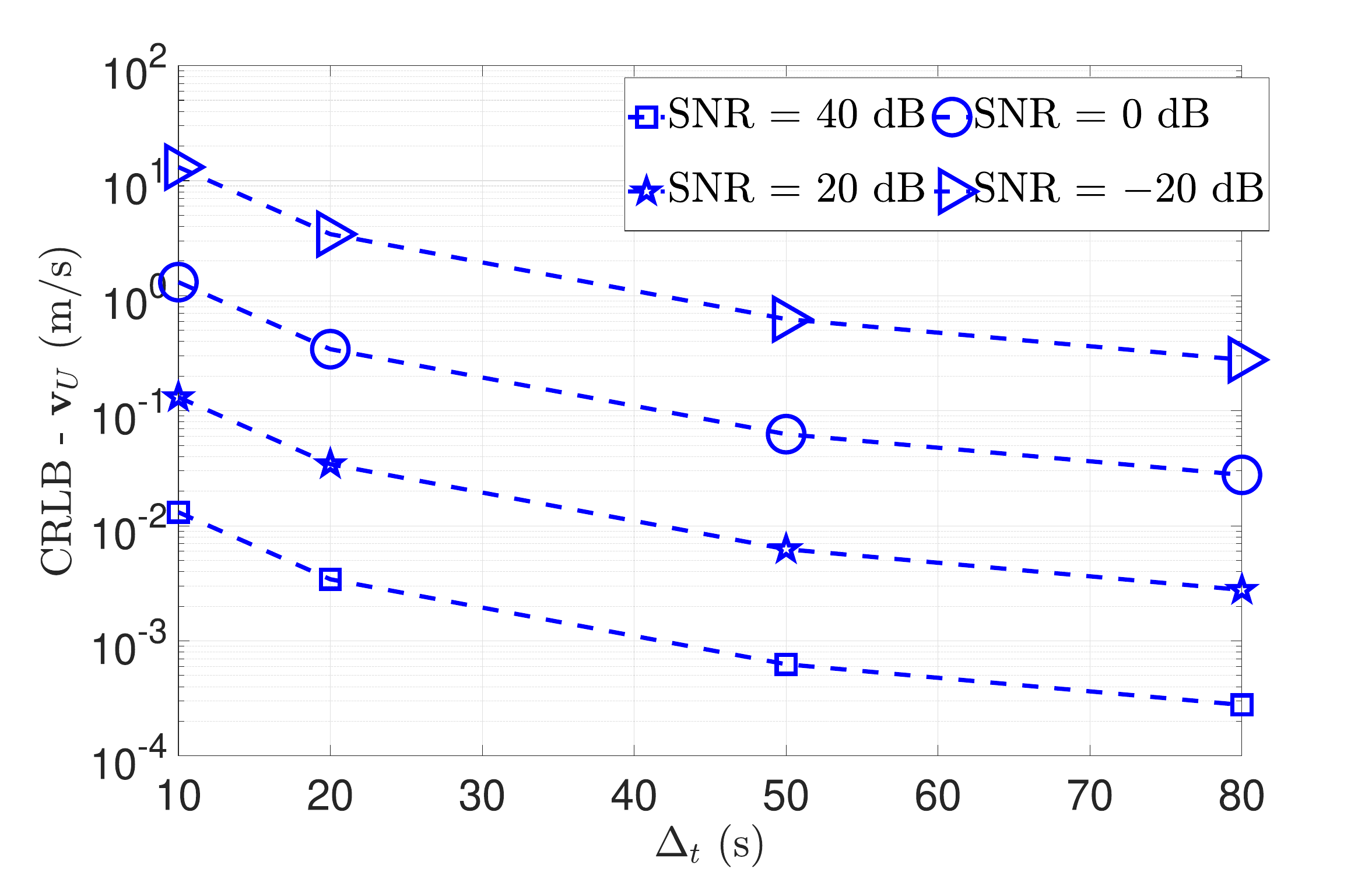}
\label{fig:Results/9D_VU_Time/_NB_3_N_K_3_N_U_10_delta_t_index___fcIndex_1_SNRIndex__}}
\caption{CRLB for $\bm{v}_{U}$ in the $9$D localization scenario with $f_c = 1 \text{ GHz}$: (a) $N_U = 4$  and (b) $N_U = 100$.}
\label{Results/9D_VU_Time/_NB_3_N_K_3_N_U_2_10_delta_t_index___fcIndex_1_SNRIndex__}
\end{figure}

\begin{figure}[htb!]
\centering
%\subfloat[]{\includegraphics[height=1.8in, width= 1.5in]{Results/ber_ring_ratio_parallel.eps}
\subfloat[]{\includegraphics[ width= 3.2in]{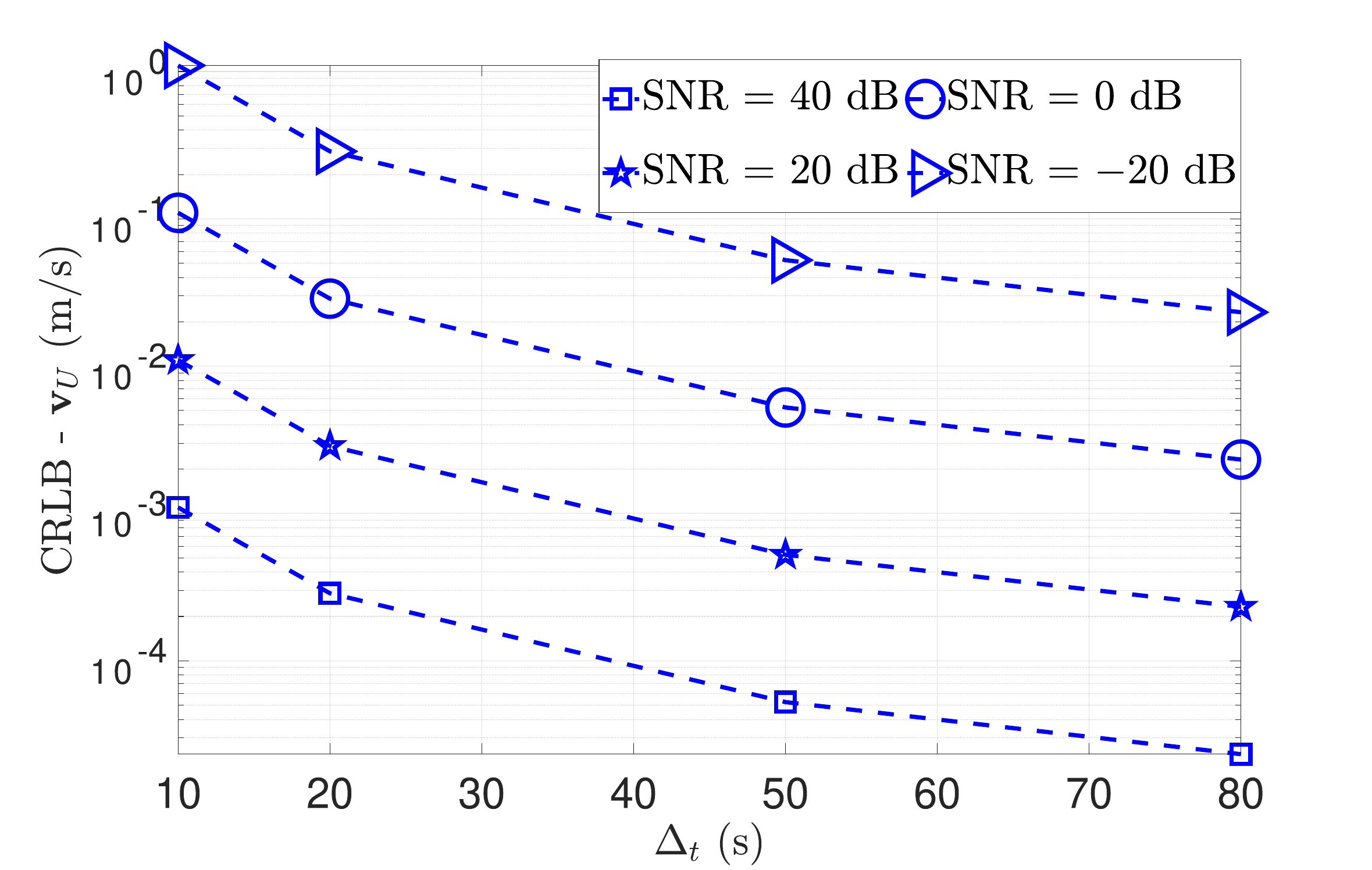}
\label{fig:Results/9D_VU_Time/_NB_3_N_K_3_N_U_2_delta_t_index___fcIndex_4_SNRIndex__}}
\hfil
\subfloat[]{\includegraphics[ width= 3.2in]{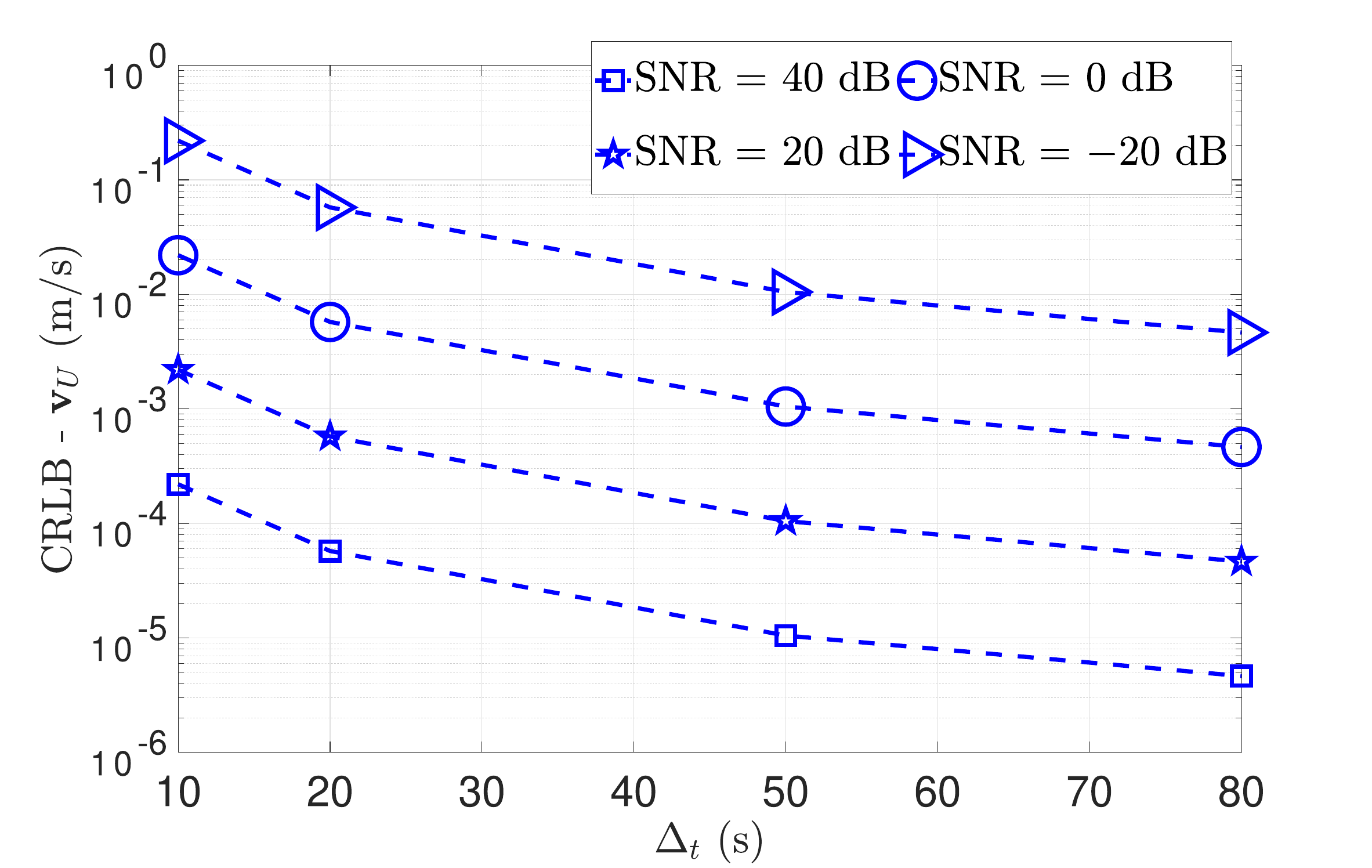}
\label{fig:Results/9D_VU_Time/_NB_3_N_K_3_N_U_10_delta_t_index___fcIndex_4_SNRIndex__}}
\caption{CRLB for $\bm{v}_{U}$ in the $9$D localization scenario with $f_c = 60 \text{ GHz}$: (a) $N_U = 4$  and (b) $N_U = 100$.}
\label{Results/9D_VU_Time/_NB_3_N_K_3_N_U_2_10_delta_t_index___fcIndex_4_SNRIndex__}
\end{figure}

\begin{figure}[htb!]
\centering
%\subfloat[]{\includegraphics[height=1.8in, width= 1.5in]{Results/ber_ring_ratio_parallel.eps}
\subfloat[]{\includegraphics[ width= 3.2in]{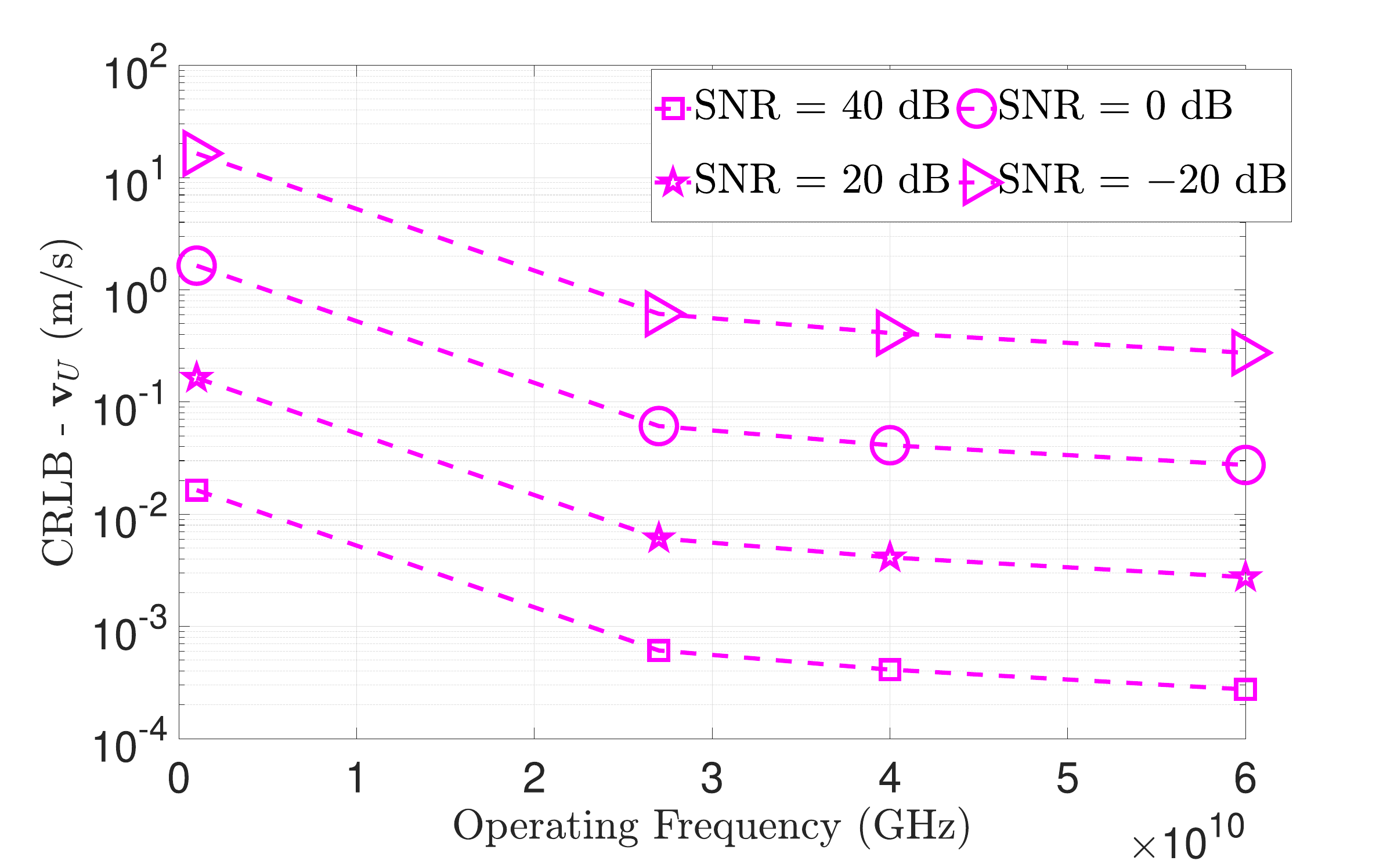}
\label{fig:Results/9D_VU_fc/_NB_3_N_K_3_N_U_8_delta_t_index_5_fcIndex___SNRIndex__}}
\hfil
\subfloat[]{\includegraphics[ width= 3.2in]{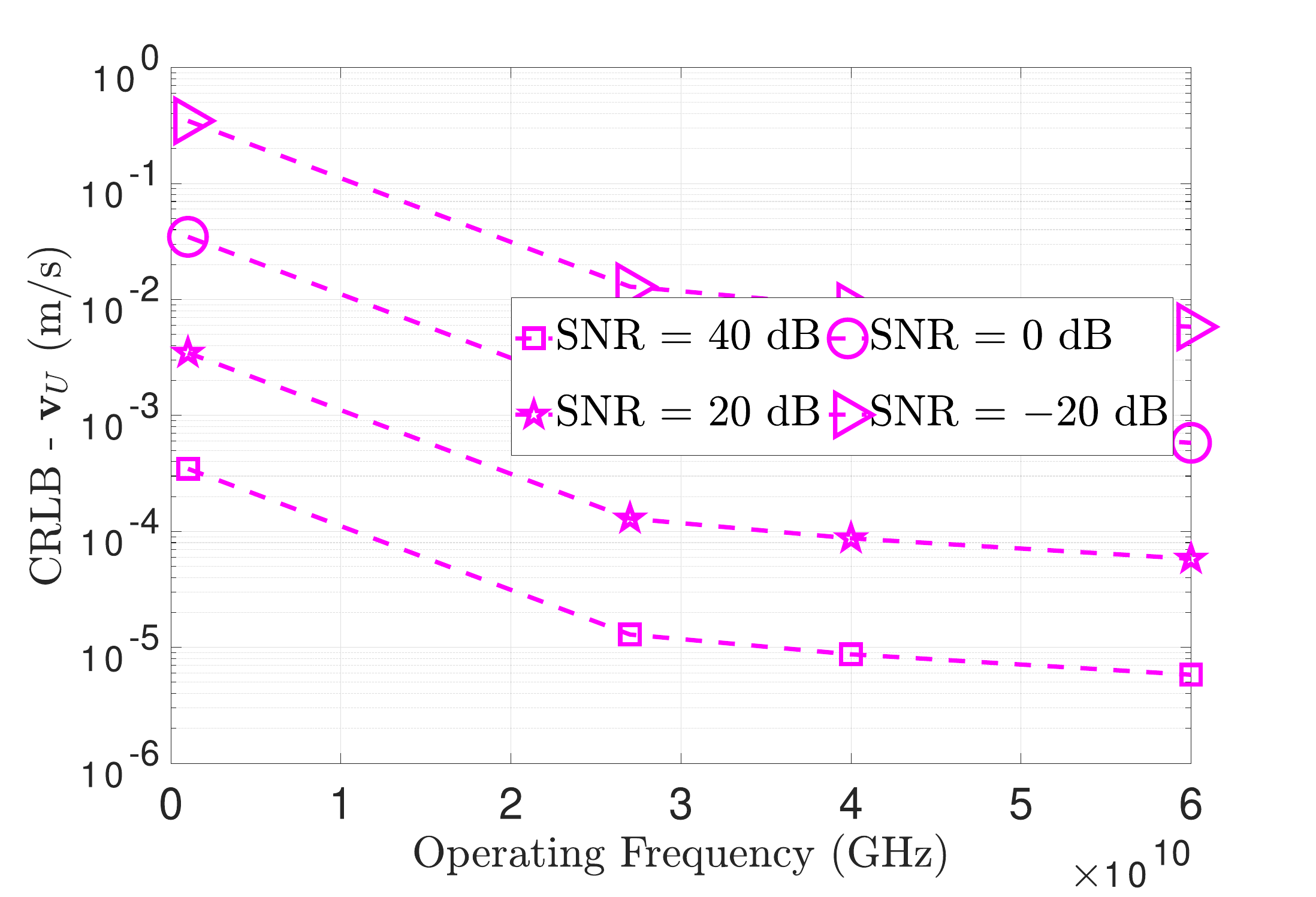}
\label{fig:Results/9D_VU_fc/_NB_3_N_K_3_N_U_8_delta_t_index_8_fcIndex___SNRIndex__}}
\caption{CRLB for $\bm{v}_{U}$ in the $9$D localization scenario with $N_U = 64$: (a) $\Delta_t = 10 \text{ s}$ and (b)  $\Delta_t = 80 \text{ s}$.}
\label{fig:Results/9D_VU_fc/_NB_3_N_K_3_N_U_8_delta_t_index_5_8_fcIndex___SNRIndex__}
\end{figure}

\subsection{Information Available to Find the $3$D Orientation of the receiver}
Here, we investigate the minimal number of time slots, LEO satellites, and receive antennas that produce a positive definite FIM for the $3$D orientation of the receiver, which is defined by (\ref{equ_theorem:FIM_3D_orientation}).

\subsubsection{$N_K = 1$, $N_B = 2$, \textit{and} $N_U > 1$}
Irrespective of the presence or absence of both time and frequency offsets, the $3$D orientation of the receiver can be estimated through the multiple TOA measurements received across the receive antennas from both LEO satellites.

\subsubsection{$N_K = 2$, $N_B = 1$, \textit{and} $N_U > 1$}
Irrespective of the presence or absence of both time and frequency offsets, the $3$D orientation of the receiver can be estimated through the multiple TOA measurements received across the receive antennas during both time slots from the LEO satellite.

\begin{table*}[!t]
%% increase table row spacing, adjust to taste
\renewcommand{\arraystretch}{1.3}
% if using array.sty, it might be a good idea to tweak the value of
% \extrarowheight as needed to properly center the text within the cells
\caption{Scenarios that allow for $\bm{p}_{U}$ estimation with ideal synchronization (No offset) and a lack of time or frequency synchronization (see Table \ref{Legend_Estimation_Possibilities_for_PU_in_the_3D_Localization} for legend)}
\label{Estimation_Possibilities_for_PU_in_the_3D_Localization}
\centering
%% Some packages, such as MDW tools, offer better commands for making tables
%% than the plain LaTeX2e tabular which is used here.
\begin{tabular}{|c|c|c|c|c|}
\hline
 & No offset & Time offset & Frequency offset & Both \\
\hline
$N_K = 1$, $N_B = 1$, \textit{and} $N_U > 1$ & a & - & a & - \\
\hline
 $N_K = 1$, $N_B = 2$, \textit{and} $N_U = 1$& b, c & - & - & - \\
\hline
 $N_K = 1$, $N_B = 2$, \textit{and} $N_U > 1$& a, b, c, d, e, f & d, f &  d, e & d \\
\hline
 $N_K = 1$, $N_B = 3$, \textit{and} $N_U = 1$& b, c, g, h & h & g & - \\
\hline
 $N_K = 1$, $N_B = 3$, \textit{and} $N_U > 1$& b, c, d, e, f, g, h & d, f, h & d, e, g & d \\
\hline
 $N_K = 2$, $N_B = 1$, \textit{and} $N_U = 1$& i, j & - & - & - \\
\hline
 $N_K = 2$, $N_B = 1$, \textit{and} $N_U > 1$& i, j, k, l, m & k, m & k, l & k \\
\hline
 $N_K = 3$, $N_B = 1$, \textit{and} $N_U = 1$& i, j, n, o & o & n & - \\
\hline
 $N_K = 3$, $N_B = 1$, \textit{and} $N_U > 1$& i, j, k, l, m, n, o & k, m, o & k, l, n & k \\
\hline
 $N_K = 4$, $N_B = 1$, \textit{and} $N_U = 1$& i, j, n, o, p, q & o, p, q &  n, p, q & p, q \\
\hline
 $N_K = 4$, $N_B = 1$, \textit{and} $N_U > 1$& i, j, k, l, m, n, o, p, q & k, m, o, p, q & k, l, n,  p, q & k, p, q \\
\hline
\end{tabular}
\end{table*}

\begin{table*}[!t]
%% increase table row spacing, adjust to taste
\renewcommand{\arraystretch}{1.3}
% if using array.sty, it might be a good idea to tweak the value of
% \extrarowheight as needed to properly center the text within the cells
\caption{Scenarios that allow for $\bm{\Phi}_{U}$ estimation with ideal synchronization (No offset) and a lack of time or frequency synchronization (see Table \ref{Legend_Estimation_Possibilities_for_PHIU_in_the_3D_Localization} for legend)}
\label{Estimation_Possibilities_for_PHIU_in_the_3D_Localization}
\centering
%% Some packages, such as MDW tools, offer better commands for making tables
%% than the plain LaTeX2e tabular which is used here.
\begin{tabular}{|c|c|c|c|c|}
\hline
 & No offset & Time offset & Frequency offset & Both \\
\hline
$N_K = 1$, $N_B = 2$, \textit{and} $N_U > 1$ & a & a & a & a \\
\hline
 $N_K = 2$, $N_B = 1$, \textit{and} $N_U > 1$& b & b & b & b \\
\hline
\end{tabular}
\end{table*}

\begin{table*}[!t]
%% increase table row spacing, adjust to taste
\renewcommand{\arraystretch}{1.3}
% if using array.sty, it might be a good idea to tweak the value of
% \extrarowheight as needed to properly center the text within the cells
\caption{Scenarios that allow for $\bm{v}_{U}$ estimation with ideal synchronization (No offset) and a lack of time or frequency synchronization (see Tables \ref{Legend_Estimation_Possibilities_for_vU_in_the_3D_Localization} and \ref{Legend_Estimation_Possibilities_for_vU_in_the_3D_Localization_1} for legend)}
\label{Estimation_Possibilities_for_vU_in_the_3D_Localization}
\centering
%% Some packages, such as MDW tools, offer better commands for making tables
%% than the plain LaTeX2e tabular which is used here.
\begin{tabular}{|c|c|c|c|c|}
\hline
 & No offset & Time offset & Frequency offset & Both \\
\hline
$N_K = 1$, $N_B = 3$, \textit{and} $N_U = 1$ & a & a & - & - \\
\hline
$N_K = 2$, $N_B = 2$, \textit{and} $N_U = 1$ & b, c, d, e, f, g & b, c, d, e, f, g & b, c, d & b, c, d \\
\hline
$N_K = 4$, $N_B = 1$, \textit{and} $N_U = 1$ &  h, i, j, k, l, m, n, o & h, i, j, k, l, m, n, o & h, i, j, k, l,  & h, i, j, k, l \\
\hline
\end{tabular}
\end{table*}

\subsection{Information available to find the $3$D velocity of the receiver}
Here, we investigate the minimal number of time slots, LEO satellites, and receive antennas that produce a positive definite FIM for the $3$D velocity of the receiver, which is defined by (\ref{equ_theorem:FIM_3D_velocity}). 

\subsubsection{$N_K = 1$, $N_B = 3$, \textit{and} $N_U = 1$}
Here, the presence of frequency offsets causes the information to be insufficient to find the $3$D velocity of the receiver. Without frequency offsets, there is enough information to find the $3$D velocity of the receiver, and the presence of time offsets does not affect the feasibility of finding the $3$D velocity of the receiver. The delay measurements are not useful in this setup, whether in the presence or absence of time offsets. We can find the $3$D velocity of the receiver using three Doppler measurements from the three LEO satellites.

\subsubsection{$N_K = 2$, $N_B = 2$, \textit{and} $N_U = 1$}
 Without time and frequency offsets, we find the $3$D velocity of the receiver using: i) a combination of the frequency differencing of two Doppler measurements from two distinct time slots from the first LEO satellite,  frequency differencing of two Doppler measurements from two distinct time slots from the second LEO satellite, and time differencing of two delay measurements from two distinct time slots from any of the two satellites, ii) a combination of the time differencing of two delay measurements from two distinct time slots from the first LEO satellite,  time differencing of two delay measurements from two distinct time slots from the second LEO satellite, and frequency differencing of two Doppler measurements from two distinct time slots from any of the two satellites, iii) a combination of the frequency differencing of two Doppler measurements from two distinct time slots from the first LEO satellite,  frequency differencing of two Doppler measurements from two distinct time slots from the second LEO satellite, and two delay  measurements from two distinct time slots from either of the two satellites, iv) a combination of the time differencing of two delay measurements from two distinct time slots from the first LEO satellite,  time differencing of two delay measurements from two distinct time slots from the second LEO satellite, and one of the Doppler measurement from either of the distinct time slots from either of the LEO satellites, v)  a combination of  two delay  measurements from the two distinct time slots from either of the first satellite, two delay  measurements from the two distinct time slots from the second satellite, and one of the four Doppler measurements from either of the distinct time slots from either of the LEO satellites, vi) three of the four Doppler measurements from both of the distinct time slots from both of the LEO satellites. With a time offset, we find the $3$D velocity of the receiver using: 
 i), ii), iii), iv), v), vi) while with frequency offsets,  we find the $3$D velocity of the receiver using i), ii), and iii). With both offsets, we can use i), ii), and iii) to find the $3$D velocity of the receiver.

 \subsubsection{$N_K = 3$, $N_B = 1$, \textit{and} $N_U = 1$}
Without time and frequency offsets, we find the $3$D velocity of the receiver using: i) a combination of the frequency differencing of two Doppler measurements from two distinct time slots from the first LEO satellite, frequency differencing of the other two Doppler measurements from two distinct time slots from the LEO satellite, and time differencing of any two delay measurements from any two distinct time slots from the LEO satellite, ii) a combination of the time differencing of two delay measurements from two distinct time slots from the LEO satellite,  time differencing of  the other two delay measurements from two distinct time slots from the LEO satellite, and frequency differencing of any of the two Doppler measurements from two distinct time slots from the LEO satellite, iii) a combination of the frequency differencing of two Doppler measurements from two distinct time slots from the LEO satellite,  frequency differencing of the other two Doppler measurements from two distinct time slots from the LEO satellite, and two delay  measurements from two distinct time slots from the LEO satellite, iv) a combination of the time differencing of two delay measurements from two distinct time slots from the first LEO satellite,  time differencing of the other two delay measurements from two distinct time slots from the LEO satellite, and one of the Doppler measurement from one of the distinct time slots from the LEO satellite, v)  a combination of  two delay  measurements from the two distinct time slots from the LEO satellite, two delay  measurements from the two distinct time slots from the LEO satellite, and one of the three Doppler measurements from the distinct time slots from the LEO satellite, and vi) three Doppler measurements from the distinct time slots from the LEO satellite.  With a time offset, we find the $3$D velocity of the receiver using: 
 i), ii), iii), iv), v), vi) while with frequency offsets,  we find the $3$D velocity of the receiver using i), ii), and iii). With both offsets, we can use i), ii), and iii) to find the $3$D velocity of the receiver.

\subsubsection{$N_K = 4$, $N_B = 1$, \textit{and} $N_U = 1$}
Without time and frequency offsets, we find the $3$D velocity of the receiver using: i) four TOA measurements obtained during four different time slots from a single LEO satellite with one of the TOA measurements serving as a reference measurement for time differencing, ii) four Doppler measurements obtained during four different time slots from a single LEO satellite with one of the Doppler measurements serving as a reference measurement for frequency differencing, iii)
a combination of the frequency differencing of two Doppler measurements from two distinct time slots from the first LEO satellite,  frequency differencing of the other two Doppler measurements from two distinct time slots from the LEO satellite, and time differencing of any two delay measurements from any two distinct time slots from the LEO satellite, iv) a combination of the time differencing of two delay measurements from two distinct time slots from the LEO satellite,  time differencing of  the other two delay measurements from two distinct time slots from the LEO satellite, and frequency differencing of any of the two Doppler measurements from two distinct time slots from the LEO satellite, v) a combination of the frequency differencing of two Doppler measurements from two distinct time slots from the LEO satellite,  frequency differencing of the other two Doppler measurements from two distinct time slots from the LEO satellite, and two delay  measurements from two distinct time slots from the LEO satellite, vi) a combination of the time differencing of two delay measurements from two distinct time slots from the first LEO satellite,  time differencing of the other two delay measurements from two distinct time slots from the LEO satellite, and one of the Doppler measurement from one of the distinct time slots from the LEO satellite, vii)  a combination of  two delay  measurements from the two distinct time slots from the LEO satellite, two delay  measurements from the two distinct time slots from the LEO satellite, and one of the three Doppler measurements from the distinct time slots from the LEO satellite, and viii) three Doppler measurements from the distinct time slots from the LEO satellite. With time offset, we find the $3$D velocity of the receiver using: 
 i), ii), iii), iv), v), vi), vii), and viii) while with frequency offsets,  we find the $3$D velocity of the receiver using i), ii), iii), iv), and v). With both offsets, we can use i), ii), iii), iv), and v) to find the $3$D velocity of the receiver.

\subsection{Information available to find the $3$D position  and $3$D velocity of the receiver when the $3$D orientation is known }
 \subsubsection{$N_K = 1$, $N_B = 3$, \textit{and} $N_U = 1$}
The necessary conditions in Corollary (\ref{corollary:FIM_6D_3D_position_3D_position_3D_velocity_1}) and Corollary (\ref{corollary:FIM_6D_3D_position_3D_position_3D_velocity_2}) are met. Hence, the possibility of estimating only the $3$D position exists. Simulation results indicate that the FIM in this case of $3$D position estimation, $\mathbf{J}_{ \bm{\bm{y}}; \bm{p}_{U}}^{\mathrm{ee}}$ is positive definite when there is no time or frequency offset. Hence, the $3$D position can be estimated when there is no time or frequency offset. With either a time or a frequency offset, we can not estimate the $3$D position. From Theorem \ref{theorem:FIM_6D_3D_joint_3D_position_3D_velocity}, $\mathbf{J}_{ \bm{\bm{y}}; \bm{v}_{U}}^{\mathrm{ee}}$  also has to be positive definite and $\bm{v}_{U}$ can be estimated when there is no time or frequency offset. Also, $\bm{p}_{U}$ and $\bm{v}_{U}$ can be jointly estimated in this case with no time or frequency offset. While the estimation of $\bm{p}_{U}$ or $\bm{v}_{U}$ or both is feasible with no offsets, simulation results indicate that the errors resulting from estimating $\bm{v}_{U}$ is too large. Hence, only $\bm{p}_{U}$ should be estimated.

\begin{table*}[!t]
%% increase table row spacing, adjust to taste
\renewcommand{\arraystretch}{1.3}
% if using array.sty, it might be a good idea to tweak the value of
% \extrarowheight as needed to properly center the text within the cells
\caption{Information available for $6$D localization when the $3$D orientation is known - $\tau$ indicates that delay can be used to estimate the corresponding location parameters, $\nu$ indicates that Doppler can be used to estimate the corresponding location parameters, b indicates that combinations of the delay and Doppler can be used to estimate the corresponding location parameters.}
\label{Information_available_for_6_D_localization_when_the_3_D_orientation_is_known}
\centering
%% Some packages, such as MDW tools, offer better commands for making tables
%% than the plain LaTeX2e tabular which is used here.
\begin{tabular}{|c|c|c|c|c|}
\hline
 & No offset & Time offset & Frequency offset & Both \\
\hline
$N_K = 1$, $N_B = 3$, \textit{and} $N_U = 1$ & $\bm{p}_{U}$ - B  & - & - & - \\
\hline
$N_K = 1$, $N_B = 6$, \textit{and} $N_U = 1$ & $\bm{p}_{U}$, $\bm{v}_{U}$ - $\nu$, B   & $\bm{p}_{U}$, $\bm{v}_{U}$ - $\nu$, B  & - & - \\
\hline
$N_K = 4$, $N_B = 1$, \textit{and} $N_U = 1$ & $\bm{p}_{U}$, $\bm{v}_{U}$ - B & $\bm{p}_{U}$, $\bm{v}_{U}$ - B & $\bm{p}_{U}$, $\bm{v}_{U}$ - B & $\bm{p}_{U}$, $\bm{v}_{U}$ - B \\
\hline
$N_K = 3$, $N_B = 2$, \textit{and} $N_U = 1$ & $\bm{p}_{U}$, $\bm{v}_{U}$ - $\tau, \nu$, B & $\bm{p}_{U}$, $\bm{v}_{U}$ - $\nu$, B & $\bm{p}_{U}$, $\bm{v}_{U}$ - $\tau$, B & $\bm{p}_{U}$, $\bm{v}_{U}$ - B \\
\hline
$N_K = 2$, $N_B = 3$, \textit{and} $N_U = 1$ & $\bm{p}_{U}$, $\bm{v}_{U}$ - $\tau, \nu$, B & $\bm{p}_{U}$, $\bm{v}_{U}$ - $\nu$, B & $\bm{p}_{U}$, $\bm{v}_{U}$ - $\tau$, B & $\bm{p}_{U}$, $\bm{v}_{U}$ - B \\
\hline
$N_K = 3$, $N_B = 3$, \textit{and} $N_U = 1$ & $\bm{p}_{U}$, $\bm{v}_{U}$ - $\tau, \nu$, B & $\bm{p}_{U}$, $\bm{v}_{U}$ - $\tau, \nu$, B  & $\bm{p}_{U}$, $\bm{v}_{U}$ - $\tau, \nu$, B & $\bm{p}_{U}$ , $\bm{v}_{U}$ - $\tau, \nu$, B \\
\hline
\end{tabular}
\end{table*}

 \subsubsection{$N_K = 1$, $N_B = 6$, \textit{and} $N_U = 1$}
The necessary conditions in Corollary (\ref{corollary:FIM_6D_3D_position_3D_position_3D_velocity_1}) and Corollary (\ref{corollary:FIM_6D_3D_position_3D_position_3D_velocity_2}) are met. Hence, the possibility of estimating only the $3$D position exists. Simulation results indicate that the FIM in this case of $3$D position estimation, $\mathbf{J}_{ \bm{\bm{y}}; \bm{p}_{U}}^{\mathrm{ee}}$ is positive definite except when there is a frequency offset. From Theorem \ref{theorem:FIM_6D_3D_joint_3D_position_3D_velocity}, $\mathbf{J}_{ \bm{\bm{y}}; \bm{v}_{U}}^{\mathrm{ee}}$  also has to be positive definite and $\bm{v}_{U}$ can be estimated when there is no frequency offset. FInally, $\bm{p}_{U}$ and $\bm{v}_{U}$ can be jointly estimated in this case with no frequency offset. Simulation results show that it is possible to use only the Doppler for estimating $\bm{p}_{U}$, $\bm{v}_{U}$, or both when there is no frequency offset.

\begin{figure}[htb!]
\centering
%\subfloat[]{\includegraphics[height=1.8in, width= 1.5in]{Results/ber_ring_ratio_parallel.eps}
\subfloat[]{\includegraphics[ width= 3.2in]{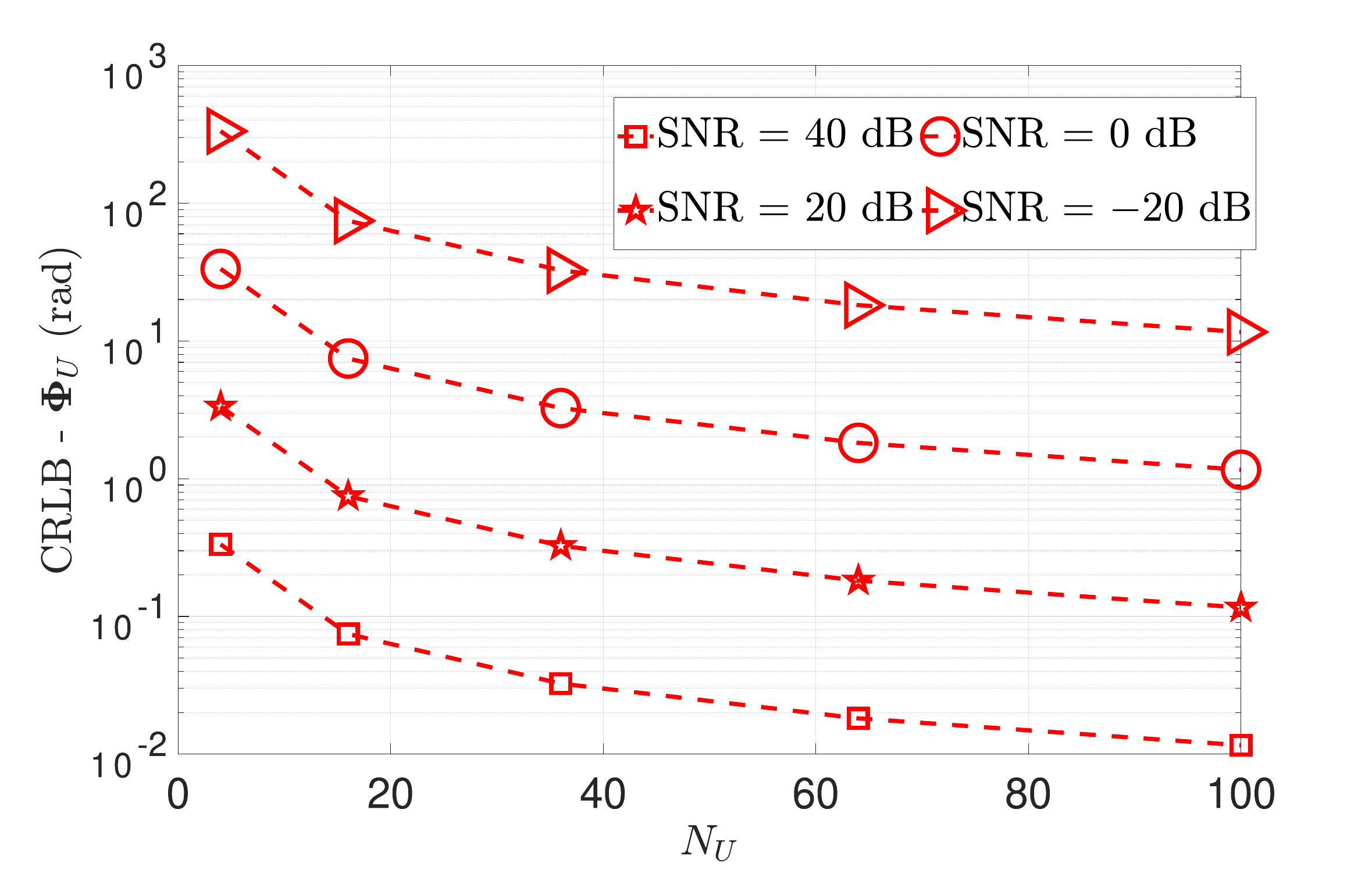}
\label{Results/9D_PHIU_NU/_NB_3_N_K_3_N_U___delta_t_index_5_fcIndex_1_SNRIndex__}}
\hfil
\subfloat[]{\includegraphics[ width= 3.2in]{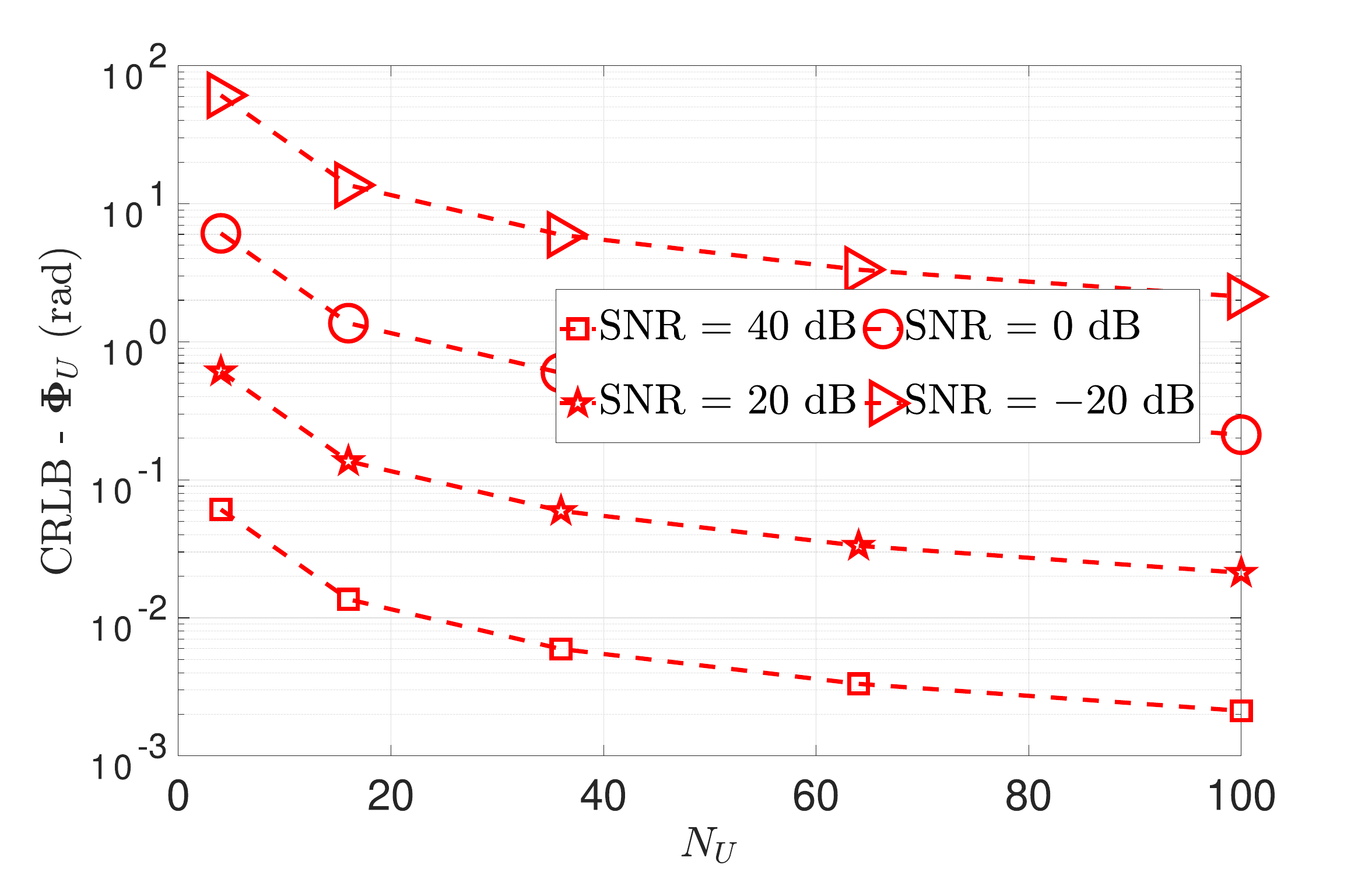}
\label{Results/9D_PHIU_NU/_NB_3_N_K_3_N_U___delta_t_index_8_fcIndex_1_SNRIndex__}}
\caption{CRLB for $\bm{\Phi}_{U}$ in the $9$D localization scenario with $f_c = 1 \text{ GHz}$: (a) $\Delta_t = 10 \text{ s}$ and (b) $\Delta_t = 80 \text{ s}$.}
\label{fig:Results/9D_PHIU_NU/_NB_3_N_K_3_N_U___delta_t_index_5_8_fcIndex_1_SNRIndex__}
\end{figure}

\begin{figure}[htb!]
\centering
%\subfloat[]{\includegraphics[height=1.8in, width= 1.5in]{Results/ber_ring_ratio_parallel.eps}
\subfloat[]{\includegraphics[ width= 3.2in]{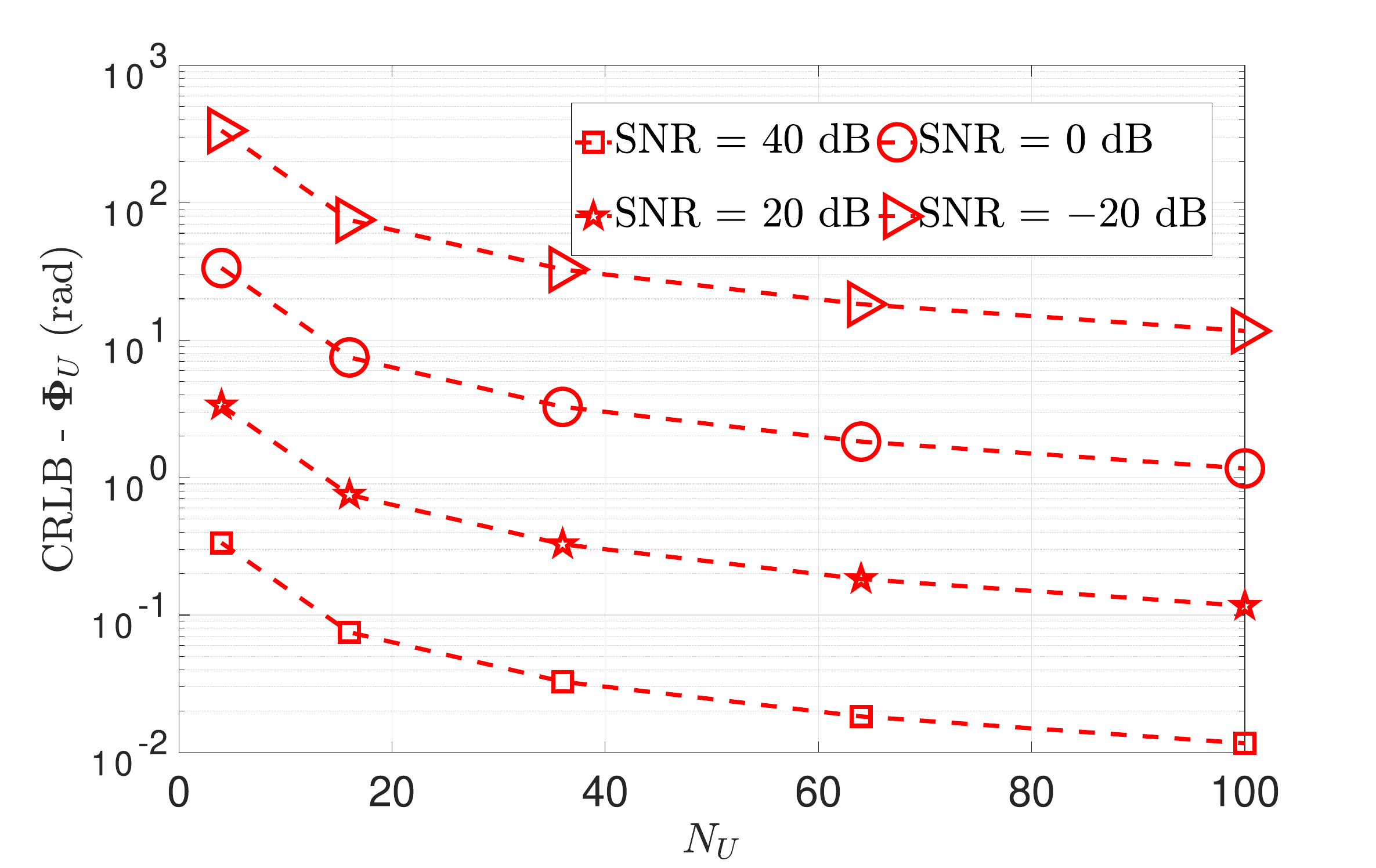}
\label{Results/9D_PHIU_NU/_NB_3_N_K_3_N_U___delta_t_index_5_fcIndex_4_SNRIndex__}}
\hfil
\subfloat[]{\includegraphics[ width= 3.2in]{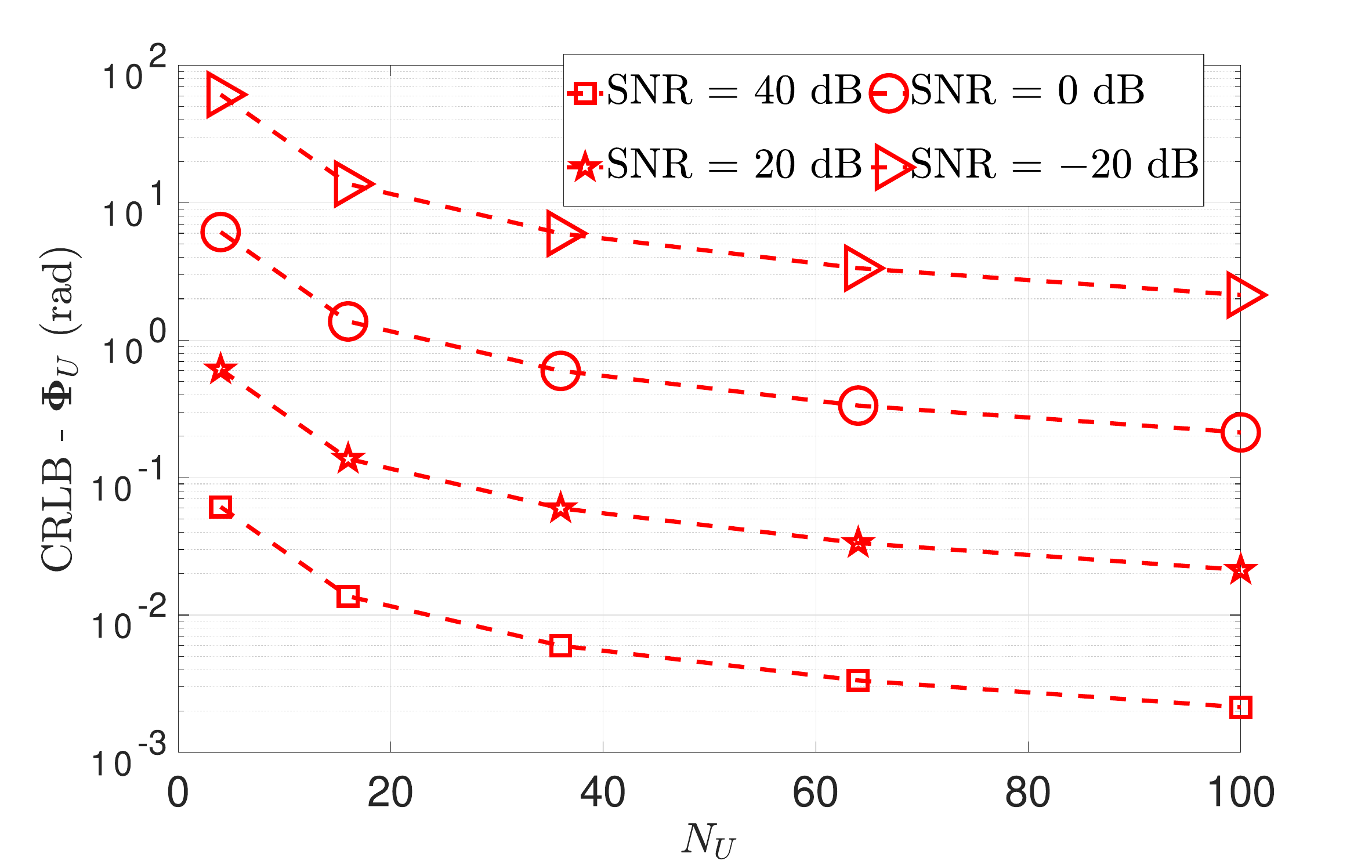}
\label{Results/9D_PHIU_NU/_NB_3_N_K_3_N_U___delta_t_index_8_fcIndex_4_SNRIndex__}}
\caption{CRLB for $\bm{\Phi}_{U}$ in the $9$D localization scenario with $f_c = 1 \text{ GHz}$: (a) $\Delta_t = 10 \text{ s}$ and (b) $\Delta_t = 80 \text{ s}$.}
\label{fig:Results/9D_PHIU_NU/_NB_3_N_K_3_N_U___delta_t_index_5_8_fcIndex_4_SNRIndex__}
\end{figure}

\begin{figure}[htb!]
\centering
%\subfloat[]{\includegraphics[height=1.8in, width= 1.5in]{Results/ber_ring_ratio_parallel.eps}
\subfloat[]{\includegraphics[ width= 3.2in]{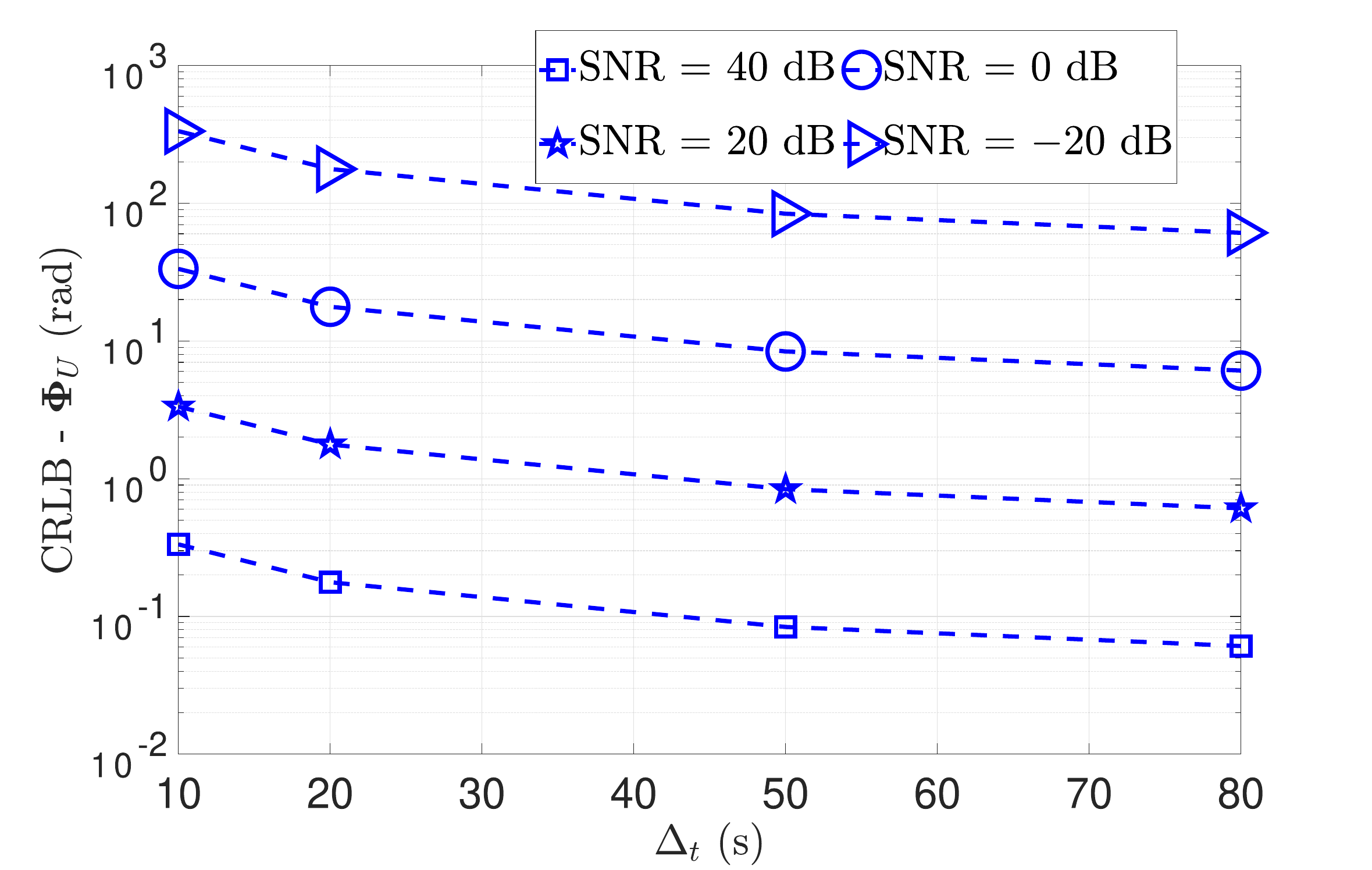}
\label{fig:Results/9D_PHIU_Time/_NB_3_N_K_3_N_U_2_delta_t_index___fcIndex_1_SNRIndex__}}
\hfil
\subfloat[]{\includegraphics[ width= 3.2in]{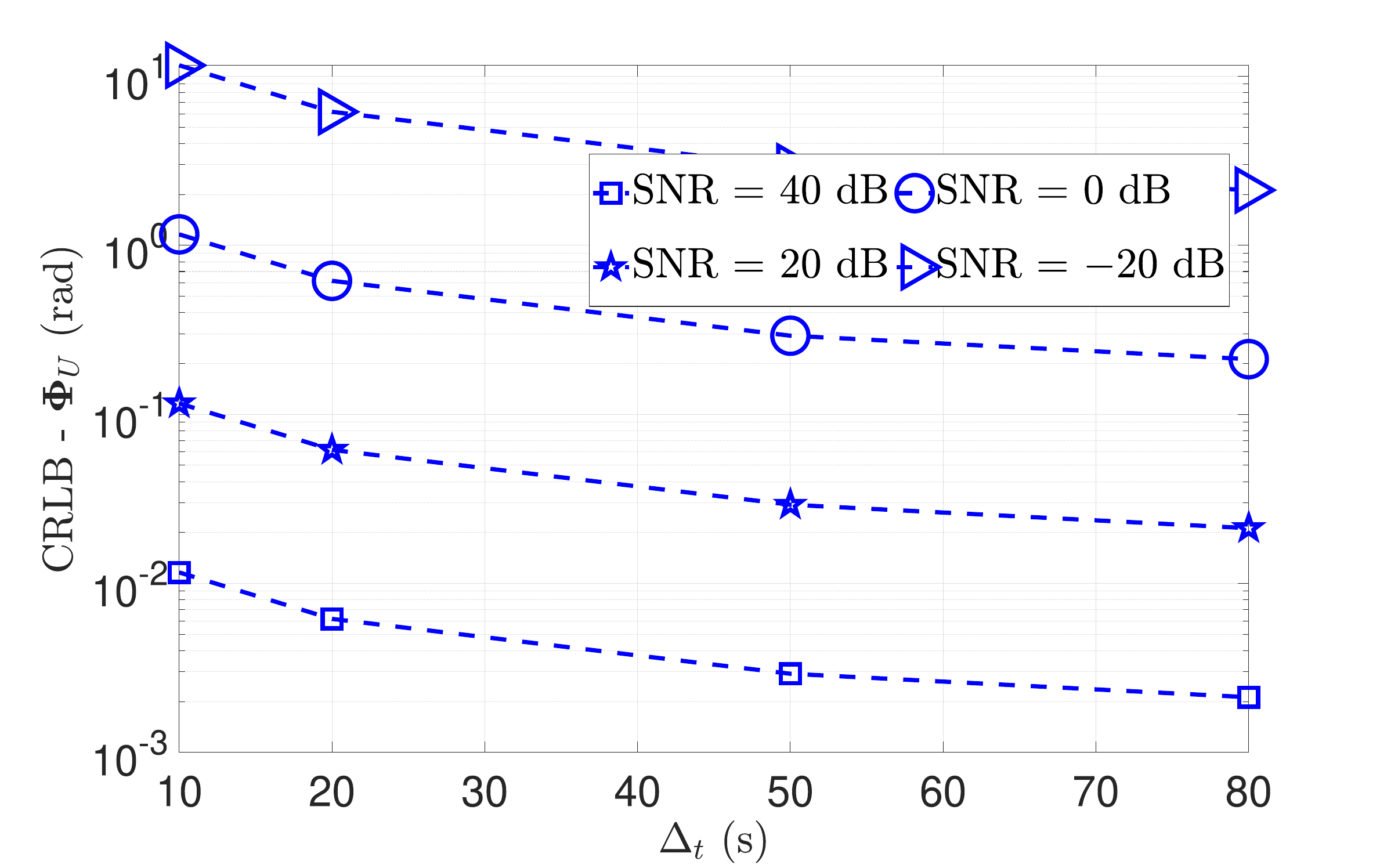}
\label{fig:Results/9D_PHIU_Time/_NB_3_N_K_3_N_U_10_delta_t_index___fcIndex_1_SNRIndex__}}
\caption{CRLB for $\bm{\Phi}_{U}$ in the $9$D localization scenario with $f_c = 1 \text{ GHz}$: (a) $N_U = 4$  and (b) $N_U = 100$.}
\label{fig:Results/9D_PHIU_Time/_NB_3_N_K_3_N_U_2_10_delta_t_index___fcIndex_1_SNRIndex__}
\end{figure}

\begin{figure}[htb!]
\centering
%\subfloat[]{\includegraphics[height=1.8in, width= 1.5in]{Results/ber_ring_ratio_parallel.eps}
\subfloat[]{\includegraphics[ width= 3.2in]{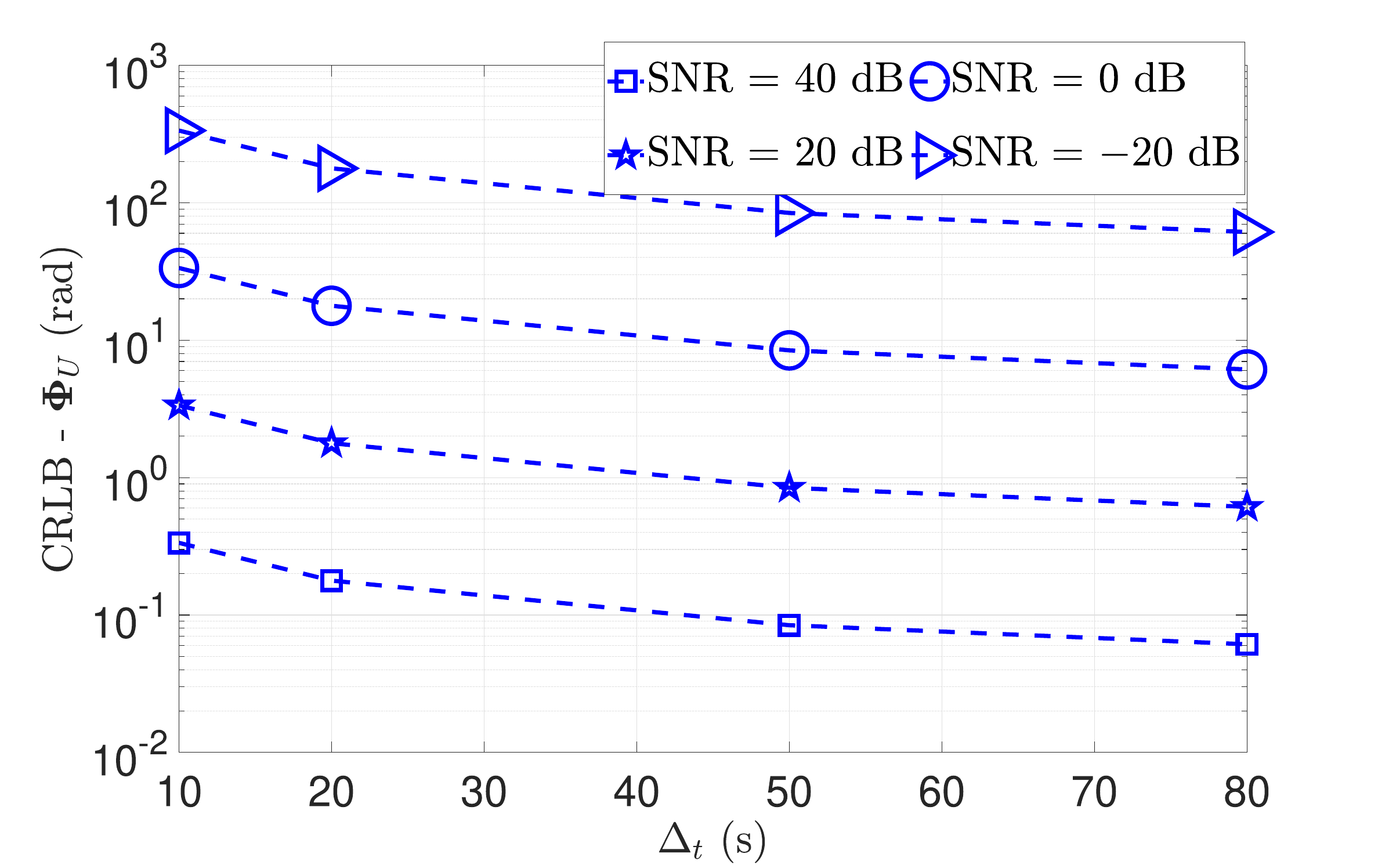}
\label{fig:Results/9D_PHIU_Time/_NB_3_N_K_3_N_U_2_delta_t_index___fcIndex_4_SNRIndex__}}
\hfil
\subfloat[]{\includegraphics[ width= 3.2in]{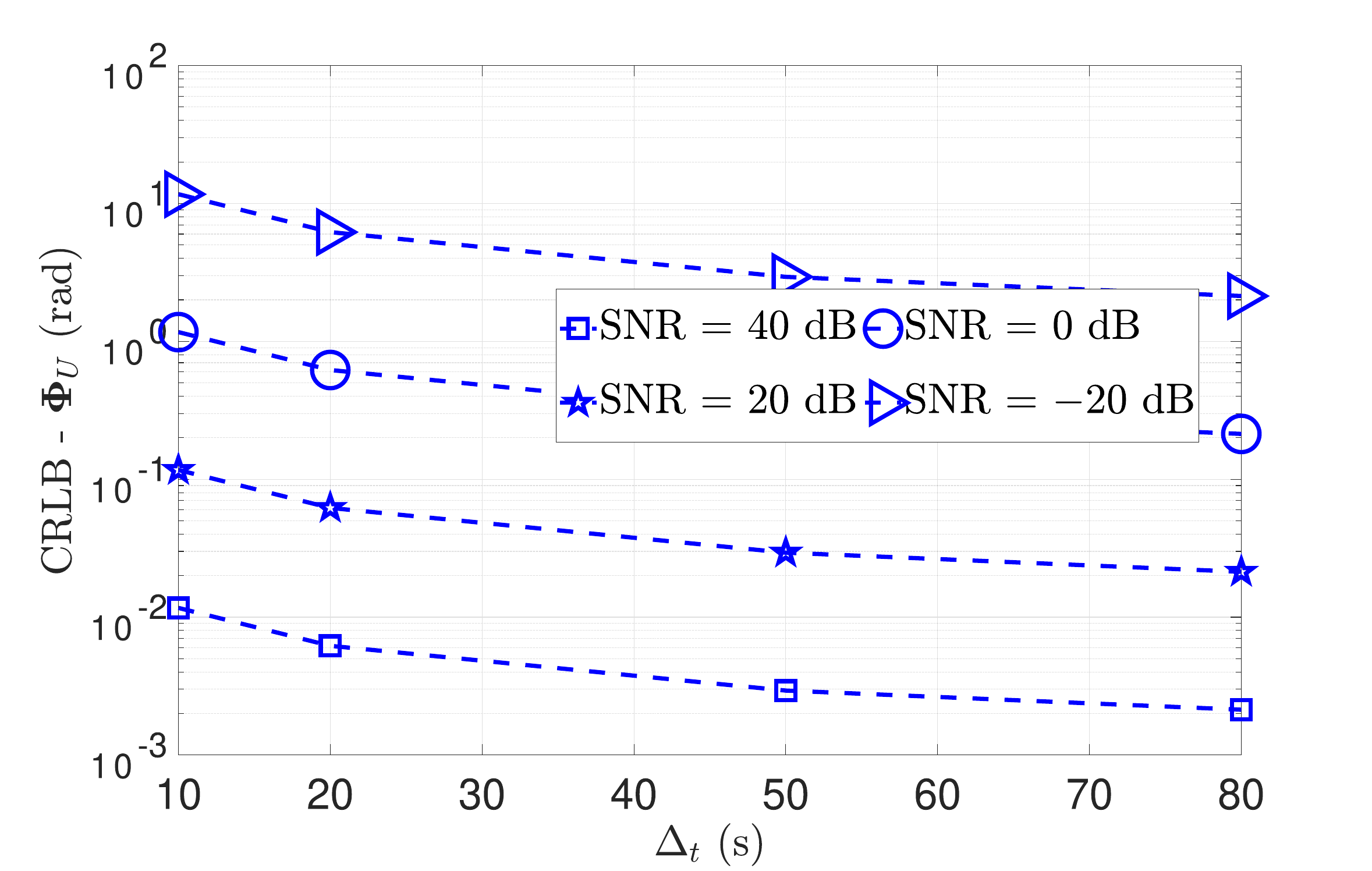}
\label{fig:Results/9D_PHIU_Time/_NB_3_N_K_3_N_U_10_delta_t_index___fcIndex_4_SNRIndex__}}
\caption{CRLB for $\bm{\Phi}_{U}$ in the $9$D localization scenario with $f_c = 60 \text{ GHz}$: (a) $N_U = 4$  and (b) $N_U = 100$.}
\label{fig:Results/9D_PHIU_Time/_NB_3_N_K_3_N_U_2_10_delta_t_index___fcIndex_4_SNRIndex__}
\end{figure}

\begin{figure}[htb!]
\centering
%\subfloat[]{\includegraphics[height=1.8in, width= 1.5in]{Results/ber_ring_ratio_parallel.eps}
\subfloat[]{\includegraphics[ width= 3.2in]{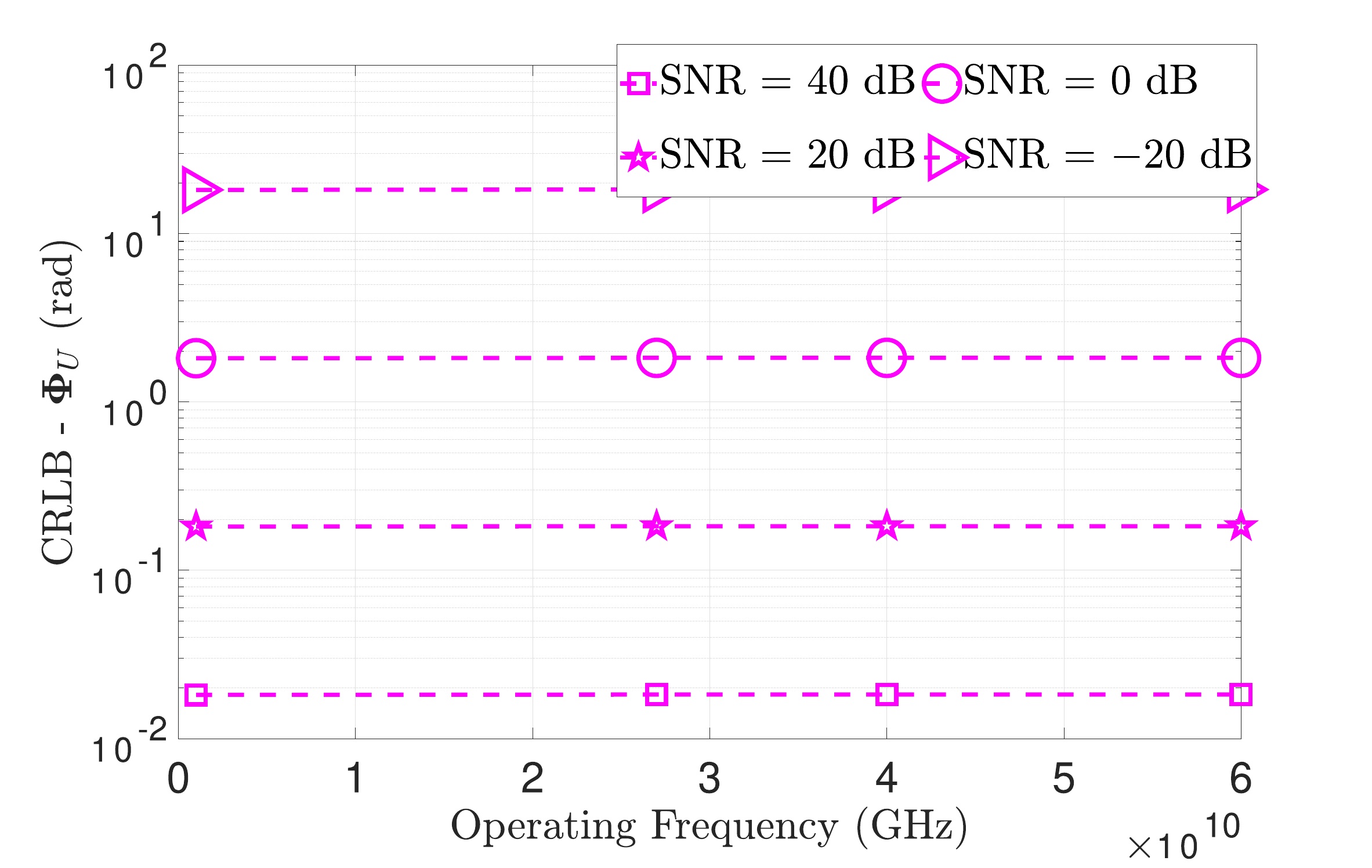}
\label{fig:Results/9D_PHIU_fc/_NB_3_N_K_3_N_U_8_delta_t_index_5_fcIndex___SNRIndex__}}
\hfil
\subfloat[]{\includegraphics[ width= 3.2in]{Results/9D_PHIU_fc/_NB_3_N_K_3_N_U_8_delta_t_index_5_fcIndex___SNRIndex__.pdf}
\label{fig:Results/9D_PHIU_fc/_NB_3_N_K_3_N_U_8_delta_t_index_8_fcIndex___SNRIndex__}}
\caption{CRLB for $\bm{\Phi}_{U}$ in the $9$D localization scenario with $N_U = 64$: (a) $\Delta_t = 10 \text{ s}$ and (b)  $\Delta_t = 80 \text{ s}$.}
\label{fig:Results/9D_PHIU_fc/_NB_3_N_K_3_N_U_8_delta_t_index_5_8_fcIndex___SNRIndex__}
\end{figure}

 \subsubsection{$N_K = 4$, $N_B = 1$, \textit{and} $N_U = 1$}
The necessary conditions in Corollary (\ref{corollary:FIM_6D_3D_position_3D_position_3D_velocity_1}) and Corollary (\ref{corollary:FIM_6D_3D_position_3D_position_3D_velocity_2}) are met. Hence, the possibility of estimating only the $3$D position exists. Simulation results indicate that the FIM in this case of $3$D position estimation, $\mathbf{J}_{ \bm{\bm{y}}; \bm{p}_{U}}^{\mathrm{ee}}$ is positive definite even when there is both a time and frequency offset. From Theorem \ref{theorem:FIM_6D_3D_joint_3D_position_3D_velocity}, $\mathbf{J}_{ \bm{\bm{y}}; \bm{v}_{U}}^{\mathrm{ee}}$  also has to be positive definite and $\bm{v}_{U}$ can be estimated when there is both a time and a frequency offset. Also, $\bm{p}_{U}$ and $\bm{v}_{U}$ can be jointly estimated with both offsets.

 \subsubsection{$N_K = 3$, $N_B = 2$, \textit{and} $N_U = 1$ or $N_K = 2$, $N_B = 3$, \textit{and} $N_U = 1$}
The necessary conditions in Corollary (\ref{corollary:FIM_6D_3D_position_3D_position_3D_velocity_1}) and Corollary (\ref{corollary:FIM_6D_3D_position_3D_position_3D_velocity_2}) are met even in the presence of a time or frequency offset, or both. Hence,  the possibility of estimating only the $3$D position exists even in the presence of a time or frequency offset or both.  Simulation results indicate that the FIM in this case of $3$D position estimation, $\mathbf{J}_{ \bm{\bm{y}}; \bm{p}_{U}}^{\mathrm{ee}}$ is positive definite in the presence of both offsets. From Theorem \ref{theorem:FIM_6D_3D_joint_3D_position_3D_velocity}, $\mathbf{J}_{ \bm{\bm{y}}; \bm{v}_{U}}^{\mathrm{ee}}$  also has to be positive definite and $\bm{v}_{U}$ can be estimated when there is both a time and a frequency offset. Also, $\bm{p}_{U}$ and $\bm{v}_{U}$ can be jointly estimated with both offsets.

It is possible to use only the delays for estimating $\bm{p}_{U}$, $\bm{v}_{U}$, or both when there is no time offset. Also, it is possible to use only the Dopplers for estimating $\bm{p}_{U}$, $\bm{v}_{U}$, or both when there is no frequency offset. Using combinations of delays and Dopplers are useful for estimating $\bm{p}_{U}$, $\bm{v}_{U}$, or both when there is both a time and a frequency offset.

 \subsubsection{$N_K = 3$, $N_B = 3$, \textit{and} $N_U = 1$}
 In this case, the delays are useful for estimating $\bm{p}_{U}$, $\bm{v}_{U}$, or both, even when there is both a time and a frequency offset. Similarly, in this case, the Dopplers are useful for estimating $\bm{p}_{U}$, $\bm{v}_{U}$, or both, even when there is both a time and a frequency offset. Finally, combinations of delays and Dopplers are useful for estimating $\bm{p}_{U}$, $\bm{v}_{U}$, or both when there is both a time and a frequency offset.

\begin{table*}[!t]
%% increase table row spacing, adjust to taste
\renewcommand{\arraystretch}{1.3}
% if using array.sty, it might be a good idea to tweak the value of
% \extrarowheight as needed to properly center the text within the cells
\caption{Information available for $6$D localization when the $3$D velocity is known - $\tau$ indicates that delay can be used to estimate the corresponding location parameters, and b indicates that combinations of the delay and Doppler can be used to estimate the corresponding location parameters.}
\label{Information_available_for_6_D_localization_when_the_3_D_velocity_is_known}
\centering
%% Some packages, such as MDW tools, offer better commands for making tables
%% than the plain LaTeX2e tabular which is used here.
\begin{tabular}{|c|c|c|c|c|}
\hline
 & No offset & Time offset & Frequency offset & Both \\
\hline
$N_K = 1$, $N_B = 2$, \textit{and} $N_U > 1$ & $\bm{p}_{U}$, $\Phi_{U}$ - $\tau$, B  & - & $\bm{p}_{U}$, $\Phi_{U}$ - $\tau$, B & - \\
\hline
$N_K = 2$, $N_B = 1$, \textit{and} $N_U > 1$ & $\bm{p}_{U}$, $\Phi_{U}$ - $\tau$, B  & - & $\bm{p}_{U}$, $\Phi_{U}$ - $\tau$, B & - \\
\hline
$N_K = 2$, $N_B = 2$, \textit{and} $N_U > 1$ & $\bm{p}_{U}$, $\bm{\Phi}_{U}$ - $\tau$, B & $\bm{p}_{U}$, $\bm{\Phi}_{U}$ - $\tau$, B & $\bm{p}_{U}$, $\bm{\Phi}_{U}$ - $\tau$, B & $\bm{p}_{U}$, $\bm{\Phi}_{U}$ - $\tau$, B \\
\hline
$N_K = 3$, $N_B = 2$, \textit{and} $N_U > 1$ & $\bm{p}_{U}$, $\bm{\Phi}_{U}$ - $\tau$, B & $\bm{p}_{U}$, $\bm{\Phi}_{U}$ - $\tau$, B & $\bm{p}_{U}$, $\bm{\Phi}_{U}$ - $\tau$, B & $\bm{p}_{U}$, $\bm{\Phi}_{U}$ - $\tau$, B \\
\hline
$N_K = 2$, $N_B = 3$, \textit{and} $N_U > 1$ & $\bm{p}_{U}$, $\bm{\Phi}_{U}$ - $\tau$, B & $\bm{p}_{U}$, $\bm{\Phi}_{U}$ - $\tau$, B & $\bm{p}_{U}$, $\bm{\Phi}_{U}$ - $\tau$, B & $\bm{p}_{U}$, $\bm{\Phi}_{U}$ - $\tau$, B \\
\hline
\end{tabular}
\end{table*}

\begin{table*}[!t]
%% increase table row spacing, adjust to taste
\renewcommand{\arraystretch}{1.3}
% if using array.sty, it might be a good idea to tweak the value of
% \extrarowheight as needed to properly center the text within the cells
\caption{Information available for $6$D localization when the $3$D position is known - $\tau$ indicates that delay can be used to estimate the corresponding location parameters, and b indicates that combinations of the delay and Doppler can be used to estimate the corresponding location parameters.}
\label{Information_available_for_6_D_localization_when_the_3_D_position_is_known}
\centering
%% Some packages, such as MDW tools, offer better commands for making tables
%% than the plain LaTeX2e tabular which is used here.
\begin{tabular}{|c|c|c|c|c|}
\hline
 & No offset & Time offset & Frequency offset & Both \\
\hline
$N_K = 3$, $N_B = 2$, \textit{and} $N_U > 1$ & $\bm{v}_{U}$, $\bm{\Phi}_{U}$ - $\tau$, B & $\bm{v}_{U}$, $\bm{\Phi}_{U}$ - $\tau$, B & $\bm{v}_{U}$, $\bm{\Phi}_{U}$ - $\tau$, B & $\bm{v}_{U}$, $\bm{\Phi}_{U}$ - $\tau$, B \\
\hline
$N_K = 2$, $N_B = 3$, \textit{and} $N_U > 1$ & $\bm{v}_{U}$, $\bm{\Phi}_{U}$ - $\tau$, B & $\bm{v}_{U}$, $\bm{\Phi}_{U}$ - $\tau$, B & $\bm{v}_{U}$, $\bm{\Phi}_{U}$ - $\tau$, B & $\bm{v}_{U}$, $\bm{\Phi}_{U}$ - $\tau$, B \\
\hline
\end{tabular}
\end{table*}

\subsection{Information available to find the $3$D position and $3$D orientation of the receiver when the $3$D velocity is known}

 \subsubsection{$N_K = 1$, $N_B = 2$, \textit{and} $N_U > 1$ or $N_K = 2$, $N_B = 1$, \textit{and} $N_U > 1$}
The presence of a time or frequency offset hinders the estimation of either the $3$D position, $3$D orientation, or both. When there is neither a time nor frequency offset, the necessary conditions in Corollary (\ref{corollary:FIM_6D_3D_orientation_3D_position_3D_orientation_1}) and Corollary (\ref{corollary:FIM_6D_3D_orientation_3D_position_3D_orientation_2}) are met. Hence, the possibility of estimating only the $3$D orientation exists. Simulation results indicate that the FIM in this case of $3$D orientation estimation, $\mathbf{J}_{ \bm{\bm{y}}; \bm{\Phi}_{U}}^{\mathrm{ee}}$ is positive definite. Hence, the $3$D orientation can be estimated. From Theorem \ref{theorem:FIM_6D_3D_joint_3D_position_3D_orientation}, $\mathbf{J}_{ \bm{\bm{y}}; \bm{p}_{U}}^{\mathrm{ee}}$  also has to be positive definite and $\bm{p}_{U}$ can be estimated when there is no time or frequency offset. Also, $\bm{p}_{U}$ and $\bm{\Phi}_{U}$ can be jointly estimated in this case with no time or frequency offset.

\begin{table*}[!t]
%% increase table row spacing, adjust to taste
\renewcommand{\arraystretch}{1.3}
% if using array.sty, it might be a good idea to tweak the value of
% \extrarowheight as needed to properly center the text within the cells
\caption{Information available for $9$D localization -  b indicates that combinations of the delay and Doppler can be used to estimate the corresponding location parameters.}
\label{Information_available_for_9_D_localization}
\centering
%% Some packages, such as MDW tools, offer better commands for making tables
%% than the plain LaTeX2e tabular which is used here.
\begin{tabular}{|c|c|c|c|c|}
\hline
 & No offset & Time offset & Frequency offset & Both \\
\hline
$N_K = 3$, $N_B = 3$, \textit{and} $N_U > 1$ & $\bm{p}_{U}$, $\bm{v}_{U}$, $\bm{\Phi}_{U}$ -  B & $\bm{p}_{U}$, $\bm{v}_{U}$, $\bm{\Phi}_{U}$ -  B & $\bm{p}_{U}$, $\bm{v}_{U}$, $\bm{\Phi}_{U}$ -  B  & $\bm{p}_{U}$, $\bm{v}_{U}$, $\bm{\Phi}_{U}$ -  B  \\
\hline
\end{tabular}
\end{table*}

 \subsubsection{$N_K = 2$, $N_B = 2$, \textit{and} $N_U > 1$}
 The presence of a time offset hinders the estimation of either the $3$D position, $3$D orientation, or both. When there is no time offset (even in the presence of frequency offsets), the necessary conditions in Corollary (\ref{corollary:FIM_6D_3D_orientation_3D_position_3D_orientation_1}) and Corollary (\ref{corollary:FIM_6D_3D_orientation_3D_position_3D_orientation_2}) are met. Hence, the possibility of estimating only the $3$D orientation exists. Simulation results indicate that the FIM in this case of $3$D orientation estimation, $\mathbf{J}_{ \bm{\bm{y}}; \bm{\Phi}_{U}}^{\mathrm{ee}}$ is positive definite. Hence, the $3$D orientation can be estimated. From Theorem \ref{theorem:FIM_6D_3D_joint_3D_position_3D_orientation}, $\mathbf{J}_{ \bm{\bm{y}}; \bm{p}_{U}}^{\mathrm{ee}}$  also has to be positive definite and $\bm{p}_{U}$ can be estimated when there is no time offset. Also, $\bm{p}_{U}$ and $\bm{\Phi}_{U}$ can be jointly estimated in this case with no time offset. Note that it is possible to use only the delay measurements.

 \subsubsection{$N_K = 2$, $N_B = 3$, \textit{and} $N_U > 1$ or $N_K = 3$, $N_B = 2$, \textit{and} $N_U > 1$}

The necessary conditions in Corollary (\ref{corollary:FIM_6D_3D_orientation_3D_position_3D_orientation_1}) and Corollary (\ref{corollary:FIM_6D_3D_orientation_3D_position_3D_orientation_2}) are met even in the presence of time or frequency offset, or both. Hence,  the possibility of estimating only the $3$D orientation exists even in the presence of time or frequency offset, or both.  Simulation results indicate that the FIM in this case of $3$D orientation estimation, $\mathbf{J}_{ \bm{\bm{y}}; \bm{\Phi}_{U}}^{\mathrm{ee}}$ is positive definite in the presence of both offsets. From Theorem \ref{theorem:FIM_6D_3D_joint_3D_position_3D_orientation}, $\mathbf{J}_{ \bm{\bm{y}}; \bm{p}_{U}}^{\mathrm{ee}}$  also has to be positive definite and $\bm{p}_{U}$ can be estimated when there is both a time and a frequency offset. Also, $\bm{p}_{U}$ and $\bm{\Phi}_{U}$ can be jointly estimated with both offsets. It is possible to use only the delays for estimating $\bm{p}_{U}$, $\bm{\Phi}_{U}$, or both when there is both a time and a frequency offset.

 \subsection{Information available to find the $3$D velocity and $3$D orientation of the receiver when the $3$D position is known }

  \subsubsection{$N_K = 2$, $N_B = 3$, \textit{and} $N_U > 1$ or $N_K = 3$, $N_B = 2$, \textit{and} $N_U > 1$}

The necessary conditions in Corollary (\ref{corollary:FIM_6D_3D_velocity_3D_velocity_3D_orientation_1}) and Corollary (\ref{corollary:FIM_6D_3D_velocity_3D_velocity_3D_orientation_2}) are met even in the presence of a time or frequency offset, or both. Hence,  the possibility of estimating only the $3$D orientation exists even in the presence of a time or frequency offset, or both.  Simulation results indicate that the FIM in this case of $3$D orientation estimation, $\mathbf{J}_{ \bm{\bm{y}}; \bm{\Phi}_{U}}^{\mathrm{ee}}$ is positive definite in the presence of both offsets. From Theorem \ref{theorem:FIM_6D_3D_joint_3D_position_3D_orientation}, $\mathbf{J}_{ \bm{\bm{y}}; \bm{v}_{U}}^{\mathrm{ee}}$  also has to be positive definite and $\bm{v}_{U}$ can be estimated when there is both a time and a frequency offset. Also, $\bm{v}_{U}$ and $\bm{\Phi}_{U}$ can be jointly estimated with both offsets. It is possible to use only the delays for estimating $\bm{v}_{U}$, $\bm{\Phi}_{U}$, or both when there is both a time and a frequency offset.

 \subsection{Information available to find the $3$D position of the receiver when the $3$D velocity and $3$D orientation are unknown}

  \subsubsection{$N_K = 3$, $N_B = 3$, \textit{and} $N_U > 1$}
Even in the presence of both offsets, there is enough information to estimate: i) $\mathbf{J}_{ \bm{\bm{y}}; \bm{\Phi}_{U}}^{\mathrm{e}}$, ii) $\mathbf{J}_{ \bm{\bm{y}}; \bm{v}_{U}}^{\mathrm{e}}$, and iii) $\mathbf{J}_{ \bm{\bm{y}}; \bm{v}_{U}}^{\mathrm{ee}}$. Hence, the loss in information about $\bm{p}_{U}$ due to the unknown $\bm{\Phi}_{U}$ and $\bm{v}_{U}$ which is specified by  $\mathbf{J}_{ \bm{\bm{y}}; \bm{p}_{U}}^{nu}$ exists. Further, since there is enough information to estimate $\mathbf{J}_{ \bm{\bm{y}}; \bm{p}_{U}}^{\mathrm{e}}$ then  $\mathbf{J}_{ \bm{\bm{y}}; \bm{p}_{U}}^{\mathrm{eee}}$ can also be estimated by Theorem \ref{theorem:FIM_9D_position}. Simulation results verify that there is enough information for $\mathbf{J}_{ \bm{\bm{y}}; \bm{p}_{U}}^{\mathrm{eee}}$ to be positive definite. Hence, $\bm{p}_{U}$ can be estimated in the presence of both offsets.

\subsubsection{Simulation results}
Here, we present simulation results for the CRLB when estimating $\bm{p}_{U}$ with $N_K = 3$, $N_B = 3$, \textit{and} $N_U > 1$. We consider the following SNR values $\{40 \text{ dB}, 20 \text{ dB}, 0 \text{ dB}, -20 \text{ dB}\}$ and different number of receive antennas. In Fig. \ref{Results:9D_PU_NB_3_N_K_3_N_U___delta_t_index_5_8_fcIndex_1_SNRIndex_1_NU}, the operating frequency is $1 \text{ GHz}$ with $\Delta_t = 10 \text{ s} $ in Fig. \ref{fig:Results/9D_PU_NU/_NB_3_N_K_3_N_U___delta_t_index_5_fcIndex_1_SNRIndex_1} and  $\Delta_t = 80 \text{ s} $ in Fig. \ref{fig:Results/9D_PU_NU/_NB_3_N_K_3_N_U___delta_t_index_8_fcIndex_1_SNRIndex_1}. In Fig. \ref{fig:Results/9D_PU_NU/_NB_3_N_K_3_N_U___delta_t_index_5_fcIndex_1_SNRIndex_1}, at an SNR of $-20 \text{ dB}$, the CRLB starts out at about $10 \text{ km}$ $(N_U  = 4)$ and decreases to $8 \text{ km}$ $(N_U  = 100).$ At an SNR of $0 \text{ dB}$, the CRLB starts out at about $2 \text{ km}$ $(N_U  = 4)$ and decreases to $0.5 \text{ km}$ $(N_U  = 100).$ At an SNR of $40 \text{ dB}$, we can achieve m-level accuracy irrespective of the number of receive antennas. Now in Fig. \ref{fig:Results/9D_PU_NU/_NB_3_N_K_3_N_U___delta_t_index_8_fcIndex_1_SNRIndex_1}, the achievable errors are smaller. We can achieve dm-level accuracy irrespective of the number of receive antennas. Also, the errors at the lower SNR values are reduced. For instance, the error at an SNR of $-20 \text{ dB}$ with $N_U = 4$ is only $800 \text{ m}$ as opposed to $10 \text{ km}$ in Fig. \ref{fig:Results/9D_PU_NU/_NB_3_N_K_3_N_U___delta_t_index_5_fcIndex_1_SNRIndex_1}.  This improvement is because of the improvement in the GDOP attained due to the longer duration between transmission time slots. 

In Fig. \ref{Results:9D_PU_NB_3_N_K_3_N_U___delta_t_index_5_8_fcIndex_4_SNRIndex_1_NU} at $f_c = 60 \text{ GHz}$, the error values are much lower. In fact, for similar parameters, the error values range from $0.5 \text{ km}$ to $10^{-3}$ of a m.  In Fig. \ref{fig:Results/9D_PU_NU/_NB_3_N_K_3_N_U___delta_t_index_5_fcIndex_4_SNRIndex_1}, at an SNR of $-20 \text{ dB}$, the CRLB starts out at about $500 \text{ m}$ $(N_U  = 4)$ and decreases to $95 \text{ m}$ $(N_U  = 100).$ At an SNR of $0 \text{ dB}$, the CRLB starts out at about $30 \text{ m}$ $(N_U  = 4)$ and decreases to $9 \text{ m}$ $(N_U  = 100).$ At an SNR of $40 \text{ dB}$, we can achieve m-level accuracy irrespective of the number of receive antennas. Now in Fig. \ref{fig:Results/9D_PU_NU/_NB_3_N_K_3_N_U___delta_t_index_8_fcIndex_4_SNRIndex_1}, the achievable errors are smaller, we can achieve better than cm-level accuracy. Also, the errors at the lower SNR values are reduced. For instance, the error at an SNR of $-20 \text{ dB}$ with $N_U = 4$ is only $10 \text{ m}$ as opposed to $500 \text{ m}$ in Fig. \ref{fig:Results/9D_PU_NU/_NB_3_N_K_3_N_U___delta_t_index_5_fcIndex_4_SNRIndex_1}.  This improvement is because of the improvement in the GDOP attained due to the longer duration between transmission time slots. 

In Figs. \ref{Results:_NB_3_N_K_3_N_U_2_10_delta_t_index___fcIndex_1_SNRIndex_1deltat} and \ref{Results:_NB_3_N_K_3_N_U_2_10_delta_t_index___fcIndex_4_SNRIndex_1deltat}, the improvement in positioning error due to an increase in the length of the time interval between the transmission time slots is more clearly seen. This improvement in positioning error is due to the speed of the LEO satellites. The speed of satellites means that the same satellite can act as multiple anchors in different time slots while still achieving good GDOP. From Fig. \ref{Results:9D_PU_fc_NB_3_N_K_3_N_U_2_delta_t_index_1_3_fcIndex_1_SNRIndex_1_NU}, we notice that increasing the operating frequency causes a substantial reduction in the achievable error in estimating the $\bm{p}_{U}$.

 \subsection{Information available to find the $3$D velocity of the receiver when the $3$D position and $3$D orientation are unknown}
   \subsubsection{$N_K = 3$, $N_B = 3$, \textit{and} $N_U > 1$}

Even in the presence of both offsets, there is enough information to estimate: i) $\mathbf{J}_{ \bm{\bm{y}}; \bm{p}_{U}}^{\mathrm{e}}$, ii) $\mathbf{J}_{ \bm{\bm{y}}; \bm{\Phi}_{U}}^{\mathrm{e}}$, and iii) $\mathbf{J}_{ \bm{\bm{y}}; \bm{\Phi}_{U}}^{\mathrm{ee}}$. Hence, the loss in information about $\bm{v}_{U}$ due to the unknown $\bm{p}_{U}$ and $\bm{\Phi}_{U}$ which is specified by  $\mathbf{J}_{ \bm{\bm{y}}; \bm{v}_{U}}^{nu}$ exists. Further, since there is enough information to estimate $\mathbf{J}_{ \bm{\bm{y}}; \bm{v}_{U}}^{\mathrm{e}}$ then  $\mathbf{J}_{ \bm{\bm{y}}; \bm{v}_{U}}^{\mathrm{eee}}$ can also be estimated by Theorem \ref{theorem:FIM_9D_velocity}. Simulation results verify that there is enough information for $\mathbf{J}_{ \bm{\bm{y}}; \bm{v}_{U}}^{\mathrm{eee}}$ to be positive definite. Hence, $\bm{v}_{U}$ can be estimated in the presence of both offsets.

\subsubsection{Simulation results}
Here, we present simulation results for the CRLB when estimating $\bm{v}_{U}$ with $N_K = 3$, $N_B = 3$, \textit{and} $N_U > 1$. We consider the following SNR values $\{40 \text{ dB}, 20 \text{ dB}, 0 \text{ dB}, -20 \text{ dB}\}$ and different number of receive antennas. In Fig. \ref{fig:Results/9D_VU_NU/_NB_3_N_K_3_N_U___delta_t_index_5_8_fcIndex_1_SNRIndex__}, the operating frequency is $1 \text{ GHz}$ with $\Delta_t = 10 \text{ s} $ in Fig. \ref{fig:Results/9D_VU_NU/_NB_3_N_K_3_N_U___delta_t_index_5_fcIndex_1_SNRIndex__} and  $\Delta_t = 80 \text{ s} $ in Fig. \ref{fig:Results/9D_VU_NU/_NB_3_N_K_3_N_U___delta_t_index_8_fcIndex_1_SNRIndex__}. In Fig. \ref{fig:Results/9D_VU_NU/_NB_3_N_K_3_N_U___delta_t_index_5_fcIndex_1_SNRIndex__}, at an SNR of $-20 \text{ dB}$, the CRLB starts out at about $20 \text{ km/s}$ $(N_U  = 4)$ and decreases to $8 \text{ km/s}$ $(N_U  = 100).$ At an SNR of $0 \text{ dB}$, the CRLB starts out at about $2 \text{ km/s}$ $(N_U  = 4)$ and decreases to $0.6 \text{ km/s}$ $(N_U  = 100).$ It is important to note that at an SNR of $40 \text{ dB}$, we can achieve m/s level accuracy irrespective of the number of receive antennas. Now in Fig. \ref{fig:Results/9D_VU_NU/_NB_3_N_K_3_N_U___delta_t_index_8_fcIndex_1_SNRIndex__}, the achievable errors are smaller. We can also achieve m/s-level accuracy irrespective of the number of receive antennas at high SNRs. However, the errors at the lower SNR values are reduced. For instance, the error at an SNR of $-20 \text{ dB}$ with $N_U = 4$ is only $1 \text{ km/s}$ as opposed to $20 \text{ km/s}$ in Fig. \ref{fig:Results/9D_VU_NU/_NB_3_N_K_3_N_U___delta_t_index_5_fcIndex_1_SNRIndex__}.  This improvement is because of the improvement in the GDOP attained due to the longer duration between transmission time slots. 

In Fig. \ref{fig:Results/9D_VU_NU/_NB_3_N_K_3_N_U___delta_t_index_5_8_fcIndex_4_SNRIndex__} at $f_c = 60 \text{ GHz}$, the error values are much lower. In fact, for similar parameters, the error values range from $1 \text{ km/s}$ to $10^{-1} \text{ m/s}$ in Fig. \ref{fig:Results/9D_VU_NU/_NB_3_N_K_3_N_U___delta_t_index_5_fcIndex_4_SNRIndex__}. This is in contrast to Fig. \ref{fig:Results/9D_VU_NU/_NB_3_N_K_3_N_U___delta_t_index_5_fcIndex_1_SNRIndex__} where the error values range from $20 \text{ km/s}$ to $5 \text{ m/s}$.  In Fig. \ref{fig:Results/9D_VU_NU/_NB_3_N_K_3_N_U___delta_t_index_5_fcIndex_4_SNRIndex__}, at an SNR of $-20 \text{ dB}$, the CRLB starts out at about $0.5 \text{ km/s}$ $(N_U  = 4)$ and decreases to $90 \text{ km/s}$ $(N_U  = 100).$ At an SNR of $0 \text{ dB}$, the CRLB starts out at about $50 \text{ m/s}$ $(N_U  = 4)$ and decreases to $9 \text{ m/s}$ $(N_U  = 100).$ At an SNR of $40 \text{ dB}$, we can achieve m-level accuracy irrespective of the number of receive antennas. Now in Fig. \ref{fig:Results/9D_VU_NU/_NB_3_N_K_3_N_U___delta_t_index_8_fcIndex_4_SNRIndex__}, the achievable errors are smaller. We can also achieve better than mm/s-level accuracy. Also, the errors at the lower SNR values are reduced. For instance, the error at an SNR of $-20 \text{ dB}$ with $N_U = 4$ is  $0.5 \text{ km/s}$ as opposed to $20 \text{ km/s}$ in Fig. \ref{fig:Results/9D_VU_NU/_NB_3_N_K_3_N_U___delta_t_index_5_fcIndex_1_SNRIndex__}.  This improvement is because of the improvement in the GDOP attained due to the longer duration between transmission time slots. 

In Figs. \ref{Results/9D_VU_Time/_NB_3_N_K_3_N_U_2_10_delta_t_index___fcIndex_1_SNRIndex__} and \ref{Results/9D_VU_Time/_NB_3_N_K_3_N_U_2_10_delta_t_index___fcIndex_4_SNRIndex__}, the improvement in the velocity error due to an increase in the length of the time interval between the transmission time slots is more clearly seen. This improvement in the velocity error is due to the speed of the LEO satellites. The speed of satellites means that the same satellite can act as multiple anchors in different time slots while still achieving good GDOP. From Fig. \ref{fig:Results/9D_VU_fc/_NB_3_N_K_3_N_U_8_delta_t_index_5_8_fcIndex___SNRIndex__}, we notice that increasing the operating frequency causes a substantial reduction in the achievable error in estimating $\bm{v}_{U}$.

 \subsection{Information available to find the $3$D orientation of the receiver when the $3$D position and $3$D velocity are unknown}
  \subsubsection{$N_K = 3$, $N_B = 3$, \textit{and} $N_U > 1$}

Even in the presence of both offsets, there is enough information to estimate: i) $\mathbf{J}_{ \bm{\bm{y}}; \bm{p}_{U}}^{\mathrm{e}}$, ii) $\mathbf{J}_{ \bm{\bm{y}}; \bm{v}_{U}}^{\mathrm{e}}$, and iii) $\mathbf{J}_{ \bm{\bm{y}}; \bm{v}_{U}}^{\mathrm{ee}}$. Hence, the loss in information about $\bm{\Phi}_{U}$ due to the unknown $\bm{p}_{U}$ and $\bm{v}_{U}$ which is specified by  $\mathbf{J}_{ \bm{\bm{y}}; \bm{\Phi}_{U}}^{nu}$ exists. Further, since there is enough information to estimate $\mathbf{J}_{ \bm{\bm{y}}; \bm{\Phi}_{U}}^{\mathrm{e}}$ then  $\mathbf{J}_{ \bm{\bm{y}}; \bm{\Phi}_{U}}^{\mathrm{eee}}$ can also be estimated by Theorem \ref{theorem:FIM_9D_orientation}. Simulation results verify that there is enough information for $\mathbf{J}_{ \bm{\bm{y}}; \bm{\Phi}_{U}}^{\mathrm{eee}}$ to be positive definite. Hence, $\bm{\Phi}_{U}$ can be estimated in the presence of both offsets.

\subsubsection{Simulation results}
Here, we present simulation results for the CRLB when estimating $\bm{\Phi}_{U}$ with $N_K = 3$, $N_B = 3$, \textit{and} $N_U > 1$. We consider the following SNR values $\{40 \text{ dB}, 20 \text{ dB}, 0 \text{ dB}, -20 \text{ dB}\}$ and different number of receive antennas. In Fig. \ref{fig:Results/9D_PHIU_NU/_NB_3_N_K_3_N_U___delta_t_index_5_8_fcIndex_1_SNRIndex__}, the operating frequency is $1 \text{ GHz}$ with $\Delta_t = 10 \text{ s} $ in Fig. \ref{Results/9D_PHIU_NU/_NB_3_N_K_3_N_U___delta_t_index_5_fcIndex_1_SNRIndex__} and  $\Delta_t = 80 \text{ s} $ in Fig. \ref{Results/9D_PHIU_NU/_NB_3_N_K_3_N_U___delta_t_index_8_fcIndex_1_SNRIndex__}. In Fig. \ref{Results/9D_PHIU_NU/_NB_3_N_K_3_N_U___delta_t_index_5_fcIndex_1_SNRIndex__}, at an SNR of $-20 \text{ dB}$, the CRLB starts out at about $500 \text{ rad}$ $(N_U  = 4)$ and decreases to $10 \text{ rad}$ $(N_U  = 100).$ At an SNR of $0 \text{ dB}$, the CRLB starts out at about $20 \text{ rad}$ $(N_U  = 4)$ and decreases to $1 \text{ rad}$ $(N_U  = 100).$ At an SNR of $40 \text{ dB}$, we can achieve  accuracy on the order of $10^{-1}$ irrespective of the number of receive antennas. Now, in Fig. \ref{fig:Results/9D_PHIU_NU/_NB_3_N_K_3_N_U___delta_t_index_5_8_fcIndex_4_SNRIndex__}, the achievable errors are smaller. At high SNRs, we can also achieve rad errors on the order of $10^{-2}$ irrespective of the number of receive antennas. Also, the errors at the lower SNR values are reduced.   This improvement is because of the improvement in the GDOP attained due to the longer duration between transmission time slots. The errors in Fig. \ref{fig:Results/9D_PHIU_NU/_NB_3_N_K_3_N_U___delta_t_index_5_8_fcIndex_1_SNRIndex__} and Fig. \ref{fig:Results/9D_PHIU_NU/_NB_3_N_K_3_N_U___delta_t_index_5_8_fcIndex_4_SNRIndex__} are identical indicating that the operating frequency does not affect the achievable orientation error.

In Figs. \ref{fig:Results/9D_PHIU_Time/_NB_3_N_K_3_N_U_2_10_delta_t_index___fcIndex_1_SNRIndex__} and \ref{fig:Results/9D_PHIU_Time/_NB_3_N_K_3_N_U_2_10_delta_t_index___fcIndex_4_SNRIndex__}, the improvement in orientation error due to an increase in the length of the time interval between the transmission time slots is more clearly seen. This improvement in orientation error is due to the speed of the LEO satellites. The speed of satellites means that the same satellite can act as multiple anchors in different time slots while still achieving good GDOP. From Fig. \ref{fig:Results/9D_PHIU_fc/_NB_3_N_K_3_N_U_8_delta_t_index_5_8_fcIndex___SNRIndex__}, we notice that increasing the operating frequency does not change the achievable orientation error.

\section{Conclusion}
This paper has presented an analytical investigation of the available information that is obtainable in the signals from multiple LEOs during different transmission time slots received on a multiple antenna receiver and demonstrated the utility of these signals for $9$D localization ($3$D position, $3$D orientation, and $3$D velocity estimation). The first stage in our analysis involved deriving the FIM for the channel parameters present in the signals received from LEOs in the same or multiple constellations during multiple transmission time slots. This first stage was enabled by defining a system model that captures i) the possibility of a time offset between LEOs caused by having cheap synchronization clocks, ii) the possibility of a frequency offset between LEOs, iii) the unknown Doppler rate caused by the short coherence time in high mobility LEO based satellite systems, and iv) multiple transmission time slots from a particular LEO. The FIM for the channel parameters are transformed into the FIM for the location parameters. To enable this transformation, we started with the $3$D localization cases: i) $3$D positioning with known $3$D velocity and known $3$D orientation, ii) $3$D orientation estimation with known $3$D position and known $3$D velocity, and iii) $3$D velocity estimation with known $3$D position and known $3$D orientation. Subsequently, we derived the FIM for the $9$D localization ($3$D position, $3$D orientation, and $3$D velocity estimation) in terms of the FIM for the $3$D localization. With these derivations, we presented the number of LEOs, the operating frequency, the number of transmission time slots, and the number of receive antennas that allow for different levels of location estimation. One key result is that in the presence of time and frequency offsets and Doppler rate, it is possible to perform $9$D localization ($3$D position, $3$D velocity, and $3$D orientation estimation) of a receiver by utilizing the signals from three LEO satellites observed during three transmission time slots received through multiple receive antennas.

\appendix 
	 \label{appendix_channel} 
%\subsection{FIM due to Observations}

\subsection{Entries in transformation matrix}
\label{Appendix_Entries_in_transformation_matrix}
The derivative of the delay from the $u^{\text{th}}$ receive antenna to the $b^{\text{th}}$ LEO satellite during the $k^{\text{th}}$ time slot with respect to the $\bm{p}_{U}$  is simply the unit vector from the $u^{\text{th}}$ receive antenna to the $b^{\text{th}}$ LEO satellite during the $k^{\text{th}}$ time slot normalized by the speed of light. This is represented below
$$
\nabla_{\bm{p}_{U}}\tau_{bu,k} \triangleq  \nabla_{\bm{p}_{U}}\frac{\left\|\mathbf{p}_{u,k}-\mathbf{p}_{b,k}\right\|}{c} = \nabla_{\bm{p}_{U}}\frac{\bm{d}_{bu,k}}{c} = \frac{\bm{\Delta}_{bu,k}}{c}.
$$
The derivative of the Doppler observed at the receiver measured from the $b^{\text{th}}$ LEO satellite during the $k^{\text{th}}$ time slot with respect to  $\bm{p}_{U}$  is 
$$
\nabla_{\bm{p}_{U}}\nu_{b,k} \triangleq \frac{(\bm{v}_{B} - \bm{v}_{U}) - \bm{\Delta}_{bU,k}^{\mathrm{T}}(\bm{v}_{B} - \bm{v}_{U})\bm{\Delta}_{bU,k}}{c\bm{d}_{bU,k}^{-1}}.
$$
The derivative of the delay from the $u^{\text{th}}$ receive antenna to the $b^{\text{th}}$ LEO satellite during the $k^{\text{th}}$ time slot with respect to $\alpha_{U}$  is
$$
\nabla_{{\alpha}_{U}}\tau_{bu,k} \triangleq   \frac{\bm{\Delta}_{bu,k}^{\mathrm{T}} \nabla_{\alpha_{U}}\bm{Q}_{U}\Tilde{\bm{s}}_{u}}{c}.
$$
The derivative of the delay from the $u^{\text{th}}$ receive antenna to the $b^{\text{th}}$ LEO satellite during the $k^{\text{th}}$ time slot with respect to  $\psi_{U}$  is
$$
\nabla_{{\psi}_{U}}\tau_{bu,k} \triangleq   \frac{\bm{\Delta}_{bu,k}^{\mathrm{T}} \nabla_{\psi_{U}}\bm{Q}_{U}\Tilde{\bm{s}}_{u}}{c}.
$$
The derivative of the delay from the $u^{\text{th}}$ receive antenna to the $b^{\text{th}}$ LEO satellite during the $k^{\text{th}}$ time slot with respect to  $\varphi_{U}$  is
$$
\nabla_{{\varphi}_{U}}\tau_{bu,k} \triangleq   \frac{\bm{\Delta}_{bu,k}^{\mathrm{T}} \nabla_{\varphi_{U}}\bm{Q}_{U}\Tilde{\bm{s}}_{u}}{c}.
$$
The derivative of the delay from the $u^{\text{th}}$ receive antenna to the $b^{\text{th}}$ LEO satellite during the $k^{\text{th}}$ time slot with respect to  $\bm{\Phi}_{U}$  is
$$
\nabla_{\bm{\Phi}_{U}}\tau_{bu,k} \triangleq   
\left[\begin{array}{c}
{\bm{\Delta}_{bu,k}^{\mathrm{T}} \nabla_{\alpha_{U}}\bm{Q}_{U}\Tilde{\bm{s}}_{u}} \\
{\bm{\Delta}_{bu,k}^{\mathrm{T}} \nabla_{\psi_{U}}\bm{Q}_{U}\Tilde{\bm{s}}_{u}} \\
{\bm{\Delta}_{bu,k}^{\mathrm{T}} \nabla_{\varphi_{U}}\bm{Q}_{U}\Tilde{\bm{s}}_{u}}
\end{array}\right]
$$
The derivative of the delay from the $u^{\text{th}}$ receive antenna to the $b^{\text{th}}$ LEO satellite during the $k^{\text{th}}$ time slot with respect to $\bm{v}_{U}$  is
$$
\nabla_{\bm{v}_{U}}\tau_{bu,k} \triangleq   (k - 1) \Delta_{t}\frac{\bm{\Delta}_{bu,k}}{c}.
$$
The derivative of the Doppler observed at the receiver measured from the $b^{\text{th}}$ LEO satellite during the $k^{\text{th}}$ time slot with respect to  $\bm{v}_{U}$  is 
$$
\nabla_{\bm{v}_{U}}\nu_{b,k} \triangleq   -\frac{\bm{\Delta}_{bU,k}}{c}.
$$

\subsection{Elements in $\mathbf{J}_{ \bm{\bm{y}}; \bm{\kappa}_1}^{}$}
\label{Appendix_Entries_in_the_EFIM_Location_1}
The proof of the elements in $\mathbf{J}_{ \bm{\bm{y}}; \bm{\kappa}_1}^{}$ are presented in this section. We start with the elements that are related to the $3D$ position.
\subsubsection{Proof of the FIM related to the $3$D Position of the receiver}
\label{Appendix_lemma_FIM_3D_position}
The FIM of the $3$D position of the receiver can be written as
$$
\begin{aligned}
\bm{F}_{{{y} }}(\bm{y}_{}| \bm{\eta} ;\bm{p}_{U},\bm{p}_{U}) 
=& \sum_{b,k^{'},u^{'},k^{},u^{}}\bm{\Delta}_{bu,k} \bm{F}_{{{y} }}({y}_{bu,k}| {\eta} ;{\tau}_{bu,k},{\tau}_{bu^{'},k^{'}}) \Delta_{bu^{'},k^{'}}^{\mathrm{T}} +  \bm{\Delta}_{bu^{},k^{}} \bm{F}_{{{y} }}({y}_{bu,k}| {\eta} ;{\tau}_{bu^{},k^{}},{\nu}_{b,k^{'}}) \nabla_{\bm{p}_{U}}^{\mathrm{T}} \nu_{b,k^{'}} \\ &+ \nabla_{\bm{p}_{U}} \nu_{b,k} \bm{F}_{{{y} }}({y}_{bu,k}| {\eta} ;{\nu}_{b,k},{\tau}_{bu^{'},k^{'}}) \Delta_{bu^{'},k^{'}}^{\mathrm{T}}
+ \nabla_{\bm{p}_{U}} \nu_{b,k} \bm{F}_{{{y} }}({y}_{bu,k}| {\eta} ;{\nu}_{b,k},{\nu}_{b^{'},k^{'}}) \nabla_{\bm{p}_{U}}^{\mathrm{T}} \nu_{b^{},k^{'}}.
\end{aligned}
$$
Now due to independent measurements across receive antennas and time slots, and the fact that $ \bm{F}_{{{y} }}({y}_{bu,k}| {\eta} ;{\nu}_{b,k},{\tau}_{bu^{},k^{}}) = 0, \; \; \forall b, \forall u, \forall k,$ we have
$$
\begin{aligned}
&\bm{F}_{{{y} }}(\bm{y}_{}| \bm{\eta} ;\bm{p}_{U},\bm{p}_{U}) = \sum_{b,k^{},u^{}}\bm{\Delta}_{bu,k} \bm{F}_{{{y} }}({y}_{bu,k}| {\eta} ;{\tau}_{bu,k},{\tau}_{bu^{},k^{}}) \bm{\Delta}_{bu^{},k^{}}^{\mathrm{T}}  + \nabla_{\bm{p}_{U}} \nu_{b,k} \bm{F}_{{{y} }}({y}_{bu,k}| {\eta} ;{\nu}_{b,k},{\nu}_{b^{},k^{}}) \nabla_{\bm{p}_{U}}^{\mathrm{T}} \nu_{b,k},
\end{aligned}
$$
with appropriate substitutions, we have
$$
\begin{aligned}
\bm{F}_{{{y} }}(\bm{y}_{}| \bm{\eta} ;\bm{p}_{U},\bm{p}_{U}) = \sum_{b,k^{},u^{}} \underset{bu,k}{\operatorname{SNR}}  \Bigg[ \frac{\omega_{b,k}}{c^2}   \bm{\Delta}_{bu,k} 
 \bm{\Delta}_{bu^{},k^{}}^{\mathrm{T}}  + \frac{f_{c}^2 t_{obu,k}^2}{2} \nabla_{\bm{p}_{U}} \nu_{b,k} \nabla_{\bm{p}_{U}}^{\mathrm{T}} \nu_{b,k}\Bigg].
\end{aligned}
$$

\subsubsection{Proof of the FIM relating to the $3$D Position and $3$D orientation of the receiver}
\label{Appendix_lemma_FIM_3D_position_3D_orientation}
The FIM relating the $3$D position and $3$D orientation of the receiver is
$$
\begin{aligned}
\bm{F}_{{{y} }}(\bm{y}_{}| \bm{\eta} ;\bm{p}_{U},\bm{\Phi}_{U}) 
= \sum_{b,k^{'},u^{'},k^{},u^{}}\bm{\Delta}_{bu,k} \bm{F}_{{{y} }}({y}_{bu,k}| {\eta} ;{\tau}_{bu,k},{\tau}_{bu^{'},k^{'}}) \nabla_{\bm{\Phi}_{U}}^{\mathrm{T}}{\tau}_{bu^{'},k^{'}} + \nabla_{\bm{p}_{U}} \nu_{b,k} \bm{F}_{{{y} }}({y}_{bu,k}| {\eta} ;{\nu}_{b,k},{\tau}_{bu^{'},k^{'}}) \nabla_{\bm{\Phi}_{U}}^{\mathrm{T}}{\tau}_{bu^{'},k^{'}}.
\end{aligned}
$$
Now due to independent measurements across receive antennas and time slots, and the fact that $ \bm{F}_{{{y} }}({y}_{bu,k}| {\eta} ;{\nu}_{b,k},{\tau}_{bu^{},k^{}}) = 0, \; \; \forall b, \forall u, \forall k,$ we have
$$
\begin{aligned}
\bm{F}_{{{y} }}(\bm{y}_{}| \bm{\eta} ;\bm{p}_{U},\bm{\Phi}_{U}) 
= \sum_{b,k^{'},u^{'},k^{},u^{}}\bm{\Delta}_{bu,k} \bm{F}_{{{y} }}({y}_{bu,k}| {\eta} ;{\tau}_{bu,k},{\tau}_{bu^{'},k^{'}}) \nabla_{\bm{\Phi}_{U}}^{\mathrm{T}}{\tau}_{bu^{'},k^{'}} 
\end{aligned}
$$
with appropriate substitutions, we have
$$
\begin{aligned}
&{\bm{F}_{{{y} }}(\bm{y}_{}| \bm{\eta} ;\bm{p}_{U},\bm{\Phi}_{U}) = }  {\sum_{b,k^{},u^{}} \underset{bu,k}{\operatorname{SNR}}  \Bigg[ \frac{\omega_{b,k}}{c}   \bm{\Delta}_{bu,k} 
 \nabla_{\bm{\Phi}_{U}}^{\mathrm{T}} \tau_{bu^{},k^{}}}    \Bigg].
\end{aligned}
$$

\subsubsection{Proof of the FIM relating to the $3$D Position and $3$D velocity of the receiver}
\label{Appendix_lemma_FIM_3D_position_3D_velocity}
The FIM relating the $3$D position and $3$D velocity of the receiver is
$$
\begin{aligned}
\bm{F}_{{{y} }}(\bm{y}_{}| \bm{\eta} ;\bm{p}_{U},\bm{v}_{U}) 
&= \sum_{b,k^{'},u^{'},k^{},u^{}}\bm{\Delta}_{bu,k}  \bm{F}_{{{y} }}({y}_{bu,k}| {\eta} ;{\tau}_{bu,k},{\tau}_{bu^{'},k^{'}}) \nabla_{\bm{v}_{U}}^{\mathrm{T}} \tau_{bu^{'},k^{'}} + \bm{\Delta}_{bu,k}  \bm{F}_{{{y} }}({y}_{bu,k}| {\eta} ;{\tau}_{bu^{},k^{}},{\nu}_{b,k^{'}}) \nabla_{\bm{v}_{U}}^{\mathrm{T}} \nu_{b,k^{'}} \\ &+ \nabla_{\bm{p}_{U}} \nu_{b,k} \bm{F}_{{{y} }}({y}_{bu,k}| {\eta} ;{\nu}_{b,k},{\tau}_{bu^{'},k^{'}}) \nabla_{\bm{v}_{U}}^{\mathrm{T}} \tau_{bu^{'},k^{'}}
+ \nabla_{\bm{p}_{U}} \nu_{b,k} \bm{F}_{{{y} }}({y}_{bu,k}| {\eta} ;{\nu}_{b,k},{\nu}_{b^{'},k^{'}}) \nabla_{\bm{v}_{U}}^{\mathrm{T}} \nu_{b^{},k^{'}}.
\end{aligned}
$$
Now due to independent measurements across receive antennas and time slots, and the fact that $ \bm{F}_{{{y} }}({y}_{bu,k}| {\eta} ;{\nu}_{b,k},{\tau}_{bu^{},k^{}}) = 0, \; \; \forall b, \forall u, \forall k,$ we have
$$
\begin{aligned}
\bm{F}_{{{y} }}(\bm{y}_{}| \bm{\eta} ;\bm{p}_{U},\bm{v}_{U}) \\ &= \sum_{b,k^{},u^{}}\bm{\Delta}_{bu,k} \bm{F}_{{{y} }}({y}_{bu,k}| {\eta} ;{\tau}_{bu,k},{\tau}_{bu^{},k^{}}) \nabla_{\bm{v}_{U}}^{\mathrm{T}} \tau_{bu^{},k^{}} + \nabla_{\bm{p}_{U}} \nu_{b,k} \bm{F}_{{{y} }}({y}_{bu,k}| {\eta} ;{\nu}_{b,k},{\nu}_{b^{},k^{}}) \nabla_{\bm{v}_{U}}^{\mathrm{T}} \nu_{b,k},
\end{aligned}
$$
with appropriate substitutions, we have
$$
\begin{aligned}
{\bm{F}_{{{y} }}(\bm{y}_{}| \bm{\eta} ;\bm{p}_{U},\bm{v}_{U}) = } 
\medmath{{\sum_{b,k^{},u^{}} \underset{bu,k}{\operatorname{SNR}}  \Bigg[ \frac{(k - 1) \omega_{b,k}\Delta_{t}}{c^2}    \bm{\Delta}_{bu,k} 
\bm{\Delta}_{bu,k}^{\mathrm{T}} }    - {\frac{f_{c}^2 t_{obu,k}^2\nabla_{\bm{p}_{U}} \nu_{b,k} \bm{\Delta}_{bU,k}^{\mathrm{T}}}{2 c}  \Bigg]}}.
\end{aligned}
$$

\subsubsection{Proof of the FIM related to the $3$D orientation of the receiver}
\label{Appendix_lemma_FIM_3D_orientation}
The FIM of the $3$D orientation of the receiver can be written as
$$
\begin{aligned}
\bm{F}_{{{y} }}(\bm{y}_{}| \bm{\eta} ;\bm{\Phi}_{U},\bm{\Phi}_{U}) 
= \sum_{b,k^{'},u^{'},k^{},u^{}} \nabla_{\bm{\Phi}_{U}}{\tau}_{bu^{},k^{}}  \bm{F}_{{{y} }}({y}_{bu,k}| {\eta} ;{\tau}_{bu,k},{\tau}_{bu^{'},k^{'}}) \nabla_{\bm{\Phi}_{U}}^{\mathrm{T}}{\tau}_{bu^{'},k^{'}}.
\end{aligned}
$$
Due to independent measurements across receive antennas and time slots, and with appropriate substitutions, we have
$$
{\bm{F}_{{{y} }}(\bm{y}_{}| \bm{\eta} ;\bm{\Phi}_{U},\bm{\Phi}_{U}) = } {\sum_{b,k^{},u^{}} \underset{bu,k}{\operatorname{SNR}}  \Bigg[ \omega_{b,k}     \nabla_{\bm{\Phi}_{U}} \tau_{bu^{},k^{}}
 \nabla_{\bm{\Phi}_{U}}^{\mathrm{T}} \tau_{bu^{},k^{}}}    \Bigg].
$$
\subsubsection{Proof of the FIM relating to the $3$D orientation and $3$D velocity of the receiver}
\label{Appendix_lemma_FIM_3D_orientation_3D_velocity}
The FIM relating the $3$D orientation and $3$D velocity of the receiver is
$$
\begin{aligned}
\bm{F}_{{{y} }}(\bm{y}_{}| \bm{\eta} ;\bm{\Phi}_{U},\bm{v}_{U}) 
&= \sum_{b,k^{'},u^{'},k^{},u^{}} \nabla_{\bm{\Phi}_{U}}{\tau}_{bu,k}  \bm{F}_{{{y} }}({y}_{bu,k}| {\eta} ;{\tau}_{bu,k},{\tau}_{bu^{'},k^{'}}) \nabla_{\bm{v}_{U}}^{\mathrm{T}} \tau_{bu^{'},k^{'}} +   \nabla_{\bm{\Phi}_{U}}{\tau}_{bu,k}  \bm{F}_{{{y} }}({y}_{bu,k}| {\eta} ;{\tau}_{bu^{},k^{}},{\nu}_{b,k^{'}}) \nabla_{\bm{v}_{U}}^{\mathrm{T}} \nu_{b,k^{'}} \\ &+ \nabla_{\bm{\Phi}_{U}} \nu_{b,k} \bm{F}_{{{y} }}({y}_{bu,k}| {\eta} ;{\nu}_{b,k},{\tau}_{bu^{'},k^{'}}) \nabla_{\bm{v}_{U}}^{\mathrm{T}} \tau_{bu^{'},k^{'}}
+ \nabla_{\bm{\Phi}_{U}} \nu_{b,k} \bm{F}_{{{y} }}({y}_{bu,k}| {\eta} ;{\nu}_{b,k},{\nu}_{b^{'},k^{'}}) \nabla_{\bm{v}_{U}}^{\mathrm{T}} \nu_{b^{},k^{'}}.
\end{aligned}
$$
Now due to independent measurements across receive antennas and time slots, and the fact that $ \bm{F}_{{{y} }}({y}_{bu,k}| {\eta} ;{\nu}_{b,k},{\tau}_{bu^{},k^{}}) = 0, \; \; \forall b, \forall u, \forall k,$ we have
$$
\begin{aligned}
\bm{F}_{{{y} }}(\bm{y}_{}| \bm{\eta} ;\bm{\Phi}_{U},\bm{v}_{U}) = \sum_{b,k^{},u^{}}  \nabla_{\bm{\Phi}_{U}}{\tau}_{bu,k
}\bm{F}_{{{y} }}({y}_{bu,k}| {\eta} ;{\tau}_{bu,k},{\tau}_{bu^{},k^{}}) \nabla_{\bm{v}_{U}}^{\mathrm{T}} \tau_{bu^{},k^{}},
\end{aligned}
$$
with appropriate substitutions, we have
$$
\begin{aligned}
{\bm{F}_{{{y} }}(\bm{y}_{}| \bm{\eta} ;\bm{\Phi}_{U},\bm{v}_{U})  } = {\sum_{b,k^{},u^{}} \underset{bu,k}{\operatorname{SNR}}  \Bigg[ \frac{(k - 1) \Delta_{t}^{}\omega_{b,k}}{c}     \nabla_{\bm{\Phi}_{U}} \tau_{bu^{},k^{}}
 \bm{\Delta}_{bu,k}^{\mathrm{T}} }    \Bigg].
\end{aligned}
$$
\subsubsection{Proof of the FIM related to the $3$D velocity of the receiver}
\label{Appendix_lemma_FIM_3D_velocity}
The FIM of the $3$D velocity of the receiver can be written as
$$
\begin{aligned}
\bm{F}_{{{y} }}(\bm{y}_{}| \bm{\eta} ;\bm{v}_{U},\bm{v}_{U}) 
\\&= \sum_{b,k^{'},u^{'},k^{},u^{}} \nabla_{\bm{v}_{U}} \tau_{bu,k^{}} \bm{F}_{{{y} }}({y}_{bu,k}| {\eta} ;{\tau}_{bu,k},{\tau}_{bu^{'},k^{'}}) \nabla_{\bm{v}_{U}}^{\mathrm{T}} \tau_{bu^{'},k^{'}} +  \nabla_{\bm{v}_{U}} \tau_{bu,k^{}}  \bm{F}_{{{y} }}({y}_{bu,k}| {\eta} ;{\tau}_{bu^{},k^{}},{\nu}_{b,k^{'}}) \nabla_{\bm{v}_{U}}^{\mathrm{T}} \nu_{b,k^{'}} \\ &+ \nabla_{\bm{v}_{U}} \nu_{b,k} \bm{F}_{{{y} }}({y}_{bu,k}| {\eta} ;{\nu}_{b,k},{\tau}_{bu^{'},k^{'}}) \nabla_{\bm{v}_{U}}^{\mathrm{T}} \tau_{bu^{'},k^{'}}
+ \nabla_{\bm{v}_{U}} \nu_{b,k} \bm{F}_{{{y} }}({y}_{bu,k}| {\eta} ;{\nu}_{b,k},{\nu}_{b^{'},k^{'}}) \nabla_{\bm{v}_{U}}^{\mathrm{T}} \nu_{b^{},k^{'}}.
\end{aligned}
$$
Now due to independent measurements across receive antennas and time slots, and the fact that $ \bm{F}_{{{y} }}({y}_{bu,k}| {\eta} ;{\nu}_{b,k},{\tau}_{bu^{},k^{}}) = 0, \; \; \forall b, \forall u, \forall k,$ we have
$$
\begin{aligned}
\bm{F}_{{{y} }}(\bm{y}_{}| \bm{\eta} ;\bm{v}_{U},\bm{v}_{U}) = \sum_{b,k^{},u^{}} \nabla_{\bm{v}_{U}} \tau_{bu,k^{}} \bm{F}_{{{y} }}({y}_{bu,k}| {\eta} ;{\tau}_{bu,k},{\tau}_{bu^{},k^{}}) \nabla_{\bm{v}_{U}}^{\mathrm{T}} \tau_{bu,k^{}}  + \nabla_{\bm{v}_{U}} \nu_{b,k} \bm{F}_{{{y} }}({y}_{bu,k}| {\eta} ;{\nu}_{b,k},{\nu}_{b^{},k^{}}) \nabla_{\bm{v}_{U}}^{\mathrm{T}} \nu_{b,k},
\end{aligned}
$$
with appropriate substitutions, we have
$$
\begin{aligned}
{\bm{F}_{{{y} }}(\bm{y}_{}| \bm{\eta} ;\bm{v}_{U},\bm{v}_{U}) = } \medmath{\sum_{b,k^{},u^{}} \underset{bu,k}{\operatorname{SNR}}  \Bigg[ \frac{(k - 1)^2 \Delta_{t}^{2}\omega_{b,k}}{c^2}     \bm{\Delta}_{bu,k} 
\bm{\Delta}_{bu,k}^{\mathrm{T}} }  + \medmath{\frac{f_{c}^2 t_{obu,k}^2\bm{\Delta}_{bU,k}\bm{\Delta}_{bU,k}^{\mathrm{T}}}{2 c^2}  \Bigg]}.
\end{aligned}
$$

\subsection{Elements in $\mathbf{J}_{ \bm{\bm{y}}; \bm{\kappa}_1}^{nu}$}
\label{Appendix_subsection:information_loss_FIM_channel_parameters}
The proof of the expressions for the elements in $\mathbf{J}_{ \bm{\bm{y}}; \bm{\kappa}_1}^{nu}$ are presented in this section. The definition of the information loss provided in Definition \ref{definition_EFIM} is $$\mathbf{J}_{ \bm{\bm{y}}; \bm{\kappa}_1}^{nu}  = \mathbf{J}_{ \bm{\bm{y}}; \bm{\kappa}_1, \bm{\kappa}_2}^{} \mathbf{J}_{ \bm{\bm{y}}; \bm{\kappa}_2}^{-1} \mathbf{J}_{ \bm{\bm{y}}; \bm{\kappa}_1, \bm{\kappa}_2}^{\mathrm{T}}.$$
With this definition, we start with the loss of information about the $3$D position of the receiver due to uncertainty in the nuisance parameters ${\bm{\kappa}_{2}}$. 

\subsubsection{Proof of Lemma \ref{lemma:information_loss_FIM_3D_position}}
\label{Appendix_subsection:information_loss_3D_position_3D_position}
By expanding the definition of the loss in information and taking the appropriate submatrix, we get the first equality in (\ref{appendix_equ:lemma_nuisance_3D_position_3D_position}). With appropriate substitutions, we get the second equality. The second equality is equivalent to the expression in Lemma \ref{lemma:information_loss_FIM_3D_position}.

\subsubsection{Proof of Lemma \ref{lemma:information_loss_FIM_3D_position_3D_orientation}}
\label{Appendix_subsection:information_loss_3D_position_3D_orientation}
By expanding the definition of the loss in information and taking the appropriate submatrix, we get the first equality in (\ref{appendix_equ:lemma_nuisance_3D_position_3D_orientation}). With appropriate substitutions, we get the second equality. The second equality is equivalent to the expression in Lemma \ref{lemma:information_loss_FIM_3D_position_3D_orientation}.

\subsubsection{Proof of Lemma \ref{lemma:information_loss_FIM_3D_position_3D_veloctiy}}
\label{Appendix_subsection:information_loss_3D_position_3D_velocity}
Again, by expanding the definition of the loss in information and taking the appropriate submatrix, we get the first equality in (\ref{appendix_equ:lemma_nuisance_3D_position_3D_velocity}). With appropriate substitutions, we get the second equality. The second equality is equivalent to the expression in Lemma \ref{lemma:information_loss_FIM_3D_position_3D_veloctiy}.

\subsubsection{Proof of Lemma \ref{lemma:information_loss_FIM_3D_orientation_3D_orientation}}
\label{Appendix_subsection:information_loss_3D_orientation_3D_orientation}
Expanding the definition of the loss in information and taking the appropriate submatrix, we get the first equality in (\ref{appendix_equ:lemma_nuisance_3D_orientation_3D_orientation}). With appropriate substitutions, we get the second equality. The second equality is equivalent to the expression in Lemma \ref{lemma:information_loss_FIM_3D_orientation_3D_orientation}.

\subsubsection{Proof of Lemma \ref{lemma:information_loss_FIM_3D_orientation_3D_veloctiy}}
\label{Appendix_subsection:information_loss_3D_orientation_3D_velocity}
Expanding the definition of the loss in information and taking the appropriate submatrix, we get the first equality in (\ref{appendix_equ:lemma_nuisance_3D_orientation_3D_velocity}). With appropriate substitutions, we get the second equality. The second equality is equivalent to the expression in Lemma \ref{lemma:information_loss_FIM_3D_orientation_3D_veloctiy}.

\subsubsection{Proof of Lemma \ref{lemma:information_loss_FIM_3D_velocity_3D_velocity}}
\label{Appendix_subsection:information_loss_3D_velocity_3D_velocity}
Expanding the definition of the loss in information and taking the appropriate submatrix, we get the first equality in (\ref{appendix_equ:lemma_nuisance_3D_velocity_3D_velocity}). With appropriate substitutions, we get the second equality. The second equality is equivalent to the expression in Lemma \ref{lemma:information_loss_FIM_3D_velocity_3D_velocity}.

\begin{figure*}
\begin{align}
\begin{split}
\label{appendix_equ:lemma_nuisance_3D_position_3D_position}
&[\mathbf{J}_{ \bm{\bm{y}}; \bm{\kappa}_1}^{nu}]_{[1:3,1:3]} = \\ &\sum_{b}\Bigg[\Bigg[\sum_{k^{},u^{}} \Bigg[ \nabla_{\bm{p}_{U}} \tau_{bu,k^{}} \bm{F}_{{{y} }}({y}_{bu,k}| {\eta} ;{\tau}_{bu,k},{\beta}_{bu^{'},k^{'}}) \nabla_{{\beta}_{bu^{'},k^{'}}}{\beta}_{bu^{'},k^{'}} + \nabla_{\bm{p}_{U}} \nu_{b,k^{}} \bm{F}_{{{y} }}({y}_{bu,k}| {\eta} ;{\nu}_{b,k},{\beta}_{bu^{'},k^{'}}) \nabla_{{\beta}_{bu^{'},k^{'}}}{\beta}_{bu^{'},k^{'}}\Bigg]\Bigg]\\
&\times \Bigg[ \nabla_{{\beta}_{bu^{'},k^{'}}}{\beta}_{bu^{'},k^{'}} \bm{F}_{{{y} }}({y}_{bu,k}| {\eta} ;{\beta}_{bu^{'},k^{'}},{\beta}_{bu^{'},k^{'}}) \nabla_{{\beta}_{bu^{'},k^{'}}}{\beta}_{bu^{'},k^{'}}\Bigg]^{-1} \\
&\times \Bigg[\sum_{k^{},u^{}} \Bigg[ \nabla_{{\beta}_{bu^{'},k^{'}}}{\beta}_{bu^{'},k^{'}}  \bm{F}_{{{y} }}({y}_{bu,k}| {\eta} ;{\beta}_{bu^{'},k^{'}},{\tau}_{bu,k}) \nabla_{\bm{p}_{U}}^{\mathrm{T}} \tau_{bu,k^{}}  + \nabla_{{\beta}_{bu^{'},k^{'}}}{\beta}_{bu^{'},k^{'}}  \bm{F}_{{{y} }}({y}_{bu,k}| {\eta} ;{\beta}_{bu^{'},k^{'}},{\nu}_{b,k}) \nabla_{\bm{p}_{U}}^{\mathrm{T}} \nu_{b,k^{}}\Bigg]\Bigg] \Bigg] \\ 
&+
\sum_{b}\Bigg[\Bigg[\sum_{k^{},u^{}} \Bigg[ \nabla_{\bm{p}_{U}} \tau_{bu,k^{}} \bm{F}_{{{y} }}({y}_{bu,k}| {\eta} ;{\tau}_{bu,k},{\delta}_{b}) \nabla_{{\delta}_{b}}{\delta}_{b} + \nabla_{\bm{p}_{U}} \nu_{b,k^{}} \bm{F}_{{{y} }}({y}_{bu,k}| {\eta} ;{\nu}_{b,k},{\delta}_{b}) \nabla_{{\delta}_{b}}{\delta}_{b}\Bigg]\Bigg]\\
&\times \Bigg[\sum_{k^{},u^{}} \Bigg[ \nabla_{{\delta}_{b}}{\delta}_{b} \bm{F}_{{{y} }}({y}_{bu,k}| {\eta} ;{\delta}_{b},{\delta}_{b}) \nabla_{{\delta}_{b}}{\delta}_{b}\Bigg]\Bigg]^{-1} \\
&\times \Bigg[\sum_{k^{},u^{}} \Bigg[ \nabla_{{\delta}_{b}}{\delta}_{b}  \bm{F}_{{{y} }}({y}_{bu,k}| {\eta} ;{\delta}_{b},{\tau}_{bu,k}) \nabla_{\bm{p}_{U}}^{\mathrm{T}} \tau_{bu,k^{}}  + \nabla_{{\delta}_{b}}{\delta}_{b}  \bm{F}_{{{y} }}({y}_{bu,k}| {\eta} ;{\delta}_{b},{\nu}_{b,k}) \nabla_{\bm{p}_{U}}^{\mathrm{T}} \nu_{b,k^{}}\Bigg]\Bigg] \Bigg] \\ 
&+
\sum_{b}\Bigg[\Bigg[\sum_{k^{},u^{}} \Bigg[ \nabla_{\bm{p}_{U}} \tau_{bu,k^{}} \bm{F}_{{{y} }}({y}_{bu,k}| {\eta} ;{\tau}_{bu,k},{\epsilon}_{b}) \nabla_{{\epsilon}_{b}}{\epsilon}_{b} + \nabla_{\bm{p}_{U}} \nu_{b,k^{}} \bm{F}_{{{y} }}({y}_{bu,k}| {\eta} ;{\nu}_{b,k},{\epsilon}_{b}) \nabla_{{\epsilon}_{b}}{\epsilon}_{b}\Bigg]\Bigg]\\
&\times \Bigg[ \sum_{k,u}\Bigg[ \nabla_{{\epsilon}_{b}}{\epsilon}_{b} \bm{F}_{{{y} }}({y}_{bu,k}| {\eta} ;{\epsilon}_{b},{\epsilon}_{b}) \nabla_{{\epsilon}_{b}}{\epsilon}_{b}\Bigg]\Bigg]^{-1} \\
&\times \Bigg[\sum_{k^{},u^{}} \Bigg[ \nabla_{{\epsilon}_{b}}{\epsilon}_{b}  \bm{F}_{{{y} }}({y}_{bu,k}| {\eta} ;{\epsilon}_{b},{\tau}_{bu,k}) \nabla_{\bm{p}_{U}}^{\mathrm{T}} \tau_{bu,k^{}}  + \nabla_{{\epsilon}_{b}}{\epsilon}_{b}  \bm{F}_{{{y} }}({y}_{bu,k}| {\eta} ;{\epsilon}_{b},{\nu}_{b,k}) \nabla_{\bm{p}_{U}}^{\mathrm{T}} \nu_{b,k^{}}\Bigg]\Bigg] \Bigg] \\ 
&= \sum_{b}\Bigg[
 \Bigg[\sum_{k^{},u^{}} \bm{F}_{{{y} }}({y}_{bu,k}| {\eta} ;{\delta}_{b},{\delta}_{b}) \Bigg]^{-1} \norm{ \sum_{k^{},u^{}}   \bm{F}_{{{y} }}({y}_{bu,k}| {\eta} ;{\delta}_{b},{\tau}_{bu,k}) \nabla_{\bm{p}_{U}}^{\mathrm{T}} \tau_{bu,k^{}}  }^{2} \\ &+  \Bigg[ \sum_{k,u}  \bm{F}_{{{y} }}({y}_{bu,k}| {\eta} ;{\epsilon}_{b},{\epsilon}_{b})\Bigg]^{-1}\norm{\sum_{k^{},u^{}}   \bm{F}_{{{y} }}({y}_{bu,k}| {\eta} ;{\epsilon}_{b},{\nu}_{b,k}) \nabla_{\bm{p}_{U}}^{\mathrm{T}} \nu_{b,k^{}} }^2
\end{split}
\end{align}
\end{figure*}

\begin{figure*}
\begin{align}
\begin{split}
\label{appendix_equ:lemma_nuisance_3D_position_3D_orientation}
&[\mathbf{J}_{ \bm{\bm{y}}; \bm{\kappa}_1}^{nu}]_{[1:3,4:6]} =
\\ &\sum_{b}\Bigg[\Bigg[\sum_{k^{},u^{}} \Bigg[ \nabla_{\bm{p}_{U}} \tau_{bu,k^{}} \bm{F}_{{{y} }}({y}_{bu,k}| {\eta} ;{\tau}_{bu,k},{\beta}_{bu^{'},k^{'}}) \nabla_{{\beta}_{bu^{'},k^{'}}}{\beta}_{bu^{'},k^{'}} + \nabla_{\bm{p}_{U}} \nu_{b,k^{}} \bm{F}_{{{y} }}({y}_{bu,k}| {\eta} ;{\nu}_{b,k},{\beta}_{bu^{'},k^{'}}) \nabla_{{\beta}_{bu^{'},k^{'}}}{\beta}_{bu^{'},k^{'}}\Bigg]\Bigg]\\
&\times \Bigg[ \nabla_{{\beta}_{bu^{'},k^{'}}}{\beta}_{bu^{'},k^{'}} \bm{F}_{{{y} }}({y}_{bu,k}| {\eta} ;{\beta}_{bu^{'},k^{'}},{\beta}_{bu^{'},k^{'}}) \nabla_{{\beta}_{bu^{'},k^{'}}}{\beta}_{bu^{'},k^{'}}\Bigg]^{-1} \\
&\times \Bigg[\sum_{k^{},u^{}} \Bigg[ \nabla_{{\beta}_{bu^{'},k^{'}}}{\beta}_{bu^{'},k^{'}}  \bm{F}_{{{y} }}({y}_{bu,k}| {\eta} ;{\beta}_{bu^{'},k^{'}},{\tau}_{bu,k}) \nabla_{\bm{\Phi}_{U}}^{\mathrm{T}} \tau_{bu,k^{}} \Bigg]\Bigg] \Bigg] \\ 
&+
\sum_{b}\Bigg[\Bigg[\sum_{k^{},u^{}} \Bigg[ \nabla_{\bm{p}_{U}} \tau_{bu,k^{}} \bm{F}_{{{y} }}({y}_{bu,k}| {\eta} ;{\tau}_{bu,k},{\delta}_{b}) \nabla_{{\delta}_{b}}{\delta}_{b} + \nabla_{\bm{p}_{U}} \nu_{b,k^{}} \bm{F}_{{{y} }}({y}_{bu,k}| {\eta} ;{\nu}_{b,k},{\delta}_{b}) \nabla_{{\delta}_{b}}{\delta}_{b}\Bigg]\Bigg]\\
&\times \Bigg[\sum_{k^{},u^{}} \Bigg[ \nabla_{{\delta}_{b}}{\delta}_{b} \bm{F}_{{{y} }}({y}_{bu,k}| {\eta} ;{\delta}_{b},{\delta}_{b}) \nabla_{{\delta}_{b}}{\delta}_{b}\Bigg]\Bigg]^{-1}  \Bigg[\sum_{k^{},u^{}} \Bigg[ \nabla_{{\delta}_{b}}{\delta}_{b}  \bm{F}_{{{y} }}({y}_{bu,k}| {\eta} ;{\delta}_{b},{\tau}_{bu,k}) \nabla_{\bm{\Phi}_{U}}^{\mathrm{T}} \tau_{bu,k^{}}\Bigg]\Bigg] \Bigg] \\ 
&+
\sum_{b}\Bigg[\Bigg[\sum_{k^{},u^{}} \Bigg[ \nabla_{\bm{p}_{U}} \tau_{bu,k^{}} \bm{F}_{{{y} }}({y}_{bu,k}| {\eta} ;{\tau}_{bu,k},{\epsilon}_{b}) \nabla_{{\epsilon}_{b}}{\epsilon}_{b} + \nabla_{\bm{p}_{U}} \nu_{b,k^{}} \bm{F}_{{{y} }}({y}_{bu,k}| {\eta} ;{\nu}_{b,k},{\epsilon}_{b}) \nabla_{{\epsilon}_{b}}{\epsilon}_{b}\Bigg]\Bigg]\\
&\times \Bigg[ \sum_{k,u}\Bigg[ \nabla_{{\epsilon}_{b}}{\epsilon}_{b} \bm{F}_{{{y} }}({y}_{bu,k}| {\eta} ;{\epsilon}_{b},{\epsilon}_{b}) \nabla_{{\epsilon}_{b}}{\epsilon}_{b}\Bigg]\Bigg]^{-1}  \Bigg[\sum_{k^{},u^{}} \Bigg[ \nabla_{{\epsilon}_{b}}{\epsilon}_{b}  \bm{F}_{{{y} }}({y}_{bu,k}| {\eta} ;{\epsilon}_{b},{\tau}_{bu,k}) \nabla_{\bm{\Phi}_{U}}^{\mathrm{T}} \tau_{bu,k^{}} \Bigg]\Bigg] \Bigg] \\ 
&=\sum_{b}
\Bigg[\Bigg[\sum_{k^{},u^{}}  \nabla_{\bm{p}_{U}} \tau_{bu,k^{}} \bm{F}_{{{y} }}({y}_{bu,k}| {\eta} ;{\tau}_{bu,k},{\delta}_{b})  \Bigg] \Bigg[\sum_{k^{},u^{}}  \bm{F}_{{{y} }}({y}_{bu,k}| {\eta} ;{\delta}_{b},{\delta}_{b}) \Bigg]^{-1}  \Bigg[\sum_{k^{},u^{}}   \bm{F}_{{{y} }}({y}_{bu,k}| {\eta} ;{\delta}_{b},{\tau}_{bu,k}) \nabla_{\bm{\Phi}_{U}}^{\mathrm{T}} \tau_{bu,k^{}}\Bigg] \Bigg]  \\ 
\end{split}
\end{align}
\end{figure*}

\begin{figure*}
\begin{align}
\begin{split}
\label{appendix_equ:lemma_nuisance_3D_position_3D_velocity}
&[\mathbf{J}_{ \bm{\bm{y}}; \bm{\kappa}_1}^{nu}]_{[1:3,7:9]} =  \\ & \sum_{b}\Bigg[\Bigg[\sum_{k^{},u^{}} \Bigg[ \nabla_{\bm{p}_{U}} \tau_{bu,k^{}} \bm{F}_{{{y} }}({y}_{bu,k}| {\eta} ;{\tau}_{bu,k},{\beta}_{bu^{'},k^{'}}) \nabla_{{\beta}_{bu^{'},k^{'}}}{\beta}_{bu^{'},k^{'}} + \nabla_{\bm{p}_{U}} \nu_{b,k^{}} \bm{F}_{{{y} }}({y}_{bu,k}| {\eta} ;{\nu}_{b,k},{\beta}_{bu^{'},k^{'}}) \nabla_{{\beta}_{bu^{'},k^{'}}}{\beta}_{bu^{'},k^{'}}\Bigg]\Bigg]\\
&\times \Bigg[ \nabla_{{\beta}_{bu^{'},k^{'}}}{\beta}_{bu^{'},k^{'}} \bm{F}_{{{y} }}({y}_{bu,k}| {\eta} ;{\beta}_{bu^{'},k^{'}},{\beta}_{bu^{'},k^{'}}) \nabla_{{\beta}_{bu^{'},k^{'}}}{\beta}_{bu^{'},k^{'}}\Bigg]^{-1} \\
&\times \Bigg[\sum_{k^{},u^{}} \Bigg[ \nabla_{{\beta}_{bu^{'},k^{'}}}{\beta}_{bu^{'},k^{'}}  \bm{F}_{{{y} }}({y}_{bu,k}| {\eta} ;{\beta}_{bu^{'},k^{'}},{\tau}_{bu,k}) \nabla_{\bm{v}_{U}}^{\mathrm{T}} \tau_{bu,k^{}}  + \nabla_{{\beta}_{bu^{'},k^{'}}}{\beta}_{bu^{'},k^{'}}  \bm{F}_{{{y} }}({y}_{bu,k}| {\eta} ;{\beta}_{bu^{'},k^{'}},{\nu}_{b,k}) \nabla_{\bm{v}_{U}}^{\mathrm{T}} \nu_{b,k^{}}\Bigg]\Bigg] \Bigg] \\ 
&+
\sum_{b}\Bigg[\Bigg[\sum_{k^{},u^{}} \Bigg[ \nabla_{\bm{p}_{U}} \tau_{bu,k^{}} \bm{F}_{{{y} }}({y}_{bu,k}| {\eta} ;{\tau}_{bu,k},{\delta}_{b}) \nabla_{{\delta}_{b}}{\delta}_{b} + \nabla_{\bm{p}_{U}} \nu_{b,k^{}} \bm{F}_{{{y} }}({y}_{bu,k}| {\eta} ;{\nu}_{b,k},{\delta}_{b}) \nabla_{{\delta}_{b}}{\delta}_{b}\Bigg]\Bigg]\\
&\times \Bigg[\sum_{k^{},u^{}} \Bigg[ \nabla_{{\delta}_{b}}{\delta}_{b} \bm{F}_{{{y} }}({y}_{bu,k}| {\eta} ;{\delta}_{b},{\delta}_{b}) \nabla_{{\delta}_{b}}{\delta}_{b}\Bigg]\Bigg]^{-1} \\
&\times \Bigg[\sum_{k^{},u^{}} \Bigg[ \nabla_{{\delta}_{b}}{\delta}_{b}  \bm{F}_{{{y} }}({y}_{bu,k}| {\eta} ;{\delta}_{b},{\tau}_{bu,k}) \nabla_{\bm{v}_{U}}^{\mathrm{T}} \tau_{bu,k^{}}  + \nabla_{{\delta}_{b}}{\delta}_{b}  \bm{F}_{{{y} }}({y}_{bu,k}| {\eta} ;{\delta}_{b},{\nu}_{b,k}) \nabla_{\bm{v}_{U}}^{\mathrm{T}} \nu_{b,k^{}}\Bigg]\Bigg] \Bigg] \\ 
&+
\sum_{b}\Bigg[\Bigg[\sum_{k^{},u^{}} \Bigg[ \nabla_{\bm{p}_{U}} \tau_{bu,k^{}} \bm{F}_{{{y} }}({y}_{bu,k}| {\eta} ;{\tau}_{bu,k},{\epsilon}_{b}) \nabla_{{\epsilon}_{b}}{\epsilon}_{b} + \nabla_{\bm{p}_{U}} \nu_{b,k^{}} \bm{F}_{{{y} }}({y}_{bu,k}| {\eta} ;{\nu}_{b,k},{\epsilon}_{b}) \nabla_{{\epsilon}_{b}}{\epsilon}_{b}\Bigg]\Bigg]\\
&\times \Bigg[ \sum_{k,u}\Bigg[ \nabla_{{\epsilon}_{b}}{\epsilon}_{b} \bm{F}_{{{y} }}({y}_{bu,k}| {\eta} ;{\epsilon}_{b},{\epsilon}_{b}) \nabla_{{\epsilon}_{b}}{\epsilon}_{b}\Bigg]\Bigg]^{-1} \\
&\times \Bigg[\sum_{k^{},u^{}} \Bigg[ \nabla_{{\epsilon}_{b}}{\epsilon}_{b}  \bm{F}_{{{y} }}({y}_{bu,k}| {\eta} ;{\epsilon}_{b},{\tau}_{bu,k}) \nabla_{\bm{v}_{U}}^{\mathrm{T}} \tau_{bu,k^{}}  + \nabla_{{\epsilon}_{b}}{\epsilon}_{b}  \bm{F}_{{{y} }}({y}_{bu,k}| {\eta} ;{\epsilon}_{b},{\nu}_{b,k}) \nabla_{\bm{v}_{U}}^{\mathrm{T}} \nu_{b,k^{}}\Bigg]\Bigg] \Bigg] \\ 
&=
\sum_{b}\Bigg[\Bigg[\sum_{k^{},u^{}}  \nabla_{\bm{p}_{U}} \tau_{bu,k^{}} \bm{F}_{{{y} }}({y}_{bu,k}| {\eta} ;{\tau}_{bu,k},{\delta}_{b})  \Bigg] \Bigg[\sum_{k^{},u^{}} \bm{F}_{{{y} }}({y}_{bu,k}| {\eta} ;{\delta}_{b},{\delta}_{b}) \Bigg]^{-1}  \Bigg[\sum_{k^{},u^{}}   \bm{F}_{{{y} }}({y}_{bu,k}| {\eta} ;{\delta}_{b},{\tau}_{bu,k}) \nabla_{\bm{v}_{U}}^{\mathrm{T}} \tau_{bu,k^{}}  \Bigg] \Bigg] \\ 
&+
\sum_{b}\Bigg[\Bigg[\sum_{k^{},u^{}}  \nabla_{\bm{p}_{U}} \nu_{b,k^{}} \bm{F}_{{{y} }}({y}_{bu,k}| {\eta} ;{\nu}_{b,k},{\epsilon}_{b})\Bigg] \Bigg[ \sum_{k,u}  \bm{F}_{{{y} }}({y}_{bu,k}| {\eta} ;{\epsilon}_{b},{\epsilon}_{b}) \Bigg]^{-1} 
 \Bigg[\sum_{k^{},u^{}}   \bm{F}_{{{y} }}({y}_{bu,k}| {\eta} ;{\epsilon}_{b},{\nu}_{b,k}) \nabla_{\bm{v}_{U}}^{\mathrm{T}} \nu_{b,k^{}}\Bigg] \Bigg]
\end{split}
\end{align}
\end{figure*}

\begin{figure*}
\begin{align}
\begin{split}
\label{appendix_equ:lemma_nuisance_3D_orientation_3D_orientation}
&[\mathbf{J}_{ \bm{\bm{y}}; \bm{\kappa}_1}^{nu}]_{[4:6,4:6]} = \\ &\sum_{b}\Bigg[\Bigg[\sum_{k^{},u^{}}  \nabla_{\bm{\Phi
}_{U}} \tau_{bu,k^{}} \bm{F}_{{{y} }}({y}_{bu,k}| {\eta} ;{\tau}_{bu,k},{\beta}_{bu^{'},k^{'}}) \nabla_{{\beta}_{bu^{'},k^{'}}}{\beta}_{bu^{'},k^{'}}\Bigg] \Bigg[  \bm{F}_{{{y} }}({y}_{bu,k}| {\eta} ;{\beta}_{bu^{'},k^{'}},{\beta}_{bu^{'},k^{'}}) \Bigg]^{-1} \\
&\times \Bigg[\sum_{k^{},u^{}} \nabla_{{\beta}_{bu^{'},k^{'}}}{\beta}_{bu^{'},k^{'}}  \bm{F}_{{{y} }}({y}_{bu,k}| {\eta} ;{\beta}_{bu^{'},k^{'}},{\tau}_{bu,k}) \nabla_{\bm{\Phi}_{U}}^{\mathrm{T}} \tau_{bu,k^{}}\Bigg] \Bigg] \\ 
&+
\sum_{b}\Bigg[\sum_{k^{},u^{}}  \nabla_{\bm{\Phi}_{U}} \tau_{bu,k^{}} \bm{F}_{{{y} }}({y}_{bu,k}| {\eta} ;{\tau}_{bu,k},{\delta}_{b}) \nabla_{{\delta}_{b}}{\delta}_{b} \Bigg] \Bigg[\sum_{k^{},u^{}}  \nabla_{{\delta}_{b}}{\delta}_{b} \bm{F}_{{{y} }}({y}_{bu,k}| {\eta} ;{\delta}_{b},{\delta}_{b}) \nabla_{{\delta}_{b}}{\delta}_{b}\Bigg]^{-1} \\  &\times\Bigg[\sum_{k^{},u^{}}  \nabla_{{\delta}_{b}}{\delta}_{b}  \bm{F}_{{{y} }}({y}_{bu,k}| {\eta} ;{\delta}_{b},{\tau}_{bu,k}) \nabla_{\bm{\Phi}_{U}}^{\mathrm{T}} \tau_{bu,k^{}}\Bigg] \Bigg] \\ 
&+
\sum_{b}\Bigg[\sum_{k^{},u^{}}   \nabla_{\bm{\Phi}_{U}} \tau_{b,k^{}} \bm{F}_{{{y} }}({y}_{bu,k}| {\eta} ;{\tau}_{b,k},{\epsilon}_{b}) \nabla_{{\epsilon}_{b}}{\epsilon}_{b} \Bigg[ \sum_{k,u} \nabla_{{\epsilon}_{b}}{\epsilon}_{b} \bm{F}_{{{y} }}({y}_{bu,k}| {\eta} ;{\epsilon}_{b},{\epsilon}_{b}) \nabla_{{\epsilon}_{b}}{\epsilon}_{b}\Bigg]^{-1} \\ &\times \Bigg[\sum_{k^{},u^{}}  \nabla_{{\epsilon}_{b}}{\epsilon}_{b}  \bm{F}_{{{y} }}({y}_{bu,k}| {\eta} ;{\epsilon}_{b},{\tau}_{bu,k}) \nabla_{\bm{\Phi}_{U}}^{\mathrm{T}} \tau_{bu,k^{}} \Bigg] \Bigg] \\ 
&=\sum_{b}
\Bigg[\Bigg[\sum_{k^{},u^{}}  \nabla_{\bm{\phi}_{U}} \tau_{bu,k^{}} \bm{F}_{{{y} }}({y}_{bu,k}| {\eta} ;{\tau}_{bu,k},{\delta}_{b})  \Bigg] \Bigg[\sum_{k^{},u^{}}  \bm{F}_{{{y} }}({y}_{bu,k}| {\eta} ;{\delta}_{b},{\delta}_{b}) \Bigg]^{-1}  \Bigg[\sum_{k^{},u^{}}   \bm{F}_{{{y} }}({y}_{bu,k}| {\eta} ;{\delta}_{b},{\tau}_{bu,k}) \nabla_{\bm{\Phi}_{U}}^{\mathrm{T}} \tau_{bu,k^{}}\Bigg] \Bigg]  \\ 
\end{split}
\end{align}
\end{figure*}

\begin{figure*}
\begin{align}
\begin{split}
\label{appendix_equ:lemma_nuisance_3D_orientation_3D_velocity}
&[\mathbf{J}_{ \bm{\bm{y}}; \bm{\kappa}_1}^{nu}]_{[4:6,7:9]} = \\ & \sum_{b}\Bigg[\Bigg[\sum_{k^{},u^{}}  \nabla_{\bm{\Phi}_{U}} \tau_{bu,k^{}} \bm{F}_{{{y} }}({y}_{bu,k}| {\eta} ;{\tau}_{bu,k},{\beta}_{bu^{'},k^{'}}) \nabla_{{\beta}_{bu^{'},k^{'}}}{\beta}_{bu^{'},k^{'}} \Bigg]\\
&\times \Bigg[ \nabla_{{\beta}_{bu^{'},k^{'}}}{\beta}_{bu^{'},k^{'}} \bm{F}_{{{y} }}({y}_{bu,k}| {\eta} ;{\beta}_{bu^{'},k^{'}},{\beta}_{bu^{'},k^{'}}) \nabla_{{\beta}_{bu^{'},k^{'}}}{\beta}_{bu^{'},k^{'}}\Bigg]^{-1} \\
&\times \Bigg[\sum_{k^{},u^{}} \Bigg[ \nabla_{{\beta}_{bu^{'},k^{'}}}{\beta}_{bu^{'},k^{'}}  \bm{F}_{{{y} }}({y}_{bu,k}| {\eta} ;{\beta}_{bu^{'},k^{'}},{\tau}_{bu,k}) \nabla_{\bm{v}_{U}}^{\mathrm{T}} \tau_{bu,k^{}}  + \nabla_{{\beta}_{bu^{'},k^{'}}}{\beta}_{bu^{'},k^{'}}  \bm{F}_{{{y} }}({y}_{bu,k}| {\eta} ;{\beta}_{bu^{'},k^{'}},{\nu}_{b,k}) \nabla_{\bm{v}_{U}}^{\mathrm{T}} \nu_{b,k^{}}\Bigg]\Bigg] \Bigg] \\ 
&+
\sum_{b}\Bigg[\Bigg[\sum_{k^{},u^{}}  \nabla_{\bm{\Phi}_{U}} \tau_{bu,k^{}} \bm{F}_{{{y} }}({y}_{bu,k}| {\eta} ;{\tau}_{bu,k},{\delta}_{b}) \nabla_{{\delta}_{b}}{\delta}_{b}\Bigg] \Bigg[\sum_{k^{},u^{}}  \nabla_{{\delta}_{b}}{\delta}_{b} \bm{F}_{{{y} }}({y}_{bu,k}| {\eta} ;{\delta}_{b},{\delta}_{b}) \nabla_{{\delta}_{b}}{\delta}_{b}\Bigg]^{-1} \\
&\times \Bigg[\sum_{k^{},u^{}} \Bigg[ \nabla_{{\delta}_{b}}{\delta}_{b}  \bm{F}_{{{y} }}({y}_{bu,k}| {\eta} ;{\delta}_{b},{\tau}_{bu,k}) \nabla_{\bm{v}_{U}}^{\mathrm{T}} \tau_{bu,k^{}}  + \nabla_{{\delta}_{b}}{\delta}_{b}  \bm{F}_{{{y} }}({y}_{bu,k}| {\eta} ;{\delta}_{b},{\nu}_{b,k}) \nabla_{\bm{v}_{U}}^{\mathrm{T}} \nu_{b,k^{}}\Bigg]\Bigg] \Bigg] \\ 
&+
\sum_{b}\Bigg[\Bigg[\sum_{k^{},u^{}}  \nabla_{\bm{\Phi}_{U}} \tau_{bu,k^{}} \bm{F}_{{{y} }}({y}_{bu,k}| {\eta} ;{\tau}_{bu,k},{\epsilon}_{b}) \nabla_{{\epsilon}_{b}}{\epsilon}_{b}\Bigg] \Bigg[ \sum_{k,u}\Bigg[ \nabla_{{\epsilon}_{b}}{\epsilon}_{b} \bm{F}_{{{y} }}({y}_{bu,k}| {\eta} ;{\epsilon}_{b},{\epsilon}_{b}) \nabla_{{\epsilon}_{b}}{\epsilon}_{b}\Bigg]\Bigg]^{-1} \\ &\times
 \Bigg[\sum_{k^{},u^{}} \nabla_{{\epsilon}_{b}}{\epsilon}_{b}  \bm{F}_{{{y} }}({y}_{bu,k}| {\eta} ;{\epsilon}_{b},{\tau}_{bu,k}) \nabla_{\bm{v}_{U}}^{\mathrm{T}} \tau_{bu,k^{}}  + \nabla_{{\epsilon}_{b}}{\epsilon}_{b}  \bm{F}_{{{y} }}({y}_{bu,k}| {\eta} ;{\epsilon}_{b},{\nu}_{b,k}) \nabla_{\bm{v}_{U}}^{\mathrm{T}} \nu_{b,k^{}}\Bigg] \Bigg] \\ 
&=
\sum_{b}\Bigg[\Bigg[\sum_{k^{},u^{}}  \nabla_{\bm{\Phi}_{U}} \tau_{bu,k^{}} \bm{F}_{{{y} }}({y}_{bu,k}| {\eta} ;{\tau}_{bu,k},{\delta}_{b})  \Bigg] \Bigg[\sum_{k^{},u^{}} \bm{F}_{{{y} }}({y}_{bu,k}| {\eta} ;{\delta}_{b},{\delta}_{b}) \Bigg]^{-1}  \Bigg[\sum_{k^{},u^{}}   \bm{F}_{{{y} }}({y}_{bu,k}| {\eta} ;{\delta}_{b},{\tau}_{bu,k}) \nabla_{\bm{v}_{U}}^{\mathrm{T}} \tau_{bu,k^{}}  \Bigg] \Bigg] \\ 
&+
\sum_{b}\Bigg[\Bigg[\sum_{k^{},u^{}}  \nabla_{\bm{\Phi}_{U}} \nu_{b,k^{}} \bm{F}_{{{y} }}({y}_{bu,k}| {\eta} ;{\nu}_{b,k},{\epsilon}_{b})\Bigg] \Bigg[ \sum_{k,u}  \bm{F}_{{{y} }}({y}_{bu,k}| {\eta} ;{\epsilon}_{b},{\epsilon}_{b}) \Bigg]^{-1} 
 \Bigg[\sum_{k^{},u^{}}   \bm{F}_{{{y} }}({y}_{bu,k}| {\eta} ;{\epsilon}_{b},{\nu}_{b,k}) \nabla_{\bm{v}_{U}}^{\mathrm{T}} \nu_{b,k^{}}\Bigg] \Bigg]
\end{split}
\end{align}
\end{figure*}

\begin{figure*}
\begin{align}
\begin{split}
\label{appendix_equ:lemma_nuisance_3D_velocity_3D_velocity}
&[\mathbf{J}_{ \bm{\bm{y}}; \bm{\kappa}_1}^{nu}]_{[7:9,7:9]} = \\ &\sum_{b}\Bigg[\Bigg[\sum_{k^{},u^{}} \Bigg[ \nabla_{\bm{v}_{U}} \tau_{bu,k^{}} \bm{F}_{{{y} }}({y}_{bu,k}| {\eta} ;{\tau}_{bu,k},{\beta}_{bu^{'},k^{'}}) \nabla_{{\beta}_{bu^{'},k^{'}}}{\beta}_{bu^{'},k^{'}} + \nabla_{\bm{v}_{U}} \nu_{b,k^{}} \bm{F}_{{{y} }}({y}_{bu,k}| {\eta} ;{\nu}_{b,k},{\beta}_{bu^{'},k^{'}}) \nabla_{{\beta}_{bu^{'},k^{'}}}{\beta}_{bu^{'},k^{'}}\Bigg]\Bigg]\\
&\times \Bigg[ \nabla_{{\beta}_{bu^{'},k^{'}}}{\beta}_{bu^{'},k^{'}} \bm{F}_{{{y} }}({y}_{bu,k}| {\eta} ;{\beta}_{bu^{'},k^{'}},{\beta}_{bu^{'},k^{'}}) \nabla_{{\beta}_{bu^{'},k^{'}}}{\beta}_{bu^{'},k^{'}}\Bigg]^{-1} \\
&\times \Bigg[\sum_{k^{},u^{}} \Bigg[ \nabla_{{\beta}_{bu^{'},k^{'}}}{\beta}_{bu^{'},k^{'}}  \bm{F}_{{{y} }}({y}_{bu,k}| {\eta} ;{\beta}_{bu^{'},k^{'}},{\tau}_{bu,k}) \nabla_{\bm{v}_{U}}^{\mathrm{T}} \tau_{bu,k^{}}  + \nabla_{{\beta}_{bu^{'},k^{'}}}{\beta}_{bu^{'},k^{'}}  \bm{F}_{{{y} }}({y}_{bu,k}| {\eta} ;{\beta}_{bu^{'},k^{'}},{\nu}_{b,k}) \nabla_{\bm{v}_{U}}^{\mathrm{T}} \nu_{b,k^{}}\Bigg]\Bigg] \Bigg] \\ 
&+
\sum_{b}\Bigg[\Bigg[\sum_{k^{},u^{}} \Bigg[ \nabla_{\bm{v}_{U}} \tau_{bu,k^{}} \bm{F}_{{{y} }}({y}_{bu,k}| {\eta} ;{\tau}_{bu,k},{\delta}_{b}) \nabla_{{\delta}_{b}}{\delta}_{b} + \nabla_{\bm{v}_{U}} \nu_{b,k^{}} \bm{F}_{{{y} }}({y}_{bu,k}| {\eta} ;{\nu}_{b,k},{\delta}_{b}) \nabla_{{\delta}_{b}}{\delta}_{b}\Bigg]\Bigg]\\
&\times \Bigg[\sum_{k^{},u^{}} \Bigg[ \nabla_{{\delta}_{b}}{\delta}_{b} \bm{F}_{{{y} }}({y}_{bu,k}| {\eta} ;{\delta}_{b},{\delta}_{b}) \nabla_{{\delta}_{b}}{\delta}_{b}\Bigg]\Bigg]^{-1} \\
&\times \Bigg[\sum_{k^{},u^{}} \Bigg[ \nabla_{{\delta}_{b}}{\delta}_{b}  \bm{F}_{{{y} }}({y}_{bu,k}| {\eta} ;{\delta}_{b},{\tau}_{bu,k}) \nabla_{\bm{v}_{U}}^{\mathrm{T}} \tau_{bu,k^{}}  + \nabla_{{\delta}_{b}}{\delta}_{b}  \bm{F}_{{{y} }}({y}_{bu,k}| {\eta} ;{\delta}_{b},{\nu}_{b,k}) \nabla_{\bm{v}_{U}}^{\mathrm{T}} \nu_{b,k^{}}\Bigg]\Bigg] \Bigg] \\ 
&+
\sum_{b}\Bigg[\Bigg[\sum_{k^{},u^{}} \Bigg[ \nabla_{\bm{v}_{U}} \tau_{bu,k^{}} \bm{F}_{{{y} }}({y}_{bu,k}| {\eta} ;{\tau}_{bu,k},{\epsilon}_{b}) \nabla_{{\epsilon}_{b}}{\epsilon}_{b} + \nabla_{\bm{v}_{U}} \nu_{b,k^{}} \bm{F}_{{{y} }}({y}_{bu,k}| {\eta} ;{\nu}_{b,k},{\epsilon}_{b}) \nabla_{{\epsilon}_{b}}{\epsilon}_{b}\Bigg]\Bigg]\\
&\times \Bigg[ \sum_{k,u}\Bigg[ \nabla_{{\epsilon}_{b}}{\epsilon}_{b} \bm{F}_{{{y} }}({y}_{bu,k}| {\eta} ;{\epsilon}_{b},{\epsilon}_{b}) \nabla_{{\epsilon}_{b}}{\epsilon}_{b}\Bigg]\Bigg]^{-1} \\
&\times \Bigg[\sum_{k^{},u^{}} \Bigg[ \nabla_{{\epsilon}_{b}}{\epsilon}_{b}  \bm{F}_{{{y} }}({y}_{bu,k}| {\eta} ;{\epsilon}_{b},{\tau}_{bu,k}) \nabla_{\bm{v}_{U}}^{\mathrm{T}} \tau_{bu,k^{}}  + \nabla_{{\epsilon}_{b}}{\epsilon}_{b}  \bm{F}_{{{y} }}({y}_{bu,k}| {\eta} ;{\epsilon}_{b},{\nu}_{b,k}) \nabla_{\bm{v}_{U}}^{\mathrm{T}} \nu_{b,k^{}}\Bigg]\Bigg] \Bigg] \\ 
&= \sum_{b}\Bigg[
 \Bigg[\sum_{k^{},u^{}} \bm{F}_{{{y} }}({y}_{bu,k}| {\eta} ;{\delta}_{b},{\delta}_{b}) \Bigg]^{-1} \norm{ \sum_{k^{},u^{}}   \bm{F}_{{{y} }}({y}_{bu,k}| {\eta} ;{\delta}_{b},{\tau}_{bu,k}) \nabla_{\bm{v}_{U}}^{\mathrm{T}} \tau_{bu,k^{}}  }^{2} \\ &+  \Bigg[ \sum_{k,u}  \bm{F}_{{{y} }}({y}_{bu,k}| {\eta} ;{\epsilon}_{b},{\epsilon}_{b})\Bigg]^{-1}\norm{\sum_{k^{},u^{}}   \bm{F}_{{{y} }}({y}_{bu,k}| {\eta} ;{\epsilon}_{b},{\nu}_{b,k}) \nabla_{\bm{v}_{U}}^{\mathrm{T}} \nu_{b,k^{}} }^2
\end{split}
\end{align}
\end{figure*}

\subsection{$9$D Localization}
In this section, we present proofs for the EFIMs of the $3$D position, $3$D orientation, and $3$D velocity.

\subsubsection{Proof for Theorem \ref{theorem:FIM_9D_position}}
\label{Appendix_subsection:FIM_9D_position}
To highlight the available information for the $3$D position estimation, we present the appropriate EFIM below.

\begin{equation}
\label{equ:FIM_9D_3D_position_3D_orientation_3D_velocity}
\begin{aligned}
\left[\begin{array}{ccc}
\mathbf{J}_{ \bm{\bm{y}}; \bm{p}_{U}}^{\mathrm{e}} & \mathbf{J}_{ \bm{\bm{y}};[ \bm{p}_{U}, \bm{\Phi}_{U}]}^{\mathrm{e}} & \mathbf{J}_{ \bm{\bm{y}};[ \bm{p}_{U}, \bm{v}_{U}]}^{\mathrm{e}} \\
(\mathbf{J}_{ \bm{\bm{y}};[ \bm{p}_{U}, \bm{\Phi}_{U}]}^{\mathrm{e}})^{\mathrm{T}} & \mathbf{J}_{ \bm{\bm{y}}; \bm{\Phi}_{U}}^{\mathrm{e}} & \mathbf{J}_{ \bm{\bm{y}};[ \bm{\Phi}_{U}, \bm{v}_{U}]}^{\mathrm{e}} \\
(\mathbf{J}_{ \bm{\bm{y}};[ \bm{p}_{U}, \bm{v}_{U}]}^{\mathrm{e}})^{\mathrm{T}} & (\mathbf{J}_{ \bm{\bm{y}}; [\bm{\Phi}_{U} \bm{v}_{U}]}^{\mathrm{e}})^{\mathrm{T}} & \mathbf{J}_{ \bm{\bm{y}}; \bm{v}_{U}}^{\mathrm{e}} \\
\end{array}\right].
\end{aligned}
\end{equation}
The EFIM matrix in (\ref{equ:FIM_9D_3D_position_3D_orientation_3D_velocity}) has the following structure
\begin{equation}
\label{equ:FIM_9D_3D_position_3D_orientation_3D_velocity_structure}
\begin{aligned}
\left[\begin{array}{cc}
\mathbf{J}_{ \bm{\bm{y}}; \bm{p}_{U}}^{\mathrm{e}} & \bm{B} \\
\bm{B}^{\mathrm{T}} & \bm{C} \\
\end{array}\right].
\end{aligned}
\end{equation}
Here, the loss in information about $\bm{p}_{U}$ due to the unknown $\bm{\Phi}_{U}$ and $\bm{v}_{U}$ which is specified by  $\mathbf{J}_{ \bm{\bm{y}}; \bm{p}_{U}}^{nu}$ can be defined as
$
\mathbf{J}_{ \bm{\bm{y}}; \bm{p}_{U}}^{nu} = \bm{B} \bm{C}^{-1} \bm{B}^{\mathrm{T}}.
$ Using the definition of the inverse of a block $2$D matrix \cite{horn2012matrix}, we can expand  $\bm{B} \bm{C}^{-1} \bm{B}^{\mathrm{T}}$ and write $
\mathbf{J}_{ \bm{\bm{y}}; \bm{p}_{U}}^{nu}$ as (\ref{equ_theorem:FIM_3D_nuisance_position}). Finally, the location EFIM for the $3$D position is (\ref{equ:FIM_9D_3D_position_3D_position_3D_orientation_3D_velocity}).

\subsubsection{Proof for Theorem \ref{theorem:FIM_9D_velocity}}
\label{Appendix_subsection:FIM_9D_velocity}
To highlight the available information for the $3$D velocity estimation, we present the appropriate EFIM below.

\begin{equation}
\label{equ:FIM_9D_3D_position_3D_orientation_3D_velocity_2}
\begin{aligned}
\left[\begin{array}{ccc}
\mathbf{J}_{ \bm{\bm{y}}; \bm{v}_{U}}^{\mathrm{e}} & \mathbf{J}_{ \bm{\bm{y}};[ \bm{v}_{U}, \bm{p}_{U}]}^{\mathrm{e}} & \mathbf{J}_{ \bm{\bm{y}};[  \bm{v}_{U},\bm{\Phi}_{U}]}^{\mathrm{e}} \\
(\mathbf{J}_{ \bm{\bm{y}};[  \bm{v}_{U}, \bm{p}_{U}]}^{\mathrm{e}})^{\mathrm{T}} & \mathbf{J}_{ \bm{\bm{y}}; \bm{p}_{U}}^{\mathrm{e}} & \mathbf{J}_{ \bm{\bm{y}};[ \bm{p}_{U}, \bm{\Phi}_{U}]}^{\mathrm{e}} \\
(\mathbf{J}_{ \bm{\bm{y}};[ \bm{v}_{U}, \bm{\Phi}_{U}]}^{\mathrm{e}})^{\mathrm{T}} & (\mathbf{J}_{ \bm{\bm{y}}; [\bm{p}_{U}, \bm{\Phi}_{U}]}^{\mathrm{e}})^{\mathrm{T}} & \mathbf{J}_{ \bm{\bm{y}}; \bm{\Phi}_{U}}^{\mathrm{e}} \\
\end{array}\right].
\end{aligned}
\end{equation}
The EFIM matrix in (\ref{equ:FIM_9D_3D_position_3D_orientation_3D_velocity_2}) has the following structure
\begin{equation}
\label{equ:FIM_9D_3D_position_3D_orientation_3D_velocity_structure_2}
\begin{aligned}
\left[\begin{array}{cc}
\mathbf{J}_{ \bm{\bm{y}}; \bm{v}_{U}}^{\mathrm{e}} & \bm{B} \\
\bm{B}^{\mathrm{T}} & \bm{C} \\
\end{array}\right].
\end{aligned}
\end{equation}
The rest of the proof is identical to the proof in Appendix \ref{Appendix_subsection:FIM_9D_position}.

\subsubsection{Proof for Theorem \ref{theorem:FIM_9D_orientation}}
\label{Appendix_subsection:FIM_9D_orientation}
To highlight the available information for the $3$D orientation estimation, we present the appropriate EFIM below.

\begin{equation}
\label{equ:FIM_9D_3D_position_3D_orientation_3D_velocity_1}
\begin{aligned}
\left[\begin{array}{ccc}
\mathbf{J}_{ \bm{\bm{y}}; \bm{\Phi}_{U}}^{\mathrm{e}} & \mathbf{J}_{ \bm{\bm{y}};[ \bm{\Phi}_{U}, \bm{p}_{U}]}^{\mathrm{e}} & \mathbf{J}_{ \bm{\bm{y}};[ \bm{\Phi}_{U}, \bm{v}_{U}]}^{\mathrm{e}} \\
(\mathbf{J}_{ \bm{\bm{y}};[  \bm{\Phi}_{U}, \bm{p}_{U}]}^{\mathrm{e}})^{\mathrm{T}} & \mathbf{J}_{ \bm{\bm{y}}; \bm{p}_{U}}^{\mathrm{e}} & \mathbf{J}_{ \bm{\bm{y}};[ \bm{p}_{U}, \bm{v}_{U}]}^{\mathrm{e}} \\
(\mathbf{J}_{ \bm{\bm{y}};[ \bm{\Phi}_{U}, \bm{v}_{U}]}^{\mathrm{e}})^{\mathrm{T}} & (\mathbf{J}_{ \bm{\bm{y}}; [\bm{p}_{U}, \bm{v}_{U}]}^{\mathrm{e}})^{\mathrm{T}} & \mathbf{J}_{ \bm{\bm{y}}; \bm{v}_{U}}^{\mathrm{e}} \\
\end{array}\right].
\end{aligned}
\end{equation}
The EFIM matrix in (\ref{equ:FIM_9D_3D_position_3D_orientation_3D_velocity_1}) has the following structure
\begin{equation}
\label{equ:FIM_9D_3D_position_3D_orientation_3D_velocity_structure_1}
\begin{aligned}
\left[\begin{array}{cc}
\mathbf{J}_{ \bm{\bm{y}}; \bm{\Phi}_{U}}^{\mathrm{e}} & \bm{B} \\
\bm{B}^{\mathrm{T}} & \bm{C} \\
\end{array}\right].
\end{aligned}
\end{equation}
The rest of the proof is identical to the proof in Appendix \ref{Appendix_subsection:FIM_9D_position}.

\clearpage

{
\bibliographystyle{IEEEtran}
\bibliography{refs}
}
\end{document}